\newcommand{\bigo}{\mathcal{O}}
\newcommand{\R}{\mathbb{R}}
\newcommand{\Z}{\mathbb{Z}}
\newcommand{\Zq}{\mathbb{Z}/q\mathbb{Z}}
\newcommand{\UniqueSVP}{\mathsf{UniqueSVP}}
\newcommand{\GapSVP}{\mathsf{GapSVP}}
\newcommand{\BDD}{\mathsf{BDD}}
\mathchardef\mhyphen="2D % Define a "math hyphen"
\newcommand{\DecisionLWE}{\mathsf{Decision\mhyphen LWE}}
\newcommand{\SearchLWE}{\mathsf{Search\mhyphen LWE}}
\newcommand{\zeroun}{\left\{0,1\right\}}
\newcommand{\me}{\mathrm{e}}
\newcommand{\E}{\mathbb{E}}
\newcommand{\Var}{\mathrm{Var}}
\newcommand{\vol}{\mathrm{vol}}
\newcommand{\Li}{\mathcal{L}}
\newcommand{\Mod}{\text{ mod }}
\newcommand{\Dis}{\mathcal{D}}
\DeclareMathOperator*{\argmax}{arg\,max}
\title{An Improved BKW Algorithm for LWE \\ 
{with Applications to Cryptography and Lattices}}
\titlerunning{LWE Algorithm and Applications}
\author{Paul Kirchner\inst{1} and Pierre-Alain Fouque\inst{2}}
\institute{ 
   \'Ecole normale sup\'erieure\\
 \and
   Universit\'e de Rennes 1 and
   Institut universitaire de France\\
   \email{$\{$paul.kirchner,pierre-alain.fouque$\}$@ens.fr}}
\begin{document}
\maketitle

\begin{abstract}
In this paper, we study the Learning With Errors problem and its binary variant, where secrets and errors are binary or taken in a small interval. We introduce a new variant of the Blum, Kalai and Wasserman algorithm, relying on a quantization step that generalizes and fine-tunes modulus switching. In general this new technique yields a significant gain in the constant in front of the exponent in the overall complexity. We illustrate this by solving within half a day a $\mathsf{LWE}$ instance with dimension $n=128$, modulus $q=n^{2}$, Gaussian noise $\alpha=1/(\sqrt{n/\pi} \log^2 n)$ and binary secret, using $2^{28}$ samples, while the previous best result based on BKW claims a time complexity of $2^{74}$ with $2^{60}$ samples for the same parameters.\\
We then introduce variants of $\mathsf{BDD}$, $\mathsf{GapSVP}$ and $\mathsf{UniqueSVP}$, where the target point is required to lie in the fundamental parallelepiped, and show how the previous algorithm is able to solve these variants in subexponential time.
Moreover, we also show how the previous algorithm can be used to solve the $\mathsf{BinaryLWE}$ problem with $n$ samples in subexponential time $2^{(\ln 2/2+o(1))n/\log \log n}$. This analysis does not require any heuristic assumption, contrary to other algebraic approaches; instead, it uses a variant 
of an idea by Lyubashevsky to generate many samples from a small number of samples. 
This makes it possible to asymptotically and heuristically break the $\mathsf{NTRU}$ cryptosystem in subexponential time (without contradicting its security assumption). We 
are also able to solve subset sum problems in subexponential time for density $o(1)$, which is of independent 
interest: for such density, the previous best algorithm requires exponential time. As a direct application, we can solve in subexponential time the parameters of a cryptosystem based on this problem proposed at TCC 2010.
\end{abstract}

\section{Introduction}

%\section{Introduction}
The Learning With Errors ($\mathsf{LWE}$) problem has been an important problem in cryptography since its introduction by Regev in~\cite{regev2009lattices}. Many cryptosystems have been proven secure assuming the hardness of this
problem, including Fully Homomorphic Encryption schemes~\cite{gentry2013homomorphic,DBLP:journals/siamcomp/BrakerskiV14}.
The decision version of the problem can be described as follows: given $m$ samples of the form 
$(\vec{\mathrm{a}},b)\in (\mathbb{Z}_q)^n\times \mathbb{Z}_q$, where $\vec{\mathrm{a}}$ are uniformy distributed in $(\mathbb{Z}_q)^n$, 
distinguish whether $b$ is uniformly chosen in $\mathbb{Z}_q$ or is equal to $\langle\vec{\mathrm{a}},\vec{\mathrm{s}}\rangle+e$ for a fixed secret 
$\vec{\mathrm{s}} \in (\mathbb{Z}_q)^n$ and $e$ a noise value in $\mathbb{Z}_q$ chosen according to some probability 
distribution. Typically, the noise is sampled from some distribution concentrated on small numbers, such as a discrete 
Gaussian distribution with standard deviation $\alpha q$ for $\alpha=o(1)$. 
In the search version of the problem, the goal is to recover $\vec{\mathrm{s}}$ given the promise that the sample instances 
come from the latter distribution. Initially, Regev showed that if $\alpha q \geq 2\sqrt{n}$, 
solving $\mathsf{LWE}$ on average is at least as hard as approximating lattice problems in the worst case 
to within $\tilde{\mathcal{O}}(n/\alpha)$ factors with a quantum algorithm. Peikert shows a classical 
reduction when the modulus is large $q\geq 2^n$ in~\cite{peikert2009public}. 
Finally, in~\cite{brakerski2013classical}, Brakerski \textit{et al.} prove that solving 
$\mathsf{LWE}$ instances with polynomial-size modulus in polynomial time implies an efficient solution to $\mathsf{GapSVP}$.

%\medskip
There are basically three approaches to solving $\mathsf{LWE}$: the first relies on lattice reduction techniques such 
as the $\mathsf{LLL}$~\cite{LLL} algorithm and further improvements~\cite{DBLP:conf/asiacrypt/ChenN11} as 
exposed in~\cite{DBLP:conf/ctrsa/LindnerP11,DBLP:conf/ctrsa/LiuN13}; the second uses combinatorial 
techniques~\cite{BKW,Wagner}; and the third uses algebraic techniques~\cite{DBLP:conf/icalp/AroraG11}.
According to Regev in~\cite{DBLP:conf/coco/2010}, the best known algorithm to solve $\mathsf{LWE}$ is 
the algorithm by Blum, Kalai and Wasserman in~\cite{BKW}, originally proposed to solve the Learning Parities with Noise ($\mathsf{LPN}$) problem, 
which can be viewed as a special case of $\mathsf{LWE}$ where $q=2$.  
The time and memory requirements of this algorithm are both exponential for $\mathsf{LWE}$ and subexponential for 
$\mathsf{LPN}$ in $2^{\bigo(n/\log n)}$. During the first stage of the algorithm, the dimension of 
$\vec{\mathrm{a}}$ 
is reduced, at the cost of a (controlled) decrease of the bias of $b$. During the second stage, the algorithm distinguishes between 
$\mathsf{LWE}$ and uniform by evaluating the bias. 

%\medskip
Since the introduction of $\mathsf{LWE}$, some variants of the problem have been proposed in order to build more efficient 
cryptosystems. Some of the most interesting variants are $\mathsf{Ring\mhyphen LWE}$ by Lyubashevsky, Peikert and Regev 
in~\cite{lyubashevsky2013ideal}, which aims to reduce the space of the public key using cyclic samples; and the 
cryptosystem by D\"ottling and M\"uller-Quade~\cite{DBLP:conf/eurocrypt/DottlingM13}, which uses short 
secret and error. In 2013, Micciancio and Peikert~\cite{micciancio2013hardness} as well as 
Brakerski \textit{et al.}~\cite{brakerski2013classical} proposed a binary version of the $\mathsf{LWE}$ problem and 
obtained a hardness result. 

{\bf Related Work.}
Albrecht \textit{et al.} have presented an analysis of the BKW algorithm as applied to $\mathsf{LWE}$ in~\cite{albrecht2013complexity,albrecht2014lazy}. 
It has been recently revisited by Duc \textit{et al.}, who use a multi-dimensional FFT in the second stage of 
the algorithm~\cite{duc2015better}. However, the main bottleneck is the first BKW step and since the proposed algorithms do not improve this stage, the overall asymptotic complexity is unchanged.

%\medskip
In the case of the $\mathsf{BinaryLWE}$ variant, where the error and secret are binary (or sufficiently small), 
Micciancio and Peikert show that solving this problem using $m=n(1+\mathrm{\Omega}(1/\log(n)))$ samples
is at least as hard as approximating lattice problems in the worst case in dimension $\Theta(n/\log(n))$
with approximation factor $\tilde{\mathcal{O}}(\sqrt{n} q)$. We show in \autoref{app:LLLbinary} that 
existing lattice reduction techniques require exponential time. 
Arora and Ge describe a $2^{\tilde{\mathcal{O}}(\alpha q)^2}$-time algorithm when $q>n$ to solve the $\mathsf{LWE}$ problem~\cite{DBLP:conf/icalp/AroraG11}. This leads to a subexponential time algorithm when the error magnitude 
$\alpha q$ 
is less than $\sqrt{n}$. The idea is to transform this system into a noise-free polynomial system and then use 
root finding algorithms for multivariate polynomials to solve it, using either relinearization in~\cite{DBLP:conf/icalp/AroraG11} 
or Gr\"obner basis in~\cite{albrechtalgebraic}. In this last work, Albrecht \textit{et al.} present an algorithm whose 
time complexity is $2^{\frac{(\omega+o(1)) n \log \log \log n}{8\log \log n}}$ when the number of samples $m=(1+o(1))n \log \log n$ is 
super-linear, where $\omega < 2.3728$ is the linear algebra constant, under some assumption 
on the regularity of the polynomial system of equations; and when $m=\bigo(n)$, the complexity becomes exponential.  

{\bf Contribution.}
Our first contribution is to present in a unified framework the BKW algorithm and all its previous improvements in the binary case~\cite{parity,kirchner2011improved,bernstein2013never,guo2014solving} and in the general case~\cite{albrecht2014lazy}.
We introduce a new quantization step, which generalizes modulus 
switching~\cite{albrecht2014lazy}.
This yields a significant decrease in the constant of the exponential of the complexity for $\mathsf{LWE}$.
Moreover our proof does not require Gaussian noise, and does not rely on unproven independence assumptions.
Our algorithm is also able to tackle problems with larger noise.

We then introduce generalizations of the $\mathsf{BDD}$, $\mathsf{GapSVP}$ and $\mathsf{UniqueSVP}$ problems, and prove a reduction from these variants to $\mathsf{LWE}$.
When particular parameters are set, these variants impose that the lattice point of interest (the point 
of the lattice that the problem essentially asks to locate: for instance, in the case of $\mathsf{BDD}$, the point of the 
lattice closest to the target point) lie in the fundamental parallelepiped; or more generally, we ask that the coordinates 
of this point relative to the basis defined by the input matrix $\vec{\mathrm{A}}$ has small infinity norm, bounded by 
some value $B$. For small $B$, our main algorithm yields a subexponential-time algorithm for these variants of $\mathsf{BDD}$, $\mathsf{GapSVP}$ and $\mathsf{UniqueSVP}$.

Through a reduction to our variant of $\mathsf{BDD}$, we are then able to solve the subset-sum problem in subexponential time when the density is $o(1)$, and in time $2^{(\ln 2/2+o(1))n/\log \log n}$ if the density is $\bigo(1/\log n)$. This is of independent interest, as existing techniques for density $o(1)$, based on lattice reduction, require exponential time.
As a consequence, the cryptosystems of Lyubashevsky, Palacio and Segev at TCC 2010~\cite{lyubashevsky2010public} can be solved in subexponential time. 

As another application of our main algorithm, we show that $\mathsf{BinaryLWE}$ with reasonable noise can be solved in time $2^{(\ln 2/2+o(1))n/\log \log n}$ instead of $2^{\Omega(n)}$; and the same complexity holds for secret of size up to $2^{\log^{o(1)} n}$. As a consequence, we can heuristically recover the secret polynomials $\vec{\mathrm{f}},\vec{\mathrm{g}}$ of the $\mathsf{NTRU}$ problem in subexponential time $2^{(\ln 2/2+o(1))n/\log \log n}$ (without contradicting its security assumption). The heuristic assumption comes from the fact that $\mathsf{NTRU}$ samples are not random, since they are rotations of each other: the heuristic assumption is that this does not significantly hinder $\mathsf{BKW}$-type algorithms. Note that there is a large value hidden in the $o(1)$ term, so that our algorithm does not yield practical attacks for recommended $\mathsf{NTRU}$ parameters.

\section{Preliminaries}\label{sec:preliminaries}

%\subsection{Notations and Definitions}
We identify any element of $\Zq$ to the smallest of its equivalence class, the \
positive one in case of tie.
Any vector $\vec{\mathrm{x}} \in \big(\Zq\big)^n$ has an Euclidean norm 
$||\vec{\mathrm{x}}||=\sqrt{\sum_{i=0}^{n-1} x_i^2}$ and $||\vec{\mathrm{x}}||_{\infty}=\mathrm{max}_i |x_i|$.
A matrix $\vec{\mathrm{B}}$ can be Gram-Schmidt orthogonalized in $\widetilde{\vec{\mathrm{B}}}$, and its norm
$||\vec{\mathrm{B}}||$ is the maximum of the norm of its columns. We denote by $(\vec{\mathrm{x}}|\vec{\mathrm{y}})$
the vector obtained as the concatenation of vectors $\vec{\mathrm{x}},\vec{\mathrm{y}}$. 
Let $\vec{\mathrm{I}}$ be the identity matrix 
and we denote by $\ln$ the neperian logarithm and $\log$ the binary logarithm. A lattice is the set of all integer 
linear combinations 
$\mathrm{\Lambda}(\vec{\mathrm{b}}_1,\ldots ,\vec{\mathrm{b}}_n)=\sum_i \vec{\mathrm{b_i}}\cdot x_i$ (where 
$x_i \in \mathbb{Z}$) of a set of linearly independent vectors $\vec{\mathrm{b}}_1,\ldots ,\vec{\mathrm{b}}_n$ 
called the basis of the lattice. If $\vec{\mathrm{B}}=[\vec{\mathrm{b}}_1,\ldots ,\vec{\mathrm{b}}_n]$ is the matrix 
basis, lattice vectors can be written as $\vec{\mathrm{B}}\vec{\mathrm{x}}$ for $\vec{\mathrm{x}}\in \mathbb{Z}^n$. 
Its dual $\mathrm{\Lambda}^*$ is the set of $\vec{\mathrm{x}}\in \R^n$ such that $\langle \vec{\mathrm{x}} , \mathrm{\Lambda} \rangle \subset \Z^n$. We have $\mathrm{\Lambda}^{**}=\mathrm{\Lambda}$.
We borrow Bleichenbacher's definition of bias \cite{de2013using}.
\begin{definition}
	The bias of a probability distribution $\phi$ over $\Zq$ is 
	\[ \E_{x\sim \phi}[\exp(2i\pi x/q)]. \]
\end{definition}
\noindent
This definition extends the usual definition of the bias of a coin in $\Z/2\Z$: it preserves the fact that any distribution with bias $b$ can be distinguished from uniform with constant probability using $\mathrm{\Omega}(1/b^{2})$ samples, as a consequence of Hoeffding's inequality; moreover the bias of the sum of two independent variable is still the product of their biases. We 
also have the following simple lemma:
\begin{lemma}\label{lemma:gaussianbias}
	The bias of the Gaussian distribution of mean $0$ and standard deviation $q\alpha$ is $\exp(-2\pi^2 \alpha^2)$.
\end{lemma}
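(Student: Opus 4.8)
The plan is to compute the expectation $\E_{x \sim \phi}[\exp(2i\pi x/q)]$ directly, where $\phi$ is the Gaussian of mean $0$ and standard deviation $q\alpha$. First I would replace the exact discrete (or folded-into-$\Zq$) Gaussian by the continuous Gaussian density $f(x) = \frac{1}{q\alpha\sqrt{2\pi}}\exp(-x^2/(2q^2\alpha^2))$ on $\R$; since $\exp(2i\pi x/q)$ is $q$-periodic, summing the density over all translates by multiples of $q$ is harmless and the integral over $\R$ computes exactly the quantity we want (up to the negligible discretization issue, which in this paper's setting one simply absorbs). So the claim reduces to the standard Fourier-transform-of-a-Gaussian identity
\[
\int_{-\infty}^{\infty} \frac{1}{q\alpha\sqrt{2\pi}} \exp\!\left(-\frac{x^2}{2q^2\alpha^2}\right) \exp\!\left(\frac{2i\pi x}{q}\right)\,dx = \exp\!\left(-2\pi^2\alpha^2\right).
\]

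The key computational step is completing the square in the exponent. Write the combined exponent as $-\frac{x^2}{2q^2\alpha^2} + \frac{2i\pi x}{q}$ and complete it to $-\frac{1}{2q^2\alpha^2}\bigl(x - 2i\pi q\alpha^2\bigr)^2 - 2\pi^2\alpha^2$. The residual constant term $-2\pi^2\alpha^2$ pulls out of the integral and is exactly the claimed bias; what remains is a Gaussian integral along a line parallel to the real axis in $\C$, which by a standard contour-shift argument (the integrand is entire and decays at $\pm\infty$) equals the real Gaussian integral $\int_{-\infty}^\infty \frac{1}{q\alpha\sqrt{2\pi}}\exp(-t^2/(2q^2\alpha^2))\,dt = 1$. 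Multiplying the two pieces gives $\exp(-2\pi^2\alpha^2)$.

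The only genuinely delicate point — and it is minor — is justifying the complex shift of contour, i.e.\ that $\int_{\R} e^{-a(x+ib)^2}\,dx = \int_{\R} e^{-ax^2}\,dx$ for real $a>0$, $b$; this is routine via Cauchy's theorem applied to a tall thin rectangle, using that $e^{-az^2}\to 0$ as $\mathrm{Re}(z)\to\pm\infty$ uniformly for bounded $\mathrm{Im}(z)$. Equivalently, one can avoid contours entirely by recognizing the integral as the characteristic function of a $\mathcal{N}(0,q^2\alpha^2)$ random variable evaluated at $t = 2\pi/q$, namely $\exp(-\tfrac12 t^2 q^2\alpha^2) = \exp(-2\pi^2\alpha^2)$, and invoke the well-known formula for the Gaussian characteristic function. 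I would present it this second way, since it is the cleanest and the characteristic-function fact can be cited rather than reproved. The passage from the continuous to the discrete/folded Gaussian contributes only an exponentially small error, negligible for all uses of the lemma in this paper.
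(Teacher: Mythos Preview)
Your proposal is correct and is essentially an expanded version of the paper's one-line proof, which simply states ``The bias is the value of the Fourier transform at $-1/q$.'' Your second route---recognizing the bias as the characteristic function of $\mathcal{N}(0,q^2\alpha^2)$ at $t=2\pi/q$---is exactly this observation, and your first route (completing the square and shifting the contour) is just a from-scratch proof of that standard Fourier/characteristic-function identity. One minor remark: the paper works with the \emph{continuous} Gaussian over $\mathbb{R}/q\mathbb{Z}$ (see the LWE definition, where $b\in\mathbb{R}/q\mathbb{Z}$), so your discussion of a ``discretization issue'' is unnecessary here; the $q$-periodicity of $\exp(2i\pi x/q)$ already makes the folded and unfolded expectations identical, as you note.
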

\begin{proof}
The bias is the value of the Fourier transform at $-1/q$.
\end{proof}

We introduce a non standard definition for the $\mathsf{LWE}$ problem. However as a consequence of \autoref{lemma:gaussianbias}, this new definition naturally extends the usual Gaussian case (as well as its standard extensions such as the bounded noise variant \cite[Definition 2.14]{brakerski2013classical}), 
and it will prove easier to work with. The reader can consider the distorsion parameter $\epsilon=0$ as it is the case in other papers and a gaussian of standard deviation $\alpha q$. 

\begin{definition}
	Let $n\geq 0$ and $q \geq 2$ be integers.
	Given parameters $\alpha$ and $\epsilon$,
	the $\mathsf{LWE}$ distribution is, for $\vec{\mathrm{s}} \in (\Zq)^n$, a distribution on pairs $(\vec{\mathrm{a}},b)\in (\Zq)^n \times (\mathbb{R}/q\mathbb{Z})$ such that $\vec{\mathrm{a}}$ is sampled uniformly, and for all $\vec{\mathrm{a}}$,
	\[ |\E[\exp(2i\pi(\langle \vec{a} , \vec{s} \rangle -b)/q)|\vec{\mathrm{a}}]\exp(\alpha'^2)-1|\leq \epsilon \]
	for some universal $\alpha'\leq \alpha$.

	For convenience, we define $\beta=\sqrt{n/2}/\alpha$. In the remainder, $\alpha$ is called the \emph{noise} parameter\footnote{Remark that it differs by a constant factor from other authors' definition of $\alpha$.}, and $\epsilon$ the \emph{distortion} parameter.
	Also, we say that a $\mathsf{LWE}$ distribution has a noise distribution $\phi$ if $b$ is distributed as $\langle \vec{\mathrm{a}},\vec{\mathrm{s}}\rangle+\phi$.
\end{definition}

\begin{definition}
	The $\DecisionLWE$ problem is to distinguish a $\mathsf{LWE}$ distribution from the uniform distribution over $(\vec{\mathrm{a}},b)$.
	The $\SearchLWE$ problem is, given samples from a $\mathsf{LWE}$ distribution, to find $\vec{\mathrm{s}}$.
\end{definition}

\begin{definition}
	The real $\lambda_i$ is the radius of the smallest ball, centered in $\vec{0}$, such that it contains $i$ vectors of the 
	lattice $\mathrm{\Lambda}$ which are linearly independent.
\end{definition}

We define $\rho_s(\vec{\mathrm{x}})=\exp(-\pi ||\vec{\mathrm{x}}||^2/s^2)$ and $\rho_s(S)=\sum_{\vec{\mathrm{x}} \in S} \rho_s(\vec{\mathrm{x}})$ (and similarly for other functions).
The discrete Gaussian distribution $D_{E,s}$ over a set $E$ and of parameter $s$ is such that the probability of $D_{E,s}(\vec{\mathrm{x}})$ of drawing $\vec{\mathrm{x}} \in E$ is equal to $\rho_s(\vec{\mathrm{x}})/\rho_s(E)$. To simplify 
notation, we will denote by $D_E$ the distribution $D_{E,1}$.

\begin{definition}
	The smoothing parameter $\eta_\epsilon$ of the lattice $\mathrm{\Lambda}$ is the smallest $s$ such that $\rho_{1/s}(\mathrm{\Lambda}^*)=1+\epsilon$.
\end{definition}

Now, we will generalize the \textsf{BDD}, \textsf{UniqueSVP} and \textsf{GapSVP} problems by using another parameter 
$B$ that bounds the target lattice vector. For $B=2^n$, we recover the usual definitions if the input matrix is reduced. 
\begin{definition}
	The $\BDD_{B,\beta}^{||.||_\infty}$ $($resp. $\BDD_{B,\beta}^{||.||}$$)$ problem is, given a basis $\vec{\mathrm{A}}$ of the lattice $\mathrm{\Lambda}$, and a point $\vec{\mathrm{x}}$ such that $||\vec{\mathrm{As}}-\vec{\mathrm{x}}||\leq \lambda_1/\beta < \lambda_1/2$ and 
$||\vec{\mathrm{s}}||_{\infty}\leq B$ $($resp. $||\vec{\mathrm{s}}||\leq B$$)$, to find $\vec{\mathrm{s}}$.
\end{definition}

\begin{definition}
	The $\UniqueSVP_{B,\beta}^{||.||_\infty}$ $($resp. $\UniqueSVP_{B,\beta}^{||.||}$$)$ problem is, given a basis $\vec{\mathrm{A}}$ of the lattice $\mathrm{\Lambda}$, such that $\lambda_2/\lambda_1\geq \beta$ and there exists $\vec{\mathrm{s}}$ such that $||\vec{\mathrm{As}}||=\lambda_1$ with $||\vec{\mathrm{s}}||_{\infty}\leq B$ $($resp. $||\vec{\mathrm{s}}||\leq B$$)$, to find $\vec{\mathrm{s}}$.
\end{definition}

\begin{definition}
	The $\GapSVP_{B,\beta}^{||.||_\infty}$ $($resp. $\GapSVP_{B,\beta}^{||.||}$$)$ problem is, given a basis $\vec{\mathrm{A}}$ of the lattice $\mathrm{\Lambda}$ to 
	distinguish between $\lambda_1(\mathrm{\Lambda}) \geq \beta$ and if there exists $\vec{\mathrm{s}} \neq \vec{0}$ 
	such that $||\vec{\mathrm{s}}||_{\infty}\leq B$ $($resp. $||\vec{\mathrm{s}}||\leq B$$)$ and $||\vec{\mathrm{As}}||\leq 1$.
\end{definition}

\begin{definition}
	Given two probability distributions $P$ and $Q$ on a finite set $S$, the Kullback-Leibler (or $\mathsf{KL}$) divergence between $P$ and $Q$ is
$$D_{\mathsf{KL}}(P||Q)=\sum_{x\in S} \ln\bigg(\frac{P(x)}{Q(x)}\bigg)P(x) \ \mbox{	with }\ \ln(x/0)=+\infty \ \mbox{ if }\ x>0.$$
\end{definition}
The following two lemmata are proven in \cite{poppelmann2014enhanced} :
\begin{lemma}
	\label{KLmajore}
	Let $P$ and $Q$ be two distributions over $S$, such that for all $x$, $|P(x)-Q(x)|\leq \delta(x) P(x)$ with $\delta(x) \leq 1/4$.
	Then :
	\[ D_{\mathsf{KL}}(P||Q)\leq 2\sum_{x\in S}\delta(x)^2P(x). \]
\end{lemma}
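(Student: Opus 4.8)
The plan is to expand the definition of the $\mathsf{KL}$ divergence termwise and control each term by a second‑order logarithmic estimate, exploiting a first‑order cancellation. First I would observe that the hypothesis $|P(x)-Q(x)|\le \delta(x)P(x)$ with $\delta(x)\le 1/4$ forces $Q(x)=0$ whenever $P(x)=0$, and $Q(x)\ge \tfrac34 P(x)>0$ whenever $P(x)>0$; hence $D_{\mathsf{KL}}(P||Q)$ is finite and the defining sum may be restricted to the support of $P$. On that support set $t_x=\frac{Q(x)}{P(x)}-1$, so that $|t_x|\le \delta(x)\le 1/4$ and $D_{\mathsf{KL}}(P||Q)=-\sum_x P(x)\ln(1+t_x)$.

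Next I would invoke the elementary inequality $\ln(1+t)\ge t-t^2$, valid for every $t\ge -1/2$ (in particular for $|t|\le 1/4$): this follows by checking that $g(t)=\ln(1+t)-t+t^2$ satisfies $g(0)=0$ and $g'(t)=\frac{t(1+2t)}{1+t}$ has the sign of $t$ on $(-1/2,\infty)$, so $g\ge 0$ there. Substituting this bound gives
\[ D_{\mathsf{KL}}(P||Q)\le \sum_x P(x)\big(t_x^2-t_x\big)=\sum_x P(x)\,t_x^2-\sum_x\big(Q(x)-P(x)\big). \]
The last sum vanishes because $\mathrm{supp}(Q)\subseteq\mathrm{supp}(P)$, so $\sum_x Q(x)=\sum_x P(x)=1$; this exact cancellation of the linear term is the crux of the argument. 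Finally $t_x^2\le\delta(x)^2$ yields $D_{\mathsf{KL}}(P||Q)\le\sum_x \delta(x)^2 P(x)$, which is in fact stronger than the claimed $2\sum_x\delta(x)^2 P(x)$.

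I do not expect a genuine obstacle: the only points needing care are the bookkeeping around supports (to guarantee both finiteness of the divergence and the precise cancellation of the first‑order term) and the selection of a one‑sided logarithmic inequality that is tight to second order. The bound $\delta(x)\le 1/4$ is comfortably more than what this argument requires, which is presumably why the statement is phrased with the slack factor $2$.
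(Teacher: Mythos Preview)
Your argument is correct. Note, however, that the paper does not actually supply a proof of this lemma: immediately before the statement it says ``The following two lemmata are proven in \cite{poppelmann2014enhanced}'' and then states the result without proof. So there is no in-paper argument to compare against.

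For what it is worth, your approach is the standard one and is cleanly executed: writing $t_x=Q(x)/P(x)-1$, using the second-order lower bound $\ln(1+t)\ge t-t^2$ on $(-\tfrac12,\infty)$, and then exploiting the exact cancellation $\sum_x P(x)t_x=\sum_x(Q(x)-P(x))=0$ (which you correctly justify via $\mathrm{supp}(Q)\subseteq\mathrm{supp}(P)$). As you observe, this in fact yields the sharper constant $1$ in place of $2$, and only needs $\delta(x)\le \tfrac12$ rather than $\tfrac14$; the extra slack in the stated lemma is harmless for the applications in the paper.
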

\begin{lemma}
	\label{KLerreur}
	Let $A$ be an algorithm which takes as input $m$ samples of $S$ and outputs a bit.
	Let $x$ (resp. $y$) be the probability that it returns $1$ when the input is sampled from $P$ (resp. $Q$).
	Then :
	\[ |x-y|\leq \sqrt{mD_{\mathsf{KL}}(P||Q)/2}. \]
\end{lemma}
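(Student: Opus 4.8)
The plan is to route the bound through the statistical (total variation) distance between the $m$-fold product distributions $P^{\otimes m}$ and $Q^{\otimes m}$, and then combine Pinsker's inequality with the tensorization of the Kullback-Leibler divergence. First I would reduce to statistical distance: writing $p(\vec{z})\in[0,1]$ for the probability that $A$ returns $1$ on input $\vec{z}\in S^m$ (this formulation also covers randomized $A$), we have $x-y=\sum_{\vec{z}} p(\vec{z})\big(P^{\otimes m}(\vec{z})-Q^{\otimes m}(\vec{z})\big)$. Bounding $p(\vec{z})$ by $1$ on the set where $P^{\otimes m}(\vec{z})\ge Q^{\otimes m}(\vec{z})$ and by $0$ elsewhere yields $|x-y|\le \frac12\sum_{\vec{z}}|P^{\otimes m}(\vec{z})-Q^{\otimes m}(\vec{z})| =: \mathrm{SD}(P^{\otimes m},Q^{\otimes m})$; that is, post-processing cannot increase statistical distance.

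Next I would tensorize the divergence: since the density of a product distribution is the product of the densities and the coordinates are independent, the chain rule for KL gives $D_{\mathsf{KL}}(P^{\otimes m}\|Q^{\otimes m})=m\,D_{\mathsf{KL}}(P\|Q)$. The remaining ingredient, and the real content, is Pinsker's inequality $\mathrm{SD}(P',Q')^2\le \tfrac12 D_{\mathsf{KL}}(P'\|Q')$. I would prove it first for two-point distributions $\mathrm{Ber}(a),\mathrm{Ber}(b)$ by elementary calculus: the function $g(a)=a\ln(a/b)+(1-a)\ln\frac{1-a}{1-b}-2(a-b)^2$ satisfies $g(b)=g'(b)=0$ and $g''(a)=\frac1{a(1-a)}-4\ge 0$, hence $g\ge 0$. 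Then I would lift to arbitrary $P',Q'$ by coarsening $S$ to the two cells $\{x:P'(x)\ge Q'(x)\}$ and its complement: this coarsening leaves $\mathrm{SD}$ unchanged, since $\sum_x|P'(x)-Q'(x)|$ depends only on the aggregate mass of each cell, while the data-processing inequality for KL — a consequence of the log-sum inequality, i.e.\ convexity of $t\mapsto t\ln t$ — shows the KL divergence of the coarsened pair is at most $D_{\mathsf{KL}}(P'\|Q')$.

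Chaining the three steps gives $|x-y|\le \mathrm{SD}(P^{\otimes m},Q^{\otimes m})\le \sqrt{\tfrac12 D_{\mathsf{KL}}(P^{\otimes m}\|Q^{\otimes m})}=\sqrt{mD_{\mathsf{KL}}(P\|Q)/2}$, as claimed. The main obstacle is the proof of Pinsker's inequality: the Bernoulli case is pure calculus, but the reduction of the general case to it must be handled carefully, checking that the sign-based two-cell coarsening preserves statistical distance exactly while only decreasing the KL divergence.
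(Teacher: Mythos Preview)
Your proof is correct and follows the standard route: bound $|x-y|$ by the total variation distance between $P^{\otimes m}$ and $Q^{\otimes m}$, apply Pinsker's inequality, and use the additivity of KL divergence over independent products. The paper itself does not prove this lemma but simply defers to \cite{poppelmann2014enhanced}, so there is no in-paper argument to compare against; your approach is the canonical one.
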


Finally, we say that an algorithm has a negligible probability of failure if its probability of failure is $2^{-\mathrm{\Omega}(n)}$. \footnote{Some authors use another definition.}

\subsection{Secret-Error Switching}
At a small cost in samples, it is possible to reduce any $\mathsf{LWE}$ distribution with noise distribution $\phi$ to an instance where the secret follows the rounded distribution, defined as $\lfloor \phi \rceil$~\cite{applebaum2009fast,brakerski2013classical}.

\begin{theorem}
\label{th:petitsecret}
	Given an oracle that solves $\mathsf{LWE}$ with $m$ samples in time $t$ with the secret coming from the rounded error distribution, it is possible to solve $\mathsf{LWE}$ with $m+\bigo(n\log \log q)$ samples with the same error distribution (and any distribution on the secret) in time $t + \bigo(mn^2+(n\log \log q)^3)$, with negligible probability of failure.

	Furthermore, if $q$ is prime, we lose $n+k$ samples with probability of failure bounded by $q^{-1-k}$.
\end{theorem}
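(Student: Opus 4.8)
The plan is to reduce $\mathsf{LWE}$ with an arbitrary secret to $\mathsf{LWE}$ where the secret is distributed as the rounded noise, using the standard trick of "re-centering" the system around a fresh batch of samples. Concretely, given samples $(\vec{\mathrm{a}}_i, b_i)$ with $b_i \approx \langle \vec{\mathrm{a}}_i, \vec{\mathrm{s}} \rangle + e_i$, I would set aside the first $n$ samples, collect them into a matrix $\vec{\mathrm{A}}_0 \in (\Zq)^{n \times n}$ with right-hand side $\vec{\mathrm{b}}_0$, and — assuming $\vec{\mathrm{A}}_0$ is invertible mod $q$ — write $\vec{\mathrm{s}} = \vec{\mathrm{A}}_0^{-1}(\vec{\mathrm{b}}_0 - \vec{\mathrm{e}}_0)$ where $\vec{\mathrm{e}}_0$ is the (unknown) noise vector on those samples. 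For every remaining sample $(\vec{\mathrm{a}}, b)$ one computes the transformed sample $(\vec{\mathrm{a}}', b')$ with $\vec{\mathrm{a}}' = -(\vec{\mathrm{A}}_0^{-1})^{\top}\vec{\mathrm{a}}$ and $b' = b - \langle \vec{\mathrm{a}}, \vec{\mathrm{A}}_0^{-1}\vec{\mathrm{b}}_0 \rangle$. A direct computation shows $b' = \langle \vec{\mathrm{a}}', \vec{\mathrm{e}}_0 \rangle + e$ up to sign conventions, i.e. the new secret is exactly $\vec{\mathrm{e}}_0$, whose coordinates are i.i.d.\ copies of the rounded noise $\lfloor \phi \rceil$, while the new error is still $e \sim \phi$ and the new $\vec{\mathrm{a}}'$ is uniform in $(\Zq)^n$ because $\vec{\mathrm{A}}_0^{-1}$ is a bijection on $(\Zq)^n$ and $\vec{\mathrm{a}}$ was uniform and independent of $\vec{\mathrm{A}}_0, \vec{\mathrm{b}}_0$.

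The second step is to handle the invertibility of $\vec{\mathrm{A}}_0$, since a random matrix over $\Zq$ need not be invertible when $q$ is composite. When $q$ is prime, the probability that a uniform $n \times n$ matrix over $\F_q$ fails to be invertible is $1 - \prod_{j=1}^{n}(1 - q^{-j}) \le \sum_{j \ge 1} q^{-j} = q^{-1}/(1 - q^{-1})$, and more generally drawing $n+k$ samples and asking for an invertible $n \times n$ submatrix fails with probability at most $q^{-1-k}$ — this gives the "furthermore" clause. For general $q$, I would instead draw $\bigo(n \log\log q)$ samples: factoring $q = \prod p_\ell^{f_\ell}$, invertibility mod $q$ is equivalent to invertibility mod each $p_\ell$, and by collecting enough extra rows one can, with probability $1 - 2^{-\Omega(n)}$, extract a set of $n$ samples whose matrix is simultaneously invertible modulo every prime divisor of $q$ (there are at most $\log q$ such primes, and the smallest could be as small as $2$, which is why a few extra rows per prime — hence the $\log\log q$ factor after accounting for the number of primes and a union bound — suffice); alternatively one runs a Smith-normal-form / Gaussian-elimination argument over $\Z/q\Z$ directly. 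I would present the prime case in full and sketch the composite case via CRT.

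The third step is the bookkeeping: the reduction consumes $m + \bigo(n \log\log q)$ samples ($m$ for the oracle after transformation, plus the set-aside block), the oracle then returns $\vec{\mathrm{e}}_0$, and $\vec{\mathrm{s}} = \vec{\mathrm{A}}_0^{-1}(\vec{\mathrm{b}}_0 - \vec{\mathrm{e}}_0)$ is recovered by one more matrix-vector solve. The running time overhead is: one matrix inversion / linear solve over $\Zq$ in $\bigo((n\log\log q)^3)$ (bounding the size of the block), plus $\bigo(n^2)$ arithmetic operations per transformed sample for $\bigo(mn^2)$ total, matching the claimed bound. I also need to check that the transformed distribution is exactly an $\mathsf{LWE}$ distribution in the sense of the paper's bias-based definition — this is immediate since conditioning on $\vec{\mathrm{a}}'$ the quantity $b' - \langle \vec{\mathrm{a}}', \vec{\mathrm{e}}_0\rangle$ equals $e$, so the same $\alpha'$ and $\epsilon$ as the original instance work verbatim, and the secret-error swap does not touch the noise term.

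The main obstacle is the composite-modulus invertibility argument and getting the sample count to come out as $\bigo(n \log\log q)$ rather than, say, $\bigo(n \log q)$: a naive union bound over the prime divisors and over the $n$ "pivot failures" gives an extra $\log q$ factor, so one has to be slightly careful — the right move is to draw $n + \bigo(\log\log q)$ extra samples total (not per prime) and argue that with an appropriate pivoting strategy a single pass of Gaussian elimination over $\Z/q\Z$ succeeds as long as at each step some remaining row has a coordinate that is a unit modulo $q$ in the pivot position, which holds with overwhelming probability once there is a small surplus of fresh uniform rows. Everything else is routine linear algebra over $\Zq$ and a change of variables in the distribution.
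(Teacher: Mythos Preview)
Your proposal is correct and follows essentially the same secret--error switching argument as the paper: set aside an invertible block $\vec{\mathrm{A}}_0$, transform each remaining sample via $\vec{\mathrm{a}}\mapsto -(\vec{\mathrm{A}}_0^{-1})^{\top}\vec{\mathrm{a}}$, and observe that the new secret is the (rounded) noise vector $\vec{\mathrm{e}}_0$. The paper handles the composite-modulus invertibility step by citing \cite[Claim~2.13]{brakerski2013classical} rather than arguing it directly; your sketch is in the right spirit, but note that the surplus needed is $\bigo(\log\log q)$ fresh rows \emph{per pivot} (hence $\bigo(n\log\log q)$ total, as you state in the main body), not $n+\bigo(\log\log q)$ total as written in your obstacle paragraph---the fraction of units in $\Zq$ can be as small as $\Theta(1/\log\log q)$.
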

\begin{proof}
	First, select an invertible matrix $\vec{\mathrm{A}}$ from the vectorial part of $\bigo(n\log \log q)$ samples in time $\bigo((n\log \log q)^3)$ \cite[Claim 2.13]{brakerski2013classical}.

	Let $\vec{b}$ be the corresponding rounded noisy dot products. Let $\vec{\mathrm{s}}$ be the $\mathsf{LWE}$ secret and 
	$\vec{\mathrm{e}}$ such that $\vec{\mathrm{As}}+\vec{\mathrm{e}}=\vec{\mathrm{b}}$.
	Then the subsequent $m$ samples are transformed in the following way.
	For each new sample $(\vec{\mathrm{a'}},b')$ with $b'=\langle \vec{\mathrm{a'}},\vec{\mathrm{s}} \rangle + e'$, we give the sample $(-^{t}\vec{\mathrm{A}}^{-1}\vec{\mathrm{a'}},b'-\langle ^{t}\vec{\mathrm{A}}^{-1}\vec{\mathrm{a'}}, \vec{\mathrm{b}} \rangle)$ to our $\mathsf{LWE}$ oracle.

	Clearly, the vectorial part of the new samples remains uniform and since
	\[ b'-\langle ^{t}\vec{\mathrm{A}}^{-1}\vec{\mathrm{a'}} , \vec{\mathrm{b}} \rangle = \langle -^{t}\vec{\mathrm{A}}^{-1}\vec{\mathrm{a'}} , \vec{\mathrm{b}}-\vec{\mathrm{As}} \rangle + b'-\langle \vec{\mathrm{a'}} ,\vec{\mathrm{s}} \rangle = \langle -^{t}\vec{\mathrm{A}}^{-1}\vec{\mathrm{a'}} , \vec{\mathrm{e}} \rangle + e'\]
	the new errors follow the same distribution as the original, and the new secret is $\vec{\mathrm{e}}$. Hence the oracle outputs $\vec{\mathrm{e}}$ in time $t$, and we can recover $\vec{\mathrm{s}}$ as $\vec{\mathrm{s}}=\vec{\mathrm{A}}^{-1}(\vec{\mathrm{b}}-\vec{\mathrm{e}})$.

	If $q$ is prime, the probability that the $n+k$ first samples are in some hyperplane is bounded by $q^{n-1}q^{-n-k}=q^{-1-k}$.
\end{proof}

\subsection{Low dimension algorithms}

Our main algorithm will return samples from a \textsf{LWE} distribution, while the bias decreases.
We describe two fast algorithms when the dimension is small enough.

\begin{theorem}
	If $n=0$ and $m=k/b^2$, with $b$ smaller than the real part of the bias, the $\DecisionLWE$ problem can be solved with advantage $1-2^{-\mathrm{\Omega}(k)}$ in time $\bigo(m)$.
\end{theorem}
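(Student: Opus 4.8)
The plan is to build the distinguisher from the empirical real part of the bias. Since $n=0$, each sample is just a scalar $x_j \in \mathbb{R}/q\mathbb{Z}$, and the two hypotheses differ only in the value of $\E[\cos(2\pi x_j/q)]$: under the $\mathsf{LWE}$ distribution this equals the real part of the bias, which by assumption is at least $b$, whereas under the uniform distribution it is $0$. So I would set $Y_j = \cos(2\pi x_j/q) \in [-1,1]$, compute the empirical mean $\bar Y = \frac{1}{m}\sum_{j=1}^{m} Y_j$, and output ``$\mathsf{LWE}$'' if and only if $\bar Y \geq b/2$.

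Correctness follows from Hoeffding's inequality. The $Y_j$ are i.i.d.\ and range over an interval of length $2$, so $\Pr[\,|\bar Y - \E Y| \geq b/2\,] \leq 2\exp(-2m(b/2)^2/2^2) = 2\exp(-mb^2/8)$, which for $m = k/b^2$ is $2\exp(-k/8) = 2^{-\mathrm{\Omega}(k)}$. Under the uniform distribution $\E Y = 0$, so $\bar Y < b/2$ except with probability $2^{-\mathrm{\Omega}(k)}$ and the distinguisher answers ``uniform''; under the $\mathsf{LWE}$ distribution $\E Y = \Re(\text{bias}) \geq b$, so $\bar Y \geq b/2$ except with probability $2^{-\mathrm{\Omega}(k)}$ and it answers ``$\mathsf{LWE}$''. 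Hence the advantage is at least $(1-2^{-\mathrm{\Omega}(k)}) - 2^{-\mathrm{\Omega}(k)} = 1 - 2^{-\mathrm{\Omega}(k)}$.

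For the running time, evaluating $Y_j$ and updating the accumulator costs $\bigo(1)$ per sample (treating one cosine evaluation to sufficient precision as constant time), for $\bigo(m)$ in total, plus a single comparison.

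There is essentially no hard step here; the only two points needing a little care are (i) using the \emph{real} part of the complex bias as the test statistic, i.e.\ averaging $\cos(2\pi x_j/q)$ rather than $\exp(2i\pi x_j/q)$, so that Hoeffding applies directly to a bounded real random variable; and (ii) placing the threshold $b/2$ symmetrically between the two expected values $0$ and ${}\geq b$, so that one deviation bound controls both error probabilities simultaneously. Keeping the complex statistic and bounding the deviation of its real part gives an equivalent analysis.
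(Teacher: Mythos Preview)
Your proof is correct and matches the paper's approach exactly: the paper's {\sc Distinguish} algorithm computes $x=\frac{1}{m}\sum_{i} \cos(2\pi b_i/q)$, thresholds at $b/2$, and applies Hoeffding's inequality to obtain the same $2^{-k/8}$ bound. Your write-up is in fact more detailed than the paper's, which compresses the argument to three sentences.
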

\begin{proof}\label{th:distinguish}
	The algorithm {\sc Distinguish} computes $x=\frac{1}{m}\sum_{i=0}^{m-1} \cos(2i\pi b_i/q)$ and returns the boolean $x\geq b/2$.
	If we have a uniform distribution then the average of $x$ is $0$, else it is larger than $b/2$.
	The Hoeffding inequality shows that the probability of $|x-\E[x]|\geq b/2$ is $2^{-k/8}$, which gives the result.
\end{proof}

\begin{algorithm}
	\caption{FindSecret}
	\begin{algorithmic}
		\Function{FindSecret}{$\Li$}
		\ForAll{$(\vec{\mathrm{a}},b) \in \Li$}
		\State $f[\vec{\mathrm{a}}] \gets f[\vec{\mathrm{a}}] + \exp(2i\pi b/q)$
		\EndFor
		\State $t \gets \Call{FastFourierTransform}{f} $
		\State \Return $\argmax_{\vec{\mathrm{s}} \in (\Zq)^n} \Re(t[\vec{\mathrm{s}}])$
		\EndFunction

	\end{algorithmic}
\end{algorithm}

\begin{lemma}
	For all $\vec{\mathrm{s}}\neq \vec{0}$, if $\vec{\mathrm{a}}$ is sampled uniformly, $\E[\exp(2 i \pi \langle \vec{\mathrm{a}} , \vec{\mathrm{s}}\rangle /q)]=0$.
\end{lemma}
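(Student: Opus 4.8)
The plan is to reduce the $n$-dimensional expectation to a one-dimensional geometric sum of roots of unity. Since $\vec{\mathrm{s}}\neq\vec{0}$ in $(\Zq)^n$, there is an index $j$ with $s_j\not\equiv 0\pmod q$. I would first condition on the coordinates $a_i$ for $i\neq j$: writing $c=\sum_{i\neq j} a_i s_i$, which is fixed once those coordinates are fixed, we have $\langle\vec{\mathrm{a}},\vec{\mathrm{s}}\rangle = s_j a_j + c$, so by independence of the coordinates of $\vec{\mathrm{a}}$,
\[
\E[\exp(2i\pi\langle\vec{\mathrm{a}},\vec{\mathrm{s}}\rangle/q)] = \E_{a_i,\,i\neq j}\Big[\exp(2i\pi c/q)\cdot\tfrac1q\sum_{a_j=0}^{q-1}\exp(2i\pi s_j a_j/q)\Big].
\]

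Next I would evaluate the inner sum. It is a finite geometric series with ratio $\zeta=\exp(2i\pi s_j/q)$; since $q\nmid s_j$ we have $\zeta\neq 1$, hence $\sum_{a_j=0}^{q-1}\zeta^{a_j} = (\zeta^q-1)/(\zeta-1) = 0$ because $\zeta^q=1$. Therefore the bracketed quantity is identically zero no matter what the remaining coordinates are, and the whole expectation vanishes.

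There is essentially no obstacle. The only subtlety worth flagging is that when $q$ is composite $s_j$ need not be invertible modulo $q$, so one cannot argue that $\langle\vec{\mathrm{a}},\vec{\mathrm{s}}\rangle$ is uniform on $\Zq$; however the geometric-sum computation above only uses $s_j\not\equiv 0\pmod q$, not invertibility. (Alternatively, one may observe that $\langle\vec{\mathrm{a}},\vec{\mathrm{s}}\rangle$ is uniform on the proper subgroup $d\Zq$ with $d=\gcd(s_1,\dots,s_n,q)<q$ and sum the corresponding roots of unity, reaching the same conclusion.)
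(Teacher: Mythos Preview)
Your proof is correct. Your main argument---conditioning on all coordinates but one and reducing to a geometric sum $\sum_{a_j=0}^{q-1}\zeta^{a_j}=0$ with $\zeta=\exp(2i\pi s_j/q)\neq 1$---is entirely valid and arguably the most direct route.

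The paper instead takes precisely the alternative you sketch in your final parenthetical: it observes that each $a_is_i$ is uniform on $\gcd(s_i,q)\Z_q$, hence $\langle\vec{\mathrm{a}},\vec{\mathrm{s}}\rangle$ is uniform on $k\Z_q$ with $k=\gcd(q,s_0,\dots,s_{n-1})<q$, and then sums the $q/k$ corresponding roots of unity. Your conditioning approach avoids having to identify the exact image subgroup; the paper's approach gives slightly more information (the actual distribution of $\langle\vec{\mathrm{a}},\vec{\mathrm{s}}\rangle$) but that extra information is not used elsewhere. Both arguments are short and elementary; the difference is cosmetic.
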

\begin{proof}
	Multiplication by $s_0$ in $\Z_q$ is $\gcd(s_0,q)$-to-one because it is a group morphism, therefore $a_0s_0$ is uniform over $\gcd(s_0,q)\Z_q$.
	Thus, by using $k=\gcd(q,s_0,\dots,s_{n-1}) < q$, $\langle \vec{\mathrm{a}} , \vec{\mathrm{s}} \rangle$ is distributed uniformly over $k\Z_q$ so
	\[ \E[\exp(2 i \pi \langle \vec{\mathrm{a}} , \vec{\mathrm{s}} \rangle /q)]=\frac{q}{k}\sum_{j=0}^{q/k-1} \exp(2i\pi jk/q)=0. \qedhere\]
\end{proof}

\begin{theorem}\label{th:findsecret}
	The algorithm {\sc FindSecret}, when given $m>(8n\log q+k)/b^2$ samples from a $\mathsf{LWE}$ problem with 
	bias whose real part is superior to $b$ returns the correct secret in time $\bigo(m+n\log^2(q)q^n)$ except with probability 
	$2^{-\mathrm{\Omega}(k)}$.
\end{theorem}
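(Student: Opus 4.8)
The plan is to study the complex array $t[\cdot]$ produced by the Fourier transform and show that the correct secret is its unique real-part maximizer. With the sign convention of \textsc{FastFourierTransform}, $t[\vec{\mathrm{s}}]=\sum_{\vec{\mathrm{a}}}f[\vec{\mathrm{a}}]\exp(-2i\pi\langle\vec{\mathrm{a}},\vec{\mathrm{s}}\rangle/q)=\sum_{(\vec{\mathrm{a}},b)\in\Li}\exp(2i\pi(b-\langle\vec{\mathrm{a}},\vec{\mathrm{s}}\rangle)/q)$, so $\Re(t[\vec{\mathrm{s}}])=\sum_{i}\cos\!\big(2\pi(b_i-\langle\vec{\mathrm{a}}_i,\vec{\mathrm{s}}\rangle)/q\big)$ is a sum of $m$ independent variables valued in $[-1,1]$. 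First I would compute the relevant expectations. For the true secret $\vec{\mathrm{s}}_0$ the $i$-th summand is $\cos(2\pi e_i/q)$ with $e_i$ the noise, whose expectation is the real part of the bias, so $\E[\Re(t[\vec{\mathrm{s}}_0])]\geq mb$ by the assumption on the bias. For a wrong secret $\vec{\mathrm{s}}$, put $\vec{\mathrm{d}}=\vec{\mathrm{s}}_0-\vec{\mathrm{s}}\neq\vec{0}$, so that $b_i-\langle\vec{\mathrm{a}}_i,\vec{\mathrm{s}}\rangle=\langle\vec{\mathrm{a}}_i,\vec{\mathrm{d}}\rangle+e_i$; conditioning on $\vec{\mathrm{a}}_i$ and using the $\mathsf{LWE}$ bias condition together with the preceding lemma ($\E[\exp(2i\pi\langle\vec{\mathrm{a}},\vec{\mathrm{d}}\rangle/q)]=0$ for $\vec{\mathrm{d}}\neq\vec{0}$), one obtains $|\E[\Re(t[\vec{\mathrm{s}}])]|\leq m\epsilon$, which vanishes in the distortion-free setting $\epsilon=0$ adopted throughout (and is negligible against $mb$ as soon as $\epsilon\ll b$).

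Next I would apply Hoeffding's inequality to $\Re(t[\vec{\mathrm{s}}_0])-\Re(t[\vec{\mathrm{s}}])$, which for fixed $\vec{\mathrm{s}}\neq\vec{\mathrm{s}}_0$ is a sum of $m$ independent terms in $[-2,2]$ of expectation at least $m(b-\epsilon)$: the probability that it is non-positive is at most $\exp(-2(m(b-\epsilon))^2/(16m))$, i.e. $\exp(-mb^2/8)$ when $\epsilon=0$. A union bound over the at most $q^n$ wrong secrets bounds the failure probability of \textsc{FindSecret} by $q^n\exp(-mb^2/8)$. The hypothesis $m>(8n\log q+k)/b^2$ gives $mb^2/8>n\log q+k/8$, hence $\exp(-mb^2/8)<\exp(-n\log q)\exp(-k/8)$; since $\log$ is the binary logarithm and $\ln 2<1$ we have $\exp(-n\log q)=\exp(-n\log_2 q)<\exp(-n\ln q)=q^{-n}$, so the failure probability is at most $\exp(-k/8)=2^{-\mathrm{\Omega}(k)}$. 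On the complementary event $\Re(t[\vec{\mathrm{s}}_0])>\Re(t[\vec{\mathrm{s}}])$ for every $\vec{\mathrm{s}}\neq\vec{\mathrm{s}}_0$, so $\argmax$ returns $\vec{\mathrm{s}}_0$.

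For the running time, accumulating $f$ costs $\bigo(m)$, the multidimensional FFT over the group $(\Zq)^n$ of size $q^n$ costs $\bigo(n\log^2(q)\,q^n)$, and scanning for the $\argmax$ costs $\bigo(q^n)$, for a total of $\bigo(m+n\log^2(q)\,q^n)$. The point I expect to require the most care is the constant chase: the hypothesis only supplies the factor $8n\log q$ (rather than a larger multiple of $n\log q$) to overcome the $q^n$-way union bound, and this turns out to be exactly enough precisely because $\exp(-n\log q)$ beats $q^{-n}$ by the $\ln 2$ slack; a secondary check is that the distortion term $m\epsilon$ for wrong secrets does not erode the gap, which is immediate when $\epsilon=0$ and harmless for $\epsilon$ small relative to $b$.
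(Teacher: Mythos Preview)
Your proof is correct and follows essentially the same route as the paper: compute the expectation of $\Re(t[\cdot])$ at the true and the wrong secrets (using the preceding lemma for the latter), apply Hoeffding, and finish with a union bound over $(\Zq)^n$. The only cosmetic difference is that the paper applies Hoeffding to each normalized $t[\vec{\mathrm{s}}']$ separately with threshold $b/2$, while you apply it directly to the difference $\Re(t[\vec{\mathrm{s}}_0])-\Re(t[\vec{\mathrm{s}}])$; both yield the same $\exp(-mb^2/8)$ tail and the same $2^{-\Omega(k)}$ after the union bound. Your explicit treatment of the distortion term $m\epsilon$ for wrong secrets is in fact a bit more careful than the paper's proof, which silently takes $\epsilon=0$.
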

\begin{proof}
	The fast Fourier transform needs $\bigo(nq^n)$ operations on numbers of bit size $\bigo(\log(q))$.
	The Hoeffding inequality shows that the difference between $t[\vec{\mathrm{s'}}]$ and $\E[\exp(2i\pi (b-\langle \vec{\mathrm{a}} , \vec{\mathrm{s'}} \rangle)/q)]$ is at most $b/2$ except with probability at most $2\exp(-mb^2/2)$.
	It holds for all $\vec{\mathrm{s'}}$ except with probability at most 
	$2q^n\exp(-mb^2/2)=2^{-\mathrm{\Omega}(k)}$ using the union bound.
	Then $t[\vec{\mathrm{s}}]\geq b-b/2=b/2$ and for all $\vec{\mathrm{s'}} \neq \vec{\mathrm{s}}$, $t[\vec{\mathrm{s'}}] < b/2$ so the algorithm returns $\vec{\mathrm{s}}$.
\end{proof}

\section{Main algorithm}
%In this section, we prove the complexity of our main algorithm, which is able to find one coordinate of the secret.
%We can then repeat it $n$ times with decreasing dimension to find the whole secret : since the complexity of the algorithm is essentially exponential in the dimension, the total time is essentially the time of the first iteration.
%If the modulus is polynomial, then we can find $\bigo(n/\log n)$ coordinates at the first iteration, so that the total complexity of the other iterations is much smaller than the complexity of the first one.
%Also, if the probability of failure of all iterations is negligible, then the union bound shows that the probability of failure of the whole algorithm is negligible.

%For the intro:
%$\mathsf{LWE}$ can also be solved using lattice reduction.
%We show in the appendix that if the noise is not too small ($\beta=n^{\bigo(1)}$), even in the case of binary secret and with common heuristics, all known lattice reduction algorithms take an exponential time and space.

In this section, we present our main algorithm, prove its asymptotical complexity, and present practical results in dimension $n=128$.

\subsection{Rationale}

A natural idea in order to distinguish between an instance of $\mathsf{LWE}$ (or $\mathsf{LPN}$) and a uniform 
distribution is to select some $k$ samples that add up to zero, yielding a new sample of the form $(\vec{0},e)$. It is then enough to distinguish between $e$ and a uniform variable. However, 
if $\delta$ is the bias of the error in the original samples, the new error $e$ has bias $\delta^{k}$, 
hence roughly $\delta^{-2k}$ samples are necessary to distinguish it from uniform. Thus it is crucial that $k$ be as small a possible.

The idea of the algorithm by Blum, Kalai and Wasserman BKW is to perform ``blockwise'' Gaussian elimination. 
The $n$ coordinates are divided into $k$ blocks of length $b = n/k$. Then, samples that are equal on the first $b$ 
coordinates are substracted together to produce new samples that are zero on the first block. This process is iterated over 
each consecutive block. Eventually samples of the form $(\vec{0},e)$ are obtained.

Each of these samples ultimately results from the addition of $2^{k}$ starting samples, so $k$ should be at most $\bigo(\log(n))$ for the algorithm to make sense. On the other hand $\mathrm{\Omega}(q^{b})$ data are clearly required at each step in order to generate enough collisions on $b$ consecutive coordinates of a block. This naturally results in a complexity roughly $2^{(1 + o(1))n/\log(n)}$ in the original algorithm for $\mathsf{LPN}$. This algorithm was later adapted to 
$\mathsf{LWE}$ in~\cite{albrecht2013complexity}, and then improved in~\cite{albrecht2014lazy}.

The idea of the latter improvement is to use so-called ``lazy modulus switching''. Instead of finding two vectors that are equal on a given block in order to generate a new vector that is zero on the block, one uses vectors that are merely close to each other. This may be seen as performing addition modulo $p$ instead of $q$ for some $p < q$, by rounding every value $x \in \Z_{q}$ to the value nearest $xp/q$ in $\Z_{p}$. Thus at each step of the algorithm, instead of generating vectors that are zero on each block, small vectors are produced. This introduces a new ``rounding'' error term, but essentially reduces the complexity from roughly $q^{b}$ to $p^{b}$. Balancing the new error term with this decrease in complexity results in a significant improvement.

However it may be observed that this rounding error is much more costly for the first few blocks than the last ones. Indeed samples produced after, say, one iteration step are bound to be added together $2^{a-1}$ times to yield the final samples, resulting in a corresponding blowup of the rounding error. By contrast, later terms will undergo less additions. Thus it makes sense to allow for progressively coarser approximations (i.e. decreasing the modulus) at each step. On the other hand, to maintain comparable data requirements to find collisions on each block, the decrease in modulus is compensated by progressively longer blocks.

What we propose here is a more general view of the \textsf{BKW} algorithm that allows for this improvement, while giving a clear view of the different complexity costs incurred by various choice of parameters. Balancing these terms is the key to finding an optimal complexity. We forego the ``modulus switching'' point of view entirely, while retaining its core ideas. The resulting algorithm generalizes several variants of \textsf{BKW}, and will be later applied in a variety of settings.

Also, each time we combine two samples, we never use again these two samples so that the combined samples are independent.
Previous works used repeatedly one of the two samples, so that independency can only be attained by repeating the entire algorithm for each sample needed by the distinguisher, as was done in~\cite{BKW}.

\subsection{Quantization}

The goal of quantization is to associate to each point of $\R^k$ a center from a \textit{small} set, such that the expectancy of the distance between a point and its center is small. We will then be able to produce small vectors by substracting vectors associated to the same center.

Modulus switching amounts to a simple quantizer which rounds every coordinate to the nearest multiple of some constant. Our proven algorithm uses a similar quantizer, except the constant depends on the index of the coordinate.

It is possible to decrease the average distance from a point to its center by a constant factor for large moduli \cite{gray1998quantization}, but doing so would complicate our proof without improving the leading term of the complexity. When the modulus is small, it might be worthwhile to use error-correcting codes as in \cite{guo2014solving}.

\subsection{Main Algorithm}

Let us denote by $\Li_{0}$ the set of starting samples, and $\Li_i$ the sample list after $i$ reduction steps. The numbers 
$d_{0} = 0 \leq d_{1} \leq \dots \leq d_{k} = n$ partition the $n$ coordinates of sample vectors into $k$ buckets. Let 
$\vec{\mathrm{D}} = (D_{0},\dots,D_{k-1})$ be the vector of quantization coefficients associated to each bucket.

\begin{algorithm}
	\caption{Main resolution}
	\begin{algorithmic}[1]
		\Function{Reduce}{$\Li_{in}$,$D_i$,$d_{i}$,$d_{i+1}$}
		\State $\Li_{out} \gets \varnothing $
		\State $t[] \gets \varnothing $
			\ForAll{$(\vec{\mathrm{a}},b) \in \Li_{in}$}
			\State $\vec{\mathrm{r}}=\lfloor \frac{(\vec{a}_{d_i},\dots,\vec{a}_{d_{i+1}-1})}{D} \rceil $
			\If {$t[\vec{\mathrm{r}}]=\varnothing$}
				\State $t[\vec{\mathrm{r}}] \gets (\vec{\mathrm{a}},b)$
			\Else
				\State $\Li_{out} \gets \Li_{out} :: \{(\vec{\mathrm{a,b}})-t[\vec{r}]\}$
				\State $t[\vec{\mathrm{r}}]\gets \varnothing$
			\EndIf
		\EndFor
		\State \Return $\Li_{out}$
		\EndFunction

		\Function{Solve}{$\Li_0$,$\vec{\mathrm{D}}$,$(d_{i})$}
		\For{$0\leq i<k$}
			\State $\Li_{i+1} \gets \Call{Reduce}{\Li_i,D_i,d_i,d_{i+1}}$
		\EndFor
		\State \Return $\Call{Distinguish}{\{b| (\vec{\mathrm{a}},b) \in \Li_k\}}$
		\EndFunction
	\end{algorithmic}
\end{algorithm}

In order to allow for a uniform presentation of the \textsf{BKW} algorithm, applicable to different settings, we do not assume a specific distribution on the secret. Instead, we assume there exists some {\it known} $\vec{\mathrm{B}} = (B_{0},\dots,B_{n-1})$ such that $\sum_i (s_i/B_i)^2 \leq n$. Note that this is in particular true if $|s_{i}| \leq B_{i}$. We shall see how to adapt this to the standard Gaussian case later on. Without loss of generality, $\vec{\mathrm{B}}$ is non increasing. 

There are $a$ phases in our reduction : in the $i$-th phase, the coordinates from $d_i$ to $d_{i+1}$ are reduced.
We define $m=|\Li_0|$.
\iffalse
We fix $d_0=0$ and $d_k=n-1$ so that {\sc FindSecret} terminates in time $\bigo(m+q \log q)$.
It can easily be modified to run in time $\bigo(mB_{n-1}\sqrt{n})$ which may be faster if $q$ is large.
Also it can be modified to run in time $\bigo(m)$ and distinguish between a $\mathsf{LWE}$ distribution and the uniform distribution, which is a problem equivalent to solving the $\mathsf{LWE}$ problem \cite{brakerski2013classical}.
\fi

\begin{lemma} {\sc Solve} terminates in time $\bigo(mn\log q)$.
\end{lemma}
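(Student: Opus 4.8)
The plan is to track the total work done by \textsc{Solve}, which consists of $k$ successive calls to \textsc{Reduce} plus the final call to \textsc{Distinguish}, and to observe that the bottleneck is a near-linear pass over the sample lists at each phase. First I would argue that each call to \textsc{Reduce} processes its input list $\Li_{in}$ exactly once: for every sample $(\vec{\mathrm{a}},b)$ it computes the quantized index $\vec{\mathrm{r}}$, does a single lookup in the table $t$, and either stores the sample or emits one combined sample and clears the bucket. Computing $\vec{\mathrm{r}}$ costs $\bigo((d_{i+1}-d_i)\log q)$ arithmetic on $\bigo(\log q)$-bit numbers (one rounded division per coordinate in the current bucket), and forming the combined vector $(\vec{\mathrm{a}},b)-t[\vec{\mathrm{r}}]$ costs $\bigo(n\log q)$; the table $t$ can be realized so that insertion and lookup are $\bigo((d_{i+1}-d_i)\log q)$, for instance by hashing on the $\bigo((d_{i+1}-d_i)\log q)$-bit key $\vec{\mathrm{r}}$, or by a radix/trie structure, so no extra logarithmic factor in $|\Li_{in}|$ is incurred.

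Next I would bound the list sizes. Since \textsc{Reduce} pairs up samples falling in the same bucket and outputs one sample per matched pair, we have $|\Li_{i+1}| \leq |\Li_i|/2 \leq m$ for every $i$, so every intermediate list has size at most $m$. Hence the $i$-th phase costs $\bigo(|\Li_i|\,n\log q) = \bigo(mn\log q)$, and summing over the $k \leq n$ phases would naively give $\bigo(mn^2\log q)$; to get the sharper bound I would instead charge the cost of phase $i$ as $\bigo\big(|\Li_i|\,(n + (d_{i+1}-d_i))\log q\big)$ and note $|\Li_i| \le m/2^i$, so that $\sum_i |\Li_i|\, n\log q \le mn\log q \sum_i 2^{-i} = \bigo(mn\log q)$ and likewise $\sum_i |\Li_i|(d_{i+1}-d_i)\log q \le m\log q \sum_i (d_{i+1}-d_i) = \bigo(mn\log q)$. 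Finally, \textsc{Distinguish} runs in time $\bigo(|\Li_k|) = \bigo(m)$ by the earlier theorem, which is absorbed. Adding everything gives the claimed $\bigo(mn\log q)$.

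The main obstacle I anticipate is justifying that the quantization table $t$ supports constant-overhead (up to the $\bigo((d_{i+1}-d_i)\log q)$ key-length factor already counted) insertions and lookups, rather than a $\bigo(\log m)$ comparison-tree cost; this is a standard data-structure point (hashing, or bucket arrays indexed by the small quantized coordinates), but it is the one place where the stated bound, as opposed to a bound with an extra $\log m = \bigo(n\log q)$ factor, really relies on an implementation choice. The rest is routine accounting, and the geometric decay $|\Li_{i+1}| \leq |\Li_i|/2$ is what prevents the per-phase costs from accumulating to something larger than $\bigo(mn\log q)$.
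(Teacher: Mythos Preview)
Your proof is correct and follows essentially the same approach as the paper: bound the cost of each call to \textsc{Reduce} by $\bigo(|\Li_i|\,n\log q)$, observe that $|\Li_{i+1}|\le |\Li_i|/2$ so that $|\Li_i|\le m/2^i$, and sum the resulting geometric series to get $\bigo(mn\log q)$. The paper's proof is two sentences and simply asserts the per-phase cost; your version supplies the implementation-level justification (hashing on the quantized key, the geometric decay, the separate accounting of the $(d_{i+1}-d_i)$ versus $n$ contributions), but the underlying argument is identical.
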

\begin{proof}
	The {\sc Reduce} algorithm clearly runs in time $\bigo(|\Li|n \log q)$. Moreover, $|\Li_{i+1}|\leq |\Li_i|/2$ so that the total running time of {\sc Solve} is $\bigo(n\log q\sum_{i=0}^k m/2^i)=\bigo(mn\log q)$.
\end{proof}

\begin{lemma}
	\label{lemma:calculbruit}
	Write $\Li'_{i}$ for the samples of $\Li_{i}$ where the first $d_{i}$ coordinates of each sample vector have been truncated.
Assume $|s_j|D_{i}<0.23q$ for all $d_{i} \leq j < d_{i+1}$.
If $\Li'_{i}$ is sampled according to the $\mathsf{LWE}$ distribution of secret $\vec{\mathrm{s}}$ and noise parameters $\alpha$ and $\epsilon\leq 1$, then $\Li'_{i+1}$ is sampled according to the $\mathsf{LWE}$ distribution of the truncated secret 
with parameters:
\[\alpha'^2=2\alpha^2+4\pi^2\sum_{j=d_i}^{d_{i+1}-1}(s_jD_{i}/q)^2\quad\text{\rm and }\quad\epsilon'=3\epsilon.\]
	On the other hand, if $D_i=1$, then $\alpha'^2=2\alpha^2$.
\end{lemma}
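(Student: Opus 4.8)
The plan is to track one output sample of \textsc{Reduce}, show that conditioned on its truncated vector part its noise is distributed as an $\mathsf{LWE}$ noise with the announced bias, and then deduce the statement for $\Li'_{i+1}$ from the fact that distinct output samples are built from disjoint pairs of input samples, hence are independent. First I would fix one output sample $(\vec A-\vec a,\ B-b)$, produced from two independent input samples $(\vec a,b),(\vec A,B)\in\Li'_{i}$ whose blocks of coordinates $d_i,\dots,d_{i+1}-1$ round to a common bucket $\vec r$. Splitting each vector as $\vec a=(\vec a^{(1)},\vec a^{(2)})$ (the reduced block, then the rest) and likewise $\vec s=(\vec s^{(1)},\vec s^{(2)})$ for the truncated secret, the truncated new vector $\vec A^{(2)}-\vec a^{(2)}$ is uniform, because those coordinates are unconstrained by the bucketing and the two parents are independent; so it suffices to control the output noise conditioned on $\vec A^{(2)}-\vec a^{(2)}$. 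With $\vec t=\vec A^{(1)}-\vec a^{(1)}$ (so $|t_j|\le D_i$) and $e_1=b-\langle\vec a,\vec s\rangle$, $e_2=B-\langle\vec A,\vec s\rangle$, that noise is $\langle\vec t,\vec s^{(1)}\rangle+e_2-e_1$; since $e_1,e_2$ are conditionally independent given their vectors, its bias, conditioned further on $\vec r$ and on $\vec a^{(2)},\vec A^{(2)}$, factors as $h(\vec r,\vec a^{(2)})\,\overline{h(\vec r,\vec A^{(2)})}$, where for $\vec x$ uniform in the bucket $\vec r$,
\[ h(\vec r,\vec y)=\E_{\vec x}\left[\exp\big(-2i\pi\langle\vec x,\vec s^{(1)}\rangle/q\big)\,g(\vec x,\vec y)\right],\qquad g(\vec a)=\E\left[\exp\big(2i\pi(\langle\vec a,\vec s\rangle-b)/q\big)\ \big|\ \vec a\right]. \]
By hypothesis $g(\vec a)=\exp(-\alpha_0^2)(1+z(\vec a))$ for a universal $\alpha_0\le\alpha$ with $|z(\vec a)|\le\epsilon$, and the true conditional bias of the output is the average of $h(\vec r,\vec a^{(2)})\overline{h(\vec r,\vec A^{(2)})}$ over $\vec r$ and over $\vec A^{(2)}$ (with $\vec a^{(2)}$ pinned by the output vector).

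I would first treat $\epsilon=0$. Then $g\equiv\exp(-\alpha_0^2)$, so $h(\vec r,\vec y)=\exp(-\alpha_0^2)\,w(\vec r)$ with $w(\vec r)=\E_{\vec x}[\exp(-2i\pi\langle\vec x,\vec s^{(1)}\rangle/q)]$. Since each coordinate of $\vec x$ is uniform over an interval of exactly $D_i$ consecutive residues, $|w(\vec r)|=W:=\prod_{j=d_i}^{d_{i+1}-1}\left|\frac1{D_i}\frac{\sin(\pi D_i s_j/q)}{\sin(\pi s_j/q)}\right|$ independently of $\vec r$ (the $j$-th factor being $1$ when $s_j=0$), so the output bias equals $\exp(-2\alpha_0^2)W^2$ for every $\vec r$ and every output vector. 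The hypothesis $|s_j|D_i<0.23q$ keeps $\sin(\pi D_i s_j/q)>0$ and gives, for $s_j\ne 0$,
\[ \frac1{D_i}\frac{\sin(\pi D_i s_j/q)}{|\sin(\pi s_j/q)|}\ \ge\ \frac{\sin(\pi D_i s_j/q)}{\pi D_i s_j/q}\ \ge\ \exp\big(-2\pi^2(s_j D_i/q)^2\big) \]
(using $|\sin x|\le|x|$ and $\sin y/y\ge\exp(-2y^2)$ on that range), whence $W^2\ge\exp(-4\pi^2\sum_j(s_j D_i/q)^2)$, i.e. $W^2=\exp(-\gamma^2)$ with $0\le\gamma^2\le 4\pi^2\sum_j(s_j D_i/q)^2$. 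Taking the new noise to be some $\tilde\alpha$ with $\tilde\alpha^2=2\alpha_0^2+\gamma^2\le 2\alpha^2+4\pi^2\sum_j(s_jD_i/q)^2=\alpha'^2$ then makes $\Li'_{i+1}$ an exact $\mathsf{LWE}$ distribution, so $\epsilon'=0$. When $D_i=1$ the quantizer is the identity, so $\vec t=\vec 0$, $W=1$, $\gamma=0$ and $\tilde\alpha^2=2\alpha_0^2\le 2\alpha^2$.

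For general $\epsilon$, I would write $h(\vec r,\vec y)\exp(\alpha_0^2)=w(\vec r)+\tilde z(\vec r,\vec y)$ with $|\tilde z|\le\epsilon$. Because $|w(\vec r)|=W$ for every $\vec r$, the output bias takes the form $\exp(-2\alpha_0^2)W^2(1+\Xi)$ with $|\Xi|\le 2\epsilon/W+(\epsilon/W)^2$, and multiplying by $\exp(\tilde\alpha^2)=\exp(2\alpha_0^2+\gamma^2)$ and using $W^2\exp(\gamma^2)=1$ turns this into $1+\Xi$. The argument then organises $\Xi$ into a real nonnegative part — which may instead be folded into $\tilde\alpha$, only decreasing $\tilde\alpha^2$ below $2\alpha_0^2+\gamma^2\le\alpha'^2$ and keeping $\tilde\alpha^2\ge 0$ because any bias has modulus $\le 1$ — and a remainder that carries a compensating factor $W^2$ (it comes from the part of $\tilde z$ proportional to $w(\vec r)$), so that its contribution after rescaling is again $O(\epsilon)$. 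The averaging over $\vec r$ is essential here: all buckets have the same width $D_i$, so $\E_{\vec r}[w(\vec r)]=0$ whenever the reduced block contains a nonzero secret coordinate, which annihilates the potentially large first-order cross terms. For the benign distributions used in practice (e.g. $\vec a$-independent noise, covering the Gaussian case) this is immediate from multiplicativity of biases. Collecting the estimates, the residual error is at most $2\epsilon+\epsilon^2\le 3\epsilon$ since $\epsilon\le 1$, so $\epsilon'=3\epsilon$.

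The routine ingredients are the uniformity of the new vector, the closed form for $w(\vec r)$, and the elementary inequality $\sin y/y\ge\exp(-2y^2)$ on the range secured by $|s_j|D_i<0.23q$. The hard part will be the distortion bookkeeping of the third step: a careless estimate loses a factor $\exp(\gamma^2)$, and the whole game is to arrange that every surviving error term is either multiplied by $W^2$ (which cancels the $\exp(\gamma^2)$ produced by the new noise parameter) or is real and nonnegative (hence absorbable into $\tilde\alpha$); this really uses that, after averaging over $\vec r$, the troublesome $1/W$ factors either recombine into $W\cdot W^{-1}=1$ or land on $\overline{w(\vec r)}$, whose $\vec r$-average vanishes. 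This is also why the argument keeps the plain per-coordinate quantizer: a finer quantizer would spoil the identity $|w(\vec r)|=W$ and worsen the constant.
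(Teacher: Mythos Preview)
Your computation of the new noise parameter $\alpha'$ is correct and equivalent to the paper's, but the route differs: you evaluate the Dirichlet kernel of a single parent's bucket and use $\sin y/y\ge\exp(-2y^2)$, while the paper subtracts the two parents first, observes that each reduced coordinate $a_j$ of the difference has an even distribution supported on $|a_j|\le D_i$, and bounds $\E[\exp(2i\pi a_js_j/q)]\ge\cos(2\pi s_jD_i/q)\ge\exp(-4\pi^2(s_jD_i/q)^2)$ directly. The paper's bound does not need every bucket to contain exactly $D_i$ residues, which you tacitly assume in order to have $|w(\vec r)|=W$ independent of $\vec r$.

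The genuine gap is the distortion bookkeeping. Your claim that ``averaging over $\vec r$'' annihilates the first-order cross terms does not hold: $\tilde z(\vec r,\cdot)$ depends on $\vec r$, so $\E_{\vec r}[w(\vec r)]=0$ is not the quantity at stake. Writing the cross term out,
\[
\E_{\vec r}\!\big[w(\vec r)\,\overline{\tilde z(\vec r,\vec A^{(2)})}\big]
=\E_{\vec X^{(1)}}\!\Big[w\big(\vec r(\vec X^{(1)})\big)\,\exp\!\big(2i\pi\langle\vec X^{(1)},\vec s^{(1)}\rangle/q\big)\,\overline{z(\vec X^{(1)},\vec A^{(2)})}\Big],
\]
and since $|w|=W$ this is only $O(W\epsilon)$, not $O(W^2\epsilon)$; an adversarial $z$, for instance $z(\vec a)=\epsilon\exp\!\big(2i\pi\langle\vec a^{(1)}\bmod D_i,\vec s^{(1)}\rangle/q\big)$, attains this bound. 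After dividing by $W^2=\exp(-\gamma^2)$ the residual distortion is $O(\epsilon/W)$, not $3\epsilon$. The assertion that the surviving terms ``carry a compensating factor $W^2$'' or are ``real and nonnegative'' is not substantiated for general $z$, and you in effect concede this by retreating to ``benign distributions'' in the next sentence.

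The paper organises the argument differently and thereby avoids your specific difficulty: it \emph{first} subtracts, so that the product $(1+\epsilon(\vec a'))(1+\epsilon(\vec a''))$ immediately yields the $3\epsilon$ bound conditioned on the full difference vector; it \emph{then} truncates one coordinate at a time, arguing inductively that the conditional bias is multiplied by the fixed positive real $\E[\exp(2i\pi a_js_j/q)]$ while the relative distortion is carried through. Your all-at-once $h\bar h$ decomposition mixes these two steps and loses the structure the paper exploits.
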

\begin{proof}
	The independence of the outputted samples and the uniformity of their vectorial part are clear.
	Let $(\vec{\mathrm{a}},b)$ be a sample obtained by substracting two samples from $\Li_{i}$. For $\vec{\mathrm{a'}}$ the vectorial part of a sample, define $\epsilon(\vec{\mathrm{a'}})$ such that $\E[\exp(2i\pi(\langle \vec{\mathrm{a'}},\vec{\mathrm{s}} \rangle-b')/q)|\vec{\mathrm{a'}}]=(1+\epsilon(\vec{\mathrm{a'}}))\exp(-\alpha^2)$. By definition of \textsf{LWE}, $|\epsilon(\vec{\mathrm{a'}})| \leq \epsilon$, and by independence:
	\[ \E[\exp(2i\pi(\langle \vec{\mathrm{a}},\vec{\mathrm{s}}\rangle -b)/q)|\vec{\mathrm{a}}]=\exp(-2\alpha^2)\E_{\vec{\mathrm{a'}}-\vec{\mathrm{a''}}=\vec{\mathrm{a}}}[(1+\epsilon(\vec{\mathrm{a'}}))(1+\epsilon(\vec{\mathrm{a''}}))],\]
	with $|\E_{\vec{\mathrm{a'}}-\vec{\mathrm{a''}}=\vec{\mathrm{a}}}[(1+\epsilon(\vec{\mathrm{a'}}))(1+\epsilon(\vec{\mathrm{a''}}))]-1|\leq 3\epsilon$.
	
	\noindent
	Thus we computed the noise corresponding to adding two samples of $\Li_{i}$. To get the noise for a sample from $\Li_{i+1}$, it remains to truncate coordinates from $d_{i}$ to $d_{i+1}$.
	A straightforward induction on the coordinates shows that this noise is :
	\[ \exp(-2\alpha^2)\E_{\vec{\mathrm{a'}}-\vec{\mathrm{a''}}=\vec{\mathrm{a}}}[(1+\epsilon(\vec{\mathrm{a'}}))(1+\epsilon(\vec{\mathrm{a''}}))]\prod_{j=d_{i}}^{d_{i+1}-1}\E[\exp(2i\pi \vec{a}_j\vec{s}_j/q)].\]
	Indeed, if we denote by $\vec{\mathrm{a}}^{(j)}$ the vector $\vec{\mathrm{a}}$ where the first $j$ coordinates are truncated and $\alpha_j$ the noise parameter of $\vec{\mathrm{a}}^{(j)}$, we have:
	\begin{align*}
		& |\E[\exp(2i\pi(\langle \vec{\mathrm{a}}^{(j+1)},\vec{\mathrm{s}}^{(j+1)} \rangle-b)/q)|\vec{\mathrm{a}}^{(j+1)}]-\exp(-\alpha_n^2)\E[\exp(2i\pi \vec{a}_j\vec{s}_j/q)]|\\
		=\;&|\E[\exp(-2i\pi \vec{a}_j\vec{s}_j/q)(\exp(2i\pi(\langle \vec{\mathrm{a}}^{(j)},\vec{\mathrm{s}}^{(j)} \rangle -b)/q)-\exp(-\alpha_j^2))]|\\
		\leq\;&\epsilon' \exp(-\alpha_j^2)\E[\exp(2i\pi \vec{a}_j\vec{s}_j/q)].
	\end{align*}

\noindent
	It remains to compute $\E[\exp(2i\pi \vec{a}_j\vec{s}_j/q)]$ for $d_{i} \leq j<d_{i+1}$.
	Let $D = D_{i}$. The probability mass function of $\vec{a}_j$ is even, so $\E[\exp(2i\pi \vec{a}_j \vec{s}_j/q)]$ is real.
	Furthermore, since $|\vec{a}_j|\leq D$, \[ \E[\exp(2i\pi \vec{a}_j \vec{s}_j/q)]\geq \cos(2\pi \vec{s}_jD/q) .\]
	Simple function analysis shows that $\ln(\cos(2\pi x))\geq -4\pi^2x^2$ for $|x|\leq 0.23$, and since $|\vec{s}_j|D<0.23q$, we get :
	\[ \E[\exp(2i\pi \vec{a}_j \vec{s}_j/q)]\geq \exp(-4\pi^2 \vec{s}_j^2D^2/q^2). \]
	On the other hand, if $D_i=1$ then $\vec{a}_j=0$ and $\E[\exp(2i\pi \vec{a}_j \vec{s}_j/q)]=1$.
\end{proof}

Finding optimal parameters for \textsf{BKW} amounts to balancing various costs: the baseline number of samples required so that the final list $\Li_{k}$ is non-empty, and the additional factor due to the need to distinguish the final error bias. This final bias itself comes both from the blowup of the original error bias by the \textsf{BKW} additions, and the ``rounding errors'' due to quantization. Balancing these costs essentially means solving a system.

For this purpose, it is convenient to set the overall target complexity as $2^{n(x + o(1))}$ for some $x$ to be determined. The following auxiliary lemma essentially gives optimal values for the parameters of {\sc Solve} assuming a suitable value of $x$. The actual value of $x$ will be decided later on.

\begin{lemma}\label{lemma:main}
	Pick some value $x$ (dependent on $\mathsf{LWE}$ parameters).
	Choose:
	\begin{align*}
	k &\leq \bigg\lfloor \log\bigg(\frac{nx}{6\alpha^2}\bigg) \bigg\rfloor\quad&
	m &= n2^k2^{nx}\\
	D_i &\leq \frac{q\sqrt{x/6}}{\pi B_{d_i}2^{(a-i+1)/2}}\quad&
	d_{i+1} &= \min\bigg(d_{i} + \bigg\lfloor \frac{nx}{\log(1+q/D_i)} \bigg\rfloor, n\bigg).
	\end{align*}
	Assume $d_k = n$ and $\epsilon\leq 1/(\beta^{2}x)^{\log 3}$, and for all $i$ and $d_i\leq j < d_{i+1}$, $|s_j|D_i<0.23q$. {\sc Solve} runs in time $\bigo(mn)$ with negligible failure probability.
\end{lemma}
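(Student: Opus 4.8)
The plan is to verify that with the stated parameter choices, (i) {\sc Solve} runs in time $\bigo(mn)$, (ii) the list $\Li_k$ is non-empty (indeed large enough) except with negligible probability, and (iii) the final bias of the error in $\Li_k$ is large enough that {\sc Distinguish} succeeds with negligible failure probability, when fed $\bigo(1/b^2)$ samples with $b$ the final bias. The running time is immediate from the earlier lemma: {\sc Solve} runs in $\bigo(mn\log q)$, and one absorbs the $\log q$ factor into the $o(1)$ in the exponent $2^{n(x+o(1))}$ (or argues $\log q$ is polynomially bounded in the relevant regime), giving $\bigo(mn)$ up to the stated conventions, with $m = n2^k 2^{nx}$.

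The heart of the argument is tracking the two competing effects through the $k=a$ reduction phases. First, the \emph{collision count}: at phase $i$ we quantize the $(d_{i+1}-d_i)$ coordinates of the current block by $D_i$, so the number of quantization cells is roughly $(1+q/D_i)^{d_{i+1}-d_i}$. The choice $d_{i+1}-d_i \approx nx/\log(1+q/D_i)$ makes this quantity $\approx 2^{nx}$, so that a list of size $\approx 2^{nx}$ still yields $\approx 2^{nx}$ collisions, i.e. the list size is (up to constants) preserved across each phase — this is why the $n2^k$ prefactor in $m$ suffices to keep $\Li_k$ non-empty after $k$ halvings. One has to check $d_k=n$ is consistent (assumed in the statement) and that $|\Li_i|$ stays $\ge n2^{nx}$ throughout; a Chernoff/Hoeffding bound on the number of non-empty-cell collisions gives the negligible failure probability. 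Second, the \emph{noise blowup}: applying \autoref{lemma:calculbruit} inductively, after all $k$ phases the squared noise parameter is
\[ \alpha_{\mathrm{final}}^2 = 2^k\alpha^2 + 4\pi^2\sum_{i=0}^{k-1} 2^{k-1-i}\sum_{j=d_i}^{d_{i+1}-1}(s_j D_i/q)^2, \]
since the contribution of phase $i$'s rounding error gets doubled once per subsequent phase. Using $\sum_j (s_j/B_{d_i})^2 \le \sum_j (s_j/B_j)^2 \le n$ (as $\vec{\mathrm{B}}$ is non-increasing) and the bound $D_i \le q\sqrt{x/6}/(\pi B_{d_i}2^{(a-i+1)/2})$, each inner sum is at most $4\pi^2 \cdot 2^{k-1-i} \cdot (x/6)\cdot 2^{-(a-i+1)} \cdot \pi^{-2}\cdot 4\pi^2\cdot n$ — wait, more carefully: $(s_j D_i/q)^2 \le (s_j/B_{d_i})^2 (x/6) 2^{-(a-i+1)}/\pi^2$, so $4\pi^2 2^{k-1-i}\sum_j(\cdot) \le 4\pi^2 \cdot 2^{k-1-i}\cdot (x/6)2^{-(a-i+1)}\pi^{-2}\cdot n = (2x/3)\cdot 2^{k-1-i-(a-i+1)} n = (2x/3)2^{-2}n = nx/6$. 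Summing over the $k=a$ phases gives $4\pi^2(\cdots) \le anx/6 \le knx/6$. Meanwhile $k \le \log(nx/(6\alpha^2))$ gives $2^k\alpha^2 \le nx/6$. Hence $\alpha_{\mathrm{final}}^2 \le (k+1)nx/6 = \bigo(knx)$; the real part of the final bias is then $\gtrsim \exp(-\alpha_{\mathrm{final}}^2)(1-\epsilon_{\mathrm{final}})$, and since $\epsilon$ gets multiplied by $3$ each phase, $\epsilon_{\mathrm{final}} \le 3^k\epsilon \le 3^k/(\beta^2 x)^{\log 3} = \epsilon \cdot 2^{k\log 3}$ stays bounded below $1$ by the hypothesis $\epsilon \le 1/(\beta^2 x)^{\log 3}$ together with the relation $2^k \le nx/(6\alpha^2) = \beta^2 x/3$ (using $\beta^2 = n/(2\alpha^2)$), so $3^k = 2^{k\log 3} \le (\beta^2 x/3)^{\log 3}$, making $\epsilon_{\mathrm{final}} \le (1/3)^{\log 3} < 1$. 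Thus the final bias $b$ satisfies $b = 2^{-\bigo(knx)}$, and feeding {\sc Distinguish} the $\Omega(1/b^2) = 2^{\bigo(knx)}$ samples of $\Li_k$ (which, by the collision-count bound, has $\ge n2^{nx} \gg 1/b^2$ samples once $x$ is chosen so that $2^{nx}$ dominates — this is exactly where the eventual choice of $x$ enters) yields advantage $1-2^{-\Omega(n)}$.

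The main obstacle, and the step requiring the most care, is the simultaneous bookkeeping of all the $2^{\pm(\cdots)}$ factors: one must check that the same $x$ makes the per-phase collision count $\approx 2^{nx}$, keeps the total rounding noise $\bigo(knx)$, keeps $\Li_k$ large enough to distinguish a bias of $2^{-\bigo(knx)}$, and is consistent with $d_k=n$ and with $|s_j|D_i < 0.23q$ (the last being a hypothesis here, to be discharged in the applications). The halving $|\Li_{i+1}|\le|\Li_i|/2$ combined with needing $|\Li_k|\ge \Omega(1/b^2)$ is the tightest constraint and dictates the $n2^k$ factor. I would organize the write-up as: a lemma computing $\alpha_{\mathrm{final}}^2$ and $\epsilon_{\mathrm{final}}$ by induction from \autoref{lemma:calculbruit}; a probabilistic lemma lower-bounding $|\Li_{i+1}|$ given $|\Li_i|$ via Hoeffding on collisions; and then a short combination plugging in the parameter values to conclude.
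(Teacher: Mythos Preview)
Your overall structure is right, but there is a genuine gap in the rounding-noise bound that breaks the final step.

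You bound each phase's contribution $4\pi^2\, 2^{k-1-i}\sum_{j=d_i}^{d_{i+1}-1}(s_jD_i/q)^2$ by $nx/6$, using $\sum_{j=d_i}^{d_{i+1}-1}(s_j/B_{d_i})^2 \le n$ for each $i$ separately, and then sum over the $k$ phases to get $\alpha_{\mathrm{final}}^2 \le (k+1)nx/6$. This is too loose: with a bias of order $\exp(-\Theta(knx))$ you would need $2^{\Omega(knx)}$ samples in $\Li_k$, whereas you only have $|\Li_k|\approx n2^{nx}$. Your remark that ``$n2^{nx} \gg 1/b^2$ once $x$ is chosen so that $2^{nx}$ dominates'' cannot save this: $x$ is \emph{fixed} in this lemma (it is only instantiated later), and $k$ is typically $\Theta(\log\beta)$, so $2^{nx}$ and $2^{knx}$ are genuinely incomparable here.

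The fix --- and the whole point of the geometric choice of $D_i$ --- is that the factor $2^{k-1-i}\cdot 2^{-(a-i+1)} = 1/4$ is \emph{constant} in $i$. After simplification the total rounding term becomes
\[
4\pi^2\sum_{i=0}^{k-1}2^{k-i-1}\frac{x/6}{\pi^2 2^{k-i+1}}\sum_{j=d_i}^{d_{i+1}-1}\Big(\frac{s_j}{B_j}\Big)^2
\;=\;(x/6)\sum_{j=0}^{n-1}\Big(\frac{s_j}{B_j}\Big)^2 \;\le\; nx/6,
\]
invoking the global hypothesis $\sum_j (s_j/B_j)^2 \le n$ \emph{once across all phases}, not once per phase. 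This yields $\alpha_{\mathrm{final}}^2 \le nx/3$, hence a real bias at least $\exp(-nx/3)(1-3^k\epsilon)\ge 2^{-nx/2}$, and now $|\Li_k|\ge n2^{nx}$ is exactly what {\sc Distinguish} needs.

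A secondary point: no Chernoff/Hoeffding argument is needed for the list sizes. {\sc Reduce} is deterministic, and at most one sample per quantization cell is left unpaired, so $|\Li_{i+1}| \ge \big(|\Li_i| - (1+q/D_i)^{d_{i+1}-d_i}\big)/2 \ge (|\Li_i| - 2^{nx})/2$. The recursion $|\Li_i| \ge (|\Li_0|+2^{nx})/2^i - 2^{nx}$ then gives $|\Li_k| \ge n2^{nx}$ with no probabilistic ingredient at all; the only failure probability in the lemma comes from {\sc Distinguish}.
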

\begin{proof}
	Remark that for all $i$, \[ |\Li_{i+1}|\geq (|\Li_{i}|-(1+q/D_i)^{d_{i+1}-d_i})/2 \geq (|\Li_i|-2^{nx})/2. \]
	Using induction, we then have $|\Li_i|\geq (|\Li_0|+2^{nx})/2^i-2^{nx}$ so that $|\Li_k| \geq n2^{nx}$.

	By induction and using the previous lemma, the input of {\sc Distinguish} is sampled from a $\mathsf{LWE}$ distribution with noise parameter:
	\[ \alpha'^2=2^k\alpha^2+4\pi^2\sum_{i=0}^{k-1}2^{k-i-1}\sum_{j=d_i}^{d_{i+1}-1}(s_jD_i/q)^2. \]
	By choice of $k$ the first term is smaller than $nx/6$.
	As for the second term, since $B$ is non increasing and by choice of $D_{i}$, it is smaller than:
	\[ 4\pi^2\sum_{i=0}^{k-1}2^{k-i-1}\frac{x/6}{\pi^22^{k-i+1}}\sum_{j=d_i}^{d_{i+1}-1}\Big(\frac{s_j}{B_j}\Big)^2
		\leq (x/6)\sum_{j=0}^{n-1}\Big(\frac{s_{j}}{B_{j}}\Big)^{2}\leq nx/6. \]
	Thus the real part of the bias is superior to $\exp(-nx/3)(1-3^a\epsilon) \geq 2^{-nx/2}$, and hence by \autoref{th:distinguish}, {\sc Distinguish} fails with negligible probability.
\end{proof}
\begin{theorem}\label{thm:lwe}
	Assume that for all $i$, $|s_i|\leq B$, $B\geq 2$, $\max(\beta,\log(q))=2^{o(n/\log n)}$, $\beta=\omega(1)$,  and $\epsilon\leq 1/\beta^4$.
	Then {\sc Solve} takes time $2^{(n/2+o(n))/\ln(1+\log \beta/\log B))}$.
	%\[\exp_{2}\bigg(\frac{n}{(2-o(1))\ln(1+\frac{\log \beta}{\log B})}\bigg).\]
\end{theorem}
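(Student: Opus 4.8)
\medskip
\noindent\textbf{Proof plan.} The plan is to invoke \autoref{lemma:main} with the uniform bound $B_i=B$ for every coordinate — legitimate since $|s_i|\le B$ gives $\sum_i(s_i/B)^2\le n$ — and with the value of $x$ forced by the requirement $d_k=n$; the announced running time then follows from the $\bigo(mn)$ estimate of \autoref{lemma:main} once the asymptotic hypotheses on $\beta$, $q$ and $\epsilon$ are used to discharge its side conditions.

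First I fix the auxiliary quantities. Since $\alpha^2=n/(2\beta^2)$ we have $nx/(6\alpha^2)=\beta^2x/3$, so $k=\lfloor\log(\beta^2x/3)\rfloor$ is admissible, giving $2^k\le\beta^2x/3$ and, by $\beta=2^{o(n/\log n)}$ and $\beta=\omega(1)$, $k=2\log\beta+O(1+\log x)$ with $k=o(n/\log n)$ and $2^{k/2}=\beta^{1+o(1)}$. Taking $D_i$ at its prescribed bound with $B_{d_i}=B$ gives $q/D_i=\pi B\,2^{(k-i+1)/2}/\sqrt{x/6}$, hence
\[ \log(1+q/D_i)=\log B+\frac{k-i}{2}+O\!\bigl(1+\log(1/x)\bigr), \]
where the correction term is of lower order than $\log B+k$ in the admissible regimes.

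The crux is to determine $x$ from $d_k=n$. Summing $d_{i+1}=d_i+\lfloor nx/\log(1+q/D_i)\rfloor$ and using $\lfloor y\rfloor\ge y-1$ and $k=o(n)$, the condition $d_k=n$ becomes $\sum_{i=0}^{k-1}nx/\log(1+q/D_i)\ge n(1+o(1))$. With the estimate above and $j=k-i$, the left-hand side is the harmonic-type sum
\[ nx\sum_{j=1}^{k}\frac{1}{\log B+j/2+O(1+\log(1/x))}=2nx\,\ln\!\Bigl(1+\frac{k}{2\log B}\Bigr)(1+o(1)), \]
and since $k=2\log\beta\,(1+o(1))$ this equals $2nx\,\ln(1+\log\beta/\log B)(1+o(1))$. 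Hence the least admissible value is
\[ x=\frac{1+o(1)}{2\,\ln(1+\log\beta/\log B)}, \]
which I fix, the $o(1)$ absorbing the floor losses incurred at each of the $k$ steps. Here $x$ enters the identity only through the doubly logarithmic term $\log(1/x)$, so the fixed point is consistent; one checks that whichever of $\log B$ and $\log(1/x)$ dominates in $\log(1+q/D_i)$, the leading order of the sum is still $\ln(1+\log\beta/\log B)$.

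It remains to discharge the residual hypotheses of \autoref{lemma:main}. For $d_i\le j<d_{i+1}$ one has $k-i\ge1$, so $D_i\le q\sqrt{x/6}/(2\pi B)$ and $|s_j|D_i\le q\sqrt{x/6}/(2\pi)<0.23q$ once $x$ is bounded away from the constant $12$ (shrinking the $D_i$ otherwise); $\epsilon\le1/(\beta^2x)^{\log3}$ follows from $\epsilon\le1/\beta^4$ because $2\log3<4$, using the relation between $x$ and $\beta$; and $k\ge1$ for $n$ large because $\beta=\omega(1)$. \autoref{lemma:main} then gives that \textsc{Solve} runs in time $\bigo(mn)=\bigo(n^2\,2^k\,2^{nx})$ with negligible failure probability. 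Since $nx=\Omega(n/\log n)$ while $\log n$, $k$ and $\log\log q$ are $o(n/\log n)$ (the latter two by hypothesis), the factors $n^2$, $2^k$ and any $\mathrm{poly}(\log q)$ overhead are $2^{o(nx)}$, so the running time is $2^{nx(1+o(1))}=2^{(n/2+o(n))/\ln(1+\log\beta/\log B)}$. The step I expect to be the main obstacle is the harmonic-sum computation determining $x$, together with the verification that it — and all the side conditions — survive uniformly over the full range of $(\beta,B,q)$ allowed by the hypotheses.
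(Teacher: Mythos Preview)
Your approach is essentially the paper's: apply \autoref{lemma:main} with constant $B_i=B$, determine $x$ from the harmonic sum forced by $d_k=n$, and check the side conditions. The paper, however, makes two specific choices that differ from yours and that make the verification cleaner. First, it fixes $k=\lfloor\log(\beta^2/(12\ln(1+\log\beta)))\rfloor$, an expression \emph{independent of $x$}; since $B\ge 2$ gives $\ln(1+\log\beta/\log B)\le\ln(1+\log\beta)$, this $k$ automatically satisfies the lemma's bound $k\le\lfloor\log(\beta^2x/3)\rfloor$ for the eventual $x$. Second, it takes $D_i=q/(Bk\,2^{(k-i)/2})$, slipping an extra factor $k$ into the denominator; this makes $|s_j|D_i\le q/k=o(q)$ hold trivially, with no constraint on $x$.

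These choices matter in the regime $\log\beta=o(\log B)$, where your $x\to\infty$. There your verification of $\epsilon\le 1/(\beta^2x)^{\log 3}$ from $\epsilon\le1/\beta^4$ actually requires $x\le\beta^{(4-2\log3)/\log3}\approx\beta^{0.52}$, which can fail; and ``shrinking the $D_i$ otherwise'' to enforce $|s_j|D_i<0.23q$ would alter the harmonic sum you have already computed. With the paper's $x$-independent $k$ one always has $3^k=\beta^{2\log3+o(1)}$, so $2\log3<4$ suffices unconditionally; with its $D_i$ the bound on $|s_j|D_i$ is automatic. The paper also carries out a short case split (whether $\log k=\omega(\log\log B)$ or $\log k=\bigo(\log\log B)$) to check that the $\log k$ term in $\log(q/D_i)=\log(Bk)+(k-i)/2$ is absorbed into the $o(1)$ of the final $x$ --- precisely the uniform verification you flagged as the main obstacle. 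Your harmonic-sum estimate and the final $2^{o(nx)}$ bookkeeping are otherwise the same as the paper's.
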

\begin{proof}
	We apply \autoref{lemma:main}, choosing \[ k=\lfloor \log(\beta^2/(12\ln(1+\log \beta))) \rfloor=(2-o(1))\log \beta \in \omega(1)\] and we set $D_i=q/(Bk2^{(k-i)/2})$. It now remains to show that this choice of parameters satisfies the conditions of the lemma.

	First, observe that $BD_i/q\leq 1/k=o(1)$ so the condition $|s_j|D_i<0.23q$ is fulfilled.
	Then,  $d_k \geq n$, which amounts to:
	\[ \sum_{i=0}^{k-1} \frac{x}{(k-i)/2+\log \bigo(kB)} \geq 2x\ln(1+k/2/\log \bigo(kB)) \geq 1+k/n=1+o(1) \]

	If we have $\log k=\omega(\log\log B)$ (so in particular $k = \omega(\log B)$), we get $\ln(1+k/2/\log \bigo(kB))=(1+o(1))\ln(k)=(1+o(1))\ln(1+\log \beta/\log B)$.

	Else, $\log k=\bigo(\log \log B)=o(\log B)$ (since necessarily $B = \omega(1)$ in this case), so we get $\ln(1+k/2/\log \bigo(kB))=(1+o(1))\ln(1+\log \beta/\log B)$.
%Previously:
%	So, if $a=\omega(\log B)$, then $\ln(1+a/2/\log \bigo(aB))=(1+o(1))\ln(a)=(1+o(1))\ln(1+\log \beta/\log B)$.
%
%	Else, $a=\bigo(\log B)$ so that $\log a=o(\log B)$, then $\ln(1+a/2/\log \bigo(aB))=(1+o(1))\ln(1+\log \beta/\log B)$.

	Thus our choice of $x$ fits both cases and we have $1/x\leq 2\ln(1+\log \beta)$.
	Second, we have $1/k=o(\sqrt{x})$ so $D_i$, $\epsilon$ and $k$ are also sufficiently small and the lemma applies.
	Finally, note that the algorithm has complexity $2^{\Omega(n/\log n)}$, so a factor $n2^k\log(q)$ is negligible.
\end{proof}

This theorem can be improved when the use of the given parameters yields $D<1$, since $D=1$ already gives a lossless quantization.

\begin{theorem}
	\label{thm:lwepetitmodulo}
	Assume that for all $i$, $|s_i|\leq B=n^{b+o(1)}$.
	Let $\beta=n^{c}$ and $q=n^d$ with $d\geq b$ and $c+b\geq d$. Assume $\epsilon\leq 1/\beta^4$.
	Then {\sc Solve} takes time $2^{n/(2(c-d+b)/d+2\ln(d/b)-o(1))}$.
\end{theorem}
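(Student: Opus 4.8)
The plan is to proceed exactly as in the proof of \autoref{thm:lwe}, invoking \autoref{lemma:main}, but with a different choice of $k$ and a reassignment of the quantization coefficients $D_i$ that exploits the regime where the formula of \autoref{lemma:main} would otherwise produce $D_i < 1$. The point is that once $D_i$ would drop below $1$, we cannot do better than $D_i = 1$ (exact collisions, lossless quantization), and in that case \autoref{lemma:calculbruit} tells us the noise only doubles with no rounding contribution, while the data requirement per phase becomes $q^{d_{i+1}-d_i}$ rather than $(1+q/D_i)^{d_{i+1}-d_i}$. So the reduction naturally splits into two groups of phases: an initial group of ``lossy'' phases where $D_i > 1$ is given by the \autoref{lemma:main} formula, and a final group of ``lossless'' phases with $D_i = 1$, each consuming $\log_q(n^{nx}) = nx/d$ coordinates (up to rounding), since $\log(1+q/D_i) = \log(1+q) = (d+o(1))\log n$ when $D_i=1$.

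First I would set the target exponent. Writing $q = n^d$, $B = n^{b+o(1)}$, $\beta = n^c$, the lossy phases behave as in \autoref{thm:lwe}: there are about $k = (2-o(1))\log\beta = (2c-o(1))\log n$ of them if we push $D_i$ down to $1$, and they cover $\sum_i nx/\log(1+q/D_i)$ coordinates. With $D_i = q/(B k 2^{(k-i)/2})$ as before, $\log(1+q/D_i) \approx \log(Bk2^{(k-i)/2}) = ((k-i)/2 + (b+o(1))\log n)\log n$ hm — wait, I need to be careful about units; let me instead work with $\ln(1+q/D_i)$ divided by $\ln 2$ consistently. The lossy phases cover a $\ln(1+\log\beta/\log B) = \ln(1 + c/b)$-type contribution — more precisely, reindexing, they cover a fraction whose reciprocal density integrates to roughly $x\cdot 2\ln(d/b)$ worth of the ``budget'' (the $2\ln(d/b)$ replacing the $2\ln(1+\log\beta/\log B)$ of \autoref{thm:lwe} because here $D_i$ ranges from $q/B$ down to $1$, i.e. over a multiplicative range $q/B = n^{d-b}$, contributing $\ln((q/B)\cdot\text{something})$; and the constraint $c+b \ge d$ ensures there are enough BKW levels, $k \approx 2c\log n \ge$ what is needed). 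Then the remaining coordinates are handled by lossless phases at rate $nx/d$ per phase, each adding a factor $2$ to the noise; to keep the final noise parameter $\alpha'^2 = 2^{k_{\mathrm tot}}\alpha^2 + (\text{rounding}) \le nx/O(1)$ under control with $\alpha'^2 \le$ the $nx/6$ budget, one needs $2^{k_{\mathrm tot}} \le O(nx/\alpha^2) = O(\beta^2 x) = n^{2c+o(1)}$, so the total number of phases $k_{\mathrm tot}$ is at most $(2c+o(1))\log n$, i.e. $k_{\mathrm tot}/\log n \le 2c$. I would then impose $d_k = n$: the coordinates covered must sum to $n$, giving $nx\big(2\ln(d/b) + (\text{number of lossless phases})/d\big) \ge n(1+o(1))$, i.e. $x\big(2\ln(d/b) + (2c - 2c)/d + \dots\big)$ — the lossless phases number about $2c\log n - k_{\mathrm lossy}$, and one checks this equals $(2(c-d+b)/d)\cdot(1/x)\cdot$ correction, so that the binding constraint becomes $1/x = 2(c-d+b)/d + 2\ln(d/b) - o(1)$, yielding the claimed time $2^{nx} = 2^{n/(2(c-d+b)/d + 2\ln(d/b) - o(1))}$.

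Concretely the steps are: (1) fix $x$ by $1/x := 2(c-d+b)/d + 2\ln(d/b)$; (2) choose $k$ (the number of BKW combination levels) to be the largest integer with $2^k \le nx/(6\alpha^2)$, so $2^k = n^{2c+o(1)}$ and $k = (2c+o(1))\log n$; (3) define the $D_i$: for the first $k_0 := $ (roughly) $k - 2c\log n/(\ldots)$ — rather, for $i$ in the lossy range let $D_i = q/(B k 2^{(k-i)/2})$ as long as this is $\ge 1$, i.e. for $i \le k - 2\log(Bk) = k - (2b+o(1))\log n$; for the remaining $i$ (the lossless range) set $D_i = 1$; (4) compute $d_{i+1} = \min(d_i + \lfloor nx/\log(1+q/D_i)\rfloor, n)$ with these $D_i$ and verify $d_k = n$ using $c+b \ge d$ (this guarantees the lossy range is non-empty / long enough and that the arithmetic of covering $n$ coordinates closes, reproducing $1/x$ above); (5) check the side conditions of \autoref{lemma:main}: $|s_j|D_i < 0.23q$ holds since $|s_j| \le B$ and $BD_i/q \le 1/k = o(1)$ in the lossy range and $BD_i/q = B/q = n^{b-d+o(1)} \le 1 = o(q)$ hmm this needs $b < d$ or at least $B/q < 0.23$, which holds as $d \ge b$ and for $d=b$ we need $0.23$ hm — if $d = b$ then $q = n^{b+o(1)} = B\cdot n^{o(1)}$, slightly delicate, but $B/q \le n^{-\Omega(1)}\cdot n^{o(1)}$; actually with $B = n^{b+o(1)}$ and $q=n^d$, $d \ge b$, the ratio is $n^{o(1)}$ when $d=b$, so one may need $d > b$ strictly or absorb constants — I would state the side condition holds in the relevant regime and note $\epsilon \le 1/\beta^4 \le 1/(\beta^2 x)^{\log 3}$ since $x = \Theta(1)$ here; (6) conclude by \autoref{lemma:main} that {\sc Solve} runs in time $\bigo(mn)$ with $m = n2^k 2^{nx} = 2^{nx+o(n)}$, and a factor $n 2^k \log q = n^{O(1)}$ is negligible against $2^{nx} = 2^{\Omega(n)}$.

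The main obstacle — and the step I expect to require the most care — is step (4)/(5): verifying that with the hybrid choice of $D_i$ the coordinates $d_{i+1}-d_i$ really do sum to (at least) $n$ and that the resulting reciprocal-rate sum telescopes to precisely $2(c-d+b)/d + 2\ln(d/b)$ rather than something off by a constant. The lossy part contributes a Riemann sum $\sum_{i} x/\big((k-i)/(2\log n) + b + o(1)\big)$ (in units of $n\log n$ coordinates, after dividing through) which integrates to $2x\ln\big((k/(2\log n) + b)/b\big) = 2x\ln(c/b + 1)$ — but I claimed $2x\ln(d/b)$; reconciling these requires using $c + b \ge d$ together with the fact that we only run lossy phases until $D_i$ hits $1$, i.e. until $(k-i)/2 + \log(Bk) $ reaches $\log q$, so the effective range of $(k-i)/(2\log n)$ is from $0$ up to $d - b$, not up to $c$; hence the integral is $2x\int_0^{d-b} d t/(t+b) = 2x\ln(d/b)$, and the ``leftover'' BKW levels $i$ with $D_i$ pinned at $1$ number $k - (\text{lossy count})$, each covering $nx/(d\log n)\cdot\log n$... and the bookkeeping that their total coordinate count is $n - n\cdot 2x\ln(d/b) + o(n)$, forcing the count of lossless phases times $nx/d$ to equal that, which then must be consistent with there being $\le k = 2c\log n$ phases total — this consistency is exactly where $1/x = 2(c-d+b)/d + 2\ln(d/b)$ is forced, and where the hypothesis $c + b \ge d$ is used to ensure the number of lossless phases is non-negative. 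Getting all these $o(1)$'s and the interplay between $k$, the lossy/lossless split, and $d_k = n$ to line up is the crux; everything else is a direct transcription of the proof of \autoref{thm:lwe}.
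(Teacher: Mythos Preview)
Your overall strategy matches the paper's: split the $k\approx 2c\log n$ reduction phases into a lossless block with $D_i=1$ and a lossy block with $D_i=q/(Bk2^{(k-i)/2})$, and compute the resulting $1/x$. Your final ``obstacle'' paragraph carries out the coverage computation correctly and recovers $1/x=2(c-d+b)/d+2\ln(d/b)$.

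However, your step~(3) and opening paragraph have the two blocks in the \emph{wrong order}, and the threshold is miscomputed. Since the formula $D_i=q/(Bk2^{(k-i)/2})$ is \emph{increasing} in $i$, one has $D_i\geq 1$ iff $(k-i)/2\leq\log(q/(Bk))\approx(d-b)\log n$, i.e.\ iff $i\geq 2(c-d+b)\log n$ --- not $i\leq k-2\log(Bk)$ as you wrote. So the lossless phases ($D_i=1$) must come \emph{first}, for $i<\lceil 2(c-d+b)\log n\rceil$, and the lossy phases later; this is exactly what the paper does. The intuition is that phase~$i$'s rounding error is amplified by $2^{k-1-i}$ in the final noise, so early phases must be the ones with zero rounding error. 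If you literally follow your step~(3) (formula for small $i$, $D_i=1$ for large $i$), then for small $i$ the formula gives $D_i<1$, which behaves like $D_i=1$ but you would set $d_{i+1}-d_i=\lfloor nx/\log(1+q/D_i)\rfloor$ with $\log(1+q/D_i)>\log q$, wasting coverage; and for large $i$ you forgo the large $D_i$ the noise budget would allow. The resulting $1/x$ would be $2\ln((b+c)/(2b))+2b/d$, which does \emph{not} equal $2(c-d+b)/d+2\ln(d/b)$ in general.

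Your own last paragraph already has the correct range ($(k-i)/(2\log n)\in[0,d-b]$, i.e.\ lossy for $i\geq 2(c-d+b)\log n$) and the correct integral; you just need to make step~(3) consistent with it. With that fix, your proof is the paper's proof.
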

\begin{proof}
	Once again we aim to apply \autoref{lemma:main}, and choose $k$ as above: \[ k=\log(\beta^2/(12\ln(1+\log \beta)))=(2c-o(1))\log n \]
	If $i<\lceil 2(c-d+b)\log n \rceil$, we take $D_i=1$, else we choose $q/D_i=\Theta(B2^{(a-i)/2})$.
	Satisfying $d_{a} \geq n-1$ amounts to:
	\begin{align*}
		& 2x(c-d+b)\log n/\log q+\sum_{i=\lceil 2(c-d+b)\log n \rceil}^{a-1} \frac{x}{(a-i)/2+\log \bigo(B)} \\
		\geq\;&2x(c-d+b)/d+2x\ln((a-2(c-d+b)\log n+2\log B)/2/\log \bigo(B)) \\
		\geq\;&1+a/n=1+o(1) 
	\end{align*}
	So that we can choose $1/x=2(c-d+b)/d+2\ln(d/b)-o(1)$.\qedhere
\end{proof}

\begin{corollary}
Given a $\mathsf{LWE}$ problem with $q=n^d$, Gaussian errors with $\beta=n^c$, $c>1/2$ and $\epsilon\leq n^{-4c}$, we can 
find a solution in $2^{n/(1/d+2\ln(d/(1/2+d-c))-o(1))}$ time. 
%	\[\exp_{2}\bigg(\frac{n}{\frac{1}{d}+2\ln(\frac{d}{1/2+d-c})-o(1)}\bigg)\]
\end{corollary}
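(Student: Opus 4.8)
The plan is to reduce the Gaussian-secret case to the hypotheses of \autoref{thm:lwepetitmodulo} by a two-step preparation: first move the secret into the rounded-error distribution, then bound its infinity norm. Concretely, I would first invoke \autoref{th:petitsecret} to transform the given $\mathsf{LWE}$ instance into one whose secret follows $\lfloor \phi \rceil$, where $\phi$ is the Gaussian noise of standard deviation $\alpha q = q\sqrt{n/2}/\beta = n^{d}\sqrt{n/2}/n^{c}$. This costs only $\bigo(n \log\log q)$ extra samples and a polynomial additive term in the running time, both of which are negligible against the $2^{\Omega(n/\log n)}$ target complexity, and it does not change the error distribution, so the parameters $\beta$, $q$, $\epsilon$ are preserved. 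The point of this step is that the new secret is now itself essentially a discrete Gaussian of the same width as the noise.

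Next I would choose the bound $B$ for \autoref{thm:lwepetitmodulo}. Since a discrete Gaussian of parameter $s = \alpha q = n^{d - c + 1/2 + o(1)}$ has all coordinates bounded by, say, $s\sqrt{n}$ except with probability $2^{-\Omega(n)}$ (standard tail bound on $D_{\Z,s}$, e.g. via $\rho_s$), and since we only need $\sum_i (s_i/B_i)^2 \leq n$ rather than a strict coordinatewise bound, it suffices to take $B = n^{b + o(1)}$ with $b = d - c + 1/2$. Here the hypothesis $c > 1/2$ guarantees $b < d$, i.e. $B < q$, which is exactly the constraint $d \geq b$ of \autoref{thm:lwepetitmodulo}; and the constraint $c + b \geq d$ becomes $c + (d - c + 1/2) = d + 1/2 \geq d$, which holds trivially. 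I would also need to check the distortion hypothesis $\epsilon \leq 1/\beta^4 = n^{-4c}$, which is exactly what is assumed. One subtlety: \autoref{thm:lwepetitmodulo} as stated assumes a hard bound $|s_i| \leq B$, whereas after rounding we only control the secret in the $\ell_2$-averaged sense with overwhelming probability; I would note that the proof of \autoref{lemma:main} (which underlies \autoref{thm:lwepetitmodulo}) only ever uses $\sum_i (s_i/B_i)^2 \leq n$ together with the pointwise condition $|s_j| D_i < 0.23 q$, and both are met here with negligible failure probability, so the conclusion carries over.

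Finally I would just substitute $b = d - c + 1/2$ into the complexity exponent of \autoref{thm:lwepetitmodulo}. That theorem gives running time $2^{n/(2(c - d + b)/d + 2\ln(d/b) - o(1))}$; with $c - d + b = 1/2$ the first summand becomes $1/d$, and $d/b = d/(1/2 + d - c)$, yielding exactly $2^{n/(1/d + 2\ln(d/(1/2 + d - c)) - o(1))}$ as claimed. I would close by remarking that the overhead from \autoref{th:petitsecret} and from the union bound over the two failure events is absorbed into the $o(1)$.

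The main obstacle I expect is not the algebra but the bookkeeping around the secret's distribution: making sure that after the secret-error switch the resulting secret genuinely satisfies the weighted $\ell_2$ bound with all-but-negligible probability, and that the pointwise bound $|s_j| D_i < 0.23 q$ needed in \autoref{lemma:main} is not violated by the heavy-ish Gaussian tail — this is where I would spend the most care, invoking the discrete Gaussian tail bound and the choice $D_i \leq 1$ in the early phases (forced by $c + b \geq d$) to keep the rounding harmless. Everything else is a direct specialization of the already-established \autoref{thm:lwepetitmodulo}.
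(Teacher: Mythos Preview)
Your overall strategy --- apply \autoref{th:petitsecret} to move to a rounded-Gaussian secret, then invoke \autoref{thm:lwepetitmodulo} with $b = d - c + 1/2$ and substitute into the exponent --- is exactly the paper's approach, and your algebraic verification of the hypotheses $d \geq b$, $c+b \geq d$, $\epsilon \leq \beta^{-4}$ is correct.

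The gap is in your failure-probability bookkeeping. You assert that both the weighted $\ell_2$ bound and the pointwise condition $|s_j|D_i < 0.23q$ hold with negligible ($2^{-\Omega(n)}$) probability of failure. The $\ell_2$ bound is indeed fine with $B$ on the order of the Gaussian width $\alpha q$. But the pointwise condition is not: in the proof of \autoref{thm:lwepetitmodulo}, the late phases take $q/D_i = \Theta(B\,2^{(a-i)/2})$, so for $i$ close to $a$ the requirement $|s_j|D_i < 0.23q$ becomes $|s_j| = O(B)$. With $B = n^{d-c+1/2+o(1)}$ comparable to the standard deviation, the maximum coordinate of the Gaussian secret is $\Theta(B\sqrt{\log n})$, and it is bounded by a constant times $B$ only with probability $1-o(1)$, certainly not $1-2^{-\Omega(n)}$. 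Inflating $B$ by $\sqrt{n}$ to force negligible failure would change $b$ to $d-c+1$ and wreck the exponent. Your remark about ``$D_i \leq 1$ in the early phases'' does not help, since the constraint bites precisely in the late phases where $D_i$ is large.

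The paper deals with this by accepting that a single run succeeds only with constant probability (they write ``with probability $2/3$''), then repeating the whole procedure $n$ times with fresh samples --- so that each trial draws a fresh secret via the secret-error switch --- and returning the majority answer. A Chernoff bound then gives negligible overall failure at the cost of a factor $n$ in time and samples, which is absorbed in the $o(1)$. Adding this amplification step closes your argument.
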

\begin{proof}
	Apply \autoref{th:petitsecret} : with probability $2/3$, the secret is now bounded by $B=\bigo(q\sqrt{n}/\beta \sqrt{\log n})$.
	The previous theorem gives the complexity of an algorithm discovering the secret, using $b=1/2-c+d$, and which works with probability $2/3-2^{-\Omega(n)}$.
	Repeating $n$ times with different samples, the correct secret will be outputted at least $n/2+1$ times, except with negligible probability.
	By returning the most frequent secret, the probability of failure is negligible.
\end{proof}
\noindent
In particular, if $c \leq d$, it is possible to quantumly approximate lattice problems within factor $\bigo(n^{c+1/2})$~\cite{regev2009lattices}.
Setting $c=d$, the complexity is $2^{n/(1/c+2\ln(2c)-o(1))}$, so that the constant slowly converges to $0$ when $c$ goes to infinity.

A simple $\mathsf{BKW}$ using the bias would have a complexity of $2^{d/cn+o(n)}$, the analysis of \cite{albrecht2014lazy} or \cite{albrecht2013complexity} only conjectures $2^{dn/(c-1/2)+o(n)}$ for $c>1/2$.
In \cite{albrecht2014lazy}, the authors incorrectly claim a complexity of $2^{cn+o(n)}$ when $c=d$, because the blowup in the error is not explicitely computed~\footnote{They claim it is possible to have $\approx\log n$ reduction steps while the optimal number is $\approx 2c \log(n)$ so that is loose for $c>1/2$ and wrong for $c<1/2$.}.

Though the upper bound on $\epsilon$ is small, provably solving the learning with rounding~\cite{banerjee2012pseudorandom} problem where $b=\frac{q}{p}\lfloor \frac{p}{q} \langle \vec{a},\vec{s} \rangle \rceil$ for some $p$ seems out of reach\footnote{\cite{duc2015better} claims to do so, but it actually assumes the indepedency of the errors from the vectorial part of the samples.}. 

Finally, if we want to solve the $\mathsf{LWE}$ problem for different secrets but with the same vectorial part of the samples, it is possible to be much faster if we work with a bigger final bias, since the {\sc Reduce} part needs to be called only once.

\subsection{Some Practical Improvements}\label{app:heuristics}

In this section we propose a few heuristic improvements for our main algorithm, which speed it up in practice, although they do not change the factor in the exponent of the overall complexity. In our main algorithm, after a few iterations of {\sc Reduce}, each sample is a sum of a few original samples, and these sums are disjoint. It follows that samples are independent. This may no longer be true below, and sample independence is lost, hence the heuristic aspect. However this has negligible impact in practice.

First, in the non-binary case, when quantizing a sample to associate it to a center, we can freely quantize its opposite as well (i.e. quantize the sample made up of opposite values on each coordinate) as in~\cite{albrecht2013complexity}.

Second, at each {\sc Reduce} step, instead of substracting any two samples that are quantized to the same center, we could choose samples whose difference is as small as possible, among the list of samples that are quantized to the same center. The simplest way to achieve this is to generate the list of all differences, and pick the smallest elements (using the L2 norm). We can thus form a new sample list, and we are free to make it as long as the original. Thus we get smaller samples overall with no additional data requirement, at the cost of losing sample independence.\footnote{A similar approach was taken in~\cite{albrecht2014lazy} where the L1 norm was used, and where each new sample is reduced with the shortest element ever quantized on the same center.}

Analyzing the gain obtained using this tactic is somewhat tricky, but quite important for practical optimization. One approach is to model reduced sample coordinates as independent and following the same Gaussian distribution. When adding vectors, the coordinates of the sum then also follows a Gaussian distribution. The (squared) norm of a vector with Gaussian coordinates follows the $\chi^{2}$ distribution. Its cumulative distribution for a $k$-dimensional vector is the regularized gamma function, which amounts to $1-\exp(-x/2)\sum_{i=0}^{k/2}(x/2)^{i}/i!$ for even $k$ (assuming standard deviation 1 for sample coordinates).

Now suppose we want to keep a fixed proportion of all possible sums. Then using the previous formula we can compute the radius $R$ such that the expected number of sample sums falling within the ball of radius $R$ is equal to the desired proportion. Thus using this Gaussian distribution model, we are able to predict how much our selection technique is able to decrease the norm of the samples.

Practical experiments show that for a reasonable choice of parameters (namely keeping a proportion 1/1000 of samples for instance), the standard deviation of sample norms is about twice as high as predicted by the Gaussian model for the first few iterations of {\sc Reduce}, then falls down to around $15\%$ larger. This is due to a variety of factors; it is clear that for the first few iterations, sample coordinates do not quite follow a Gaussian distribution. Another notable observation is that newly reduced coordinates are typically much larger than others.

While the previous Gaussian model is not completely accurate, the ability to predict the expected norms for the optimized algorithm is quite useful to optimize parameters. In fact we could recursively compute using dynamic programming what minimal norm is attainable for a given dimension $n$ and iteration count $a$ within some fixed data and time complexities. In the end we will gain a constant factor in the final bias.

In the binary case, we can proceed along a similar line by assuming that coordinates are independent and follow a Bernoulli distribution.

Regarding secret recovery, in practice, it is worthwhile to compute the Fourier transform on a high dimension, such that its cost approximately matches that of {\sc Solve}. On the other hand, for a low enough dimension, computing the experimental bias for each possible high probability secret may be faster.

Another significant improvement in practice can be to apply a linear quantization step just before secret recovery. The quantization steps we have considered are linear, in the sense that centers are a lattice. If $\vec{\mathrm{A}}$ is a basis of this lattice, in the end we are replacing a sample $(\vec{\mathrm{a}},b)$ by $(\vec{\mathrm{Ax}},b)$. We get $\langle \vec{\mathrm{x}} , \vec{\mathrm{A}}^t\vec{\mathrm{s}}\rangle +\vec{\mathrm{e}}= \langle\vec{\mathrm{Ax}}, \vec{\mathrm{s}}\rangle +\vec{\mathrm{e}} = b-\langle (\vec{\mathrm{a}}-\vec{\mathrm{Ax}}), \vec{\mathrm{s}}\rangle$. Thus the dimension of the Fourier transform is decreased as remarked by \cite{guo2014solving}, at the cost of a lower bias. Besides, we no longer recover $\vec{\mathrm{s}}$ but $\vec{\mathrm{y}}=\vec{\mathrm{A}}^t\vec{\mathrm{s}}$.
Of course we are free to change $\vec{\mathrm{A}}$ and quantize anew to recover more information on $\vec{\mathrm{s}}$. In some cases, if the secret is small, it may be worth simply looking for the small solutions of $\vec{\mathrm{A}}^t\vec{\mathrm{x}}=\vec{\mathrm{y}}$ (which may in general cost an exponential time) and test them 
against available samples.

In general, the fact that the secret is small can help its recovery through a maximum likelihood test \cite{neymanpearson}.
If the secret has been quantized as above however, $\vec{\mathrm{A}}$ will need to be chosen such that $\vec{\mathrm{As}}$ is small (by being sparse or having small entries).
The probability to get a given secret can then be evaluated by a Monte-Carlo approach with reasonable accuracy within negligible time compared to the Fourier transform.

If the secret has a coordinate with non-zero mean, it should be translated.

Finally, in the binary case, it can be worthwhile to combine both of the previous algorithms: after having reduced vectors, we can assume some coordinates of the secret are zero, and quantize the remainder. Depending on the number of available samples, a large number of secret-error switches may be possible. Under the assumption that the success probability is independent for each try, this could be another way to essentially proceed as if we had a larger amount of data than is actually available. We could thus hope to operate on a very limited amount of data (possibly linear in $n$) for a constant success probability.

The optimizations above are not believed to change the asymptotic complexity of the algorithm, but have a significant impact in practice.

\begin{table}
\centering
\caption{Complexity of solving LWE with the Regev parameters~\cite{regev2009lattices}, i.e. the error distribution is a continuous gaussian of standard deviation $\frac{q}{\sqrt{2\pi n}\log^2(n)}$ and with modulus $q\approx n^2$. The reasonable column corresponds to multiplying the predicted reduction factor at each step by $1.1$, and assuming that the quantizers used reduce the variance by a factor of $1.3$. The corresponding parameters of the algorithm are shown in column $k$ (the number of reduction steps), $\log(m)$ ($m$ is the list size), and $\log(N)$ (one vector is kept for the next iteration for each $N$ vectors tested). The complexities are expressed as logarithm of the number of bit operations of each reduction step. Pessimistic uses a multiplier of $2$ and a naive quantifier (factor $1$). Optimistic uses a multiplier of $1$ and an asymptotical quantifier (factor $2\pi\me/12 \approx 1.42$). The asymptotical complexity is $2^{0.93n+o(n)}$ instead of $2^{n+o(n)}$.}
\begin{tabular}{| c | c | c | c | c | c | c | c | c|}
	\hline
	$n$ & $q$ & $k$ & $\log(m)$ & $\log(N)$ & Reasonable & Optimistic & Pessimistic & Previous~\cite{duc2015better} \\
	\hline
	64 & 4099 & 16 & 30 & 0 & 39.6 & 39.6 & 40.6 & 56.2 \\
	\hline
	80 & 6421 & 17 & 38 & 0 & 47.9 & 46.0 & 48.0 & 66.9 \\
	\hline
	96 & 9221 & 18 & 45 & 0 & 55.3 & 54.3 & 56.3 & 77.4 \\
	\hline
	112 & 12547 & 18 & 54 & 0 & 64.6 & 60.6 & 65.6 & 89.6 \\
	\hline
	128 & 16411 & 19 & 60 & 0 & 70.8 & 67.8 & 72.8 & 98.8 \\
	\hline
	160 & 25601 & 20 & 75 & 0 & 86.2 & 82.2 & 88.2 & 119.7 \\
	\hline
	224 & 50177 & 21 & 93 & 13 & 117.8 & 111.8 & 121.8 & 164.3 \\
	\hline
	256 & 65537 & 22 & 106 & 15 & 133.0 & 125.0 & 137.0 & 182.7 \\
	\hline
	384 & 147457 & 24 & 164 & 18 & 194.7 & 183.7 & 201.7 & 273.3 \\
	\hline
	512 & 262147 & 25 & 219 & 25 & 257.2 & 242.2 & 266.2 & 361.6 \\
	\hline
\end{tabular}
\end{table}

\begin{table}
\centering
\caption{
Complexity of solving LWE with the Lindner-Peikert parameters~\cite{DBLP:conf/ctrsa/LindnerP11} but with a number of samples $m$ much larger than the cryptosystem provides ($2n+128$). The error distribution is $D_{\Z,s}$.}

\begin{tabular}{| c | c | c | c | c | c | c | c | c|}
	\hline
	$n$ & $q$ & $s$ & $k$ & $\log(m)$ & $\log(N)$ & Reasonable & Optimistic & Pessimistic\\
	\hline
	192 & 4099 & 8.87 & 19 & 68 & 5 & 84.2 & 79.2 & 84.2 \\
	\hline
	256 & 6421 & 8.35 & 20 & 82 & 8 & 101.7 & 95.7 & 103.7 \\
	\hline
	320 & 9221 & 8.00 & 22 & 98 & 9 & 119.0 & 112.0 & 122.0 \\
	\hline
\end{tabular}
\end{table}

\begin{table}
\centering
\caption{Complexity of solving LWE with binary ($\zeroun$) secret with the Regev parameters~\cite{regev2009lattices}. }
\begin{tabular}{| c | c | c | c | c | c | c | c | c|}
	\hline
	$n$ & $q$ & $k$ & $\log(m)$ & $\log(N)$ & Reasonable & Optimistic & Pessimistic & Previous~\cite{albrecht2014lazy} \\
	\hline
	128 & 16411 & 16 & 28 & 0 & 38.8 & 38.8 & 39.8 & 74.2 \\
	\hline
	256 & 65537 & 19 & 52 & 0 & 64.0 & 62.0 & 67.0 & 132.5 \\
	\hline
	512 & 262147 & 22 & 99 & 0 & 112.2 & 104.2 & 117.2 & 241.8 \\
	\hline
\end{tabular}
\end{table}

\subsection{Experimentation}

We have implemented our algorithm, in order to test its efficiency in practice, as well as that of the practical improvements in \autoref{app:heuristics}. We have chosen dimension $n = 128$, modulus $q = n^{2}$, binary secret, and Gaussian errors with noise parameter $\alpha = 1/(\sqrt{n/\pi}\log^2 n)$. The previous best result for these parameters, using a $\mathsf{BKW}$ algorithm with lazy modulus switching, claims a time complexity of $2^{74}$ with $2^{60}$ samples \cite{albrecht2014lazy}.

Using our improved algorithm, we were able to recover the secret using $m = 2^{28}$ samples within 13 hours on a single PC equipped with a 16-core Intel Xeon. The computation time proved to be devoted mostly to the computation of $9\cdot 10^{13}$ norms, computed in fixed point over 16 bits in SIMD.
The implementation used a naive quantizer.

In~\autoref{app:LLLbinary}, we compare the different techniques to solve the \textsf{LWE} problem when the 
number of samples is large or small.
We were able to solve the same problem using BKZ with block size 40 followed by an enumeration in two minutes.

\subsection{Extension to the norm L2}
\label{app:secL2}
One can think that in~\autoref{lemma:calculbruit}, the condition $|s_j|D_i<0.23 q$ is artificial.
In this section, we show how to remove it, so that the only condition on the secret is $||s||\leq \sqrt{n}B$ which is always better.
The basic idea is to use rejection sampling to transform the coordinates of the vectorial part of the samples into small discrete gaussians.
The following randomized algorithm works only for some moduli, and is expected to take more time, though the asymptotical complexity is the same.

\begin{algorithm}
	\begin{algorithmic}[1]
		\Function{Accept}{$\vec{a}$,$\vec{u}$,$\sigma$,$k$} \Comment{For all $i$, $\vec{a}_i\in [0;k[$}
		\State \Return \textbf{true} with probability $\exp(-\pi(\sum_i \min(\vec{u}_i^2,(\vec{u}_i+1)^2)-(\vec{u_i}+\vec{a}_i/k)^2)/\sigma^2)$
		\EndFunction
		\Function{ReduceL2}{$\Li_{in}$,$D_i$,$d_{i}$,$d_{i+1}$,$\sigma_i$}
		\State $t[] \gets \varnothing $
		\ForAll{$(\vec{\mathrm{a}},b) \in \Li_{in}$}
			\State $\vec{\mathrm{r}}=\lfloor \frac{(\vec{a}_{d_i},\dots,\vec{a}_{d_{i+1}-1})}{D} \rceil $
			\State $\Call{Push}{t[\vec{\mathrm{r}}],(\vec{\mathrm{a}},b)}$
		\EndFor
		\State $\Li_{out} \gets \varnothing $
		\While{$|\{\vec{r}\in (\Z/(q/D_i)\Z)^{d_{i+1}-d_i};t[\vec{r}]\neq \varnothing\}|\geq (q/D_i)^{d_{i+1}-d_i}/3$}
			\State Sample $\vec{x}$ and $\vec{y}$ according to $D_{\Z^{d_{i+1}-d_i},\sigma_i}$
			\Repeat
				\State Sample $\vec{u}$ and $\vec{v}$ uniformly in $(\Z/(q/D_i)\Z)^{d_{i+1}-d_i}$
			\Until{$t[\vec{u}+\vec{x}]\neq \varnothing$ and $t[\vec{v}+\vec{y}]\neq \varnothing$}
			\State $(\vec{a_0},b_0)\gets \Call{Pop}{t[\vec{u}+\vec{x}]}$
			\State $(\vec{a_1},b_1)\gets \Call{Pop}{t[\vec{v}+\vec{y}]}$
		\If{$\Call{Accept}{\vec{a_0} \Mod q/D_i,\vec{u},\sigma_i,q/D_i}$ and $\Call{Accept}{\vec{a_1} \Mod q/D_i,\vec{v},\sigma_i,q/D_i}$}
				\State $\Li_{out} \gets (\vec{a_0}-\vec{a_1},b_0-b_1)::\Li_{out}$
			\EndIf
		\EndWhile
		\State \Return $\Li_{out}$
		\EndFunction
	\end{algorithmic}
\end{algorithm}

\begin{lemma}
	\label{lemma:calculbruitL2}
	Assume that $D_i|q$, $\sigma_i$ is larger than some constant and \[ |\Li_i|\geq 2n\max(n\log(q/D_i)(q/D_i)^{d_{i+1}-d_i},\exp(5(d_{i+1}-d_i)/\sigma_i)). \]
	Write $\Li'_{i}$ for the samples of $\Li_{i}$ where the first $d_{i}$ coordinates of each sample vector have been truncated.
If $\Li'_{i}$ is sampled according to the $\mathsf{LWE}$ distribution with secret $\vec{\mathrm{s}}$ and noise parameters $\alpha$ and $\epsilon$, then $\Li'_{i+1}$ is sampled according to the $\mathsf{LWE}$ distribution of the truncated secret 
with parameters:
\[\alpha'^2=2\alpha^2+2\pi\sum_{j=d_i}^{d_{i+1}-1}(s_j\sigma_iD_{i}/q)^2\quad\text{\rm and }\quad\epsilon'=3\epsilon.\]
	On the other hand, if $D_i=1$, then $\alpha'^2=2\alpha^2$.
	Furthermore, {\sc ReduceL2} runs in time $\bigo(n\log q|\Li_i|)$ and $|\Li_{i+1}| \geq |\Li_i|\exp(-5(d_{i+1}-d_i)/\sigma_i)/6$ except with probability $2^{-\Omega(n)}$.
\end{lemma}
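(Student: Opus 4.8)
The plan is to mirror the structure of the proof of \autoref{lemma:calculbruit}, with the quantization-plus-truncation argument replaced by a rejection-sampling argument that forces the surviving coordinates $\vec{a}_0 - \vec{a}_1$ (modulo $q/D_i$) to behave like a discrete Gaussian of parameter $\sigma_i$ rather than a bounded variable of sup-norm $D_i$. First I would dispatch the running-time and output-size claims. The time bound is immediate: building the table $t$ costs $\bigo(n \log q\, |\Li_i|)$, and each iteration of the \textbf{while} loop does $\bigo(n \log q)$ work; the loop runs at most $|\Li_i|$ times since every iteration pops (at least) one element, so the total is $\bigo(n\log q\, |\Li_i|)$. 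For the output size I would argue that, while at least a third of the $(q/D_i)^{d_{i+1}-d_i}$ buckets are non-empty, a uniformly random $\vec{u}$ (resp. $\vec{v}$) hits a non-empty bucket with probability $\geq 1/3$, so the inner \textbf{repeat} loop terminates quickly in expectation and the hypothesis $|\Li_i|\geq 2n\, n\log(q/D_i)(q/D_i)^{d_{i+1}-d_i}$ guarantees (via a Chernoff bound on the number of successful draws before the bucket count drops below the threshold) that $\Omega(|\Li_i|)$ pairs are produced before the loop exits, except with probability $2^{-\Omega(n)}$. Of these, each pair survives the two \textsc{Accept} tests with probability equal to the product of the two acceptance probabilities; the normalization $\min(\vec{u}_i^2,(\vec{u}_i+1)^2)$ in \textsc{Accept} is exactly the offset making the overall acceptance probability proportional to $\rho_{\sigma_i}$ evaluated at the rounded point, and averaging over the uniform $\vec{u}$ this yields an acceptance rate of order $\rho_{\sigma_i}(\Z^{d_{i+1}-d_i})/(q/D_i)^{d_{i+1}-d_i} \geq \exp(-5(d_{i+1}-d_i)/\sigma_i)$ for $\sigma_i$ larger than an absolute constant (here one uses that for such $\sigma$ the Gaussian mass over $\Z$ restricted to an interval of length $q/D_i\geq 1$ is at least a constant, and the per-coordinate loss is at most $e^{-5/\sigma_i}$); a second Chernoff bound gives $|\Li_{i+1}| \geq |\Li_i|\exp(-5(d_{i+1}-d_i)/\sigma_i)/6$ except with probability $2^{-\Omega(n)}$.

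The heart of the argument is the bias computation. As in \autoref{lemma:calculbruit}, the vectorial parts are uniform and the combined samples independent by construction. Writing $(\vec{a},b) = (\vec{a}_0,b_0) - (\vec{a}_1,b_1)$, the first $d_i$ coordinates of $\vec{a}$ are not touched by this phase, and exactly as before the contribution of adding two \textsf{LWE} samples multiplies the bias by $\exp(-2\alpha^2)$ and perturbs the distortion from $\epsilon$ to $3\epsilon$. It remains to control, for $d_i\leq j<d_{i+1}$, the factor $\E[\exp(2i\pi \vec{a}_j\vec{s}_j/q)]$, where now $\vec{a}_j$ is distributed as $D_i$ times a value whose reduction modulo $q/D_i$ is (conditioned on acceptance) a centered discrete Gaussian of parameter $\sigma_i$ on $\Z$: indeed, after \textsc{Accept}, the law of $\vec{a}_0 \bmod q/D_i$ (resp. $\vec{a}_1$) is proportional to $\rho_{\sigma_i}$ on $\Z/(q/D_i)\Z$ lifted to $\Z$, so the difference $\vec{a}_j/D_i$ is a difference of two i.i.d. discrete Gaussians of parameter $\sigma_i$, which is even-symmetric — hence $\E[\exp(2i\pi\vec{a}_j\vec{s}_j/q)]$ is real and equals (up to the smoothing error, absorbed in the constant on $\sigma_i$) $\exp(-\pi\,(2\sigma_i^2)\,(D_i/q)^2 s_j^2 \cdot (\text{const}))$; tracking constants, the combined phase gives the stated $\alpha'^2 = 2\alpha^2 + 2\pi\sum_{j=d_i}^{d_{i+1}-1}(s_j\sigma_i D_i/q)^2$. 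The key point, and the reason the $|s_j|D_i<0.23q$ hypothesis disappears, is that a Gaussian's characteristic function decays as $\exp(-\Theta(t^2))$ \emph{globally} in $t$, with no need to stay in a region where $\cos$ is well-approximated by its quadratic Taylor expansion. The case $D_i=1$ is trivial since then $\vec{a}_j=0$ identically, giving $\alpha'^2=2\alpha^2$.

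I expect the main obstacle to be making the rejection-sampling step rigorous: one must (i) verify that conditioning on \textsc{Accept} produces \emph{exactly} the discrete Gaussian law on $\Z/(q/D_i)\Z$ — this is where the $\min(\vec{u}_i^2,(\vec{u}_i+1)^2)$ term and the constraint $D_i\mid q$ are used, since we need $\lfloor (\vec{a}_{d_i},\dots)/D \rceil$ to partition $\Z/q\Z$-coordinates cleanly into $q/D_i$ cosets — and (ii) control the smoothing/truncation error incurred because the Gaussian lives on the finite group $\Z/(q/D_i)\Z$ rather than on $\Z$, as well as the error from $\sigma_i$ being only ``larger than some constant'' rather than above the smoothing parameter of $\Z$. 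Both are handled by standard tail bounds on $\rho_{\sigma_i}$, but the bookkeeping of constants (getting precisely $2\pi$ and $\exp(-5(\cdot)/\sigma_i)$, not merely $\Theta(\cdot)$) is the delicate part; I would isolate it in a short auxiliary computation on one-dimensional discrete Gaussians and then take products over the $d_{i+1}-d_i$ coordinates.
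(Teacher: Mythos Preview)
Your high-level plan matches the paper: split into (a) running time and output size, (b) the doubling of $\alpha^2$ and tripling of $\epsilon$ inherited verbatim from \autoref{lemma:calculbruit}, and (c) a per-coordinate bias computation where the Gaussian's characteristic function replaces the cosine bound. That is exactly how the paper proceeds. But your account of the rejection-sampling step --- which you rightly flag as the crux --- is off in a way that would block the proof.

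You write that after \textsc{Accept} ``the law of $\vec{a}_0 \bmod q/D_i$ is proportional to $\rho_{\sigma_i}$ on $\Z/(q/D_i)\Z$'' and that ``$\vec{a}_j/D_i$ is a difference of two i.i.d.\ discrete Gaussians of parameter $\sigma_i$''. Neither is what the algorithm produces, and you never mention the role of the auxiliary integers $\vec{u},\vec{v}$ (nor of the pre-sampled Gaussians $\vec{x},\vec{y}$). The paper's mechanism is this: $\vec{u}$ supplies the \emph{integer part} and $(\vec{a_0}\bmod q/D_i)/(q/D_i)$ the \emph{fractional part}, and the rejection makes the combined real number $\vec{u}+(\vec{a_0}\bmod q/D_i)/(q/D_i)$ exactly $D_{\Z+(\vec{a_0}\bmod q/D_i)/(q/D_i),\sigma_i}$-distributed; scaling by $q/D_i$, the quantity $(q/D_i)\vec{u}+(\vec{a_0}\bmod q/D_i)$ is then $D_{\Z,\sigma_i q/D_i}$-distributed on all of $\Z$, not on the finite group. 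The bias from truncation is computed as the characteristic function of the \emph{difference} of two such variables, and a Poisson summation gives the clean $\exp(-\pi(\sigma_i D_i s_j/q)^2)$ per coordinate. You cannot get a Gaussian on $\Z$ by rejecting from a uniform variable on a finite set alone; the extra integer $\vec{u}$ is essential, and its absence from your description is the real gap.

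This misreading also propagates to your acceptance-probability estimate. You claim the rate is of order $\rho_{\sigma_i}(\Z^{d_{i+1}-d_i})/(q/D_i)^{d_{i+1}-d_i}$, which scales like $(\sigma_i D_i/q)^{d_{i+1}-d_i}$ and has no reason to equal $\exp(-5(d_{i+1}-d_i)/\sigma_i)$. The paper instead obtains, per coordinate and via Poisson summation on both $\rho$'s, the ratio $\rho_{\sigma_i q/D_i}(\Z)\big/\big((q/D_i)(1+\rho_{\sigma_i}(\Z))\big)\geq \sigma_i/(2+\sigma_i)$, whose $2(d_{i+1}-d_i)$-th power is $\geq \exp(-5(d_{i+1}-d_i)/\sigma_i)$ for $\sigma_i$ large enough --- note the $q/D_i$ cancels, which is why the bound depends only on $\sigma_i$. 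Finally, for the output-size lower bound the paper first uses Hoeffding on the initial bucketing to get $\min_{\vec r}|t[\vec r]|\geq |\Li_i|/(2(q/D_i)^{d_{i+1}-d_i})$, so that when the while-loop exits at least two thirds of the buckets have been emptied, yielding $\geq |\Li_i|/6$ iterations; your Chernoff sketch skips this step and would not by itself rule out the elements being concentrated in a few buckets.
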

\begin{proof}
	On lines 16 and 17, $\vec{a_0}\Mod q/D_i$ and $\vec{a_1} \Mod q/D_i$ are uniform and independent.
	On line 19, because of the rejection sampling theorem, we have $(q/D_i)\vec{u}+(\vec{a_0}\Mod q/D_i)$ and $(q/D_i)\vec{v}+(\vec{a_1}\Mod q/D_i)$ sampled according to $D_{\Z^{d_{i+1}-d_i},\sigma_i q/D_i}$.
	Also, the unconditional acceptance probability is, for sufficiently large $\sigma_i$,
	\begin{align*}
		(\rho_{\sigma_i q/D_i}(\Z)/(q/D_i(1+\rho_{\sigma_i}(\Z))))^{2(d_{i+1}-d_i)} \geq & \\
	 (\sigma_i q/D_i/(q/D_i(2+\sigma_i)))^{2(d_{i+1}-d_i)} \geq & \\
		\exp(-5(d_{i+1}-d_i)/\sigma_i) &
	\end{align*}
	where the first inequality comes from Poisson summation on both $\rho$.
	Next, the bias added by truncating the samples is the bias of the scalar product of $(q/D_i)\vec{u}+(\vec{a_0}\Mod q/D_i)-(q/D_i)\vec{v}-(\vec{a_1}\Mod q/D_i)$ with the corresponding secret coordinates.
	Therefore, using a Poisson summation, we can prove that $\alpha'$ is correct.

	The loop on line 15 is terminated with probability at least $2/3$ each time so that the Hoeffding inequality proves the complexity.

	On line 10, the Hoeffding inequality proves that $\min_{\vec{r}} t[\vec{r}]\geq |\Li_i|/(2(q/D_i)^{d_{i+1}-d_i})$ holds except with probability $2^{-\Omega(n)}$.
	Under this assumption, the loop on lines 11-21 is executed at least $|\Li_i|/6$ times.
	Then, the Hoeffding inequality combined with the bound on the unconditional acceptance probability proves the lower bound on $|\Li_{i+1}|$.
\end{proof}

\begin{theorem} \label{thm:lweL2}
	Assume that $||s||\leq \sqrt{n}B$, $B\geq 2$, $\max(\beta,\log(q))=2^{o(n/\log n)}$, $\beta=\omega(1)$,  and $\epsilon\leq 1/\beta^4$.
	Then, there exists an integer $Q$ smaller than $B^n2^{2n^2}$ such that if $q$ is divisible by this integer, we can solve $\DecisionLWE$ time \[2^{(n/2+o(n))/\ln(1+\log \beta/\log B)}. \]
	%\[\exp_{2}\bigg(\frac{n}{(2-o(1))\ln(1+\frac{\log \beta}{\log B})}\bigg).\]
\end{theorem}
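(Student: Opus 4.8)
The plan is to mimic the proof of \autoref{thm:lwe}, but replacing the quantization-based \textsc{Reduce} step by the rejection-sampling variant \textsc{ReduceL2} and invoking \autoref{lemma:calculbruitL2} in place of \autoref{lemma:calculbruit}. First I would fix the parameters exactly as in \autoref{thm:lwe}: set $k=\lfloor \log(\beta^2/(12\ln(1+\log\beta)))\rfloor=(2-o(1))\log\beta$, choose the $d_i$ according to the same recurrence $d_{i+1}=\min(d_i+\lfloor nx/\log(1+q/D_i)\rfloor,n)$ with $1/x\le 2\ln(1+\log\beta)$, and pick $D_i$ of the same order $q/(Bk2^{(k-i)/2})$ — but now additionally require $D_i\mid q$, which is possible because we will force $q$ to be divisible by a suitable highly composite integer $Q$. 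The new ingredient is the sequence of Gaussian widths $\sigma_i$ for the rejection sampling; I would take all $\sigma_i$ equal to a slowly growing function, say $\sigma_i=\Theta(k)$ or $\sigma_i=\log^2 n$, chosen large enough that (a) $\sigma_i$ exceeds the absolute constant required by \autoref{lemma:calculbruitL2}, (b) the per-step survival factor $\exp(-5(d_{i+1}-d_i)/\sigma_i)$ is $2^{-o(n/\log n)}$ so that over all $k=O(\log n)$ steps the total sample attrition is subexponential and absorbed into the $o(n)$ in the exponent, and (c) the extra noise term $2\pi\sum_j (s_j\sigma_i D_i/q)^2$ is still controlled.

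The key computation is the new noise bound. By induction, exactly as in \autoref{lemma:main}, after $k$ steps the input to \textsc{Distinguish} has noise parameter $\alpha'^2 = 2^k\alpha^2 + 2\pi\sum_{i=0}^{k-1}2^{k-i-1}\sum_{j=d_i}^{d_{i+1}-1}(s_j\sigma_i D_i/q)^2$. The first term is $\le nx/6$ by choice of $k$, as before. For the second term I would substitute $D_i = q\sqrt{x/6}/(\pi B_{d_i}\sigma_i 2^{(k-i+1)/2})$ — i.e. divide the old choice of $D_i$ by $\sigma_i$ — so that each summand becomes $(x/6)\,(s_j/B_j)^2/2^{k-i+1}$ and, summing, the total is $\le (x/6)\sum_j (s_j/B_j)^2 \le (x/6)\cdot n$ using $\|s\|^2\le nB^2$ (here $B_j=B$ for all $j$). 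Hence $\alpha'^2\le nx/3$, the real part of the final bias is $\ge \exp(-nx/3)(1-3^k\epsilon)\ge 2^{-nx/2}$ exactly as in \autoref{lemma:main}, and \autoref{th:distinguish} finishes with negligible failure probability given $m=n2^k2^{nx}/\prod_i(\text{survival factors})= 2^{nx+o(n)}$ samples. Shrinking $D_i$ by the factor $\sigma_i=2^{o(n/\log n)}$ only changes $\log(1+q/D_i)$ by an additive $\log\sigma_i=o(nx)$ per step, so the block lengths $d_{i+1}-d_i$ and the condition $d_k=n$ are unaffected to leading order, and the running time stays $2^{(n/2+o(n))/\ln(1+\log\beta/\log B)}$ by the same arithmetic as in \autoref{thm:lwe}.

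It remains to exhibit the modulus $Q$. I need each $D_i$ to divide $q$; since there are $k=O(\log n)$ values $D_0,\dots,D_{k-1}$, each a positive integer below $q$, it suffices that $q$ be divisible by $\mathrm{lcm}(D_0,\dots,D_{k-1})$, which is at most $\prod_i D_i < q^k$. That crude bound is too weak, so instead I would do what the statement suggests: define $Q$ to be a product of the actual $D_i$ values (or of $\mathrm{lcm}$ of a convenient arithmetic-progression-friendly family), and bound $\prod_i D_i \le q^{k}\le (n^d)^{O(\log n)}$; combined with $B^n$ from allowing $B_j$-dependent quantizers and the $2^{2n^2}$ slack, one checks $Q\le B^n 2^{2n^2}$ for $n$ large — the exponent $2n^2$ comfortably dominates $dk\log n=O(\log^2 n)\cdot\log n$. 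The main obstacle — and the only genuinely new point relative to \autoref{thm:lwe} — is verifying that the preconditions of \autoref{lemma:calculbruitL2} hold at every step, in particular the lower bound $|\Li_i|\ge 2n\max(n\log(q/D_i)(q/D_i)^{d_{i+1}-d_i},\exp(5(d_{i+1}-d_i)/\sigma_i))$: the first branch is the familiar ``enough collisions on a block'' requirement and follows from $|\Li_i|\ge n2^{nx}$ together with $(q/D_i)^{d_{i+1}-d_i}\le 2^{nx}$, while the second branch forces the global constraint that the cumulative survival factor $\prod_i \exp(-5(d_{i+1}-d_i)/\sigma_i)=\exp(-5n/\sigma)$ be $2^{-o(n)}$, which is exactly why $\sigma$ must be taken $\omega(1)$ (any $\sigma=\omega(1)$, e.g. $\sigma=\log n$, works and keeps $\exp(-5n/\sigma)=2^{-o(n)}$). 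Propagating these bounds through the induction, and checking that the $2^{-\Omega(n)}$ failure probabilities from the Hoeffding arguments inside \textsc{ReduceL2} union-bound to something negligible over the $k=O(\log n)$ steps, completes the proof.
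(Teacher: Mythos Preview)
Your overall strategy matches the paper's: keep the parameters $k$ and $d_i$ of \autoref{thm:lwe}, replace \textsc{Reduce} by \textsc{ReduceL2}, invoke \autoref{lemma:calculbruitL2} instead of \autoref{lemma:calculbruit}, and absorb the extra attrition factor $6^k\exp(5n/\sigma)$ into the $o(n)$ in the exponent. The paper does exactly this, with the specific choice $\sigma_i=\sqrt{\log\beta}$; it does \emph{not} divide $D_i$ by $\sigma_i$ as you do, but instead observes that with $\sigma=\sqrt{\log\beta}$ and $k\approx 2\log\beta$ the quantization noise $2\pi\sum_i 2^{k-i-1}\sum_j(s_j\sigma D_i/q)^2=\Theta(n\sigma^2/k)=\Theta(n/\log\beta)$ is already $o(nx)$. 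Your variant (shrinking $D_i$ by $\sigma$) is equally valid and arguably cleaner, since it makes the noise computation identical to that of \autoref{lemma:main}; the cost is an additive $\log\sigma$ in each $\log(q/D_i)$, which, as you say, does not affect the leading term of $x$.

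Your handling of $Q$, however, is circular. You write that it suffices for $q$ to be divisible by $\mathrm{lcm}(D_0,\dots,D_{k-1})$ and then bound this by $\prod_i D_i<q^k$; but the $D_i$ are defined as $q$ divided by something, so they depend on $q$, and a bound of the form $Q\le q^k$ is useless for a quantity that must be fixed \emph{before} $q$ is chosen. The right object is the product (or lcm) of the $q/D_i$, which in the paper's parametrisation equals $\prod_{i=0}^{k-1} Bk\,2^{(k-i)/2}$ and is manifestly independent of $q$. This product is at most $B^k k^k 2^{k(k+1)/4}\le B^n 2^{2n^2}$ since $k=O(\log\beta)=o(n/\log n)$, which is precisely the bound claimed in the statement. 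Once you set $Q=\prod_i(q/D_i)$ rather than $\prod_i D_i$, the rest of your paragraph goes through. (A secondary nit: the condition ``$\log\sigma_i=o(nx)$'' is not the relevant comparison; what you need is $\log\sigma=o(\log(q/D_i))$, or more precisely that the harmonic-type sum $\sum_i 1/\log(q/D_i)$ changes only by a $1+o(1)$ factor, which does hold for your choices of $\sigma$.)
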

\begin{proof}
	We use $\sigma_i=\sqrt{\log \beta}$, $m=2n^26^k\exp(5n/\sigma_i)2^{nx}$,\[ k=\lfloor \log(\beta^2/(12\ln(1+\log \beta)))\rfloor =(2-o(1))\log \beta \in \omega(1)\] and we set $D_i=q/(Bk2^{(k-i)/2})$, $d_{i+1}=\min(d_i+\lfloor \frac{nx}{\log(q/D_i)} \rfloor,n)$ and $Q=\prod_i Bk2^{(k-i)/2}$.
	We can then see as in~\autoref{thm:lwe} that these choices lead to some $x=1/(2-o(1))/\ln(1+\log \beta/\log B)$.
	Finally, note that the algorithm has complexity $2^{\Omega(n/\log \log \beta)}$, so a factor $2n^26^k\log(q)\exp(5n/\sqrt{\log(\beta)})$ is negligible.
\end{proof}
The condition on $q$ can be removed using modulus switching~\cite{brakerski2013classical}, unless $q$ is tiny.
\begin{theorem}
	If $B\beta\leq q$, then the previous theorem holds without the divisibility condition.
\end{theorem}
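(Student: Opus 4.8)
The plan is to reduce the general case to the divisibility case handled by \autoref{thm:lweL2} by applying a modulus-switching preprocessing step that rescales the modulus $q$ up to a multiple $Q'$ of the required integer $Q$, while only mildly perturbing the noise parameters. Recall from the statement of \autoref{thm:lweL2} that there is an integer $Q \leq B^{n}2^{2n^{2}}$ such that the algorithm works whenever $Q \mid q$. So the first step is to fix the target modulus: let $Q' = Q \cdot q$ (or more frugally $Q' = Q\lceil q/\gcd(Q,q)\rceil$), which is clearly divisible by $Q$ and is of the same asymptotic "niceness" as $q$ (its bit-length is $\bigo(n^{2}+\log q)$, which is $2^{o(n/\log n)}$ under the hypothesis $\log q = 2^{o(n/\log n)}$, so it does not affect the complexity exponent).

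Next I would invoke the standard modulus-switching transformation (as in~\cite{brakerski2013classical}): given a sample $(\vec{\mathrm{a}},b) \in (\Zq)^{n}\times(\R/q\Z)$ from the $\mathsf{LWE}$ distribution with secret $\vec{\mathrm{s}}$, output $(\lfloor (Q'/q)\vec{\mathrm{a}} \rceil, (Q'/q) b) \in (\Z/Q'\Z)^{n}\times(\R/Q'\Z)$. Writing $\vec{\mathrm{a}}' = (Q'/q)\vec{\mathrm{a}} + \vec{\mathrm{f}}$ with $\|\vec{\mathrm{f}}\|_{\infty}\leq 1/2$, the new noisy inner product picks up an extra term $\langle \vec{\mathrm{f}},\vec{\mathrm{s}}\rangle$, whose bias contribution I need to control. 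This is exactly where the hypothesis $B\beta \leq q$ enters: the extra error has (Euclidean) size at most $\tfrac12\|\vec{\mathrm{s}}\| \leq \tfrac12\sqrt{n}B$ against a modulus $Q' \geq q \geq B\beta = B\sqrt{n/2}/\alpha$, so in units of the new modulus this rounding error is $\bigo(\alpha)$-small — comparable to the original noise parameter $\alpha$. Concretely, the new noise parameter is $\alpha'^{2} = \alpha^{2} + \bigo(1)\cdot (\sqrt{n}B/q)^{2} = \bigo(\alpha^{2})$, i.e. $\alpha'$ is within a constant factor of $\alpha$, hence $\beta' = \Theta(\beta)$; and the distortion $\epsilon$ is affected only by a constant factor (one can control this using the same bias-multiplication argument as in \autoref{lemma:calculbruit}, treating $\langle\vec{\mathrm{f}},\vec{\mathrm{s}}\rangle$ as a small additional noise term whose bias is close to $1$), so $\epsilon' \leq 1/\beta'^{4}$ still holds after adjusting constants. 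The bound $\|\vec{\mathrm{s}}\| \leq \sqrt{n}B$ is preserved since the secret is unchanged.

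With the preprocessed samples now living modulo $Q'$, which is divisible by $Q$, I would apply \autoref{thm:lweL2} directly (with $B$, and $\beta'=\Theta(\beta)$, $\epsilon'$, $Q'$ in place of $q$), obtaining a solution of $\DecisionLWE$ in time $2^{(n/2+o(n))/\ln(1+\log\beta/\log B)}$; since $\beta'/\beta$ and the bit-length blowup of $Q'$ are absorbed in the $o(n)$ term, the stated complexity is unchanged, and a decision oracle for the switched instance is a decision oracle for the original (modulus switching is efficiently invertible on the uniform side, so it maps uniform to uniform and $\mathsf{LWE}$ to $\mathsf{LWE}$). The one caveat — the "unless $q$ is tiny" remark preceding the theorem — is that for this to work we need $Q'/q$ to be at least a constant, which holds automatically since $Q'$ is a multiple of $q$; the real constraint is that the rounding-to-noise ratio $\sqrt{n}B/Q' \leq \sqrt{n}B/q$ be $\bigo(\alpha)$, which is precisely the hypothesis $B\beta\leq q$.

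The main obstacle I anticipate is bookkeeping the distortion parameter $\epsilon$ through the modulus switch: unlike the noise parameter $\alpha$, which behaves additively in a transparent way, $\epsilon$ is defined through a multiplicative bound on a conditional expectation, and one must check that adding a small independent (rounding) error multiplies the bias by something within $1\pm\bigo(\epsilon)$ rather than degrading it — this requires the rounding error's own bias to be bounded away from $0$, which again follows from $\sqrt{n}B/q = \bigo(\alpha) = o(1)$ via \autoref{lemma:gaussianbias}-style estimates. Everything else is a routine substitution into \autoref{thm:lweL2}.
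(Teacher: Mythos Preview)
Your overall strategy---modulus-switch up to a multiple of $Q$ and then invoke \autoref{thm:lweL2}---is the paper's strategy too. But there is a genuine gap in your execution: when switching from modulus $q$ to a \emph{larger} modulus $Q'$ (you take $Q'=Qq$), deterministic rounding $\vec{a}\mapsto\lfloor(Q'/q)\vec{a}\rceil$ does \emph{not} map the uniform distribution on $(\Zq)^n$ to the uniform distribution on $(\Z/Q'\Z)^n$. Indeed if $Q'/q$ is an integer there is nothing to round and the image is the proper subgroup $(Q'/q)(\Zq)^n$; in general the image has density at most $q/Q'<1$. Your sentence ``modulus switching\dots\ maps uniform to uniform'' is only valid for downward switching. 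Since \autoref{thm:lweL2} (through {\sc ReduceL2} and \autoref{lemma:calculbruitL2}) assumes genuine $\mathsf{LWE}$ samples with uniform vectorial part, this breaks the reduction.

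The paper repairs exactly this point by using \emph{randomized} rounding: for each sample $(\vec{a},b)$ it draws $\vec{x}\sim D_{\Z^n-\vec{a}/q,\varsigma}$ with $\varsigma=\sqrt{n}\,p/q$ and outputs $((p/q)\vec{a}+p\vec{x},(p/q)b)$. The Gaussian smoothing makes the new vectorial part $2^{-\Omega(n)}$-close (in KL) to uniform over $(\Z/p\Z)^n$, and the extra noise term contributes $\sum_i \pi\varsigma^2 s_i^2/p^2\le \pi n B^2/q^2$ to $\alpha'^2$, which is $\bigo(\alpha^2)$ precisely under the hypothesis $B\beta\le q$. So your intuition about where $B\beta\le q$ enters is right, but the missing idea is the Gaussian randomization needed to restore uniformity; this also sidesteps your ``main obstacle'' about tracking $\epsilon$ through a rounding error, since the perturbation's bias is computable exactly via Poisson summation rather than being a deterministic rounding whose bias must be bounded away from zero.
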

\begin{proof}
	Let $p\geq q$ be the smallest modulus such that the previous theorem applies and $\varsigma=\sqrt{n}p/q$.
	For each sample $(\vec{a},b)$, sample $\vec{x}$ from $D_{\Z^n-\vec{a}/q,\varsigma}$ and use the sample $(p/q\vec{a}+p\vec{x},p/qb)$ with the algorithm of the previous theorem.
	Clearly, the vectorial part has independent coordinates, and the probability that one coordinate is equal to $y$ is proportional to $\rho_\varsigma(p/q\Z-y)$.
	Therefore, the Kullback-Leibler distance with the uniform distribution is $2^{-\Omega(n)}$.
	If the original error distribution has a noise parameter $\alpha$, then the new error distribution has noise parameter $\alpha'$ with 
	\[ \alpha'^2\leq \alpha^2+\sum_i \pi\varsigma^2s_i^2/p^2 \leq \alpha^2+\pi B^2\varsigma^2/p^2 \leq n/2/\beta^2+\pi nB^2/q^2 \]
	so that $\beta$ is only reduced by a constant.
\end{proof}

\section{Applications to Lattice Problems}

We first show that $\BDD_{B,\beta}$ is easier than $\mathsf{LWE}$ for some large enough modulus
and then that $\UniqueSVP_{B,\beta}$ and $\GapSVP_{B,\beta}$ are easier than $\BDD_{B,\beta}$. 

\subsection{Variant of Bounding Distance Decoding}
The main result of this subsection is close to the classic reduction of \cite{regev2009lattices}.
However, our definition of $\mathsf{LWE}$ allows to simplify the proof, and gain a constant factor in the decoding radius.
The use of the KL divergence instead of the statistical distance also allows to gain a constant factor, when we 
need an exponential number of samples, or when $\lambda_n^*$ is really small.

The core of the reduction lies in Lemma~\ref{LemmeBDDreduction}, assuming access to a Gaussian sampling oracle. This hypothesis will be taken care of in Lemma~\ref{lemma:freesampling}.

\begin{lemma}
	\label{LemmeBDDreduction}
	Let $\vec{\mathrm{A}}$ be a basis of the lattice $\mathrm{\Lambda}$ of full rank $n$.
	Assume we are given access to an oracle outputting a vector sampled under the law $D_{\mathrm{\Lambda}^*,\sigma}$ and  $\sigma \geq q\eta_\epsilon(\mathrm{\Lambda}^*)$, and to an oracle solving the $\mathsf{LWE}$ problem in dimension $n$, modulus $q\geq 2$, noise parameter $\alpha$, and distortion parameter $\xi$ which fails with negligible probability and use $m$ vectors if the secret $\vec{\mathrm{s}}$ verifies $|s_i|\leq B_i$.
	
	Then, if we are given a point $\vec{\mathrm{x}}$ such that there exists $\vec{\mathrm{s}}$ with $\vec{\mathrm{v}}=\vec{\mathrm{A}}\vec{\mathrm{s}}-\vec{\mathrm{x}}$, $||\vec{\mathrm{v}}||\leq \sqrt{1/\pi} \alpha q/\sigma$, $|s_i|\leq B_i$ and $\rho_{\sigma/q}(\mathrm{\Lambda}\setminus\{\vec{0}\}+\vec{\mathrm{v}})\leq \xi\exp(-\alpha^2)/2$, we are able to find $\vec{\mathrm{s}}$ in at most $mn$ calls to the Gaussian sampling oracle, 
$n$ calls to the $\mathsf{LWE}$ solving oracle, with a probability of failure $n\sqrt{m}\epsilon+2^{-\mathrm{\Omega}(n)}$ and complexity $\bigo(mn^3+n^c)$ for some $c$.
\end{lemma}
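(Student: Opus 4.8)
The strategy is the classical Regev-style BDD-to-LWE reduction: we transform the lattice basis $\vec{\mathrm{A}}$ and target point $\vec{\mathrm{x}}$ into LWE samples whose secret encodes the coordinates $\vec{\mathrm{s}}$ we wish to recover, call the LWE oracle, and extract $\vec{\mathrm{s}}$. The novelty here is not in the shape of the argument but in working with the paper's bias-based definition of LWE (which absorbs a constant in the decoding radius) and in measuring the statistical cost of the Gaussian sampling via the KL divergence (Lemmas~\ref{KLmajore} and \ref{KLerreur}) rather than statistical distance, which is what gives the $n\sqrt m\,\epsilon$ error term rather than something weaker.

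\emph{Construction of samples.} First I would call the Gaussian sampling oracle to obtain vectors $\vec{\mathrm{w}}_1,\dots,\vec{\mathrm{w}}_m$ drawn from $D_{\mathrm{\Lambda}^*,\sigma}$. For each, form the LWE sample $(\vec{\mathrm{a}},b)$ by setting $\vec{\mathrm{a}} = {}^t\vec{\mathrm{A}}\,\vec{\mathrm{w}} \bmod q$ (so $\vec{\mathrm{a}}\in(\Zq)^n$ since $\vec{\mathrm{w}}\in\mathrm{\Lambda}^*$ means $\langle\vec{\mathrm{w}},\vec{\mathrm{A}}\vec{\mathrm{e}}_i\rangle\in\Z$) and $b = \langle \vec{\mathrm{w}},\vec{\mathrm{x}}\rangle \cdot q \bmod q$ — more precisely $b$ is $q\langle\vec{\mathrm{w}},\vec{\mathrm{x}}\rangle$ reduced mod $q$ to a real in $\R/q\Z$. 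Then $b - \langle\vec{\mathrm{a}},\vec{\mathrm{s}}\rangle \equiv q\langle\vec{\mathrm{w}},\vec{\mathrm{x}}-\vec{\mathrm{A}}\vec{\mathrm{s}}\rangle = -q\langle\vec{\mathrm{w}},\vec{\mathrm{v}}\rangle \pmod q$, so the noise is $q\langle\vec{\mathrm{w}},\vec{\mathrm{v}}\rangle$. The point of the smoothing hypothesis $\sigma\geq q\eta_\epsilon(\mathrm{\Lambda}^*)$ is that $\vec{\mathrm{a}}$ is within statistical distance (hence KL divergence, by Lemma~\ref{KLmajore}) at most $\epsilon$-ish of uniform over $(\Zq)^n$; and, conditioned on $\vec{\mathrm{a}}$, the quantity $\langle\vec{\mathrm{w}},\vec{\mathrm{v}}\rangle$ is distributed like $\langle \vec{\mathrm{w}}', \vec{\mathrm{v}}\rangle$ where $\vec{\mathrm{w}}'$ is Gaussian of width $\sigma$ on a coset of $\mathrm{\Lambda}^*$, which is essentially a one-dimensional Gaussian of standard deviation $\sigma\|\vec{\mathrm{v}}\|$ up to the correction governed by $\rho_{\sigma/q}(\mathrm{\Lambda}\setminus\{\vec 0\}+\vec{\mathrm{v}})$.

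\emph{Checking the LWE parameters.} Next I would verify that these samples form an LWE distribution with noise parameter $\alpha$ and distortion $\xi$. Using Lemma~\ref{lemma:gaussianbias}, the bias of a Gaussian of standard deviation $\sigma\|\vec{\mathrm{v}}\|$ modulo $q$ is $\exp(-2\pi^2(\sigma\|\vec{\mathrm{v}}\|/q)^2)$; the constraint $\|\vec{\mathrm{v}}\|\leq \sqrt{1/\pi}\,\alpha q/\sigma$ forces this to be at least $\exp(-2\pi\alpha^2)$... here one has to be careful to match the definition's normalization $\E[\exp(2i\pi(\langle\vec a,\vec s\rangle - b)/q)\mid\vec a]\exp(\alpha'^2)$, so the bound is really arranged so that $\alpha'\le\alpha$. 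The $\xi$ term comes from two sources: the deviation of $\vec{\mathrm{a}}$ from uniform (controlled by $\epsilon$ and absorbed via $m$ samples into $n\sqrt m\epsilon$ through Lemma~\ref{KLerreur}), and the non-Gaussianity of the conditional noise, controlled by the hypothesis $\rho_{\sigma/q}(\mathrm{\Lambda}\setminus\{\vec 0\}+\vec{\mathrm{v}})\leq \xi\exp(-\alpha^2)/2$ — this is exactly the smoothing-type bound ensuring the "wrap-around" mass is small relative to the bias.

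\emph{Amplification and bookkeeping.} Since the LWE oracle needs its input vectors to be (close to) independent and uniform and we only get a good distribution in a KL sense, I would feed fresh samples for each of the $n$ invocations of the oracle — hence $mn$ Gaussian-oracle calls total — and take a majority/consistency vote or simply rerun, which is why the failure probability is $n\sqrt m\epsilon + 2^{-\Omega(n)}$ (the first term from summing the KL-distinguishing bound of Lemma~\ref{KLerreur} over the $n$ runs, the second from the oracle's own negligible failure). Finally, once the oracle returns $\vec{\mathrm{s}}$ with $|s_i|\le B_i$, we are done since $\vec{\mathrm{s}}$ is literally what we wanted; the $\bigo(mn^3 + n^c)$ complexity is the cost of the linear algebra to build and post-process the samples (matrix-vector products mod $q$, reductions), plus the internal cost $n^c$ hidden in the oracle-interfacing. \textbf{The main obstacle} I anticipate is the careful accounting of the two distortion contributions — proving cleanly that the conditional noise distribution, given the (nearly uniform) $\vec{\mathrm{a}}$, has bias within a $(1\pm\xi)$ factor of $\exp(-\alpha'^2)$ for a suitable $\alpha'\le\alpha$, uniformly over all $\vec{\mathrm{a}}$; this is where the precise form of the hypothesis on $\rho_{\sigma/q}(\mathrm{\Lambda}\setminus\{\vec0\}+\vec{\mathrm{v}})$ and a Poisson-summation computation of the conditional Fourier coefficient are needed, and getting the constant $1/\sqrt\pi$ in the radius rather than something larger hinges on doing this tightly.
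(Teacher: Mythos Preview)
Your sample construction and the verification that the resulting distribution meets the paper's bias-based LWE definition are essentially right (modulo a spurious factor of $q$ in your definition of $b$; the paper simply takes $b=\langle\vec{\mathrm{w}},\vec{\mathrm{x}}\rangle$ in $\R/q\Z$, not $q\langle\vec{\mathrm{w}},\vec{\mathrm{x}}\rangle$). The Poisson-summation computation of the conditional bias and the KL accounting via Lemmas~\ref{KLmajore} and~\ref{KLerreur} are exactly what the paper does.

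The real gap is in your last paragraph. The LWE oracle does \emph{not} return $\vec{\mathrm{s}}$; it returns $\vec{\mathrm{s}}\bmod q$. Nothing in the hypotheses forces $B_i<q/2$, so one oracle call is not enough, and the $n$ invocations are not ``majority/consistency votes'' on the same problem. In the paper the $n$ calls implement an iterative base-$q$ digit extraction: after recovering $\vec{\mathrm{s}}\bmod q$, one forms the new target
\[
\vec{\mathrm{x}}'=\bigl(\vec{\mathrm{x}}-\vec{\mathrm{A}}(\vec{\mathrm{s}}\bmod q)\bigr)/q,
\]
whose secret is $\vec{\mathrm{s}}'=(\vec{\mathrm{s}}-(\vec{\mathrm{s}}\bmod q))/q$ and whose offset satisfies $\|\vec{\mathrm{v}}'\|=\|\vec{\mathrm{v}}\|/q$. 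All the hypotheses of the lemma are again met (with the same $\sigma,\alpha,\xi$ and smaller $B_i$), so a second oracle call yields the next base-$q$ digit. After $n$ iterations the residual offset has shrunk by a factor $q^n\geq 2^n$, at which point the instance is solved by Babai's nearest-plane algorithm --- this is the source of the $n^c$ term, not ``oracle interfacing''. Your voting interpretation would neither recover $\vec{\mathrm{s}}$ nor explain the $n^c$ cost; fixing this step makes the rest of your outline match the paper.
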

\begin{proof}
	Let $\vec{\mathrm{y}}$ be sampled according to $D_{\mathrm{\Lambda}^*,\sigma}$, $\vec{\mathrm{v}}=\vec{\mathrm{As}}-\vec{\mathrm{x}}$.
	Then \[ \langle \vec{\mathrm{y}},\vec{\mathrm{x}}\rangle =\langle \vec{\mathrm{y}},\vec{\mathrm{As}}\rangle +\langle \vec{\mathrm{y}},\vec{\mathrm{v}}\rangle =\langle \vec{\mathrm{A}}^t\vec{\mathrm{y}},\vec{\mathrm{s}}\rangle -\langle \vec{\mathrm{y}},\vec{\mathrm{v}}\rangle \] and since $\vec{\mathrm{y}} \in \mathrm{\Lambda}^*$, $\vec{\mathrm{A}}^t\vec{\mathrm{y}} \in \Z^n$.
	We can thus provide the sample $(\vec{\mathrm{A}}^t\vec{\mathrm{y}},\langle \vec{\mathrm{y}},\vec{\mathrm{x}}\rangle )$ to the $\mathsf{LWE}$ solving oracle.

	The probability of obtaining the sample $\vec{\mathrm{A}}^t\vec{\mathrm{y}} \in (\Zq)^n$ is proportional to $\rho_\sigma(q\mathrm{\Lambda}^*+\vec{\mathrm{y}})$.
	Using the Poisson summation formula, 
	\[ \rho_\sigma(q\mathrm{\Lambda}^*+\vec{\mathrm{y}})=s^n \det(\mathrm{\Lambda}/q) \sum_{\vec{\mathrm{z}} \in \mathrm{\Lambda}/q} \exp(2i\pi\langle \vec{\mathrm{z}},\vec{\mathrm{y}}\rangle )\rho_{1/\sigma}(\vec{\mathrm{z}}). \]
	Since $\sigma \geq q\eta_\epsilon(\mathrm{\Lambda}^*)=\eta_\epsilon(q\mathrm{\Lambda}^*)$, we have
	\[ 1-\epsilon \leq \sum_{\vec{\mathrm{z}} \in \mathrm{\Lambda}/q} \cos(2\pi\langle \vec{\mathrm{z}},\vec{\mathrm{y}}\rangle )\rho_{1/\sigma}(\vec{\mathrm{z}}) \leq 1+\epsilon. \]
	Therefore, with \autoref{KLmajore}, the \textsf{KL}-divergence between the distribution of $\vec{\mathrm{A}}^t\vec{\mathrm{y}} \Mod q$ and the uniform distribution is less than $2\epsilon^2$.

	Also, with $\vec{\mathrm{c}} \in \vec{\mathrm{y}}+q\mathrm{\Lambda}^*$, \[ \E[\exp(2i\pi(\langle \vec{\mathrm{A}}^t\vec{\mathrm{y}},\vec{\mathrm{s}}\rangle -\langle \vec{\mathrm{y}},\vec{\mathrm{x}}\rangle )/q)|\vec{\mathrm{A}}^t\vec{\mathrm{y}} \Mod q]=\E_{D_{q\mathrm{\Lambda}*+\vec{\mathrm{c}},\sigma}}[\exp(2i\pi\langle \vec{\mathrm{y}},\vec{\mathrm{v}}/q\rangle )]. \]
	Let $f(\vec{\mathrm{y}})=\exp(2i\pi\langle \vec{\mathrm{y}},\vec{\mathrm{v}}/q\rangle )\rho_\sigma(\vec{\mathrm{y}})$ so that the bias is $f(q\mathrm{\Lambda}^*+\vec{\mathrm{c}})/\rho_\sigma(q\mathrm{\Lambda}^*+\vec{\mathrm{c}}).$
	Using the Poisson summation formula on both terms, this is equal to
	\[ \bigg(\sum_{\vec{\mathrm{y}} \in \mathrm{\Lambda}/q} \rho_{1/\sigma}(\vec{\mathrm{y}}-\vec{\mathrm{v}}/q)\exp(-2i\pi\langle \vec{\mathrm{y}},\vec{\mathrm{c}}\rangle )\bigg) \bigg/\bigg(\sum_{\vec{\mathrm{y}} \in \mathrm{\Lambda}/q} \rho_{1/\sigma}(\vec{\mathrm{y}})\cos(2\pi\langle \vec{\mathrm{y}},\vec{\mathrm{c}}\rangle \bigg). \]
	In this fraction, the numerator is at distance at most $\xi\exp(-\alpha^2)/(1+\epsilon)$ to $\exp(-\pi\sigma^2/q^2||\vec{\mathrm{v}}||^2)\geq\exp(-\alpha^2)$ and the denominator is in $[1-\epsilon;1+\epsilon]$.

	Using \autoref{KLerreur} with an algorithm which tests if the returned secret is equal to the real one, the failure probability is bounded by $\sqrt{m}\epsilon$.
	Thus, the $\mathsf{LWE}$ solving oracle works and gives $\vec{\mathrm{s}} \Mod q$.

\noindent
	Let $\vec{\mathrm{x'}}=(\vec{\mathrm{x}}-\vec{\mathrm{A}}(\vec{\mathrm{s}} \Mod q))/q$ so that $\vec{\mathrm{s'}}=(\vec{\mathrm{s}}-(\vec{\mathrm{s}} \Mod q))/q$ and $||\vec{\mathrm{v'}}||\leq ||\vec{\mathrm{v}}||/q$.
	Therefore, the reduction also works.
	If we repeat this process $n$ times, we can solve the last closest vector problem with Babai's algorithm, which reveals $\vec{\mathrm{s}}$.
\end{proof}

In the previous lemma, we required access to a $D_{\mathrm{\Lambda}^*,\sigma}$ oracle. However, for large enough $\sigma$, this hypothesis comes for free, as shown by the following lemma, which we borrow from \cite{brakerski2013classical}.

\begin{lemma}\label{lemma:freesampling}
	If we have a basis $\vec{\mathrm{A}}$ of the lattice $\mathrm{\Lambda}$, then for $\sigma\geq \bigo(\sqrt{\log n}||\widetilde{\vec{\mathrm{A}}}||)$, it is possible to sample in polynomial time from $D_{\mathrm{\Lambda},\sigma}$.
\end{lemma}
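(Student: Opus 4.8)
The plan is to realize the oracle by the randomized nearest-plane sampler (Klein's algorithm, with the output-distribution analysis used in \cite{brakerski2013classical}). Write $\vec{\mathrm{a}}_1,\dots,\vec{\mathrm{a}}_n$ for the columns of $\vec{\mathrm{A}}$ and $\widetilde{\vec{\mathrm{a}}}_1,\dots,\widetilde{\vec{\mathrm{a}}}_n$ for their Gram--Schmidt orthogonalization (computable in polynomial time), with Gram--Schmidt coefficients $\mu_{j,i}$. One produces the output $\vec{\mathrm{v}}=\sum_i z_i\vec{\mathrm{a}}_i$ one coefficient at a time, for $i=n,n-1,\dots,1$: having already fixed $z_{i+1},\dots,z_n$, one computes the real number $c_i=-\sum_{j>i}\mu_{j,i}z_j$ (the $\widetilde{\vec{\mathrm{a}}}_i$-coordinate of $-\sum_{j>i}z_j\vec{\mathrm{a}}_j$) and samples $z_i\in\mathbb{Z}$ with probability proportional to $\rho_{s_i}(z_i-c_i)$, i.e.\ from a one-dimensional discrete Gaussian of parameter $s_i=\sigma/||\widetilde{\vec{\mathrm{a}}}_i||$ ``centered'' at $c_i$.

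For efficiency, each of the $n$ steps needs only arithmetic on the precomputed Gram--Schmidt data plus one draw from the one-dimensional discrete Gaussian above. By hypothesis $\sigma\geq\bigo(\sqrt{\log n})\,||\widetilde{\vec{\mathrm{A}}}||\geq\bigo(\sqrt{\log n})\,||\widetilde{\vec{\mathrm{a}}}_i||$, with the hidden constant chosen so that $s_i\geq\eta_\epsilon(\mathbb{Z})$; in particular $s_i=\Omega(1)$, so such a draw is produced in expected polynomial time by rejection sampling (pick a uniform integer within $\Theta(s_i\sqrt n)$ of $c_i$, accept with probability $\rho_{s_i}(\cdot-c_i)$: each trial succeeds with probability $\Omega(1/\sqrt n)$, and truncating the support costs only $2^{-\Omega(n)}$ in statistical distance). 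Hence the whole sampler runs in polynomial time.

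For correctness, fix $\vec{\mathrm{v}}=\sum_i z_i\vec{\mathrm{a}}_i\in\mathrm{\Lambda}$: the coefficients $z_i$ are uniquely determined, so the algorithm outputs $\vec{\mathrm{v}}$ precisely along the corresponding sequence of choices, with probability $\prod_{i=1}^n\rho_{s_i}(z_i-c_i)/\rho_{s_i}(\mathbb{Z}-c_i)$. Since the $\widetilde{\vec{\mathrm{a}}}_i$ are orthogonal, the $\widetilde{\vec{\mathrm{a}}}_i$-coordinate of $\vec{\mathrm{v}}$ is exactly $(z_i-c_i)\widetilde{\vec{\mathrm{a}}}_i$, whence $||\vec{\mathrm{v}}||^2=\sum_i(z_i-c_i)^2||\widetilde{\vec{\mathrm{a}}}_i||^2$ and the numerators telescope to $\prod_i\exp(-\pi(z_i-c_i)^2||\widetilde{\vec{\mathrm{a}}}_i||^2/\sigma^2)=\rho_\sigma(\vec{\mathrm{v}})$.

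The one real obstacle is to show that the product of denominators $\prod_i\rho_{s_i}(\mathbb{Z}-c_i)$ is essentially independent of $\vec{\mathrm{v}}$ (the $c_i$ do depend on $\vec{\mathrm{v}}$), and this is exactly where the size hypothesis on $\sigma$ enters: since $s_i\geq\eta_\epsilon(\mathbb{Z})$, Poisson summation (as used elsewhere in this paper) gives $\rho_{s_i}(\mathbb{Z}+c)\in s_i[1-\epsilon,1+\epsilon]$ for every shift $c$, so the denominator product lies in a fixed multiplicative window of width $((1+\epsilon)/(1-\epsilon))^n$. Consequently the output probability of each $\vec{\mathrm{v}}$ equals $\rho_\sigma(\vec{\mathrm{v}})/\rho_\sigma(\mathrm{\Lambda})$ up to a factor $(1\pm\epsilon)^{\pm n}$; taking $\epsilon$ negligible (absorbing the associated $\omega(1)$ factor into the constant of $\bigo(\sqrt{\log n})$, or settling for inverse-polynomial error) puts the output within negligible statistical distance of $D_{\mathrm{\Lambda},\sigma}$, which together with the $2^{-\Omega(n)}$ error of the one-dimensional samplers proves the lemma.
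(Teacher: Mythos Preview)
The paper does not actually prove this lemma: it simply states the result and attributes it to \cite{brakerski2013classical}, with no argument given. Your proposal supplies the standard proof via the Klein/GPV randomized nearest-plane sampler, and it is correct; this is precisely the content underlying the cited reference. The structure you give---sampling the coefficients $z_i$ one at a time from shifted one-dimensional discrete Gaussians of parameter $s_i=\sigma/||\widetilde{\vec{\mathrm{a}}}_i||$, observing that the numerators multiply to $\rho_\sigma(\vec{\mathrm{v}})$ via the orthogonal decomposition, and controlling the shift-dependence of the denominators using $s_i\geq\eta_\epsilon(\mathbb{Z})$ together with Poisson summation---is the expected one, and your treatment of both the one-dimensional rejection sampler and the accumulated $(1\pm\epsilon)^{\pm n}$ multiplicative error is adequate for the claimed statement.
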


We will also need the following lemma, due to Banaszczyk~\cite{ban93}.
\begin{lemma}\label{lemma:ban93}
For a lattice $\mathrm{\Lambda}$, $\vec{\mathrm{c}}\in \R^n$, and $t\geq 1$,
\[ \frac{\rho \big((\mathrm{\Lambda}+\vec{\mathrm{c}})\setminus \mathcal{B}\big(0,t\sqrt{\frac{n}{2\pi}}\big)\big)}{\rho(\mathrm{\Lambda})} \leq \exp\big(-n(t^2-2\ln t-1)/2\big) \leq \exp\big(-n(t-1)^2/2\big).\]
\end{lemma}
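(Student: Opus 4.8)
The plan is to follow Banaszczyk's original argument: a Chernoff-style exponential substitution turns the tail sum into a full Gaussian sum of slightly larger width, the Poisson summation formula controls that full sum, and one then optimizes a free parameter. Throughout, $\rho$ denotes $\rho_1$.

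First I would fix $\tau\in[0,1)$ and write $r=t\sqrt{n/(2\pi)}$. For every $\vec{\mathrm{x}}$ with $||\vec{\mathrm{x}}||\geq r$ one has $\exp(-\pi||\vec{\mathrm{x}}||^2)\leq \exp(-\pi\tau r^2)\exp(-\pi(1-\tau)||\vec{\mathrm{x}}||^2)$, since $\exp(-\pi\tau||\vec{\mathrm{x}}||^2)\leq\exp(-\pi\tau r^2)$. Summing this over $\vec{\mathrm{x}}\in(\mathrm{\Lambda}+\vec{\mathrm{c}})\setminus\mathcal{B}(0,r)$ and then enlarging the index set to all of $\mathrm{\Lambda}+\vec{\mathrm{c}}$ gives
\[ \rho\big((\mathrm{\Lambda}+\vec{\mathrm{c}})\setminus\mathcal{B}(0,r)\big)\leq \exp(-\pi\tau r^2)\,\rho_{1/\sqrt{1-\tau}}(\mathrm{\Lambda}+\vec{\mathrm{c}}). \]

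Next I would remove the shift $\vec{\mathrm{c}}$ and the enlarged width by Poisson summation. Since the Fourier transform of $\rho_s$ is $s^n\rho_{1/s}\geq 0$, the periodization $\vec{\mathrm{c}}\mapsto\rho_s(\mathrm{\Lambda}+\vec{\mathrm{c}})$ has nonnegative Fourier coefficients, hence is maximized at $\vec{\mathrm{c}}=0$, so $\rho_s(\mathrm{\Lambda}+\vec{\mathrm{c}})\leq\rho_s(\mathrm{\Lambda})$ for every $s$; and Poisson summation also gives $\rho_s(\mathrm{\Lambda})=s^n\det(\mathrm{\Lambda})^{-1}\rho_{1/s}(\mathrm{\Lambda}^*)$. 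Applying this last identity with $s=1/\sqrt{1-\tau}$ and with $s=1$, and using that $w\mapsto\rho_w(\mathrm{\Lambda}^*)$ is non-decreasing (so $\rho_{\sqrt{1-\tau}}(\mathrm{\Lambda}^*)\leq\rho_1(\mathrm{\Lambda}^*)$), I obtain $\rho_{1/\sqrt{1-\tau}}(\mathrm{\Lambda})\leq(1-\tau)^{-n/2}\rho(\mathrm{\Lambda})$. Substituting $r^2=t^2n/(2\pi)$ then yields
\[ \frac{\rho\big((\mathrm{\Lambda}+\vec{\mathrm{c}})\setminus\mathcal{B}(0,r)\big)}{\rho(\mathrm{\Lambda})}\leq \exp(-\tau t^2n/2)(1-\tau)^{-n/2}=\Big(\frac{\exp(-\tau t^2)}{1-\tau}\Big)^{n/2}. \]
Finally I would optimize over $\tau$: the map $\tau\mapsto-\tau t^2-\ln(1-\tau)$ is minimized on $[0,1)$ at $\tau=1-1/t^2$, which lies in $[0,1)$ exactly because $t\geq 1$, and its value there is $-(t^2-1-2\ln t)$; this produces the first bound $\exp(-n(t^2-2\ln t-1)/2)$. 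The second, weaker bound follows from the elementary inequality $t^2-2\ln t-1\geq(t-1)^2$ for $t\geq 1$, i.e.\ $\ln t\leq t-1$.

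I expect the main difficulty to be purely bookkeeping: keeping the normalization constants straight when passing to the enlarged width, and carefully justifying the two Poisson-summation facts — nonnegativity of the Fourier coefficients of a periodized Gaussian, which gives $\rho_s(\mathrm{\Lambda}+\vec{\mathrm{c}})\leq\rho_s(\mathrm{\Lambda})$, and the functional equation $\rho_s(\mathrm{\Lambda})=s^n\det(\mathrm{\Lambda})^{-1}\rho_{1/s}(\mathrm{\Lambda}^*)$. Beyond this there is no genuine obstacle; the exponential-substitution trick does all the real work.
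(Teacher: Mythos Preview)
Your proof is correct and follows essentially the same argument as the paper: an exponential-substitution (Chernoff) step, the Poisson-summation bound $\rho_s(\mathrm{\Lambda}+\vec{\mathrm{c}})\leq s^n\rho(\mathrm{\Lambda})$, and optimization of the free parameter, finishing with $\ln t\leq t-1$. The only difference is cosmetic --- you parametrize by $\tau=1-1/s^2$ where the paper uses $s$ directly (your optimum $\tau=1-1/t^2$ is exactly the paper's choice $s=t$), and you split the Poisson-summation step into two pieces that the paper handles in a single chain.
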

\begin{proof}
	For any $s \geq 1$, using Poisson summation :
	\begin{align*}
		\frac{\rho_{s}(\mathrm{\Lambda}+\vec{\mathrm{c}})}{\rho(\mathrm{\Lambda})}= & s^n \frac{\sum_{\vec{\mathrm{x}}\in \mathrm{\Lambda}^*} \rho_{1/s}(\vec{\mathrm{x}}) \exp(2i\pi \langle \vec{\mathrm{x}},\vec{\mathrm{c}} \rangle)}{\rho(\mathrm{\Lambda}^*)} \\
							& \leq s^n \frac{\sum_{\vec{\mathrm{x}}\in \mathrm{\Lambda}^*} \rho_{1/s}(\vec{\mathrm{x}})}{\sum_{\vec{\mathrm{x}} \in \mathrm{\Lambda}^*} \rho(\vec{\mathrm{x}})} \\
		     & \leq s^n
	\end{align*}

	Then,
	\begin{align*}
		s^n \rho(\mathrm{\Lambda}) \geq & \rho_s\bigg((\mathrm{\Lambda}+\vec{\mathrm{c}})\setminus \mathcal{B}\bigg(\vec{0},t\sqrt{\frac{n}{2\pi}}\bigg)\bigg) \\
		\geq & \exp(t^2(1-1/s^2)n/2)\rho\bigg((\mathrm{\Lambda}+\vec{\mathrm{c}})\setminus \mathcal{B}\bigg(\vec{0},t\sqrt{\frac{n}{2\pi}}\bigg)\bigg).
	\end{align*}
	And therefore, using $s=t$ :
	\begin{align*}
		\frac{\rho((\mathrm{\Lambda}+\vec{\mathrm{c}})\setminus \mathcal{B}(\vec{0},t\sqrt{\frac{n}{2\pi}}))}{\rho(\mathrm{\Lambda})} \leq & \exp(-n(t^2(1-1/s^2)-2\ln s)/2) \\
						= & \exp(-n(t^2-2\ln t-1)/2) \\
		\leq & \exp(-n(t-1)^2/2),
	\end{align*}
where the last inequality stems from $\ln t\leq t-1$.
\end{proof}

\begin{theorem}
	Assume we have a $\mathsf{LWE}$ solving oracle of modulus $q\geq 2^n$, parameters $\beta$ and $\xi$ which needs $m$ samples.

	If we have a basis $\vec{\mathrm{A}}$ of the lattice $\mathrm{\Lambda}$, and a point $\vec{\mathrm{x}}$ such that $\vec{\mathrm{A}}\vec{\mathrm{s}}-\vec{\mathrm{x}}=\vec{\mathrm{v}}$ with $||\vec{\mathrm{v}}||\leq (1-1/n)\lambda_1/\beta/t < \lambda_1/2$ and $4\exp(-n(t-1/\beta-1)^2/2)\leq \xi\exp(-n/2/\beta^2)$, then with $n^2$ calls to the $\mathsf{LWE}$ solving oracle with secret $\vec{\mathrm{s}}$, we can find $\vec{\mathrm{s}}$ with probability of failure $2\sqrt{m}\exp(-n(t^2-2\ln t-1)/2)$ for any $t\geq 1+1/\beta$.
\end{theorem}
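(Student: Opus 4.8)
The plan is to obtain the statement by instantiating \autoref{LemmeBDDreduction} with the given oracle, discharging its Gaussian-sampling hypothesis through \autoref{lemma:freesampling} and verifying its two quantitative hypotheses with Banaszczyk's tail bound (\autoref{lemma:ban93}). First I would hand the oracle $\mathsf{LWE}$ instances of noise parameter $\alpha$ with $\alpha^{2}=n/(2\beta^{2})$, so that $\beta=\sqrt{n/2}/\alpha$ is exactly the oracle's parameter and $\exp(-\alpha^{2})=\exp(-n/(2\beta^{2}))$ is the right-hand side appearing in the theorem's second hypothesis. To realise the oracle returning samples from $D_{\Lambda^{*},\sigma}$, I would $\mathsf{LLL}$-reduce the dual basis $({}^{t}\vec{\mathrm{A}})^{-1}$; by Banaszczyk's transference inequality $\lambda_{n}(\Lambda^{*})\le n/\lambda_{1}(\Lambda)$ this gives a basis $\vec{\mathrm{A}}^{*}$ of $\Lambda^{*}$ with $\|\widetilde{\vec{\mathrm{A}}^{*}}\|\le 2^{n/2}n/\lambda_{1}(\Lambda)$, so \autoref{lemma:freesampling} makes the sampler available in polynomial time as soon as $\sigma=\bigo(2^{n/2}n\sqrt{\log n}/\lambda_{1}(\Lambda))$. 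Since $q\ge 2^{n}$ this lower bound on $\sigma$ sits well below the upper bound $\sigma\le\sqrt{1/\pi}\,\alpha q/\|\vec{\mathrm{v}}\|$ that the first hypothesis of \autoref{LemmeBDDreduction} will impose, so an admissible $\sigma$ exists; and because the argument needs $\sigma$ calibrated to $\|\vec{\mathrm{v}}\|$, I would actually run the whole reduction over a geometric grid of guesses for $\lambda_{1}(\Lambda)$ and keep the output yielding the shortest $\vec{\mathrm{A}}\vec{\mathrm{s}}-\vec{\mathrm{x}}$, which removes the need to know $\lambda_{1}$ exactly.

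Next I would fix $\sigma$ at the top of the range the first hypothesis allows, $\sigma=\sqrt{n/2\pi}\,q/(\beta\|\vec{\mathrm{v}}\|)$, and choose the matching smoothing error $\epsilon$. Applying \autoref{lemma:ban93} to $\Lambda$ with centre $\vec{0}$ and parameter $\tau$ bounds the $\rho$-mass of $\Lambda\setminus\{\vec{0}\}$ at any width below $\lambda_{1}(\Lambda)/(\tau\sqrt{n/2\pi})$, and feeding this estimate into the definition of the smoothing parameter gives $\eta_{\epsilon}(\Lambda^{*})\le\tau\sqrt{n/2\pi}/\lambda_{1}(\Lambda)$ for every $\tau\ge 1$ with $\exp(-n(\tau^{2}-2\ln\tau-1)/2)\le\epsilon/2$. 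The remaining requirement $\sigma\ge q\eta_{\epsilon}(\Lambda^{*})$ then reduces to $\tau\le\lambda_{1}(\Lambda)/(\beta\|\vec{\mathrm{v}}\|)$, which the hypothesis $\|\vec{\mathrm{v}}\|\le(1-1/n)\lambda_{1}/(\beta t)$ guarantees whenever $\tau\le t/(1-1/n)$; so I would take $\tau=t/(1-1/n)$ and $\epsilon=2\exp(-n(\tau^{2}-2\ln\tau-1)/2)$. Because $x\mapsto x^{2}-2\ln x-1$ is increasing on $[1,\infty)$ and $\tau>t$, this yields $\epsilon\le 2\exp(-n(t^{2}-2\ln t-1)/2)$; combined with the observation that among the $n$ Babai rounds inside \autoref{LemmeBDDreduction} only the first decodes a vector as long as $\vec{\mathrm{v}}$ (the later ones decode $\vec{\mathrm{v}}/q,\vec{\mathrm{v}}/q^{2},\dots$, hence tolerate a far smaller $\epsilon$), the failure term $n\sqrt{m}\,\epsilon+2^{-\Omega(n)}$ of that lemma collapses to the claimed $2\sqrt{m}\exp(-n(t^{2}-2\ln t-1)/2)$ up to the unavoidable additive $2^{-\Omega(n)}$.

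It remains to verify the distortion hypothesis of \autoref{LemmeBDDreduction}: the $\rho$-mass of $(\Lambda\setminus\{\vec{0}\})+\vec{\mathrm{v}}$, under the Gaussian of width $q/\sigma$ produced by the bias computation in that lemma's proof, must be at most $\xi\exp(-\alpha^{2})/2$. Every point of $(\Lambda\setminus\{\vec{0}\})+\vec{\mathrm{v}}$ lies at distance at least $\lambda_{1}-\|\vec{\mathrm{v}}\|$ from the origin, so this mass is bounded by the $\rho$-mass of $(\Lambda+\vec{\mathrm{v}})\setminus\mathcal{B}(\vec{0},\lambda_{1}-\|\vec{\mathrm{v}}\|)$, to which \autoref{lemma:ban93} applies, in its weaker $\exp(-n(t'-1)^{2}/2)$ form, with the parameter $t'$ making the excluded ball have radius $\lambda_{1}-\|\vec{\mathrm{v}}\|$. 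With $\sigma=\sqrt{n/2\pi}\,q/(\beta\|\vec{\mathrm{v}}\|)$ a one-line computation gives $t'=(\lambda_{1}-\|\vec{\mathrm{v}}\|)/(\beta\|\vec{\mathrm{v}}\|)=\lambda_{1}/(\beta\|\vec{\mathrm{v}}\|)-1/\beta\ge t/(1-1/n)-1/\beta>t-1/\beta\ge 1$, while $\rho_{q/\sigma}(\Lambda)\le 1+\epsilon\le 2$ by the smoothing bound $\sigma\ge q\eta_{\epsilon}(\Lambda^{*})$. Hence the mass is at most $2\exp(-n(t-1/\beta-1)^{2}/2)$, and the theorem's hypothesis $4\exp(-n(t-1/\beta-1)^{2}/2)\le\xi\exp(-n/(2\beta^{2}))$ is exactly what makes it $\le\xi\exp(-\alpha^{2})/2$. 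All hypotheses of \autoref{LemmeBDDreduction} now hold, and its conclusion returns $\vec{\mathrm{s}}$ within $\bigo(n)$ calls to the $\mathsf{LWE}$ oracle (the $n^{2}$ in the statement being a generous bound that also covers the grid search and the final rounding step) and with the stated failure probability. The crux, I expect, is this last verification: forcing the Banaszczyk parameter to come out to exactly $t-1/\beta$ rests on taking $\sigma$ at the very top of the admissible range while still keeping $\sigma\ge q\eta_{\epsilon}(\Lambda^{*})$, and one must use the sharp $x^{2}-2\ln x-1$ form of \autoref{lemma:ban93} for the smoothing estimate but the looser $(x-1)^{2}$ form for the distortion estimate, as swapping them alters the exponents in the final bound.
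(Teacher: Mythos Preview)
Your approach is essentially the paper's: bound $\eta_\epsilon(\Lambda^*)$ and the distortion mass $\rho_{q/\sigma}((\Lambda\setminus\{0\})+\vec{\mathrm{v}})$ via Banaszczyk's lemma, enable Gaussian sampling on $\Lambda^*$ through $\mathsf{LLL}$ and \autoref{lemma:freesampling}, run a geometric grid search to dispense with knowing $\lambda_1$, and then invoke \autoref{LemmeBDDreduction}. The only notable difference is bookkeeping: the paper sets $\sigma=tq\sqrt{n/2\pi}/\lambda$ as a function of the current \emph{guess} $\lambda$ for $\lambda_1$ and then checks the hypotheses at the good index $i$ where $(1-1/n)\lambda_1\le\lambda\le\lambda_1$, whereas you fix $\sigma=\sqrt{n/2\pi}\,q/(\beta\|\vec{\mathrm{v}}\|)$ in terms of the unknown $\|\vec{\mathrm{v}}\|$ and appeal to the grid to hit something close. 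This is slightly inconsistent as written (the grid is over $\lambda_1$, not $\|\vec{\mathrm{v}}\|$), but harmless: for the good index the paper's $\sigma$ satisfies exactly the same inequalities you derive, with the same Banaszczyk parameters $t$ (smoothing, sharp form) and $t-1/\beta$ (distortion, $(x-1)^2$ form), so both analyses yield the stated bounds.
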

\begin{proof}
	Using Lemma~\ref{lemma:ban93}, we can prove that $\sigma=t\sqrt{n/2/\pi}/\lambda_1 \leq \eta_{\epsilon}(\mathrm{\Lambda}^*)$ 
	for  $\epsilon=2\exp(-n(t^2-2\ln t-1)/2)$ and 
\[ \rho_{1/\sigma}\big(\mathrm{\Lambda}\setminus\{\vec{0}\}+\vec{\mathrm{v}}\big) \leq 2\exp\big(-n(t(1-1/\beta/t)-1)^2/2\big). \]

	Using \textsf{LLL}, we can find a basis $\vec{\mathrm{B}}$ of $\mathrm{\Lambda}$ so that $||\widetilde{\vec{\mathrm{B}}^*}||\leq 2^{n/2}/\lambda_1$, and therefore, it is possible to sample in polynomial time from $D_{\mathrm{\Lambda},q\sigma}$ since $q \geq 2^n$ for sufficiently large $n$.

	The \textsf{LLL} algorithm also gives a non zero lattice vector of norm $\ell \leq 2^n\lambda_1$.
	For $i$ from $0$ to $n^2$, we let $\lambda=\ell(1-1/n)^i$, we use the algorithm of \autoref{LemmeBDDreduction} with standard deviation $tq\sqrt{n/2/\pi}/\lambda$, which uses only one call to the \textsf{LWE} solving oracle, and return the closest lattice vector of $\vec{\mathrm{x}}$ in all calls.

	Since $\ell(1-1/n)^{n^2}\leq 2^n\exp(-n)\lambda_1 \leq \lambda_1$, with $0\leq i\leq n^2$ be the smallest integer such that $\lambda=\ell(1-1/n)^i \leq \lambda_1$, we have $\lambda\geq (1-1/n)\lambda_1$.
	Then the lemma applies since \[ ||\vec{\mathrm{v}}|| \leq (1-1/n) \lambda_1/\beta/t \leq \sqrt{1/\pi} \sqrt{n/2}/\beta q/(tq\sqrt{n/2/\pi}/\lambda)=\lambda/t/\beta .\]
	Finally, the distance bound makes $\vec{\mathrm{As}}$ the unique closest lattice point of $\vec{\mathrm{x}}$.
\end{proof}
Using self-reduction, it is possible to remove the $1-1/n$ factor \cite{Lyubashevsky2009bounded}.

\begin{corollary}\label{BDD}
	It is possible to solve $\BDD_{B,\beta}^{||.||_\infty}$ in time $2^{(n/2+o(n))/\ln(1+\log \beta/\log B)}$ if $\beta=\omega(1)$, $\beta=2^{o(n/\log n)}$ and $\log B=\bigo(\log \beta)$.
\end{corollary}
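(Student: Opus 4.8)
The plan is to derive Corollary~\ref{BDD} from the preceding theorem by unrolling the chain of reductions. The preceding theorem reduces $\BDD$ (with the $1-1/n$ slack, subsequently removed via self-reduction~\cite{Lyubashevsky2009bounded}) to a $\mathsf{LWE}$-solving oracle of modulus $q \geq 2^n$, parameters $\beta$ and distortion $\xi$, needing $m$ samples, provided the distortion condition $4\exp(-n(t-1/\beta-1)^2/2) \leq \xi\exp(-n/(2\beta^2))$ holds for some $t \geq 1+1/\beta$. To invoke it, I first need to choose an explicit modulus $q$: take $q = 2^n$ exactly (or the smallest power of two exceeding $2^n$), so $\log q = n = 2^{o(n/\log n)}$ trivially, and $q = n^{\omega(1)}$ far exceeds the regime needed. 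Then I instantiate the $\mathsf{LWE}$ oracle using \autoref{thm:lwe}: with $|s_i| \leq B$, $B \geq 2$, $\max(\beta, \log q) = 2^{o(n/\log n)}$ (satisfied since $\beta = 2^{o(n/\log n)}$ and $\log q = n$), $\beta = \omega(1)$ (given), and $\epsilon \leq 1/\beta^4$, the oracle runs in time $2^{(n/2+o(n))/\ln(1+\log\beta/\log B)}$. The hypothesis $\log B = \bigo(\log\beta)$ from the corollary ensures the exponent denominator $\ln(1+\log\beta/\log B)$ is bounded below by a positive constant, so the stated complexity is meaningful (not super-exponential).

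**Next I would verify the distortion condition.** We get to pick $\xi$: we need $\xi \geq 4\exp(-n(t-1/\beta-1)^2/2)\exp(n/(2\beta^2))$ for some admissible $t$, while simultaneously needing $\xi \leq 1/\beta^4$ (or whatever bound \autoref{thm:lwe} imposes on $\epsilon$ — here $\xi$ plays the role of the distortion parameter $\epsilon$ of the oracle). Choosing $t = 1 + 1/\beta + c\sqrt{(\log\beta)/n}$ for a suitable constant $c$, or more simply any $t$ bounded away from $1+1/\beta$ by a term of order $\sqrt{(\ln\beta)/n}$, makes $\exp(-n(t-1/\beta-1)^2/2)$ decay like $\beta^{-\Theta(1)}$; and since $\beta = \omega(1)$, the factor $\exp(n/(2\beta^2))$ can be absorbed (it is $1+o(1)$ when $\beta = \omega(\sqrt n)$, and when $\beta$ is smaller one instead takes $t-1/\beta-1$ slightly larger, of order $1/\beta$, so that $n(t-1/\beta-1)^2 \gg n/\beta^2$). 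So there is room to set $\xi = 1/\beta^4$ and satisfy both constraints; this is essentially a routine juggling of the exponents. The number of samples $m$ required by \autoref{thm:lwe} is itself of the form $2^{\Theta(n)}$ with the same leading constant, which only affects lower-order terms in the final complexity and in the failure probability $2\sqrt m\exp(-n(t^2-2\ln t-1)/2)$; choosing $t$ as above keeps $\exp(-n(t^2-2\ln t-1)/2) = \beta^{-\Theta(1)}$, which dominates $\sqrt m$? — no: here I must be careful, since $\sqrt m = 2^{\Theta(n)}$ could overwhelm $\beta^{-\Theta(1)}$. The resolution is to take $t$ a constant bounded away from $1$ (permissible since we only need $t \geq 1+1/\beta$ and $\beta\to\infty$), making $\exp(-n(t^2-2\ln t-1)/2)$ exponentially small in $n$ with a rate exceeding the growth of $\sqrt m$; one then re-checks the distortion inequality still holds for this constant $t$, which it does for $t$ close enough to $1$ relative to the other constants.

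**The main obstacle** I anticipate is precisely this simultaneous balancing: the parameter $t$ must be large enough that the failure probability $2\sqrt m \exp(-n(t^2-2\ln t-1)/2)$ is negligible (i.e. $t$ bounded away from $1$ by an absolute constant, since $m$ is exponential), yet small enough that the distortion requirement $4\exp(-n(t-1/\beta-1)^2/2)\exp(n/(2\beta^2)) \leq \xi \leq 1/\beta^4$ is feasible, and the decoding-radius hypothesis of the corollary's $\BDD$ instance ($||\vec v|| \leq \lambda_1/\beta < \lambda_1/2$, matching the theorem's $||\vec v|| \leq (1-1/n)\lambda_1/(\beta t)$ after the self-reduction removes the $1-1/n$) must be compatible with this $t$. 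Since $t$ enters the decoding radius as a division by $\beta t$ rather than $\beta$, one should replace $\beta$ by $\beta t$ throughout when applying the theorem — but as $t$ is an absolute constant, $\beta t = \Theta(\beta)$, and $\ln(1+\log(\beta t)/\log B) = \ln(1+\log\beta/\log B) + o(1)$ under $\log B = \bigo(\log\beta)$, so the complexity exponent is unchanged up to the $o(n)$ term. Once these constants are pinned down, the result is immediate: the running time is that of $n^2$ calls to the $\mathsf{LWE}$ oracle plus the polynomial overhead of \autoref{LemmeBDDreduction} and \textsf{LLL}, i.e. $2^{(n/2+o(n))/\ln(1+\log\beta/\log B)}$, and the probability of failure is negligible. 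The $||.||_\infty$ version follows directly since $|s_i| \leq B$ for all $i$ is exactly the hypothesis used to bound the secret in \autoref{thm:lwe}.
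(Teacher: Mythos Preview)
Your proposal is correct and, after the exploratory detours, lands on precisely the paper's argument: invoke the preceding theorem with $t$ a sufficiently large absolute constant, plug in \autoref{thm:lwe} as the $\mathsf{LWE}$ oracle (with $q=2^n$, so $\log q = n = 2^{o(n/\log n)}$), and observe that replacing $\beta$ by $\beta/t$ for constant $t$ leaves the complexity exponent $\ln(1+\log\beta/\log B)$ unchanged up to $o(1)$. The paper's own proof is the one-line version of exactly this; your extended discussion of the distortion and failure-probability constraints correctly verifies that a constant $t$ suffices (the initial attempt with $t = 1 + 1/\beta + c\sqrt{(\log\beta)/n}$ was a dead end you rightly abandoned, since $\sqrt m$ is exponential in $n$).
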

\begin{proof}
	Apply the previous theorem and \autoref{thm:lwe} with some sufficiently large constant for $t$, and remark that dividing $\beta$ by some constant does not change the complexity.
\end{proof}

Note that since we can solve $\mathsf{LWE}$ for many secrets in essentially the same time than for one, we have the same property for $\BDD$.

\subsection{$\UniqueSVP$}

The following reduction was given in \cite{Lyubashevsky2009bounded} and works without modification.

%\recalltheorem{thm:uniquesvp}
\begin{theorem}
Given a $\BDD_{B,\beta}^{||.||_\infty}$ oracle, it is possible to solve $\UniqueSVP_{B,\beta}^{||.||_\infty}$ in polynomial time of $n$ and $\beta$.
\end{theorem}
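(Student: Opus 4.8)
The plan is to carry out the $\UniqueSVP$-to-$\BDD$ reduction of \cite{Lyubashevsky2009bounded}. Write $\vec{\mathrm{A}}=[\vec{\mathrm{a}}_0,\dots,\vec{\mathrm{a}}_{n-1}]$ for the input basis of $\mathrm{\Lambda}$, let $\vec{\mathrm{v}}=\vec{\mathrm{A}}\vec{\mathrm{s}}$ be the unknown shortest vector ($\|\vec{\mathrm{v}}\|=\lambda_1$, $\lambda_2\geq\beta\lambda_1$, $\|\vec{\mathrm{s}}\|_\infty\leq B$), and assume $\beta>2$ (the bounded-gap regime being handled separately at $\mathrm{poly}(\beta)$ cost). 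Let $p$ be the smallest prime at least $\beta$, so $p=\bigo(\beta)$ by Bertrand's postulate and $p>2$. For each coordinate index $i$ and each residue $c\in\{0,\dots,p-1\}$, form the matrix $\vec{\mathrm{A}}^{(i,p)}$ obtained from $\vec{\mathrm{A}}$ by scaling its $i$-th column to $p\,\vec{\mathrm{a}}_i$; this is a basis of the index-$p$ sublattice $\mathrm{\Lambda}^{(i,p)}\subseteq\mathrm{\Lambda}$ consisting of the lattice vectors whose $i$-th coordinate is divisible by $p$. Call the $\BDD_{B,\beta}^{||.||_\infty}$ oracle on the instance $(\vec{\mathrm{A}}^{(i,p)},\,c\,\vec{\mathrm{a}}_i)$.

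The heart of the argument is to exhibit among these $np$ calls a legitimate $\BDD$ instance. First, some index $i$ satisfies $p\nmid s_i$, for otherwise $\vec{\mathrm{A}}(\vec{\mathrm{s}}/p)$ would be a nonzero lattice vector of norm $\lambda_1/p<\lambda_1$. Fix such an $i$ and set $c=s_i\bmod p\in\{1,\dots,p-1\}$; then $c\,\vec{\mathrm{a}}_i-\vec{\mathrm{v}}$ lies in $\mathrm{\Lambda}^{(i,p)}$, and its coordinate vector in the basis $\vec{\mathrm{A}}^{(i,p)}$ has entries $-s_j$ for $j\neq i$ and $-(s_i-c)/p=-\lfloor s_i/p\rfloor$ at position $i$, all of absolute value at most $B$. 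Next, the vectors of $\mathrm{\Lambda}$ of norm below $\beta\lambda_1$ are exactly the integer multiples of $\vec{\mathrm{v}}$, and since $\gcd(s_i,p)=1$ (because $p$ is prime and $p\nmid s_i$) the ones that lie in $\mathrm{\Lambda}^{(i,p)}$ are precisely the multiples of $p\,\vec{\mathrm{v}}$, of norm $\geq p\lambda_1\geq\beta\lambda_1$; hence $\lambda_1(\mathrm{\Lambda}^{(i,p)})\geq\beta\lambda_1>2\lambda_1$. Therefore $c\,\vec{\mathrm{a}}_i$ is within $\lambda_1=\lambda_1(\mathrm{\Lambda}^{(i,p)})/\beta<\lambda_1(\mathrm{\Lambda}^{(i,p)})/2$ of the lattice point $c\,\vec{\mathrm{a}}_i-\vec{\mathrm{v}}$, which is then the unique nearest one and has $B$-bounded coordinates: exactly a valid $\BDD_{B,\beta}^{||.||_\infty}$ instance. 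The oracle returns that coordinate vector, from which $\vec{\mathrm{s}}$, hence $\vec{\mathrm{v}}$, is read off. Note the reduction never needs to know $\lambda_1$ itself.

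Since we cannot recognize the good pair $(i,c)$ in advance, I would run all $np=\bigo(n\beta)$ calls; whatever vector $\vec{\mathrm{t}}$ the oracle returns on input $(\vec{\mathrm{A}}^{(i,p)},c\,\vec{\mathrm{a}}_i)$, the point $\vec{\mathrm{A}}^{(i,p)}\vec{\mathrm{t}}+c\,\vec{\mathrm{a}}_i$ is a genuine element of $\mathrm{\Lambda}$, yielding a candidate $\vec{\mathrm{s}}'$ with $\vec{\mathrm{A}}\vec{\mathrm{s}}'\in\mathrm{\Lambda}$; for the good pair this candidate equals $\vec{\mathrm{v}}$, the unique shortest nonzero vector of $\mathrm{\Lambda}$, so returning the candidate of least nonzero norm (and its coordinate vector) recovers $\vec{\mathrm{s}}$. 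The total cost is $\bigo(n\beta)$ oracle calls plus polynomial-time lattice arithmetic, i.e.\ polynomial in $n$ and $\beta$. The step I expect to be the real content is the choice of a prime scaling factor $p$ of size about $\beta$ instead of the naive $p=2$: doubling a single basis vector fails, because the resulting sublattice already contains $2\vec{\mathrm{v}}$ and so has first minimum only $\approx 2\lambda_1$, placing the decoding target on — rather than strictly inside — the unique-decoding radius and with ``parameter'' $\tfrac12$ rather than $\tfrac1\beta$; taking $p$ prime and $\geq\beta$ simultaneously raises $\lambda_1(\mathrm{\Lambda}^{(i,p)})$ to $\geq\beta\lambda_1$ and, through coprimality, prevents any short multiple of $\vec{\mathrm{v}}$ from slipping into the sublattice — at the cost of enumerating the $\mathrm{poly}(\beta)$ cosets. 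The remaining points (the divisibility bookkeeping for the coordinate vector and its infinity-norm bound, and robustness against arbitrary oracle answers on the bad pairs, since every candidate is a true lattice vector) are routine.
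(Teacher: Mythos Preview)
Your argument is correct and matches the paper's reduction exactly: choose a prime $p\in[\beta,2\beta]$, replace $\vec{\mathrm{a}}_i$ by $p\,\vec{\mathrm{a}}_i$ to get a sublattice with first minimum at least $\beta\lambda_1$, and call $\BDD$ on the $np$ targets $c\,\vec{\mathrm{a}}_i$. The one slip --- writing $\lambda_1=\lambda_1(\mathrm{\Lambda}^{(i,p)})/\beta$ where only $\lambda_1\leq\lambda_1(\mathrm{\Lambda}^{(i,p)})/\beta$ holds --- is harmless, and your explicit post-processing (keep the shortest nonzero candidate among the $np$ outputs) is a detail the paper leaves implicit.
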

\begin{proof}
	Let $2\beta\geq p\geq \beta$ be a prime number, $\vec{\mathrm{A}}$ and $\vec{\mathrm{s}}$ such that $||\vec{\mathrm{A}}\vec{\mathrm{s}}||=\lambda_1$.
	If $\vec{\mathrm{s}}=\vec{\mathrm{0}} \mod p$ then $\vec{\mathrm{A}}(\vec{\mathrm{s}}/p)$ is a non zero lattice vector, shorter 
	than $||\vec{\mathrm{A}}\vec{\mathrm{s}}||$, which is impossible.
	Let $i$ such that $s_i\neq 0 \mod p$, and \[ \vec{\mathrm{A'}}_i=[ \vec{\mathrm{a}}_0,\dots,\vec{\mathrm{a}}_{i-1},p\vec{\mathrm{a}}_i, \vec{\mathrm{a}}_{i+1},\ldots ,\vec{\mathrm{a}}_{n-1}], \] which generates a sublattice $\mathrm{\Lambda'}$ of $\mathrm{\Lambda}$.
	If $k\neq 0 \mod p$, then $k\vec{\mathrm{A}}\vec{\mathrm{s}} \not\in \mathrm{\Lambda'}$ so that $\lambda_1(\mathrm{\Lambda'})\geq \lambda_2(\mathrm{\Lambda})$.

	If $\vec{\mathrm{s'}}_i=(s_0,\ldots,\lfloor s_i/p\rceil,\ldots,s_{n-1})$, then $\vec{\mathrm{A'}}_i\vec{\mathrm{s'}}_i+(s_i \Mod p)\vec{\mathrm a}_i=\vec{\mathrm{A}}\vec{\mathrm{s}}$ and $||\vec{\mathrm{s'}}_i||_{\infty} \leq B$.
	Therefore, calling $\BDD$ to find the point of $\mathrm{\Lambda'}$ closest to $(s_{i} \mod p)\vec{\mathrm a}_{i}$ yields $\vec{\mathrm s'}$. By trying every value of $i$ and $(s_{i}\mod p)$, $np$ calls to $\BDD_{B,\beta}^{||.||_\infty}$ are enough to solve $\UniqueSVP_{B,\beta}^{||.||_\infty}$.
\end{proof}

The reductions for both $\BDD^{||.||}$ and $\UniqueSVP^{||.||}$ work the same way.

\subsection{$\GapSVP$}
The following reduction is a modification of the reduction given in \cite{peikert2009public}, which comes from \cite{goldreich1998limits}, but doesn't fit to our context.

First, we need a lemma proven in \cite{goldreich1998limits} :
\begin{lemma}
	The volume of the intersection of two balls of radius $1$ divided by the volume of one ball is at least
	\[ \frac{d}{3}(1-d^2)^{(n-1)/2}\sqrt{n} \geq \frac{d}{3}(1-d^2)^{n/2} \]
	where $d\leq 1$ is the distance between the centers.
\end{lemma}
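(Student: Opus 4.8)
The plan is to bound the intersection volume from below by the volume of an explicit cylinder inscribed in it, and then to compare the volume of the unit $n$-ball $V_n$ with that of the unit $(n-1)$-ball $V_{n-1}$ by an elementary argument.

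First I would normalize the configuration, taking the two balls to be $B_1 = \mathcal{B}(\vec{0},1)$ and $B_2 = \mathcal{B}(\vec{v},1)$ with $\|\vec{v}\| = d \le 1$, and writing $\hat{\vec{v}} = \vec{v}/d$. The idea is that the cylinder
\[ S = \big\{\, x\hat{\vec{v}} + \vec{q} \;:\; 0 \le x \le d,\ \vec{q} \perp \vec{v},\ \|\vec{q}\| \le \sqrt{1-d^2} \,\big\} \]
lies entirely inside $B_1 \cap B_2$: for such a point one checks $\|x\hat{\vec{v}}+\vec{q}\|^2 = x^2 + \|\vec{q}\|^2 \le d^2 + (1-d^2) = 1$ and $\|x\hat{\vec{v}}+\vec{q}-\vec{v}\|^2 = (x-d)^2 + \|\vec{q}\|^2 \le d^2 + (1-d^2) = 1$, both using $0 \le x \le d$. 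Since $S$ is the product of a segment of length $d$ with an $(n-1)$-ball of radius $\sqrt{1-d^2}$, its volume is $\vol(S) = d\,(1-d^2)^{(n-1)/2}\,V_{n-1}$.

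Next I would bound $V_{n-1}/V_n$ from below. Because $(1-x^2)^{(n-1)/2}$ decreases pointwise in $n$ on $(-1,1)$, the quantity $V_n/V_{n-1} = \int_{-1}^{1}(1-x^2)^{(n-1)/2}\,dx$ is decreasing in $n$; combining this with the exact recursion $V_n = \frac{2\pi}{n}V_{n-2}$ yields $\big(V_n/V_{n-1}\big)^2 \le \frac{V_n}{V_{n-1}}\cdot\frac{V_{n-1}}{V_{n-2}} = \frac{2\pi}{n}$, hence $V_n/V_{n-1} \le \sqrt{2\pi/n} \le 3/\sqrt{n}$. Then
\[ \frac{\vol(B_1\cap B_2)}{\vol(B_1)} \ge \frac{\vol(S)}{V_n} = d\,(1-d^2)^{(n-1)/2}\,\frac{V_{n-1}}{V_n} \ge \frac{d}{3}(1-d^2)^{(n-1)/2}\sqrt{n}, \]
which is the claimed bound. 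The second inequality in the statement follows at once, since $\sqrt{n}\ge 1 \ge \sqrt{1-d^2}$ forces $(1-d^2)^{(n-1)/2}\sqrt{n} \ge (1-d^2)^{n/2}$.

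I do not anticipate a real obstacle: the containment $S \subseteq B_1 \cap B_2$ is a two-line Pythagoras computation, and the only mildly technical point is the clean estimate $V_n/V_{n-1} \le 3/\sqrt{n}$, which is dispatched by the monotonicity of the Wallis-type integral together with the ball recursion (one could also simply invoke the standard asymptotic $V_n/V_{n-1} = \sqrt{2\pi/n}\,(1+o(1))$). The small-dimensional cases ($n=1$) are trivial and can be checked directly.
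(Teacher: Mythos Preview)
Your argument is correct. The cylinder $S$ is indeed contained in $B_1\cap B_2$ by the Pythagorean check you give, its volume is $d(1-d^2)^{(n-1)/2}V_{n-1}$, and the estimate $V_n/V_{n-1}\le\sqrt{2\pi/n}\le 3/\sqrt n$ via monotonicity of the Wallis integral together with $V_n=\tfrac{2\pi}{n}V_{n-2}$ is clean and valid for $n\ge 2$ (and $n=1$ is immediate). The final inequality is then exactly as you wrote.

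As for comparison with the paper: the paper does not actually prove this lemma. It merely quotes it as ``a lemma proven in \cite{goldreich1998limits}'' and moves on. So you are supplying a self-contained proof where the paper relies on an external citation. Your cylinder-plus-ball-ratio approach is in fact the standard argument (and is essentially what Goldreich--Goldwasser do), so there is no tension here---your write-up simply fills in what the paper chose to outsource.
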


\begin{lemma}
	\label{LemmeGapReduction}
	Given access to an oracle solving $\BDD_{B,\beta}^{||.||_\infty}$ and let $\Dis$ be an efficiently samplable distribution over 
	points of $\Z^n$ whose infinity norm is smaller than $b$ and for $\xi \leq 1$ and let
	\[ \epsilon=\min_{||\vec{\mathrm{s}}||_{\infty} \leq R} \Pr_{\vec{\mathrm{x}}\sim \Dis} \bigg[\xi \leq \frac{\Dis(\vec{\mathrm{x}}+\vec{\mathrm{s}})}{\Dis(\vec{\mathrm{x}})} \leq 1/\xi\bigg]. \]

	Then, we can solve $\GapSVP_{R,\beta/d}^{||.||_\infty}$ with negligible probability of failure using $\bigo(dn/\epsilon/\xi(1-d^2)^{-n/2})$ calls to the $\BDD$ oracle.
\end{lemma}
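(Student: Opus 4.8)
The statement is a worst-case-to-worst-case style reduction: we want to distinguish, for a lattice $\mathrm{\Lambda}$, between $\lambda_1(\mathrm{\Lambda})\geq \beta/d$ and the existence of a short nonzero $\vec{\mathrm{s}}$ with $\|\vec{\mathrm{s}}\|_\infty\leq R$ and $\|\vec{\mathrm{A}}\vec{\mathrm{s}}\|\leq 1$. The natural idea, following the Goldreich--Goldwasser style reduction as reworked by Peikert, is to use the $\BDD$ oracle as a tester: sample a lattice point $\vec{\mathrm{p}}=\vec{\mathrm{A}}\vec{\mathrm{z}}$ for $\vec{\mathrm{z}}\sim\Dis$ (so $\|\vec{\mathrm{z}}\|_\infty\leq b$ and hence it is a legitimate $\BDD$ secret once we perturb), perturb it by a random vector of norm just below $\lambda_1/(2\beta')$ for the appropriate $\beta'=\beta/d$, hand the perturbed point to the $\BDD$ oracle, and check whether the oracle recovers $\vec{\mathrm{z}}$ (equivalently $\vec{\mathrm{p}}$). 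If $\lambda_1$ is large, the perturbed point is genuinely within the decoding radius of a unique lattice point, so the oracle succeeds with overwhelming probability; if there is a very short vector $\vec{\mathrm{A}}\vec{\mathrm{s}}$, then with noticeable probability the perturbation pushes us closer to the ``wrong'' coset $\vec{\mathrm{p}}+\vec{\mathrm{A}}\vec{\mathrm{s}}$, the oracle's promise is violated, and (after using the distributional near-invariance of $\Dis$ under the shift $\vec{\mathrm{s}}$) we detect a disagreement. Repeating $\bigo(dn/\epsilon/\xi\,(1-d^2)^{-n/2})$ times and taking a majority vote gives negligible error.

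**Key steps, in order.** First, fix the scaling: set $\beta'=\beta/d$ and choose the perturbation radius $r$ so that $r\leq \lambda_1/\beta' / 2$ in the YES case, i.e. $r$ just below $\frac{d}{2\beta}$ times the (unknown) $\lambda_1$; since $\lambda_1$ is unknown we actually do the standard trick of running the test for the threshold value $\beta/d$, scaling the lattice so the promise becomes ``$\lambda_1\geq 1$ versus exists $\vec{\mathrm{s}}$ with $\|\vec{\mathrm{A}}\vec{\mathrm{s}}\|\leq d/\beta$''. Second, draw $\vec{\mathrm{z}}\sim\Dis$, let $\vec{\mathrm{p}}=\vec{\mathrm{A}}\vec{\mathrm{z}}$, draw $\vec{\mathrm{w}}$ uniform in a ball of radius $r$, and feed $\vec{\mathrm{p}}+\vec{\mathrm{w}}$ to the $\BDD_{B,\beta}^{\|.\|_\infty}$ oracle with $B\geq b$. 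Third, analyze the NO instance (large $\lambda_1$): then $\vec{\mathrm{p}}$ is the unique lattice point within distance $\lambda_1/2$, the $\BDD$ promise $\|\vec{\mathrm{w}}\|\leq\lambda_1/\beta<\lambda_1/2$ holds, and the oracle returns $\vec{\mathrm{z}}$; the test ``accepts''. Fourth, analyze the YES instance: here the point $\vec{\mathrm{q}}=\vec{\mathrm{p}}+\vec{\mathrm{A}}\vec{\mathrm{s}}$ is at distance $\leq d/\beta$ from $\vec{\mathrm{p}}$, so the two balls of radius $r$ around $\vec{\mathrm{p}}$ and around $\vec{\mathrm{q}}-\vec{\mathrm{A}}\vec{\mathrm{s}}$... — more precisely, using the lemma on intersection volumes, with probability at least $\frac{d}{3}(1-d^2)^{n/2}$ (up to the $r$-scaling) the perturbed point $\vec{\mathrm{p}}+\vec{\mathrm{w}}$ is equidistant-or-closer to $\vec{\mathrm{q}}$ than to $\vec{\mathrm{p}}$, i.e. it looks like a $\BDD$ instance for secret $\vec{\mathrm{z}}+\vec{\mathrm{s}}$. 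But $\vec{\mathrm{z}}+\vec{\mathrm{s}}$ has $\|\cdot\|_\infty\leq R$ only when $\vec{\mathrm{z}}$ is generic, and here the $\epsilon,\xi$ parameters enter: conditioning on the event that $\xi\leq \Dis(\vec{\mathrm{x}}+\vec{\mathrm{s}})/\Dis(\vec{\mathrm{x}})\leq1/\xi$ for the relevant shift $\vec{\mathrm{s}}$ (which happens with probability $\geq\epsilon$), the distributions of $(\vec{\mathrm{z}},\vec{\mathrm{p}}+\vec{\mathrm{w}})$ and $(\vec{\mathrm{z}}+\vec{\mathrm{s}},\vec{\mathrm{p}}+\vec{\mathrm{w}})$ are within a $\xi$ factor pointwise, so the oracle cannot consistently tell apart ``the secret was $\vec{\mathrm{z}}$'' from ``the secret was $\vec{\mathrm{z}}+\vec{\mathrm{s}}$''; hence with probability $\mathrm{\Omega}(\epsilon\xi d(1-d^2)^{n/2})$ per trial it returns something other than $\vec{\mathrm{z}}$, and the test ``rejects''. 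Fifth, amplify: run $N=\bigo(dn/\epsilon/\xi\,(1-d^2)^{-n/2})$ independent trials and output ``$\lambda_1$ large'' iff every (or a suitable majority of) trial accepted; a Chernoff/Hoeffding bound gives $2^{-\mathrm{\Omega}(n)}$ failure.

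**Main obstacle.** The routine geometry (intersection of balls, the $\BDD$ promise in the NO case) is handled by the stated lemma and is mechanical. The delicate point is making the YES-case detection rigorous: we only ``see'' the oracle's output on the secret $\vec{\mathrm{z}}$, not on $\vec{\mathrm{z}}+\vec{\mathrm{s}}$, so we need the indistinguishability argument — the oracle, being deterministic (or WLOG derandomized), must output the same thing on the same input, yet the input $\vec{\mathrm{p}}+\vec{\mathrm{w}}$ arises with comparable probability from the ``$\vec{\mathrm{z}}$'' experiment and the ``$\vec{\mathrm{z}}+\vec{\mathrm{s}}$'' experiment (this is exactly what $\epsilon$ and $\xi$ quantify, together with the geometric overlap $d(1-d^2)^{n/2}$). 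Balancing these three factors to get the claimed number of calls, and ensuring the shift $\vec{\mathrm{z}}+\vec{\mathrm{s}}$ still respects $\|\cdot\|_\infty\leq R$ so that it is a legitimate $\BDD_{R,\cdot}$ secret (hence forcing $R$ into the conclusion $\GapSVP_{R,\beta/d}$), is where all the care goes; the rest is bookkeeping.
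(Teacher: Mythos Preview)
Your approach is essentially the paper's: sample $\vec{\mathrm{x}}\sim\Dis$, perturb $\vec{\mathrm{A}}\vec{\mathrm{x}}$ by a uniform point in a ball of radius $1/d$, call the $\BDD$ oracle, and declare ``short vector'' if any of $K=\Theta\big(dn\,\epsilon^{-1}\xi^{-1}(1-d^2)^{-n/2}\big)$ checks fails. In the NO case every check passes; in the YES case you use the ball-intersection lemma and the pairing $(\vec{\mathrm{x}},\vec{\mathrm{e}})\leftrightarrow(\vec{\mathrm{x}}+\vec{\mathrm{s}},\vec{\mathrm{e}}-\vec{\mathrm{A}}\vec{\mathrm{s}})$ to force a failure with probability $\Omega\big(\epsilon\,\xi\,d(1-d^2)^{n/2}\big)$ per trial.

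Two small cleanups. First, the scaling/``threshold'' discussion is unnecessary: $\GapSVP$ is a promise problem, so you may simply take the perturbation radius to be $1/d$; when $\lambda_1\geq\beta/d$ this automatically satisfies the $\BDD_{B,\beta}$ promise. Second, your explanation of how $R$ enters is slightly off. You do \emph{not} need $\vec{\mathrm{z}}+\vec{\mathrm{s}}$ to be a ``legitimate $\BDD_{R,\cdot}$ secret'' --- in the YES case the $\BDD$ promise is violated anyway and the oracle may output anything. What you need is that $(\vec{\mathrm{z}}+\vec{\mathrm{s}},\vec{\mathrm{e}}-\vec{\mathrm{A}}\vec{\mathrm{s}})$ is a \emph{valid run of your own test}, i.e.\ that $\vec{\mathrm{z}}+\vec{\mathrm{s}}$ lies in the support of $\Dis$ and that $\vec{\mathrm{e}}-\vec{\mathrm{A}}\vec{\mathrm{s}}$ lies in the ball. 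The former is precisely what the condition $\Dis(\vec{\mathrm{z}}+\vec{\mathrm{s}})/\Dis(\vec{\mathrm{z}})\geq\xi$ guarantees, and $R$ enters only because $\epsilon$ is defined as the minimum of that probability over all shifts with $\|\vec{\mathrm{s}}\|_\infty\leq R$. With that correction, the paper's final inequality is the clean way to finish: writing $k=\Dis(\vec{\mathrm{z}}+\vec{\mathrm{s}})/\Dis(\vec{\mathrm{z}})\in[\xi,1/\xi]$ and $p=\Pr[\text{oracle outputs }\vec{\mathrm{z}}]$, the per-trial failure probability (conditioned on both events) is at least $(1-p)\tfrac{1}{1+k}+p\tfrac{k}{1+k}\geq \tfrac{\min(k,1)}{1+k}\geq \xi/2$.
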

\begin{proof}
	Let $K=\Theta(dn/\epsilon/\xi(1-d^2)^{-n/2})$, $\vec{\mathrm{A}}$ a basis of the lattice $\mathrm{\Lambda}$.
	The algorithm consists in testing $K$ times the oracle : sample $\vec{\mathrm{x}}$ with law $\Dis$ and check if $\BDD(\vec{\mathrm{A}}\vec{\mathrm{x}}+\vec{\mathrm{e}})=\vec{\mathrm{x}}$ for $\vec{\mathrm{e}}$ sampled uniformly within a ball centered at the origin and of radius $1/d$.
	If any check is wrong, return that $\lambda_1 \leq 1$, else return $\lambda_1 \geq \beta/d$.

	Clearly, the algorithm runs in the given complexity, and if $\lambda_1 \geq \beta/d$, it is correct.

	So, let $\vec{\mathrm{s}}\neq \vec{0}$ such that $||\vec{\mathrm{A}}\vec{\mathrm{s}}||\leq 1$.
	Let $\vec{\mathrm{x}}$ and $\vec{\mathrm{e}}$ be sampled as in the algorithm.
	With probability over $\vec{\mathrm{e}}$ greater than $(1-d^2)^{n/2}d/3$, $||\vec{\mathrm{A}}\vec{\mathrm{x}}+\vec{\mathrm{e}}-\vec{\mathrm{A}}(\vec{\mathrm{x}}+\vec{\mathrm{s}})||\leq 1/d$.
	We condition on this event.

	Then, with probability at least $\epsilon$, $1/\xi \geq k=\frac{D(\vec{\mathrm{x}}+\vec{\mathrm{s}})}{D(\vec{\mathrm{x}})} \geq \xi$ and we also condition on this event.
	Let $p$ be the probability that $\BDD(\vec{\mathrm{Ax}}+\vec{\mathrm{e}})=\vec{\mathrm{x}}$.
	The probability of failure is at least
	\[ (1-p)/(1+k)+pk/(1+k) \geq \min(k,1)/(1+k) \geq \xi. \]
	Therefore, the probability of failure of the algorithm is $2^{-\mathrm{\Omega}(n)}$.
\end{proof}

We could have used the uniform distribution for $\Dis$, but a discrete gaussian is more efficient.

\begin{lemma}
	Let $\mathrm{\Lambda}$ be a one dimensional lattice.
	Then, \[ \E_{\vec{\mathrm{x}}\sim D_{\mathrm{\Lambda}}}[||\vec{\mathrm{x}}||^2]\leq \frac{1}{2\pi}. \]
\end{lemma}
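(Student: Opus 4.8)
The plan is to reduce to the case $\mathrm{\Lambda}=c\Z$ with $c>0$, and to rewrite $\E_{\vec{\mathrm{x}}\sim D_{\mathrm{\Lambda}}}[||\vec{\mathrm{x}}||^2]$ as a logarithmic derivative of a one-parameter theta series. For $u>0$ put $\Theta(u)=\rho_{1/\sqrt{u}}(\mathrm{\Lambda})=\sum_{k\in\Z}\exp(-\pi u c^2k^2)$. Differentiating term by term — legitimate because $\exp(-\pi u c^2 k^2)$ decays like a Gaussian in $k$, so the series and its $u$-derivatives converge locally uniformly — one gets $\Theta'(u)=-\pi\sum_{x\in\mathrm{\Lambda}}||x||^2\exp(-\pi u||x||^2)$, hence
\[ \E_{\vec{\mathrm{x}}\sim D_{\mathrm{\Lambda}}}[||\vec{\mathrm{x}}||^2]=\frac{\sum_{x\in\mathrm{\Lambda}}||x||^2e^{-\pi||x||^2}}{\sum_{x\in\mathrm{\Lambda}}e^{-\pi||x||^2}}=-\frac1\pi(\ln\Theta)'(1). \]
So it suffices to prove $(\ln\Theta)'(1)\geq-\tfrac12$, and I will in fact establish the stronger bound $(\ln\Theta)'(u)\geq-\tfrac1{2u}$ for every $u>0$.

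The crux is the functional equation for $\Theta$, obtained from Poisson summation in the same way as in the proof of \autoref{lemma:ban93}: applying it to $x\mapsto e^{-\pi u c^2 x^2}$ gives $\Theta(u)=\frac{1}{c\sqrt{u}}\,\Theta^*(1/u)$, where $\Theta^*(v)=\rho_{1/\sqrt v}(\mathrm{\Lambda}^*)=\sum_{j\in\Z}e^{-\pi v j^2/c^2}$ is the corresponding series for the dual lattice $\mathrm{\Lambda}^*=\tfrac1c\Z$. Taking logarithms and differentiating,
\[ (\ln\Theta)'(u)=-\frac1{2u}-\frac1{u^2}\cdot\frac{(\Theta^*)'(1/u)}{\Theta^*(1/u)}. \]
Since $\Theta^*$ is a sum of positive, decreasing functions of $v$, we have $(\Theta^*)'\leq 0$ and $\Theta^*>0$, so the last term is nonnegative and $(\ln\Theta)'(u)\geq-\tfrac1{2u}$. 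Evaluating at $u=1$ yields $\E_{\vec{\mathrm{x}}\sim D_{\mathrm{\Lambda}}}[||\vec{\mathrm{x}}||^2]\leq\tfrac1{2\pi}$. Equivalently, and without any differentiation: after Poisson summation $\sqrt{u}\,\Theta(u)=\tfrac1c\sum_{j\in\Z}e^{-\pi j^2/(uc^2)}$ is a sum of nondecreasing functions of $u$, hence nondecreasing, and $\tfrac{d}{du}\big[\sqrt{u}\,\Theta(u)\big]\geq0$ at $u=1$ rearranges exactly to the claimed inequality.

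The proof is short and I do not expect a genuine obstacle; the only points needing care are the routine justifications of term-by-term differentiation and of Poisson summation for $\Theta$ and $\Theta^*$, both standard from the Gaussian decay of the summands and used freely elsewhere in the paper. A pedestrian alternative would be to study $c\mapsto\sum_k c^2k^2e^{-\pi c^2k^2}/\sum_k e^{-\pi c^2k^2}$ directly and show it increases to its supremum $\tfrac1{2\pi}$ as $c\to0$, but the theta route above is cleaner. The bound is tight: as $c\to0$, $D_{\mathrm{\Lambda}}$ converges to the continuous Gaussian of density $e^{-\pi x^2}$, whose second moment is exactly $\tfrac1{2\pi}$, matching the infimum of $\sqrt{u}\,\Theta(u)$.
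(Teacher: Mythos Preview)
Your proof is correct and is essentially the same as the paper's: both apply Poisson summation and then drop a nonnegative term on the dual side. The paper does it in one line by Poisson-summing $f(x)=\|x\|^2\rho(x)$ directly --- using that its Fourier transform is $(\tfrac{1}{2\pi}-\|\xi\|^2)\rho(\xi)$ --- which is exactly the identity you obtain by differentiating your functional equation for $\Theta$ at $u=1$; your theta/log-derivative packaging is a mild reparametrization of the same computation.
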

\begin{proof}
	Let $f(\vec{\mathrm{x}})=||\vec{\mathrm{x}}||^2\rho(\vec{\mathrm{x}})$.
	Then, using a Poisson summation :
	\[ \frac{f(\mathrm{\Lambda})}{\rho(\mathrm{\Lambda})}=\frac{\sum_{\vec{\mathrm{x}}\in \mathrm{\Lambda}^*} \big(\frac{1}{2\pi}-||\vec{\mathrm{x}}||^2\big)\rho(\vec{\mathrm{x}})}{\rho(\mathrm{\Lambda}^*)}\leq \frac{1}{2\pi}. \]
\end{proof}

\begin{lemma}
	Let $\Dis$ be such that $\Dis(\vec{\mathrm{x}}) \propto \exp(-||\vec{\mathrm{x}}||^2/(2\sigma^2))$ for all $\vec{\mathrm{x}}$ with $||\vec{\mathrm{x}}||_{\infty}\leq B$.
	Then $\Dis$ is polynomially samplable for $B-R\geq 2\sigma$, and using the definitions of \autoref{LemmeGapReduction}, we have for $\xi=\exp(-nR^2/(2\sigma^2)-2\sqrt{n}R/\sigma)$, $\epsilon \geq \exp(-2n\exp(-((B-R)/\sigma-1)^2/2))/2$.
\end{lemma}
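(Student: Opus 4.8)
The plan is to prove the three claims of the lemma separately: polynomial samplability, the bound on $\xi$, and the bound on $\epsilon$. The distribution $\Dis$ is a discrete Gaussian of parameter $\sigma$ (width parameter $\sigma\sqrt{2\pi}$ in the $\rho$ notation) restricted to the box $\{\|\vec{\mathrm{x}}\|_\infty\leq B\}$, and all three claims follow from standard Gaussian tail estimates, most importantly \autoref{lemma:ban93}.

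\textbf{Samplability.} First I would note that sampling $\Dis$ can be done by rejection sampling: sample from the unrestricted discrete Gaussian $D_{\Z^n,\sigma\sqrt{2\pi}}$ (which is polynomially samplable by \autoref{lemma:freesampling} applied to $\Z^n$, since $\sigma$ exceeds a constant) and reject if the sample leaves the box. The acceptance probability is $1 - \Pr[\|\vec{\mathrm{x}}\|_\infty > B]$, and since a coordinate of the discrete Gaussian has width $\sigma$, a union bound over $n$ coordinates together with a one-dimensional Gaussian tail bound shows the rejection probability is bounded away from $1$ (indeed exponentially small in $(B/\sigma)^2$) as soon as $B$ is a few times $\sigma$; the hypothesis $B - R \geq 2\sigma$ (hence $B \geq 2\sigma$) suffices. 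So expected polynomial time sampling holds.

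\textbf{Bound on $\xi$.} For $\|\vec{\mathrm{s}}\|_\infty \leq R$ and $\vec{\mathrm{x}}$ in the box, $\Dis(\vec{\mathrm{x}}+\vec{\mathrm{s}})/\Dis(\vec{\mathrm{x}})$ (when both lie in the box, i.e. the numerator is the \emph{same} normalized restricted density) equals $\exp\big((\|\vec{\mathrm{x}}\|^2 - \|\vec{\mathrm{x}}+\vec{\mathrm{s}}\|^2)/(2\sigma^2)\big) = \exp\big(-(2\langle\vec{\mathrm{x}},\vec{\mathrm{s}}\rangle + \|\vec{\mathrm{s}}\|^2)/(2\sigma^2)\big)$. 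I would bound $\|\vec{\mathrm{s}}\|^2 \leq nR^2$ and $|\langle\vec{\mathrm{x}},\vec{\mathrm{s}}\rangle| \leq \|\vec{\mathrm{x}}\|\,\|\vec{\mathrm{s}}\| \leq \|\vec{\mathrm{x}}\|\sqrt{n}R$, using a tail bound on $\|\vec{\mathrm{x}}\|$ (from \autoref{lemma:ban93}, $\|\vec{\mathrm{x}}\| \leq t\sqrt{n/(2\pi)}\cdot\sigma\sqrt{2\pi} = t\sigma\sqrt{n}$ except with exponentially small probability, for a suitable constant $t$). This gives the claimed $\xi = \exp(-nR^2/(2\sigma^2) - 2\sqrt{n}R/\sigma)$ as the worst-case ratio over the event, except possibly on an exceptional set which is absorbed into $\epsilon$. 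Care must be taken that the bound should hold for $\vec{\mathrm{x}}$ inside the box where both $\vec{\mathrm{x}}$ and $\vec{\mathrm{x}}+\vec{\mathrm{s}}$ lie in the box, so that the normalization constants cancel.

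\textbf{Bound on $\epsilon$ — the main obstacle.} The quantity $\epsilon$ is the probability that $\vec{\mathrm{x}}+\vec{\mathrm{s}}$ still lies in the box (so that the ratio identity above is valid and sandwiched by $\xi$) rather than being pushed outside by the shift $\vec{\mathrm{s}}$. The bad event is that some coordinate $x_j$ satisfies $|x_j + s_j| > B$, i.e. $x_j$ is within distance $R$ of the boundary at scale $B$; the probability that a single (discrete Gaussian, width $\sigma$) coordinate lies in $(B-R, B]$ (or the symmetric interval) is at most $\exp(-((B-R)/\sigma - 1)^2/2)$ by a one-dimensional Gaussian tail bound — this is precisely the exponent appearing in the statement, and $B-R\geq 2\sigma$ makes the argument of the square positive. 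A union bound over $n$ coordinates, followed by $1 - nu \geq \exp(-2nu)$ for small $u = \exp(-((B-R)/\sigma-1)^2/2)$, gives $\epsilon \geq \exp(-2n\exp(-((B-R)/\sigma-1)^2/2))$; I insert the factor $1/2$ to also accommodate the exceptional set from the $\|\vec{\mathrm{x}}\|$ tail bound used for $\xi$. The delicate points here are (i) keeping the one-dimensional tail estimates in the \emph{discrete} Gaussian setting clean (these are classical, e.g. via Poisson summation as used elsewhere in the paper), and (ii) making sure the normalization constant of the \emph{restricted} distribution $\Dis$ is what appears in the ratio — since it cancels, this is automatic as long as both points are in the box, which is exactly the event whose probability is $\epsilon$.
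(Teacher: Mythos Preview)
Your treatment of samplability and of the $\epsilon$ bound is essentially the paper's argument (rejection sampling from the discrete Gaussian, and a per-coordinate tail bound of the form $\exp(-((B-R)/\sigma-1)^2/2)$ followed by $(1-u)^n\ge\exp(-2nu)$). That part is fine.

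The gap is in your derivation of $\xi$. You bound
\[
|\langle \vec{\mathrm{x}},\vec{\mathrm{s}}\rangle|\le \|\vec{\mathrm{x}}\|\cdot\|\vec{\mathrm{s}}\|\le t\sigma\sqrt{n}\cdot\sqrt{n}\,R = t\,n\,\sigma R,
\]
and then claim this yields the stated $\xi=\exp(-nR^2/(2\sigma^2)-2\sqrt{n}R/\sigma)$. It does not: plugging your bound into $|\ln(\Dis(\vec{\mathrm{x}}+\vec{\mathrm{s}})/\Dis(\vec{\mathrm{x}}))|\le (\|\vec{\mathrm{s}}\|^2+2|\langle\vec{\mathrm{x}},\vec{\mathrm{s}}\rangle|)/(2\sigma^2)$ gives a second term of order $nR/\sigma$, not $\sqrt{n}R/\sigma$. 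Cauchy--Schwarz is simply too crude here, and no choice of ``suitable constant'' $t$ repairs the missing $\sqrt{n}$.

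The paper avoids this loss by bounding $\langle\vec{\mathrm{x}},\vec{\mathrm{s}}\rangle$ directly via its variance. Conditioning on $\|\vec{\mathrm{x}}\|_\infty\le B-R$, the coordinates of $\vec{\mathrm{x}}$ are independent, centered, with second moment at most $\sigma^2$ (this is the one-dimensional lemma $\E_{D_\Lambda}[\|\vec{\mathrm{x}}\|^2]\le 1/(2\pi)$ proved just before). Hence $\Var[\langle\vec{\mathrm{x}},\vec{\mathrm{s}}\rangle]\le \sigma^2\|\vec{\mathrm{s}}\|^2\le n\sigma^2R^2$, and Chebyshev gives $|\langle\vec{\mathrm{x}},\vec{\mathrm{s}}\rangle|\le 2\sigma\|\vec{\mathrm{s}}\|\le 2\sigma\sqrt{n}R$ with probability at least $1/2$. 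This produces exactly the term $2\sqrt{n}R/\sigma$ in $\xi$, and the factor $1/2$ in the bound on $\epsilon$ comes from this Chebyshev event, not from an exponentially small Banaszczyk tail as you suggest.
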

\begin{proof}
	Using the Banaszczyk lemma , we have :
	\[ \frac{\sum_{x=R-B}^{B-R}\exp(-x^2/(2\sigma^2))}{\sum_{x=-B}^B \exp(-x^2/(2\sigma^2))} \geq \frac{\sum_{x=R-B}^{B-R}\exp(-x^2/(2\sigma^2))}{\sum_{x \in \Z}\exp(-x^2/(2\sigma^2))} \geq 1-\exp(-((B-R)/\sigma-1)^2/2). \]
	So that :
	\[ \Pr_{\vec{\mathrm{x}}\sim \Dis}[||\vec{\mathrm{x}}||_{\infty}\leq B-R] \geq \exp(-2n\exp(-((B-R)/\sigma-1)^2/2)). \]
	Since the discrete Gaussian distribution over $\Z$ is polynomially samplable, this is also the case for $\Dis$ since $B-R \geq 2\sigma$.

	We now condition the distribution over $||\vec{\mathrm{x}}||_{\infty}\leq B-R$.
	For some $\vec{\mathrm{s}}$ such that $||\vec{\mathrm{s}}||_{\infty}\leq R$ and $N=||\vec{\mathrm{s}}||$, we study the variable \[ \ell=2\sigma^2\ln(\exp(-||\vec{\mathrm{x}}+\vec{\mathrm{s}}||^2/(2\sigma^2))/\exp(-||\vec{\mathrm{x}}||^2/(2\sigma^2)))=N^2-2\langle \vec{\mathrm{x}},\vec{\mathrm{s}}\rangle. \]
	By symmetry :
	\[ \E[\ell]=\E[N^2-2\langle x,s\rangle ]=N^2. \]
	Let $\Dis'$ be the distribution over the first coordinate and $v=s_0$.
	Then, using the previous lemma :
	\[ \Var[v^2-xv]=\E_{x\sim \Dis'}[(v^2-xv-v^2)^2]\leq \E_{x\sim D_{\Z,\sigma/\sqrt{2\pi}}}[x^2]v^2 \leq \sigma^2v^2. \]
	Summing over all coordinates, we have :
	\[ \Var[\ell]\leq 4\sigma^2N^2. \]
	By the Chebyshev inequality, we get :
	\[ \Pr[|\ell+N^2|\geq 4N\sigma]\leq \frac{1}{2}. \]
	And the claim follows.
\end{proof}

\begin{theorem}
	One can solve any $\GapSVP_{o(B\sqrt{\log \log \log \beta/\log \log \beta}),\beta}^{||.||_\infty}$ in time \[ 2^{(n/2+o(n))/\ln(1+\log \beta/\log B)} \] for $\beta=2^{o(n/\log n)}$, $\beta=\omega(1)$, $B\geq 2$.
\end{theorem}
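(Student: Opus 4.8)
The plan is to compose the two reductions of this subsection with the $\mathsf{LWE}$-solving algorithm. By \autoref{BDD}, for admissible parameters there is an algorithm solving $\BDD_{B,\beta_{0}}^{||.||_{\infty}}$ in time $2^{(n/2+o(n))/\ln(1+\log\beta_{0}/\log B)}$; and \autoref{LemmeGapReduction}, instantiated with the truncated discrete Gaussian $\Dis$ of the two preceding lemmas, reduces $\GapSVP_{R,\beta_{0}/d}^{||.||_{\infty}}$ to $\bigo\!\big(dn\,\epsilon^{-1}\xi^{-1}(1-d^{2})^{-n/2}\big)$ calls to a $\BDD_{B,\beta_{0}}^{||.||_{\infty}}$ oracle, where $\epsilon$ and $\xi$ are bounded in the discrete Gaussian lemma. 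I would take $\beta_{0}=\beta d$ so that the composition solves exactly $\GapSVP_{R,\beta}^{||.||_{\infty}}$, and then tune $d,\sigma,R$ so that: (i) all side conditions hold — samplability of $\Dis$ (i.e. $B-R\ge 2\sigma$) and the hypotheses $\beta_{0}=\omega(1)$, $\beta_{0}=2^{o(n/\log n)}$, $\log B=\bigo(\log\beta_{0})$ of \autoref{BDD}, which follow from the assumptions on $\beta$ and $B$; (ii) the overhead $\bigo(dn\,\epsilon^{-1}\xi^{-1}(1-d^{2})^{-n/2})$ is at most $2^{o(n/\ln(1+\log\beta/\log B))}$; and (iii) $\ln(1+\log\beta_{0}/\log B)=(1-o(1))\ln(1+\log\beta/\log B)$, so replacing $\beta$ by $\beta_{0}$ in the $\BDD$ running time costs only the $o(n)$ in the exponent.

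Concretely, I would let $d\to 0$ slowly, e.g. $d=1/\log\log\beta$: then $\log(1/d)=o(\log\beta)$, which gives (iii), and $(1-d^{2})^{-n/2}=\exp(O(nd^{2}))$ is negligible against the $\BDD$ time. I would then pick $\sigma$ as large as the constraint $B-R\ge 2\sigma$ allows subject to $2n\exp(-((B-R)/\sigma-1)^{2}/2)$ staying within the budget $o(n/\ln(1+\log\beta/\log B))$; since $R=o(B)$, this means $\sigma$ is of order $B$ divided by a slowly growing function (roughly $\sqrt{\ln\ln(1+\log\beta/\log B)}$), and the discrete Gaussian lemma then yields $\epsilon^{-1}=2^{o(n/\ln(1+\log\beta/\log B))}$. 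Finally, with $R=o\!\big(B\sqrt{\log\log\log\beta/\log\log\beta}\big)$ as in the hypothesis, the bound $\xi=\exp(-nR^{2}/(2\sigma^{2})-2\sqrt nR/\sigma)$ gives $\xi^{-1}=2^{o(n/\ln(1+\log\beta/\log B))}$ too. Multiplying the number of calls by the $\BDD$ running time then gives $2^{(n/2+o(n))/\ln(1+\log\beta/\log B)}$, and \autoref{LemmeGapReduction} guarantees a negligible failure probability for these parameters.

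I expect the genuinely delicate point to be this simultaneous balancing of $\sigma$, $d$, and $R$. The width $\sigma$ has to be large enough that the shift-stability probability $\epsilon$ in \autoref{LemmeGapReduction} does not fall below $2^{-o(n/\ln(1+\log\beta/\log B))}$, and at the same time $R/\sigma$ has to be small enough that $\xi^{-1}=\exp(nR^{2}/(2\sigma^{2})+2\sqrt nR/\sigma)$ stays under the $\BDD$ running time; similarly $d$ must shrink slowly enough to keep (iii) but fast enough to absorb the $dn$ and $(1-d^{2})^{-n/2}$ factors. Matching $R^{2}/\sigma^{2}$ against the budget $1/\ln(1+\log\beta/\log B)$ with $\sigma$ pushed to its largest admissible value is precisely what produces the stated range $R=o\!\big(B\sqrt{\log\log\log\beta/\log\log\beta}\big)$. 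The remaining ingredients — samplability of $\Dis$, checking the hypotheses of \autoref{BDD}, and the failure-probability estimate — are routine.
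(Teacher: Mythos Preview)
Your proposal is correct and follows essentially the same approach as the paper: instantiate \autoref{LemmeGapReduction} with the truncated discrete Gaussian $\Dis$ of the preceding lemma, and plug in the $\BDD$ solver of \autoref{BDD}. The paper's one-line proof simply fixes $\sigma=B/\sqrt{3\ln\log\log\beta}$ and $\beta'=\beta/\log\beta$ (hence implicitly $d=1/\log\beta$ rather than your $d=1/\log\log\beta$), and asserts that the resulting number of $\BDD$ calls is $2^{o(n/\log\log\beta)}$; your more detailed parameter discussion and verification of the side conditions is in the same spirit and arrives at the same conclusion.
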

\begin{proof}
	Use the previous lemma with $\sigma=B/\sqrt{3\ln \log \log \beta}$ and~\autoref{BDD} with $\beta'=\beta/\log(\beta)$, so that it is sufficient to decode $2^{o(n/\log \log \beta)}$ points.
\end{proof}
\begin{theorem}
	If it is possible to solve $\BDD_{B,\beta}^{||.||_\infty}$ in polynomial time, then it is possible to solve in randomized polynomial time $\GapSVP_{B/\sqrt{n},\beta\sqrt{n/\log n}}^{||.||_\infty}$.
\end{theorem}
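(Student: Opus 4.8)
The plan is to instantiate \autoref{LemmeGapReduction} with $\Dis$ the truncated discrete Gaussian of the lemma immediately above, i.e. $\Dis(\vec{\mathrm{x}})\propto\exp(-||\vec{\mathrm{x}}||^2/(2\sigma^2))$ on $\{\vec{\mathrm{x}}\in\Z^n:||\vec{\mathrm{x}}||_\infty\leq B\}$, and to pick the remaining parameters so that the number of $\BDD$ calls, bounded in \autoref{LemmeGapReduction} by $\bigo(d n\,\epsilon^{-1}\xi^{-1}(1-d^2)^{-n/2})$, is polynomial. Since that reduction solves $\GapSVP_{R,\beta/d}^{||.||_\infty}$, I would take $d=\sqrt{\log n/n}$ (so $\beta/d=\beta\sqrt{n/\log n}$), $R=B/\sqrt n$, and $\sigma=cB/\sqrt{\log n}$ for a small absolute constant $c$ fixed below. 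For $n$ large, $B-R=B(1-1/\sqrt n)\geq 2\sigma$, so the lemma above applies: $\Dis$ is polynomially samplable, and it supplies $\xi=\exp(-nR^2/(2\sigma^2)-2\sqrt n\,R/\sigma)$ together with $\epsilon\geq\tfrac12\exp(-2n\exp(-((B-R)/\sigma-1)^2/2))$.

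It then suffices to check that each of the four factors is $\mathrm{poly}(n)$. First, $(1-d^2)^{-n/2}=(1-\tfrac{\log n}{n})^{-n/2}=n^{\Theta(1)}$. Next, substituting $R=B/\sqrt n$ and $\sigma=cB/\sqrt{\log n}$ yields $nR^2/(2\sigma^2)=\Theta(\log n)$ and $2\sqrt n\,R/\sigma=\Theta(\sqrt{\log n})$, hence $\xi^{-1}=\exp(\bigo(\log n))=\mathrm{poly}(n)$. Finally $(B-R)/\sigma=\Theta(\sqrt{\log n})$, so $((B-R)/\sigma-1)^2/2\geq(\tfrac{1}{2c^2}-o(1))\log n$; choosing $c$ small enough that $\tfrac{1}{2c^2\ln 2}>1$ makes $2n\exp(-((B-R)/\sigma-1)^2/2)=o(1)$, whence $\epsilon=\Omega(1)$ and $\epsilon^{-1}=\bigo(1)$. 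As $d\leq 1$, the product of the four factors is polynomial, each $\BDD$ call is polynomial-time by hypothesis, and \autoref{LemmeGapReduction} gives negligible failure probability, so the whole algorithm runs in randomized polynomial time.

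The one delicate point is the joint choice of $\sigma$: it must be small enough that the tail exponent $((B-R)/\sigma-1)^2/2$ beats $\ln(2n)$, keeping $\epsilon$ bounded away from $0$, while simultaneously $nR^2/(2\sigma^2)$ stays $\bigo(\log n)$ so that $\xi^{-1}$ stays polynomial. With $R=B/\sqrt n$ one has $B-R=\Theta(B)$, a factor $\sqrt n$ larger than $R$, and this gap is exactly what leaves a non-empty window of admissible $\sigma=\Theta(B/\sqrt{\log n})$; the precise constant is immaterial, and since we may plug in $R=B/\sqrt n$ and $d=\sqrt{\log n/n}$ directly, the reduction produces precisely $\GapSVP_{B/\sqrt n,\beta\sqrt{n/\log n}}^{||.||_\infty}$. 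The $\sqrt{n/\log n}$ blow-up in the gap is the unavoidable cost of forcing the call count down to $\mathrm{poly}(n)$, which requires $d$ to be polynomially small.
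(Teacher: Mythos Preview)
Your proof is correct and follows the same approach as the paper, which simply states ``Use $\sigma=B/\sqrt{3\ln n}$'' and leaves the verification implicit. Your choice $\sigma=cB/\sqrt{\log n}$ with $c$ small enough that $1/(2c^2\ln 2)>1$ is exactly the paper's choice up to the constant (the paper's $c=1/\sqrt{3\ln 2}$ gives $1/(2c^2\ln 2)=3/2$), and your verification that $(1-d^2)^{-n/2}$, $\xi^{-1}$, and $\epsilon^{-1}$ are each polynomial is the computation the paper omits.
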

\begin{proof}
	Use $\sigma=B/\sqrt{3\ln n}$.
\end{proof}

We now proves the corresponding theorem in norm L2.

\begin{theorem}
	Let $\Dis$ be the uniform distribution over the points $\vec{x}$ of $\Z^n$ with $||\vec{x}||\leq B$.
	Then, $\Dis$ is samplable in time $\bigo(n^2B^2\log B)$ and for any $\vec{s}$ with $||\vec{s}||\leq R$ and $\vec{x}$ sampled according to $\Dis$, we have for $n$ sufficiently large
	\[ \Pr[||\vec{x}+\vec{s}||\leq B]\geq (\sqrt{B^2-R^2}-\sqrt{n}/2)^{n-1}(R-\sqrt{n}/2)/(B+\sqrt{n}/2)^n. \]
\end{theorem}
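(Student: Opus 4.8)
The plan is to establish the two assertions separately: an algorithmic one for samplability, and a lattice-point counting one for the probability estimate.

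\textbf{Samplability.} Plain rejection from the enclosing box is hopeless, since $|\Z^n\cap\mathcal{B}(\vec{0},B)|$ is super-polynomially smaller than the number of integer points of $[-B,B]^n$ once $n$ is large. Instead I would sample one coordinate at a time, which requires knowing, for each $0\le d\le n$, the quantity $N_d(m):=|\{\vec{\mathrm{x}}\in\Z^d:||\vec{\mathrm{x}}||^2\le m\}|$. Only the $\bigo(B^2)$ arguments $m=B^2-j$ with $0\le j\le\lfloor B^2\rfloor$ ever occur, because every partial budget is $B^2$ minus a sum of integer squares. These numbers satisfy $N_0(m)=1$ and $N_d(m)=\sum_{k\in\Z,\ k^2\le m}N_{d-1}(m-k^2)$, so each row is a discrete convolution of the previous one with the sequence counting representations of an integer as a single square; filling the $\bigo(nB^2)$ table entries — integers of $\bigo(n\log B)$ bits — by fast convolution stays within the claimed $\bigo(n^2B^2\log B)$. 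A sample is then drawn by choosing $x_1,\dots,x_n$ in order: having fixed $x_1,\dots,x_{i-1}$ with $\sum_{j<i}x_j^2=m_0$, we output $x_i=k$ with probability $N_{n-i}(B^2-m_0-k^2)/N_{n-i}(B^2-m_0)$, and a one-line induction shows the result is uniform on $\Z^n\cap\mathcal{B}(\vec{0},B)$.

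\textbf{The probability bound.} Since $||\vec{\mathrm{x}}+\vec{\mathrm{s}}||\le B$ means exactly $\vec{\mathrm{x}}\in\mathcal{B}(-\vec{\mathrm{s}},B)$ (no integrality of $\vec{\mathrm{s}}$ is needed), the probability equals
\[ \Pr_{\vec{\mathrm{x}}\sim\Dis}[||\vec{\mathrm{x}}+\vec{\mathrm{s}}||\le B]=\big|\Z^n\cap\mathcal{B}(\vec{0},B)\cap\mathcal{B}(-\vec{\mathrm{s}},B)\big|\,\big/\,\big|\Z^n\cap\mathcal{B}(\vec{0},B)\big|. \]
For the denominator, inflate each integer point to the unit cube around it; these cubes are disjoint, have circumradius $\sqrt{n}/2$, and lie inside $\mathcal{B}(\vec{0},B+\sqrt{n}/2)$, whence $|\Z^n\cap\mathcal{B}(\vec{0},B)|\le V_n(B+\sqrt{n}/2)^n$ with $V_n$ the volume of the unit ball of $\R^n$. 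For the numerator, lower-bound the lattice points of the lens $L=\mathcal{B}(\vec{0},B)\cap\mathcal{B}(-\vec{\mathrm{s}},B)$ by those of an inscribed cylinder: with $\vec{\mathrm{u}}=\vec{\mathrm{s}}/||\vec{\mathrm{s}}||$ and $R'=||\vec{\mathrm{s}}||\le R$, the cylinder $C=\{-t\vec{\mathrm{u}}+\vec{\mathrm{w}}:t\in[0,R],\ \langle\vec{\mathrm{w}},\vec{\mathrm{u}}\rangle=0,\ ||\vec{\mathrm{w}}||\le\sqrt{B^2-R^2}\}$ lies in $L$, because for $t\in[0,R]$ one has $t^2\le R^2$ and $(R'-t)^2\le R^2$, so both $||-t\vec{\mathrm{u}}+\vec{\mathrm{w}}||^2=t^2+||\vec{\mathrm{w}}||^2$ and $||(R'-t)\vec{\mathrm{u}}+\vec{\mathrm{w}}||^2=(R'-t)^2+||\vec{\mathrm{w}}||^2$ are at most $R^2+(B^2-R^2)=B^2$. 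Note the cylinder is deliberately taken of the full length $R$, not $R'$, so that this works uniformly in $\vec{\mathrm{s}}$. The dual inflation bound (an integer point whose surrounding unit cube sits inside $C$ belongs to $C$, and such cubes cover $C$ eroded by the unit cube) then gives $|\Z^n\cap C|\ge V_{n-1}\,(\sqrt{B^2-R^2}-\sqrt{n}/2)^{n-1}\,(R-\sqrt{n}/2)$, the $\sqrt{n}/2$ terms coming from the cube's circumradius in the transverse disc and along the axis. Dividing, and using $V_{n-1}\ge V_n$ (valid for $n\ge 6$, so for ``$n$ sufficiently large'') to cancel the unit-ball constants, yields the stated inequality; if $B^2\le R^2+n/4$ or $R\le\sqrt{n}/2$ the right-hand side is non-positive and there is nothing to prove.

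\textbf{Where the work is.} The sampler is routine once one commits to per-coordinate sampling backed by a precomputed shell-count table; the only mild subtlety there is the complexity accounting for the convolutions. The delicate part is the numerator: the inscribed body must be chosen so the lower bound is uniform over all $\vec{\mathrm{s}}$ with $||\vec{\mathrm{s}}||\le R$ — in particular it must survive, very wastefully, the regime of small $||\vec{\mathrm{s}}||$, which forces the cylinder to have length $R$ rather than $||\vec{\mathrm{s}}||$ — and the Minkowski-type count has to be bookkept carefully enough that the loss in each direction is kept to $\sqrt{n}/2$ and that the leftover factor $V_{n-1}/V_n\ge 1$ exactly absorbs the unit-ball constants left over from the two sides of the ratio.
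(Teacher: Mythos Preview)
Your proof follows essentially the same path as the paper's: dynamic programming over squared-norm budgets for the sampler (the paper phrases this as ranking the $i$-th point of the ball in lexicographic order, which rests on the very same table of counts $N_d(m)$), and for the probability bound an inscribed right cylinder with axis $\vec{s}/||\vec{s}||$, length $R$ and radius $\sqrt{B^2-R^2}$, followed by a unit-cube packing count, the inequality $V_{n-1}\ge V_n$, and \autoref{lemme:cardBoule} for the denominator. The only cosmetic difference is the cylinder's placement---you anchor one end at the origin while the paper centres it at $\vec{s}/2$; both lie inside the lens for every $||\vec{s}||\le R$.

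One small bookkeeping slip, which the paper's own proof shares: eroding a cylinder of length $R$ by a body of circumradius $\sqrt{n}/2$ removes $\sqrt{n}/2$ from \emph{each} end, so the honest axial factor is $R-\sqrt{n}$ rather than $R-\sqrt{n}/2$. This is immaterial for the corollary that invokes the theorem.
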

\begin{proof}
	It is easy to see that a dynamic programming algorithm using $\bigo(nB^2)$ operations on integers smaller than $B^n$ can determine the $i$-th lattice point of the ball, ordered lexicographically.
	Therefore, $\Dis$ is samplable.

	Let $E$ be the right circular cylinder of center $\vec{s}/2$, with generating segments parallel to $\vec{s}$, of length $R$ and radius $r=\sqrt{B^2-R^2}$, and $H$ the same cylinder with length $R-\sqrt{n}/2$ and radius $r-\sqrt{n}/2$.
	For the lattice point $\vec{x} \in \Z^n$, let $F_{\vec{x}}$ be the axis-aligned cube of length $1$ centered on $\vec{x}$.
	Let $F$ be the union of all $F_{\vec{x}}$ such that $F_{\vec{x}} \subset E$.
	We have $|E\cap \Z^n|=\vol(F)$ and $H \subset F$.
	Therefore, $|E\cap \Z^n|\geq V_{n-1}(r-\sqrt{n}/2)^{n-1}(R-\sqrt{n}/2)$.
	Also, $E$ is a subset of the intersection of the balls of radius $B$ and centers $\vec{0}$ and $\vec{s}$.
	Using~\autoref{lemme:cardBoule} and $V_{n-1}\geq V_n$ for $n$ sufficiently large, the result follows.  
\end{proof}

\begin{corollary}
	One can solve $\GapSVP_{\sqrt{n}o(B/\sqrt{\log \log \beta}),\beta}^{||.||}$, $\beta=\omega(1)$ and $\beta=2^{o(n/\log n)}$ in time
	\[ 2^{(n/2+o(n))/\ln(1+\log \beta/\log B)}. \]
\end{corollary}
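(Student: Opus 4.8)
The statement is the L2 analogue of the infinity-norm $\GapSVP$ result, and the plan is to mimic exactly the structure used there: combine the L2 version of \autoref{LemmeGapReduction} (the Gap-to-BDD reduction, which is stated for $||\cdot||_\infty$ but whose proof only uses the geometry of balls and a ratio bound on $\Dis$, hence transfers verbatim once we feed it an L2-admissible distribution $\Dis$) with the L2 variant of \autoref{BDD}, i.e.\ the $\BDD^{||\cdot||}$ corollary coming from \autoref{thm:lweL2}. The sampling distribution $\Dis$ will be the uniform distribution on lattice points of the ball of radius $B$, for which the immediately preceding theorem supplies both polynomial-time samplability and the key inequality
\[ \Pr_{\vec{x}\sim\Dis}[\,||\vec{x}+\vec{s}||\le B\,]\ \ge\ \big(\sqrt{B^2-R^2}-\sqrt{n}/2\big)^{n-1}\big(R-\sqrt{n}/2\big)\big/\big(B+\sqrt{n}/2\big)^n. \]

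**Key steps, in order.** First I would restate \autoref{LemmeGapReduction} for the L2 norm: given a $\BDD_{R,\beta}^{||\cdot||}$ oracle and an efficiently samplable $\Dis$ over $\Z^n$ supported in the ball of radius $b$, with $\epsilon$ and $\xi$ defined by the same density-ratio condition, one solves $\GapSVP_{R,\beta/d}^{||\cdot||}$ with negligible failure probability using $\bigo(dn\,\epsilon^{-1}\xi^{-1}(1-d^2)^{-n/2})$ oracle calls; the proof is the same two-balls-intersection argument (\autoref{LemmeGapReduction}'s proof never used the infinity norm except through that lemma on ball intersection volumes, which is already an L2 statement). Second, I would plug in $\Dis=$ uniform on $\Z^n\cap\mathcal{B}(\vec0,B)$: here $\xi$ is essentially $1$ (the density is flat inside the support), and $\epsilon$ is governed by the shift-probability inequality above — so $\epsilon$ is non-negligible precisely when $R$ is a small enough fraction of $B$, namely $R=\sqrt{n}\,o(B/\sqrt{\log\log\beta})$, which after taking $n$-th roots is exactly the hypothesis in the corollary. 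Third, I would choose $d$ so that $\beta/d$ stays of the right order while $(1-d^2)^{-n/2}$ costs only $2^{o(n/\log n)}$ (taking $d=\beta'/\beta$ for $\beta'=\beta/\log\beta$, as in the $||\cdot||_\infty$ version, so that only $2^{o(n/\log\log\beta)}$ $\BDD$ calls are needed). Finally, invoke the L2 $\BDD$ corollary (\autoref{thm:lweL2} via the analogue of \autoref{BDD}) with a large constant decoding parameter $t$, noting that replacing $\beta$ by $\beta/\log\beta$ changes neither the exponent $\ln(1+\log\beta/\log B)$ nor the $2^{o(n/\log n)}$ side factors; multiplying the per-call cost $2^{(n/2+o(n))/\ln(1+\log\beta/\log B)}$ by the subexponential number of calls absorbs into the $o(n)$.

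**Main obstacle.** The delicate point is the bookkeeping in the radius bound: one must verify that the per-coordinate slack $\sqrt{n}/2$ appearing in the uniform-ball shift probability, once raised to the $n$-th power and compared against $(1-d^2)^{n/2}d/3$ and the BDD success probability, leaves $\epsilon\xi$ large enough that the total oracle-call count remains $2^{o(n/\log\log\beta)}$ rather than exponential — this is what forces the awkward $\sqrt{\log\log\beta}$ denominator in the allowed $R$ and is where the L2 corollary differs quantitatively (by a $\sqrt{n}$ factor) from the $||\cdot||_\infty$ one. A secondary check is that the L2 $\BDD$ oracle supplied by \autoref{thm:lweL2} carries a divisibility condition on $q$, which here is harmless since $q$ is only constrained through the reduction and may be chosen (or increased via modulus switching) to be as large as needed, $q\ge 2^n$ in particular, as in the $\BDD$ reduction theorem above.
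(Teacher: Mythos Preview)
Your proposal is correct and follows essentially the same route as the paper: use the uniform distribution on $\Z^n\cap\mathcal B(\vec 0,B)$ (for which $\xi=1$, as you observe) together with the L2 analogue of \autoref{LemmeGapReduction}, and feed it the L2 $\BDD$ oracle coming from \autoref{thm:lweL2} with $\beta'=\beta/\log\beta$. The paper's proof is terser but identical in structure; the only detail it adds that you omit is replacing $B$ by $B'=\max(B,\log\beta)$ in the call to \autoref{thm:lweL2}, a harmless adjustment that does not change the exponent and ensures the parameter constraints of that theorem are met when $B$ is small.
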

\begin{proof}
	Apply~\autoref{thm:lweL2} with a reduction to $\BDD$ with parameter $\beta'=\beta/\log(\beta)$ and $B'=\max(B,\log \beta)$.
	Then, apply the previous theorem with~\autoref{LemmeGapReduction}. 
\end{proof}

\begin{corollary}
It is possible to solve any $\GapSVP_{\sqrt{n}2^{\sqrt{\log n}},n^c}^{||.||}$ with $c>0$ in time $2^{(n+o(n))/\ln \ln n}.$
\end{corollary}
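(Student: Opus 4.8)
The plan is to obtain this corollary as a direct instantiation of the immediately preceding corollary (which solves $\GapSVP_{\sqrt{n}o(B/\sqrt{\log \log \beta}),\beta}^{||.||}$ in time $2^{(n/2+o(n))/\ln(1+\log \beta/\log B)}$ under $\beta=\omega(1)$ and $\beta=2^{o(n/\log n)}$), taking $\beta = n^{c}$ and a suitably chosen $B$. All the substantive work — the reduction chain $\GapSVP \to \BDD \to \mathsf{LWE}$ together with the $L_{2}$ analysis of {\sc Solve} — is already in hand, so only a choice of parameters and some asymptotic bookkeeping remain.

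First I would set $\beta = n^{c}$ and $B = 2^{\sqrt{\log n}}\sqrt{\log n}$ (any factor $\omega(\sqrt{\log\log n})$ whose logarithm is $o(\sqrt{\log n})$ works in place of $\sqrt{\log n}$). I then verify the hypotheses of the preceding corollary: $\beta = n^{c} = \omega(1)$ since $c>0$; $\beta = 2^{c\log n} = 2^{o(n/\log n)}$ since $c(\log n)^{2} = o(n)$; and $B \geq 2$ for large $n$. Since $\log\log\beta = \log\log(n^{c}) = \log\log n + \bigo(1)$, we have $B/\sqrt{\log\log\beta} = \Theta\big(2^{\sqrt{\log n}}\sqrt{\log n/\log\log n}\big) = \omega\big(2^{\sqrt{\log n}}\big)$, so the target radius $\sqrt{n}\,2^{\sqrt{\log n}}$ is of the required form $\sqrt{n}\,o(B/\sqrt{\log\log\beta})$ and the preceding corollary applies to $\GapSVP_{\sqrt{n}2^{\sqrt{\log n}},n^{c}}^{||.||}$.

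Next I would simplify the exponent. With $\log B = \sqrt{\log n} + \tfrac12\log\log n = (1+o(1))\sqrt{\log n}$, we get $\log\beta/\log B = c\log n/((1+o(1))\sqrt{\log n}) = (c+o(1))\sqrt{\log n}\to\infty$, hence
\[ \ln\big(1+\log\beta/\log B\big) = \ln\big((c+o(1))\sqrt{\log n}\big) = \tfrac12\ln\log n + \bigo(1) = (1/2+o(1))\ln\ln n, \]
where I use $\ln\log n = \ln\ln n + \bigo(1)$ (conversion between the binary and natural logarithm). Substituting into the running time $2^{(n/2+o(n))/\ln(1+\log\beta/\log B)}$ yields $2^{(n/2+o(n))/((1/2+o(1))\ln\ln n)} = 2^{(n+o(n))/\ln\ln n}$, which is the claimed bound.

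I do not expect a genuine obstacle: this is stated as a corollary precisely because the hard arguments are already established. The one delicate point is the asymptotic balancing — the chosen $B$ must simultaneously be large enough that $B/\sqrt{\log\log\beta} = \omega(2^{\sqrt{\log n}})$ (otherwise we cannot reach radius $\sqrt{n}2^{\sqrt{\log n}}$) and small enough that $\log B = (1+o(1))\sqrt{\log n}$ (otherwise $\ln(1+\log\beta/\log B)$ would be strictly smaller than $(1/2+o(1))\ln\ln n$ and the exponent would degrade). Carefully tracking the $o(1)$ and $\bigo(1)$ terms through $\log\log\beta$, through $\log B$, and through the $\log$-to-$\ln$ conversion is essentially the whole verification.
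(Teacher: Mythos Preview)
Your proposal is correct and follows essentially the same approach as the paper: instantiate the preceding corollary with $\beta=n^{c}$ and $B=2^{\sqrt{\log n}}$ times a slowly growing factor, then simplify the exponent. The paper's one-line proof uses the multiplier $\log\log n$ rather than your $\sqrt{\log n}$, but as you yourself observe, any factor that is $\omega(\sqrt{\log\log n})$ with logarithm $o(\sqrt{\log n})$ works, so the two arguments are interchangeable.
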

\begin{proof}
Use the previous corollary with $B=2^{\sqrt{\log n}}\log \log n$ and $\beta=n^c$.
\end{proof}

\begin{theorem}
If it is possible to solve $\BDD_{B,\beta}^{||.||_\infty}$ in polynomial time, then it is possible to solve in randomized polynomial time $\GapSVP_{B/\sqrt{n},\beta\sqrt{n/\log n}}^{||.||_\infty}$.
\end{theorem}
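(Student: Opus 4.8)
As for the earlier $\GapSVP^{||.||_\infty}$ statement, the plan is to take width $\sigma=B/\sqrt{3\ln n}$ and invoke \autoref{LemmeGapReduction} with the discrete-Gaussian sampling distribution $\Dis$ of the Gaussian-width lemma above --- the one supported on $\{\vec{\mathrm{x}}\in\Z^n : ||\vec{\mathrm{x}}||_\infty\leq B\}$ with $\Dis(\vec{\mathrm{x}})\propto\exp(-||\vec{\mathrm{x}}||^2/(2\sigma^2))$ --- together with secret bound $R=B/\sqrt n$ and gap parameter $d=\sqrt{(\log n)/n}$. With these choices the reduction produces exactly $\GapSVP_{R,\,\beta/d}^{||.||_\infty}=\GapSVP_{B/\sqrt n,\,\beta\sqrt{n/\log n}}^{||.||_\infty}$, as required; what remains is to verify the hypotheses of \autoref{LemmeGapReduction} and that the resulting query count is polynomial.

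First I would check the only structural hypothesis, the samplability condition $B-R\geq 2\sigma$: since $R/B=n^{-1/2}$ and $\sigma/B=(3\ln n)^{-1/2}$ both tend to $0$, it holds for all large $n$, and $\Dis$ is then sampled coordinatewise from the one-dimensional discrete Gaussian in polynomial time. Next I would plug $\sigma$ and $R$ into the two quantities supplied by the Gaussian-width lemma. One computes $nR^2/(2\sigma^2)=\tfrac32\ln n$ and $2\sqrt n R/\sigma=2\sqrt{3\ln n}$, so $\xi=\exp(-\tfrac32\ln n-2\sqrt{3\ln n})=n^{-3/2-o(1)}$, i.e. $1/\xi=n^{3/2+o(1)}$. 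Likewise $(B-R)/\sigma-1=(1-n^{-1/2})\sqrt{3\ln n}-1$, whence $((B-R)/\sigma-1)^2/2=\tfrac32\ln n-O(\sqrt{\ln n})$, so $\exp(-((B-R)/\sigma-1)^2/2)=n^{-3/2+o(1)}$ and therefore $\epsilon\geq\exp(-2n\cdot n^{-3/2+o(1)})/2=\exp(-2n^{-1/2+o(1)})/2\to\tfrac12$; in particular $\epsilon=\Omega(1)$. Finally, with $d^2=(\log n)/n$ the inequality $-\ln(1-u)\leq u/(1-u)$ gives $(1-d^2)^{-n/2}\leq\exp\!\big((1+o(1))\tfrac12\log n\big)=n^{O(1)}$, and $dn=n^{1/2}\sqrt{\log n}=n^{1/2+o(1)}$.

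Feeding all this into \autoref{LemmeGapReduction}, the reduction decides $\GapSVP_{B/\sqrt n,\,\beta\sqrt{n/\log n}}^{||.||_\infty}$ with negligible probability of failure using $\bigo\!\big(dn/\epsilon/\xi\,(1-d^2)^{-n/2}\big)=n^{1/2+o(1)}\cdot O(1)\cdot n^{3/2+o(1)}\cdot n^{O(1)}=n^{O(1)}$ queries to the $\BDD_{B,\beta}^{||.||_\infty}$ oracle. Since that oracle runs in polynomial time by hypothesis, and the remaining operations per query --- drawing $\vec{\mathrm{x}}\sim\Dis$, drawing $\vec{\mathrm{e}}$ uniformly from the ball of radius $1/d$, and the linear-algebra bookkeeping --- are all polynomial, the whole procedure is randomized polynomial time, which proves the claim.

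The one place that needs care is the asymptotic accounting of the near-subexponential factors $1/\epsilon$, $1/\xi$ and $(1-d^2)^{-n/2}$ in the query count, together with seeing why $\sigma=B/\sqrt{3\ln n}$ is the forced scaling: making $\sigma$ larger pushes $R$ (constrained by $B-R\geq 2\sigma$) too close to $B$ and makes $\epsilon$ collapse, while making it smaller drives $\xi=\exp(-\Theta(\ln n))$ below every polynomial, and the constant $3$ is precisely what makes $\exp(-((B-R)/\sigma-1)^2/2)=o(1/n)$ (so $\epsilon=\Omega(1)$) while keeping $1/\xi$ polynomial. Checking that $(1-d^2)^{-n/2}$ with $d^2=(\log n)/n$ is only $n^{\Theta(1)}$ rather than $2^{\Theta(n)}$ is the other subtlety, but it follows at once from $-\ln(1-u)\leq u/(1-u)$.
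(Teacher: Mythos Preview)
Your proposal is correct and follows exactly the paper's approach: the paper's proof is the single line ``Use $\sigma=B/\sqrt{3\ln n}$,'' and you have supplied precisely the verification that this choice, together with $R=B/\sqrt{n}$ and $d=\sqrt{(\log n)/n}$, makes all the quantities in \autoref{LemmeGapReduction} and the Gaussian-width lemma polynomial. Your computations of $\xi=n^{-3/2-o(1)}$, $\epsilon=\Omega(1)$, and $(1-d^2)^{-n/2}=n^{O(1)}$ are the intended ones.
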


\section{Other applications}

\subsection{Low density subset-sum problem}
\label{sec:subsetsum}

\begin{definition}
	We are given a vector $\vec{\mathrm{a}}\in \Z^n$ whose coordinates are sampled independently and uniformly in $[0;M)$, and $\langle \vec{\mathrm{a}},\vec{\mathrm{s}} \rangle$ where the coordinates of $\vec{\mathrm{s}}$ are sampled independently and uniformly in $\zeroun$.
	The goal is to find $\vec{\mathrm{s}}$.
	The density is defined as $d=\frac{n}{\log M}$.
\end{definition}
Note that this problem is trivially equivalent to the \textit{modular} subset-sum problem, where we are given $\langle \vec{a},\vec{s} \rangle \Mod M$ by trying all possible $\lfloor \langle \vec{a},\vec{s} \rangle/M \rfloor$ .

In~\cite{lagarias1985solving,coster1992improved}, Lagarias \textit{et al.} reduce the subset sum problem to 
$\mathsf{UniqueSVP}$, even though this problem was not defined at that time. We will show a reduction to $\BDD_{1,\mathrm{\mathrm{\Omega}}(2^d)}^{||.||_\infty}$, which is essentially the same.
First, we need two geometric lemmata.
\begin{lemma} \label{lemme:cardBoule}
	Let $\mathcal{B}_n(r)$, the number of points of $\Z^n$ of norm smaller than $r$, and $V_n$ the volume of the unit ball.
	Then,
	\[ \mathcal{B}_n(r)\leq V_n\bigg( r+\frac{\sqrt{n}}{2} \bigg)^n. \]
\end{lemma}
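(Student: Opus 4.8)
The statement to prove is \autoref{lemme:cardBoule}: $\mathcal{B}_n(r) \leq V_n(r + \sqrt{n}/2)^n$, where $\mathcal{B}_n(r)$ counts integer points of norm at most $r$ and $V_n$ is the volume of the unit ball.

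The plan is to use the classical ``inflate-and-compare-volumes'' argument. First I would associate to each lattice point $\vec{x} \in \Z^n$ with $\|\vec{x}\| \leq r$ the axis-aligned unit cube $F_{\vec{x}}$ centered at $\vec{x}$; these cubes have disjoint interiors and each has volume $1$, so $\mathcal{B}_n(r)$ equals the volume of their union $F = \bigcup_{\|\vec{x}\|\leq r} F_{\vec{x}}$. The key geometric observation is that any point of $F_{\vec{x}}$ lies within distance $\sqrt{n}/2$ of $\vec{x}$ (half the diagonal of a unit cube in dimension $n$), hence within distance $r + \sqrt{n}/2$ of the origin. Therefore $F$ is contained in the ball of radius $r + \sqrt{n}/2$ centered at $\vec{0}$, whose volume is exactly $V_n(r + \sqrt{n}/2)^n$. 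Comparing volumes gives $\mathcal{B}_n(r) = \vol(F) \leq V_n(r + \sqrt{n}/2)^n$, which is the claim.

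I would carry this out in three short steps: (1) set up the cubes $F_{\vec x}$ and note $\mathcal{B}_n(r) = \vol(F)$ using disjointness of interiors and unit volume; (2) bound $\sup_{\vec y \in F_{\vec x}} \|\vec y - \vec x\| \leq \sqrt{n}/2$ and conclude $F \subseteq \mathcal{B}(\vec 0, r + \sqrt{n}/2)$; (3) take volumes. There is essentially no obstacle here — this is a standard packing argument — the only point requiring a word of care is the diagonal bound, which follows since a point of the unit cube centered at $\vec{x}$ differs from $\vec{x}$ by a vector with each coordinate in $[-1/2, 1/2]$, so its norm is at most $\sqrt{n \cdot (1/2)^2} = \sqrt{n}/2$. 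Note this same technique (with a cylinder in place of a ball) is exactly what is used later in the $\GapSVP^{||.||}$ theorem, so presenting it cleanly here is worthwhile.
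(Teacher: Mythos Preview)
Your proposal is correct and follows essentially the same packing argument as the paper: unit cubes centered at lattice points, then a volume comparison with the enlarged ball. The only cosmetic difference is that the paper takes the union of all cubes \emph{intersecting} $B(\vec{0},r)$ rather than those \emph{centered} in it; your choice is in fact slightly cleaner, since it makes the containment $F\subseteq B(\vec{0},r+\sqrt{n}/2)$ immediate via the half-diagonal bound.
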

\begin{proof}
	For each $\vec{\mathrm{x}}\in \Z^n$, let $E_{\vec{\mathrm{x}}}$ be a cube of length $1$ centered on $\vec{\mathrm{x}}$.
	Let $E$ be the union of all the $E_{\vec{\mathrm{x}}}$ which have a non empty intersection with the ball of center 
	$\vec{0}$ and radius $r$. 
	Therefore $\vol(E)\geq \mathcal{B}_n(r)$ and since $E$ is included in the ball of center $\vec{0}$ and radius $r+\frac{\sqrt{n}}{2}$, the claim is proven.
\end{proof}
\begin{lemma}\label{lemme:volBoule}
	For $n\geq 4$ we have \[ V_n= \frac{\pi^{n/2}}{(n/2)!} \leq (\sqrt{\pi\me/n})^n \]
\end{lemma}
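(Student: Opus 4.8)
The plan is to combine the closed form for the volume $V_n$ of the unit Euclidean ball with an elementary lower bound on the factorial. I would first prove $V_n=\pi^{n/2}/(n/2)!$, reading $(n/2)!$ as $\Gamma(n/2+1)$ when $n$ is odd, through the recursion $V_n=\tfrac{2\pi}{n}V_{n-2}$. Slicing the unit $n$-ball along two coordinates $(x,y)$, the section at $(x,y)$ is an $(n-2)$-ball of radius $\sqrt{1-x^2-y^2}$, so that
\[ V_n=\int_{x^2+y^2\le1}V_{n-2}(1-x^2-y^2)^{(n-2)/2}\,dx\,dy=2\pi V_{n-2}\int_0^1 r(1-r^2)^{(n-2)/2}\,dr=\frac{2\pi}{n}V_{n-2}, \]
where the radial integral equals $1/n$ after the substitution $u=1-r^2$. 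Anchoring the even chain with $V_0=1$ and the odd chain with $V_1=2$ (and $\Gamma(1/2)=\sqrt\pi$) and unwinding the recursion yields the stated formula $V_n=\pi^{n/2}/(n/2)!$.

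It then suffices to bound $(n/2)!$ from below. With $x=n/2$, I would use $\Gamma(x+1)\ge(x/\me)^x$ — immediate for integer $x$ from $\me^{x}=\sum_{k\ge0}x^k/k!\ge x^x/x!$, and in general the bare form of Stirling's inequality $\Gamma(x+1)\ge\sqrt{2\pi x}\,(x/\me)^x$ — or, for a sharper constant, the full Stirling bound. Substituting this into the closed form turns $V_n=\pi^{n/2}/\Gamma(n/2+1)$ into an upper bound of the required shape, namely $V_n\le(c\,\me/n)^{n/2}$ with a constant $c$ depending only on which form of Stirling is invoked; choosing it small enough, and handling the finitely many small dimensions by direct computation (which is presumably the role of the hypothesis $n\ge4$), yields the claimed $(\sqrt{\pi\me/n})^n$.

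The entire substance is in the second step: the bare inequality $\Gamma(x+1)\ge(x/\me)^x$ loses a polynomial factor, so one has to spend the slack in Stirling's estimate carefully to reach exactly the stated constant, and then verify the base cases for small $n$ by hand. The first step — the slicing recursion, the change of variable in the radial integral, and the two one-step inductions — is routine.
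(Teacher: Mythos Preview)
The paper states this lemma without proof. Your derivation of the volume formula $V_n=\pi^{n/2}/\Gamma(n/2+1)$ via the two--dimensional slicing recursion is standard and correct; the gap is entirely in the inequality step. You write that the bare bound $\Gamma(x+1)\ge(x/\me)^x$ ``loses a polynomial factor'' which can be recovered by the full Stirling estimate. This is a miscount: with $x=n/2$ the bare bound gives $(n/2)!\ge(n/(2\me))^{n/2}$, hence only $V_n\le(2\pi\me/n)^{n/2}$, which is off from the target $(\pi\me/n)^{n/2}$ by the \emph{exponential} factor $2^{n/2}$, not a polynomial one. The full Stirling inequality $\Gamma(x+1)\ge\sqrt{2\pi x}\,(x/\me)^x$ recovers precisely a factor $\sqrt{\pi n}$, which is nowhere near $2^{n/2}$, so no amount of ``spending the slack carefully'' will reach the stated constant.

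In fact the inequality as printed is simply false: for $n=4$ one has $V_4=\pi^2/2\approx4.93$ while $(\pi\me/4)^2\approx4.56$, and the discrepancy grows with $n$ (e.g.\ $n=10$ gives $V_{10}\approx2.55$ versus $(\pi\me/10)^5\approx0.45$). Stirling's asymptotic $(n/2)!\sim\sqrt{\pi n}\,(n/(2\me))^{n/2}$ shows that $(n/2)!<(n/\me)^{n/2}$ for all $n\ge4$, so the reverse of the claimed inequality holds. The correct statement is $V_n\le(2\pi\me/n)^{n/2}$, which your bare argument does prove. In the paper's only use of the lemma (the subset--sum theorem) this correction merely tightens the admissible constant from $c<\sqrt{2/(\pi\me)}$ to $c<\sqrt{1/(\pi\me)}$ and changes nothing structurally; but your proposal should not promise to reach the printed constant, because it cannot be reached.
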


\begin{theorem}
	Using one call to a $\BDD_{1,c2^{1/d}}^{||.||_\infty}$ oracle with any $c< \sqrt{2/\pi/\me}$ and $d=o(1)$, and polynomial time, it is possible to solve a subset-sum problem of density $d$, with negligible probability of failure.
\end{theorem}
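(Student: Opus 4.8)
The plan is to encode a subset-sum instance as a $\BDD$ instance in a lattice of dimension roughly $n$, chosen so that the hidden solution vector $\vec{\mathrm{s}}$ is exactly the $0/1$ vector we are looking for (hence $||\vec{\mathrm{s}}||_\infty \leq 1$, which is why the bound $B=1$ appears), and so that the distance from the target to the lattice is small enough that the $\BDD_{1,c2^{1/d}}^{||.||_\infty}$ oracle succeeds. Concretely, I would take the classical Lagarias--Odlyzko style lattice: basis columns $N\vec{\mathrm{e}}_i$ together with one extra column encoding $(a_0,\dots,a_{n-1})$ scaled by a large integer, plus an all-ones row so that the combination with coefficients $s_i$ lands close to a prescribed target point built from $\langle \vec{\mathrm{a}},\vec{\mathrm{s}}\rangle$. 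One writes $\vec{\mathrm{A}}\vec{\mathrm{s}}-\vec{\mathrm{x}}=\vec{\mathrm{v}}$ where $\vec{\mathrm{v}}$ has all coordinates of absolute value at most $1/2$ (the "$+1/2$" slack coming from centering), so $||\vec{\mathrm{v}}|| \leq \sqrt{n}/2$.

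The heart of the argument is then a counting/volume estimate showing that $\lambda_1$ of this lattice is large, so that the required ratio $\lambda_1/\beta$ exceeds $||\vec{\mathrm{v}}||$ with $\beta = c\,2^{1/d}$. Any nonzero lattice vector shorter than some radius $r$ would have to be one of the $\mathcal{B}_n(r)$ integer points of norm $\leq r$ scaled appropriately; the probability over the random $\vec{\mathrm{a}}$ that such a short combination happens to be a lattice vector (i.e. that the corresponding linear form in the $a_i$ vanishes, up to the scaling) is at most $\mathcal{B}_n(r)/M$ by a union bound, using that $a_i$ is uniform in $[0,M)$ and $M = 2^{n/d}$. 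Plugging in \autoref{lemme:cardBoule} and \autoref{lemme:volBoule}, $\mathcal{B}_n(r) \leq V_n(r+\sqrt{n}/2)^n \leq (\sqrt{\pi\me/n}\,(r+\sqrt{n}/2))^n$; choosing $r$ just below $\beta \cdot \sqrt{n}/2 = c\,2^{1/d}\sqrt{n}/2$ and comparing exponents, the constraint $c < \sqrt{2/\pi\me}$ is exactly what makes $(\sqrt{\pi\me/n}\cdot r)^n \ll 2^{n/d} = M$, so the failure probability is $2^{-\Omega(n/d)} = 2^{-\Omega(n)}$ (here $d=o(1)$ is used). Hence with overwhelming probability $\lambda_1 > r \geq \beta\,||\vec{\mathrm{v}}||$, and also $\lambda_1/\beta < \lambda_1/2$ since $\beta > 2$ for $d$ small, so the instance is a legitimate $\BDD_{1,c2^{1/d}}^{||.||_\infty}$ instance whose solution is $\vec{\mathrm{s}}$.

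Finally I would note the bookkeeping: recovering $\vec{\mathrm{s}}$ from the oracle's output is immediate since $\vec{\mathrm{s}}$ is returned directly; the reduction and the verification against the given inner product are polynomial time; and the reduction to the non-modular case was already observed before the theorem statement. The main obstacle I expect is getting the constant in the volume bound exactly right — i.e. making sure the slack introduced by centering the error vector ($\sqrt{n}/2$ terms inside $\mathcal{B}_n$ and in $||\vec{\mathrm{v}}||$) does not erode the comparison, which is precisely where the sharp constant $\sqrt{2/\pi\me}$ and the hypothesis $d=o(1)$ (allowing lower-order terms to be absorbed) come into play.
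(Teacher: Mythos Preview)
Your approach is essentially the same as the paper's: encode the subset-sum instance as a $\BDD$ instance with $||\vec{\mathrm{s}}||_\infty\le 1$ and distance $\sqrt{n}/2$, then bound $\lambda_1$ by a union bound giving failure probability $\mathcal{B}_n(\beta\sqrt{n}/2)/M$, evaluated via \autoref{lemme:cardBoule} and \autoref{lemme:volBoule} to get $(\sqrt{\pi\me/2}(c+2^{-1/d}))^n=2^{-\Omega(n)}$ using $c<\sqrt{2/\pi\me}$ and $d=o(1)$. The only correction is that the paper's lattice is slightly simpler than your description: it takes $\vec{\mathrm{A}}=\begin{pmatrix}\vec{\mathrm{I}}\\ C\vec{\mathrm{a}}\end{pmatrix}$ (identity plus one extra \emph{row} $C\vec{\mathrm{a}}$, no extra column, no all-ones row) with target $\vec{\mathrm{b}}=(1/2,\dots,1/2,C\langle\vec{\mathrm{a}},\vec{\mathrm{s}}\rangle)$ and $C>c2^{1/d}\sqrt{n}/2$, so that any short lattice vector with $\langle\vec{\mathrm{a}},\vec{\mathrm{x}}\rangle\neq 0$ automatically has norm $\ge C$, reducing the analysis exactly to the case $\langle\vec{\mathrm{a}},\vec{\mathrm{x}}\rangle=0$ that you handle.
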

\begin{proof}
	With the matrix :
	\[ \vec{\mathrm{A}}=\begin{pmatrix} \vec{\mathrm{I}} \\ C\vec{\mathrm{a}} \end{pmatrix} \]
	for some $C>c2^{1/d}\sqrt{n}/2$ and $\vec{\mathrm{b}}=(1/2,\dots,1/2,C\langle \vec{\mathrm{a}},\vec{\mathrm{s}}\rangle )$, return $\BDD(\vec{\mathrm{A}},\vec{\mathrm{b}})$.
	It is clear that $||\vec{\mathrm{A}}\vec{\mathrm{s}}-\vec{\mathrm{b}}||= \sqrt{n}/2$.
	Now, let $\vec{\mathrm{x}}$ such that $||\vec{\mathrm{A}}\vec{\mathrm{x}}||=\lambda_1$.
	If $\langle \vec{\mathrm{a}},\vec{\mathrm{x}}\rangle \neq 0$, then $\lambda_1=||\vec{\mathrm{A}}\vec{\mathrm{x}}||\geq C$ therefore $\beta\geq c2^{1/d}$.
	Else, $\langle \vec{\mathrm{a}},\vec{\mathrm{x}}\rangle=0$.
	Without loss of generality, $x_0\neq 0$, we let $y=-(\sum_{i>0} a_ix_i)/x_0$ and the probability over $\vec{\mathrm{a}}$ that $\langle \vec{\mathrm{a}},\vec{\mathrm{x}}\rangle =0$ is :
	\[ \Pr[\langle \vec{\mathrm{a}},\vec{\mathrm{x}}\rangle =0]=\Pr[a_0=y]=\sum_{z=0}^{M-1} \Pr[y=z]\Pr[a_0=z] \leq \frac{1}{M}. \]
	Therefore, the probability of failure is at most, for sufficiently large $n$,
	\begin{align*}
		\mathcal{B}_n(\beta\sqrt{n}/2)/M \leq & (\sqrt{\pi\me/n})^n(c2^{1/d}\sqrt{n}/2+\sqrt{n}/2)^n/2^{n/d} \\
		= & \big(\sqrt{\pi\me/2}(c+2^{-1/d})\big)^n=2^{-\mathrm{\Omega}(n)}.\qedhere
	\end{align*}
\end{proof}

\begin{corollary}
	For any $d=o(1)$ and $d=\omega(\log n/n)$, we can solve the subset-sum problem of density $d$ with negligible probability of failure in time $2^{(n/2+o(n))/\ln(1/d)}.$
\end{corollary}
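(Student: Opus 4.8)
The plan is to compose the subset-sum--to--$\BDD$ reduction just established with the subexponential $\BDD$ solver of \autoref{BDD}. First I would invoke the preceding theorem with a fixed constant $c$ satisfying $0 < c < \sqrt{2/\pi/\me}$: since $d = o(1)$, its hypotheses hold, so in polynomial time and with negligible failure probability it turns a density-$d$ subset-sum instance into a single query to a $\BDD_{1,\beta}^{||.||_\infty}$ oracle with $\beta = c\,2^{1/d}$.

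Next, since the subset-sum secret $\vec{\mathrm{s}}$ has coordinates in $\zeroun$, it satisfies $||\vec{\mathrm{s}}||_\infty \leq 1 \leq 2$; hence every $\BDD_{1,\beta}^{||.||_\infty}$ instance produced by the reduction is in particular a $\BDD_{2,\beta}^{||.||_\infty}$ instance, and it suffices to solve the latter problem. I would then check that the pair $(B,\beta) = (2,\, c\,2^{1/d})$ meets the hypotheses of \autoref{BDD}: we have $\beta = \omega(1)$ because $d = o(1)$; writing $\log\beta = 1/d + \log c = (1+o(1))/d$, the assumption $d = \omega(\log n/n)$ gives $\log\beta = o(n/\log n)$, i.e.\ $\beta = 2^{o(n/\log n)}$; and $\log B = 1 = \bigo(\log\beta)$ since $\log\beta \to \infty$. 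The only mild technicality is that \autoref{thm:lwe}, on which \autoref{BDD} rests, is stated for $B \geq 2$ rather than $B = 1$; this is harmless, because enlarging the infinity-norm bound on the secret only weakens the $\BDD$ promise, so the $B = 2$ algorithm still answers our queries. \autoref{BDD} then recovers $\vec{\mathrm{s}}$ in time $2^{(n/2+o(n))/\ln(1+\log\beta/\log B)}$ with negligible failure probability.

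It then remains to simplify the exponent. With $\log B = 1$ and $\log\beta = (1+o(1))/d$, and using that $d = o(1)$ makes $\ln(1/d) \to \infty$ swallow the additive constants, we get $\ln\big(1+\log\beta/\log B\big) = \ln\big(1 + (1+o(1))/d\big) = \ln(1/d) + o(1) = (1+o(1))\ln(1/d)$. Hence the total running time is $2^{(n/2+o(n))/\ln(1/d)}$, which dominates the polynomial cost of the reduction, and the probability of failure is $2^{-\mathrm{\Omega}(n)}$ as a sum of two negligible terms, proving the corollary. I do not foresee a genuine obstacle: the argument is essentially bookkeeping, the only care needed being the asymptotic manipulation above and the observation that the side conditions $\beta = \omega(1)$ and $\beta = 2^{o(n/\log n)}$ of \autoref{BDD} are captured exactly by $d = o(1)$ and $d = \omega(\log n/n)$.
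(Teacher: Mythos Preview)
Your proposal is correct and follows exactly the intended route: combine the preceding subset-sum--to--$\BDD$ reduction with \autoref{BDD}, check its hypotheses using $d=o(1)$ and $d=\omega(\log n/n)$, and simplify $\ln(1+\log\beta/\log B)$ to $(1+o(1))\ln(1/d)$. The paper states the corollary without proof, and your careful handling of the $B\geq 2$ constraint (by relaxing to $B=2$) and of the asymptotic bookkeeping is precisely what is needed.
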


The cryptosystem of Lyubashevsky \textit{et al.}~\cite{lyubashevsky2010public} uses $2^{1/d}>10n\log^2 n$ and is therefore broken in time $2^{(\ln 2/2+o(1))n/\log \log n}$.
Current lattice reduction algorithms are slower than this one when $d=\omega(1/(\log n\log \log n))$.

\subsection{Sample Expander and application to $\mathsf{LWE}$ with binary errors}

\begin{definition}
	Let $q$ be a prime number.
	The problem $\mathsf{Small\mhyphen DecisionLWE}$ is to distinguish $(\vec{\mathrm{A}},\vec{\mathrm{b}})$ with $\vec{\mathrm{A}}$ sampled uniformly with $n$ columns and $m$ rows, $\vec{\mathrm{b}}=\vec{\mathrm{As}}+\vec{\mathrm{e}}$ such that $||\vec{\mathrm{s}}||^2+||\vec{\mathrm{e}}||^2\leq nk^2$ and $||\vec{\mathrm{s}}||\leq \sqrt{n}B$ from $(\vec{\mathrm{A}},\vec{\mathrm{b}})$ sampled uniformly.
	Also, the distribution $(\vec{\mathrm{s}},\vec{\mathrm{e}})$ is efficiently samplable.

	The problem $\mathsf{Small\mhyphen SearchLWE}$ is to find $\vec{\mathrm{s}}$ given $(\vec{\mathrm{A}},\vec{\mathrm{b}})$ with $\vec{\mathrm{A}}$ sampled uniformly and $\vec{\mathrm{b}}=\vec{\mathrm{As}}+\vec{\mathrm{e}}$ with the same conditions on $\vec{\mathrm{s}}$ and $\vec{\mathrm{e}}$.
\end{definition}

These problems are generalizations of $\mathsf{BinaryLWE}$ where $\vec{\mathrm{s}}$ and $\vec{\mathrm{e}}$ have coordinates sampled uniformly in $\zeroun$.
In this case, remark that each sample is a root of a known quadratic polynomial in the coordinates of $\vec{\mathrm{s}}$.
Therefore, it is easy to solve this problem when $m\geq n^2$.
For $m=\bigo(n)$, a Gr\"{o}bner basis algorithm applied on this system will (heuristically) have a complexity of $2^{\mathrm{\Omega}(n)}$~\cite{albrechtalgebraic}.
For $m=\bigo(n/\log n)$ and $q=n^{\bigo(1)}$, it has been shown to be harder than a lattice problem in dimension $\Theta(n/\log n)$~\cite{micciancio2013hardness}.

We will first prove the following 
theorem~\footnote{In~\cite{duc2015better}, the authors gave a short justification of a similar claim which is far from proven.}, 
with the coordinates of $\vec{\mathrm{x}}$ and $\vec{\mathrm{y}}$ distributed according to a samplable $\Dis$ :
\begin{theorem}
	Assume there is an efficient distinguisher which uses $k$ samples for $\DecisionLWE$ (respectively a solver for $\SearchLWE$) with error distribution $\langle \vec{\mathrm{s}},\vec{\mathrm{y}}\rangle+\langle \vec{\mathrm{e}},\vec{\mathrm{x}}\rangle$ of advantage (resp. success probability) $\epsilon$.

	Then, either there is an efficient distinguisher for $\DecisionLWE$ with samples and secret taken uniformly, and error distribution $\Dis$ in dimension $m-1$ and with $n+m$ samples of advantage $\frac{\xi}{4qk}-q^{-n}-q^{-m}$;
	or there is an efficient distinguisher of advantage $\epsilon-\xi$ for $\mathsf{Small\mhyphen Decision\mhyphen LWE}$ (resp. solver of success probability $\epsilon-\xi$ for $\mathsf{Small\mhyphen Search\mhyphen LWE}$).
\end{theorem}

\subsubsection{Sample Expander}
\label{sec:expansion}
The reduction is a generalization and impovement of D{\"o}ttling's reduction \cite{dottling2015low}, which was made for 
$\mathsf{LPN}$.

\begin{lemma}[Hybrid lemma~\cite{goldreich2004foundations}]
	We are given a distinguisher between the distributions $\Dis_0$ and $\Dis_1$ of advantage $\epsilon$ which needs $m$ samples.
	Then, there exists a distinguisher between the distributions $\Dis_0$ and $\Dis_1$ of advantage $\epsilon/m$ which needs one sample, $m$ samples of distribution $\Dis_0$ and $m$ samples of distribution $\Dis_1$.
\end{lemma}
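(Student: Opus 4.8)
The plan is to carry out the standard hybrid argument. First I would introduce the hybrid distributions $H_0,\dots,H_m$ on $m$-tuples, where $H_j$ consists of $j$ independent samples from $\Dis_1$ followed by $m-j$ independent samples from $\Dis_0$; in particular $H_0 = \Dis_0^{\otimes m}$ and $H_m = \Dis_1^{\otimes m}$. Writing $p_j$ for the probability that the given $m$-sample distinguisher $D$ outputs $1$ on an input drawn from $H_j$, the hypothesis is $|p_m-p_0|\geq \epsilon$; by flipping the output bit of $D$ if necessary, I may assume the signed inequality $p_m-p_0\geq \epsilon$. The telescoping identity $p_m-p_0=\sum_{j=1}^m (p_j-p_{j-1})$ then yields $\sum_{j=1}^m (p_j-p_{j-1})\geq \epsilon$.

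Next I would exhibit the one-sample distinguisher $D'$: on a challenge input $z$ (drawn from $\Dis_0$ or $\Dis_1$), it picks an index $j$ uniformly in $\{1,\dots,m\}$, draws $j-1$ fresh samples from $\Dis_1$ and $m-j$ fresh samples from $\Dis_0$, assembles the $m$-tuple consisting of those $j-1$ samples, then $z$, then those $m-j$ samples, runs $D$ on it, and returns $D$'s output. The key observation is that if $z\sim\Dis_1$ this tuple has exactly the law of $H_j$, whereas if $z\sim\Dis_0$ it has exactly the law of $H_{j-1}$. Hence, averaging over the uniform choice of $j$,
\[ \Pr[D'=1 \mid z\sim\Dis_1] - \Pr[D'=1 \mid z\sim\Dis_0] = \frac{1}{m}\sum_{j=1}^m (p_j-p_{j-1}) = \frac{p_m-p_0}{m}\geq \frac{\epsilon}{m}, \]
so $D'$ distinguishes $\Dis_0$ from $\Dis_1$ with advantage at least $\epsilon/m$. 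It consumes one challenge sample plus at most $m-1$ fresh samples of $\Dis_1$ and at most $m-1$ fresh samples of $\Dis_0$ (one can always draw $m$ of each and discard the unused ones), matching the claimed resource bound.

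There is no genuine obstacle here, as this is the textbook hybrid argument, but two points merit care in the write-up: the reduction to the signed inequality $p_m-p_0\geq\epsilon$, which is what prevents cancellation when averaging the increments $p_j-p_{j-1}$ over $j$; and the fact that $D'$ must itself generate the auxiliary samples, which is precisely why the statement charges it $m$ fresh samples of $\Dis_0$ and $m$ fresh samples of $\Dis_1$ in addition to the single challenge sample.
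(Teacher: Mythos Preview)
Your proof is correct and follows essentially the same hybrid argument as the paper: pick a uniform hybrid index, plug the challenge sample into that position, and telescope. The only cosmetic differences are that the paper orders its hybrids with $\Dis_0$-samples first rather than $\Dis_1$-samples first, and it computes the advantage directly as $\frac{1}{m}|p_m-p_0|$ without your preliminary bit-flip---which is in fact unnecessary here, since the telescoping sum is exact (not a triangle-inequality bound), so no cancellation can occur.
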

\begin{proof}
	The distinguisher samples an integer $i$ uniformly between $0$ and $m-1$.
	It then returns the output of the oracle, whose input is $i$ samples of $\Dis_0$, the given sample, and $m-i-1$ samples of $\Dis_1$.

	Let $p_i$ the probability that the oracle outputs $1$ if its input is given by $i$ samples of $\Dis_0$ followed by $m-i$ samples of $\Dis_1$.
	Then, if the sample comes from $\Dis_0$, the probability of outputting $1$ is $\frac{1}{m}\sum_{i=1}^{m} p_i$.
	Else, it is $\frac{1}{m}\sum_{i=0}^{m-1} p_i$.
	Therefore, the advantage of our algorithm is $\frac{1}{m}|p_m-p_0|=\frac{\epsilon}{m}$.
\end{proof}

In the following, we will let, for $\Dis$ some samplable distribution over $\Zq$, $\vec{\mathrm{x}}$ be sampled according to $\Dis^m$, $\vec{\mathrm{y}}$ be sampled according to $\Dis^n$, $\vec{\mathrm{z}}$ according to $(\vec{\mathrm{e}}|\vec{\mathrm{s}})$, $\vec{\mathrm{u}}$ be sampled uniformly in $(\Zq)^m$, $v$ be sampled uniformly in $\Zq$, $\vec{\mathrm{w}}$ in $(\Zq)^n$.

\begin{definition}
	The problem $\mathsf{Knapsack\mhyphen LWE}$ is to distinguish between $(\vec{\mathrm{G}},\vec{\mathrm{c}})$ where $\vec{\mathrm{G}}$ is always sampled uniformly in $(\Zq)^{m\times (m+n)}$, and $\vec{\mathrm{c}}$ is either uniform, or sampled according to $\vec{\mathrm{G}}\Dis^{m+n}$.

	The problem $\mathsf{Extended\mhyphen LWE}$ is to distinguish between $(\vec{\mathrm{A}},^t\vec{\mathrm{x}}\vec{\mathrm{A}}+\vec{\mathrm{y}},\vec{\mathrm{z}},\langle (\vec{\mathrm{x}}|\vec{\mathrm{y}}),\vec{\mathrm{z}} \rangle)$ and $(\vec{\mathrm{A}},\vec{\mathrm{w}},\vec{\mathrm{z}},\langle (\vec{\mathrm{x}}|\vec{\mathrm{y}}),\vec{\mathrm{z}} \rangle)$.

	The problem $\mathsf{First\mhyphen is\mhyphen errorless\mhyphen LWE}$ is to distinguish between $(\vec{\mathrm{A}},^t\vec{\mathrm{x}}\vec{\mathrm{A}}+\vec{\mathrm{y}},\vec{\mathrm{u}},\langle \vec{\mathrm{x}},\vec{\mathrm{u}}\rangle)$ and $(\vec{\mathrm{A}},\vec{\mathrm{w}},\vec{\mathrm{u}},\langle \vec{\mathrm{x}},\vec{\mathrm{u}} \rangle)$.
\end{definition}

The following lemma comes from \cite{micciancio2011pseudorandom}, we give a sketch of their proof.
\begin{lemma}
	The problems $\mathsf{Knapsack\mhyphen LWE}$ and $\DecisionLWE$ in dimension $m$ with $n+m$ samples where are equivalent: there are reductions in both ways which reduce the advantage by $q^{-n-1}$.
\end{lemma}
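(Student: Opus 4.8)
The plan is to run the standard kernel--duality argument between $\mathsf{LWE}$ and its knapsack dual, exactly as in~\cite{micciancio2011pseudorandom}, and to notice that it is symmetric, so the same construction yields reductions both ways. Throughout I stack the $\mathsf{LWE}$ samples as a pair $(\vec{\mathrm{A}},\vec{\mathrm{b}})$ with $\vec{\mathrm{b}}=\vec{\mathrm{A}}\vec{\mathrm{s}}+\vec{\mathrm{e}}$, where $\vec{\mathrm{A}}$ has one row per sample (hence $n+m$ rows), the secret $\vec{\mathrm{s}}$ is uniform, and $\vec{\mathrm{e}}$ has coordinates drawn independently from $\Dis$; and I let $\vec{\mathrm{G}}$ denote a matrix spanning the left kernel of $\vec{\mathrm{A}}$, so that $\vec{\mathrm{G}}$ has $n+m$ columns and $\vec{\mathrm{G}}\vec{\mathrm{A}}=\vec{0}$. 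Since $q$ is prime, all matrices involved are generic up to an error term that will account for the $q^{-n-1}$ loss in advantage.

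For the direction $\DecisionLWE\Rightarrow\mathsf{Knapsack\mhyphen LWE}$: given $(\vec{\mathrm{A}},\vec{\mathrm{b}})$, the matrix $\vec{\mathrm{A}}$ has full column rank except with probability at most $q^{-n-1}$ (a direct count of rank-deficient matrices over $\Zq$); condition on that event. By Gaussian elimination, compute a \emph{uniformly random} full-row-rank matrix $\vec{\mathrm{G}}$ spanning the left kernel of $\vec{\mathrm{A}}$ — take any basis of that kernel and left-multiply it by a uniform invertible matrix of the appropriate size — and hand $(\vec{\mathrm{G}},\vec{\mathrm{G}}\vec{\mathrm{b}})$ to the knapsack distinguisher. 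One has $\vec{\mathrm{G}}\vec{\mathrm{b}}=\vec{\mathrm{G}}\vec{\mathrm{A}}\vec{\mathrm{s}}+\vec{\mathrm{G}}\vec{\mathrm{e}}=\vec{\mathrm{G}}\vec{\mathrm{e}}$, which is distributed as $\vec{\mathrm{G}}\Dis^{m+n}$ in the $\mathsf{LWE}$ case and uniformly (because $\vec{\mathrm{G}}$ has full row rank) in the uniform case, so the only thing to check is that the marginal law of $\vec{\mathrm{G}}$ is close to uniform. This is the heart of the argument: the joint distribution of $(\vec{\mathrm{A}},\vec{\mathrm{G}})$ we produced — $\vec{\mathrm{A}}$ uniform of full column rank, then $\vec{\mathrm{G}}$ uniform and full-row-rank among matrices killing $\vec{\mathrm{A}}$ on the left — coincides with ``$\vec{\mathrm{G}}$ uniform of full row rank, then $\vec{\mathrm{A}}$ uniform and full-column-rank among matrices killed by $\vec{\mathrm{G}}$ on the right'', since both merely describe the uniform choice of a complementary pair of kernels. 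Hence the marginal of $\vec{\mathrm{G}}$ is uniform among full-row-rank matrices, which is within $q^{-n-1}$ of the uniform matrix distribution used by $\mathsf{Knapsack\mhyphen LWE}$; adding the two rank events gives the claimed loss.

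The converse $\mathsf{Knapsack\mhyphen LWE}\Rightarrow\DecisionLWE$ is the mirror image. Given $(\vec{\mathrm{G}},\vec{\mathrm{c}})$ with $\vec{\mathrm{G}}$ uniform, condition on $\vec{\mathrm{G}}$ having full row rank (again a $q^{-n-1}$ event). Compute a uniformly random full-column-rank $\vec{\mathrm{A}}$ spanning the right kernel of $\vec{\mathrm{G}}$, and a uniformly random preimage $\vec{\mathrm{b}}$ of $\vec{\mathrm{c}}$ under $\vec{\mathrm{G}}$ (one particular solution of $\vec{\mathrm{G}}\vec{\mathrm{b}}=\vec{\mathrm{c}}$ plus $\vec{\mathrm{A}}\vec{\mathrm{t}}$ for a uniform $\vec{\mathrm{t}}$), and feed $(\vec{\mathrm{A}},\vec{\mathrm{b}})$ to the $\mathsf{LWE}$ distinguisher. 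If $\vec{\mathrm{c}}$ is uniform then $\vec{\mathrm{b}}$ is uniform, matching the uniform side of $\DecisionLWE$; if $\vec{\mathrm{c}}=\vec{\mathrm{G}}\vec{\mathrm{d}}$ with the coordinates of $\vec{\mathrm{d}}$ from $\Dis$, then the preimage set is $\vec{\mathrm{d}}+\ker_{\mathrm{right}}\vec{\mathrm{G}}=\vec{\mathrm{d}}+\mathrm{colspan}(\vec{\mathrm{A}})$, so $\vec{\mathrm{b}}=\vec{\mathrm{A}}\vec{\mathrm{t}}+\vec{\mathrm{d}}$ is precisely an $\mathsf{LWE}$ sample vector with uniform secret $\vec{\mathrm{t}}$ and error from $\Dis$; and by the same duality $\vec{\mathrm{A}}$ is $q^{-n-1}$-close to uniform. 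All the computations — Gaussian elimination, sampling a uniform invertible matrix, sampling a random preimage — run in polynomial time over the prime field $\Zq$.

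The only genuinely delicate point is the duality lemma itself, namely checking that the matrix one reconstructs from a kernel is, up to the stated rank events, uniformly distributed, together with the precise bound on the probability of rank deficiency that produces the $q^{-n-1}$ term; everything else is routine linear algebra, which is why only a sketch is warranted.
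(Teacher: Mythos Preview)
Your proof is correct and follows essentially the same kernel--duality argument as the paper: from $\mathsf{LWE}$ to $\mathsf{Knapsack}$ by passing to a uniform basis $\vec{\mathrm{G}}$ of the left kernel of $\vec{\mathrm{A}}$ and sending $(\vec{\mathrm{G}},\vec{\mathrm{G}}\vec{\mathrm{b}})$, and conversely by taking a uniform basis $\vec{\mathrm{A}}$ of the right kernel of $\vec{\mathrm{G}}$ together with a uniform preimage $\vec{\mathrm{b}}$ of $\vec{\mathrm{c}}$, with the $q^{-n-1}$ loss coming from the rank-deficiency events. Your explicit justification that the marginal of the reconstructed matrix is uniform (via the symmetry of the joint law of $(\vec{\mathrm{A}},\vec{\mathrm{G}})$) is a detail the paper leaves implicit, but the route is the same.
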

\begin{proof}
	Given the $\DecisionLWE$ problem $(\vec{\mathrm{A}},\vec{\mathrm{b}})$, we sample a uniform basis 
	$\vec{\mathrm{G}}$ of the left kernel of $\vec{\mathrm{A}}$ and outputs $(\vec{\mathrm{G}},\vec{\mathrm{Gb}})$.
	If $\vec{\mathrm{b}}=\vec{\mathrm{As}}+\vec{\mathrm{e}}$, we have therefore $\vec{\mathrm{Gb}}$ distributed as $\vec{\mathrm{Ge}}$.

	Given the $\mathsf{Knapsack\mhyphen LWE}$ problem $(\vec{\mathrm{G}},\vec{\mathrm{c}})$, we sample a uniform basis $\vec{\mathrm{A}}$ of the right kernel of $\vec{\mathrm{A}}$, and a uniform $\vec{\mathrm{b}}$ such that $\vec{\mathrm{Gb}}=\vec{\mathrm{c}}$ and outputs $(\vec{\mathrm{A}},\vec{\mathrm{b}})$.
	If $\vec{\mathrm{c}}=\vec{\mathrm{Ge}}$, then $\vec{\mathrm{b}}$ is distributed as $\vec{\mathrm{As}}+\vec{\mathrm{e}}$ where $\vec{\mathrm{s}}$ is uniform.

	Both reductions map the distributions to their counterparts, except when $\vec{\mathrm{A}}$ or $\vec{\mathrm{G}}$ are not full rank, which happens with probability at most $q^{-n-1}$, hence the result.
\end{proof}

The lemma is a slight modification of the lemma given in \cite{alperin2012circular}.
\begin{lemma}
	Given a distinguisher of advantage $\epsilon$ for $\mathsf{ExtendedLWE}$, there is a distinguisher of advantage $\frac{\epsilon(1-1/q)-2q^{-n}}{q}$ for $\DecisionLWE$ in dimension $m$, uniform secret, with $n+m$ uniform samples of error distribution $\Dis$.
\end{lemma}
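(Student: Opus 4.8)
The plan is to convert a distinguisher $D$ for $\mathsf{ExtendedLWE}$ into a distinguisher $A$ for plain $\DecisionLWE$ in dimension $m$ with $n+m$ samples and error $\Dis$. The two ingredients are secret--error switching, to reshape the plain instance into the form expected by $\mathsf{ExtendedLWE}$ (secret and error both distributed as $\Dis$), and one uniform guess of a single scalar of $\Zq$ to fabricate the leakage coordinate, which is where the multiplicative $1/q$ in the advantage comes from.

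Given a plain instance, I split its coefficient matrix as $\binom{\vec{\mathrm{B}}_1}{\vec{\mathrm{B}}_2}$ with $\vec{\mathrm{B}}_1\in(\Zq)^{n\times m}$, $\vec{\mathrm{B}}_2\in(\Zq)^{m\times m}$, and its right-hand side as $(\vec{\mathrm{b}}_1|\vec{\mathrm{b}}_2)$; I sample $\vec{\mathrm{z}}=(\vec{\mathrm{e}}|\vec{\mathrm{s}})$ from the (efficiently samplable) secret--error distribution of $\mathsf{Small\mhyphen DecisionLWE}$. If $\vec{\mathrm{B}}_2$ is singular I output a uniform bit; this is absorbed in the $(1-1/q)$ factor. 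Otherwise I set $\vec{\mathrm{A}}=-{}^{t}(\vec{\mathrm{B}}_1\vec{\mathrm{B}}_2^{-1})$ and $\vec{\mathrm{c}}=\vec{\mathrm{b}}_1-\vec{\mathrm{B}}_1\vec{\mathrm{B}}_2^{-1}\vec{\mathrm{b}}_2$. A direct computation---this is the transformation of \autoref{th:petitsecret}---shows that if the input is a genuine $\mathsf{LWE}$ sample with secret $\vec{\mathrm{x}}$ and error $(\vec{\mathrm{g}}_1|\vec{\mathrm{g}}_2)$, then ${}^{t}\vec{\mathrm{c}}={}^{t}\vec{\mathrm{g}}_2\vec{\mathrm{A}}+{}^{t}\vec{\mathrm{g}}_1$, which is an $\mathsf{ExtendedLWE}$-shaped sample with secret $\vec{\mathrm{g}}_2\sim\Dis^m$ and error $\vec{\mathrm{g}}_1\sim\Dis^n$, whereas if the input is uniform then $\vec{\mathrm{c}}$ is uniform; and $\vec{\mathrm{A}}$ is uniform in both worlds.

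It remains to produce the leakage $\langle(\vec{\mathrm{g}}_2|\vec{\mathrm{g}}_1),\vec{\mathrm{z}}\rangle=\langle\vec{\mathrm{g}}_2,\vec{\mathrm{e}}\rangle+\langle\vec{\mathrm{g}}_1,\vec{\mathrm{s}}\rangle$. Writing $\vec{\mathrm{g}}_i=\vec{\mathrm{b}}_i-\vec{\mathrm{B}}_i\vec{\mathrm{x}}$, this equals $\langle\vec{\mathrm{b}}_2,\vec{\mathrm{e}}\rangle+\langle\vec{\mathrm{b}}_1,\vec{\mathrm{s}}\rangle-\langle\vec{\mathrm{x}},{}^{t}\vec{\mathrm{B}}_2\vec{\mathrm{e}}+{}^{t}\vec{\mathrm{B}}_1\vec{\mathrm{s}}\rangle$, where every term is computable except the last, a fixed linear form in the hidden secret $\vec{\mathrm{x}}$. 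I guess that scalar as a uniform $\gamma\in\Zq$ and call $D$ on $(\vec{\mathrm{A}},{}^{t}\vec{\mathrm{c}},\vec{\mathrm{z}},\langle\vec{\mathrm{b}}_2,\vec{\mathrm{e}}\rangle+\langle\vec{\mathrm{b}}_1,\vec{\mathrm{s}}\rangle-\gamma)$. Because $\gamma$ is fresh and independent, it equals the true value with probability exactly $1/q$, and conditioned on that (and on $\vec{\mathrm{B}}_2$ invertible) the tuple fed to $D$ is exactly $\mathsf{ExtendedLWE}$-yes in the $\mathsf{LWE}$ world; in the uniform world the second and, for any $\gamma$, the fourth coordinate are uniform and independent of the rest.

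The step I expect to be the real obstacle is bounding the effect of a wrong guess (probability $1-1/q$), on which the leakage coordinate is again uniform and independent: one has to show that the conditional behaviour of $D$ on these spoiled tuples is the same in both worlds, so that it cancels in $\mathrm{Adv}(A)$ rather than overwhelming it, and one must also reconcile the fact that $\mathsf{ExtendedLWE}$-no still carries the genuine, non-uniform leakage $\langle(\vec{\mathrm{x}}|\vec{\mathrm{y}}),\vec{\mathrm{z}}\rangle$ while our uniform-world tuple carries a uniform one. This is exactly the point at which the more careful argument of \cite{alperin2012circular} is adapted---through an extra mild re-randomisation and by letting $A$ flip an independent coin except on the good event---and the leftover slack from the degenerate rank events gives the additive $2q^{-n}$. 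Granting this, $\mathrm{Adv}(A)\geq\tfrac1q\big(\Pr[\vec{\mathrm{B}}_2\text{ invertible}]\,\epsilon-2q^{-n}\big)\geq\frac{\epsilon(1-1/q)-2q^{-n}}{q}$; $A$ is polynomial time, and $q$ prime is used so that $\vec{\mathrm{B}}_2$ is invertible with the stated probability and the linear algebra over $\Zq$ is valid.
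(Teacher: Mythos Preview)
Your reduction correctly sets up secret--error switching and correctly locates the crux: what happens when the guessed leakage $\gamma$ is wrong. You then defer this to ``an extra mild re-randomisation'' that you do not supply, and conclude by ``granting this''. But that step is the entire content of the lemma, and as written the argument does not go through. When $\gamma$ is wrong, the tuple you feed $D$ in the $\mathsf{LWE}$ world still carries a genuinely structured second coordinate ${}^t\vec{\mathrm{c}}={}^t\vec{\mathrm{g}}_2\vec{\mathrm{A}}+{}^t\vec{\mathrm{g}}_1$, while in the uniform world that coordinate is uniform; $D$ may well distinguish these with advantage close to $\epsilon$, and nothing forces the wrong-guess contributions to cancel. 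You also cannot ``flip an independent coin except on the good event'', since $A$ has no way to recognise the good event. The second mismatch you note---that $\mathsf{Extended\mhyphen LWE}$-no carries the genuine leakage while your uniform-world tuple carries a uniform one---is likewise unresolved.

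The paper supplies exactly the missing device, and it is not mild. It first passes through the $\mathsf{Knapsack\mhyphen LWE}$ formulation $(\vec{\mathrm{G}},\vec{\mathrm{c}})$, samples a fresh $\vec{\mathrm{e'}}\sim\Dis^{n+m}$ and a uniform $\vec{\mathrm{t}}$, and sets $\vec{\mathrm{G'}}=\vec{\mathrm{G}}-\vec{\mathrm{t}}\,{}^t\vec{\mathrm{z}}$, $\vec{\mathrm{c'}}=\vec{\mathrm{c}}-\langle\vec{\mathrm{z}},\vec{\mathrm{e'}}\rangle\vec{\mathrm{t}}$, so that $\vec{\mathrm{c'}}=\vec{\mathrm{G'e}}+\langle\vec{\mathrm{z}},\vec{\mathrm{e}}-\vec{\mathrm{e'}}\rangle\vec{\mathrm{t}}$. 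The ``guess'' is now the event $\langle\vec{\mathrm{z}},\vec{\mathrm{e'}}\rangle=\langle\vec{\mathrm{z}},\vec{\mathrm{e}}\rangle$: when it holds (probability $\geq 1/q$) the instance is statistically unchanged and the leakage value is \emph{known}, equal to $\langle\vec{\mathrm{z}},\vec{\mathrm{e'}}\rangle$; when it fails, the uniform $\vec{\mathrm{t}}$ multiplied by a nonzero scalar drives $(\vec{\mathrm{G'}},\vec{\mathrm{c'}})$ to uniform \emph{regardless of whether the input was $\mathsf{LWE}$ or uniform}, which is precisely what makes the wrong-guess cases identical in both worlds and hence cancel. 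Only afterwards does one map back to $\DecisionLWE$ and apply secret--error switching; the $q^{-n}$ losses collect the rank-defect events from the Knapsack equivalence, and the $(1-1/q)$ factor accounts for secret--error switching failing to output $n$ samples. Your overall architecture (switch, then pay $1/q$ for the leakage) is right, but the re-randomisation that turns a wrong guess into a \emph{fully} uniform instance---not merely a uniform leakage coordinate---is the heart of the proof and is absent from your proposal.
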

\begin{proof}
	Using the previous lemma, we can assume that we want to solve a $\mathsf{KnapsackLWE}$ problem.
	Let $(\vec{\mathrm{G}},\vec{\mathrm{c}})$ be its input, with $\vec{\mathrm{c}}=\vec{\mathrm{Ge}}$.
	We start by sampling $\vec{\mathrm{z}}$, $\vec{\mathrm{t}}$ uniformly over $(\Zq)^m$ and $\vec{\mathrm{e'}}$ according to $\Dis^{n+m}$.

	We then compute $\vec{\mathrm{G'}}=\vec{\mathrm{G}}-\vec{\mathrm{t}}^t\vec{\mathrm{z}}$ and $\vec{\mathrm{c'}}=\vec{\mathrm{c}}-\langle \vec{\mathrm{z}},\vec{\mathrm{e'}} \rangle \vec{\mathrm{t}}$.
	Remark that $\vec{\mathrm{G'}}$ is uniform and $\vec{\mathrm{c'}}=\vec{\mathrm{Ge}}-\langle \vec{\mathrm{z}},\vec{\mathrm{e'}} \rangle \vec{\mathrm{t}}=\vec{\mathrm{G'e}}+\langle \vec{\mathrm{z}},\vec{\mathrm{e}}-\vec{\mathrm{e'}} \rangle \vec{\mathrm{t}}$.
	Therefore, if $\langle \vec{\mathrm{z}},\vec{\mathrm{e'}}\rangle=\langle \vec{\mathrm{z}},\vec{\mathrm{e}} \rangle$, which happens with probability at least $\frac{1}{q}$, $(\vec{\mathrm{G'}},\vec{\mathrm{c'}})$ comes from the same distribution as $(\vec{\mathrm{G}},\vec{\mathrm{c}})$.
	Also, $\langle \vec{\mathrm{z}},\vec{\mathrm{e}} \rangle$ is known.
	Else, since $\vec{\mathrm{t}}$ is uniform, $(\vec{\mathrm{G'}},\vec{\mathrm{c'}})$ is uniform.

	We then perform the previous reduction to a $\DecisionLWE$ problem, where $\langle \vec{\mathrm{z}},\vec{\mathrm{e}} \rangle$ is known.
	Finally, we use \autoref{th:petitsecret} to reduce to $\mathsf{Extended\mhyphen LWE}$.
	If it doesn't output $n$ samples, which happens with probability at most $1/q$, our distinguisher returns a uniform boolean.
	Else, remark that we have $(\vec{\mathrm{x}}|\vec{\mathrm{y}})=\vec{\mathrm{e}}$ so that $\langle (\vec{\mathrm{x}}|\vec{\mathrm{y}}),\vec{\mathrm{z}} \rangle$ is known and we use the given distinguisher.
\end{proof}

\begin{lemma}
	Assume there is a distinguisher of advantage $\epsilon$ for $\mathsf{First\mhyphen is\mhyphen errorless\mhyphen LWE}$.
	Then, there is a distinguisher of advantage at least $\epsilon(1-1/q)-q^{-m}$ and $\DecisionLWE$ in dimension $m-1$ with uniform secret and $n+m$ uniform samples of error distribution $\Dis$.
\end{lemma}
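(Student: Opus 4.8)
The plan is to follow the template of the two preceding reductions: from the assumed $\mathsf{First\mhyphen is\mhyphen errorless\mhyphen LWE}$ distinguisher build a distinguisher for $\DecisionLWE$ in dimension $m-1$ by transforming an instance of the latter (with $n+m$ uniform samples and error distribution $\Dis$) into an instance of the former, running the assumed distinguisher on it, and outputting its bit. First I would apply the Knapsack equivalence lemma above to rewrite the input $(\vec{A},\vec{b})$ as a pair $(\vec{G},\vec{c})$ with $\vec{G}$ uniform and $\vec{c}$ either uniform or equal to $\vec{G}\vec{e}$ for $\vec{e}$ with i.i.d.\ $\Dis$ coordinates; I split $\vec{e}=(\vec{x}\mid\vec{y})$ with $\vec{x}$ the first $m$ coordinates and $\vec{y}$ the remaining $n$. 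What must then be produced is a uniform $\vec{u}$, the scalar $\langle\vec{x},\vec{u}\rangle$, a uniform matrix $\vec{A}'$, and the vector $^{t}\vec{x}\vec{A}'+\vec{y}$.

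The key step is an algebraic normalization of $(\vec{G},\vec{c})$: left-multiply the pair by a uniformly chosen invertible matrix whose first row lies in the left kernel of the block of $\vec{G}$ acting on $\vec{y}$ (this kernel is one-dimensional exactly when that block has full rank, an event whose complement is absorbed by the $q^{-m}$ term). The first row of the product then depends only on $\vec{x}$, which yields the errorless coordinate $(\vec{u},\langle\vec{x},\vec{u}\rangle)$; a deterministic clean-up on the remaining $n$ rows (inverting their $\vec{y}$-block, using the same full-rank event) puts them into the shape $^{t}\vec{x}\vec{A}'+\vec{y}$. One then checks the two cases: when $\vec{c}=\vec{G}\vec{e}$, invariance of a uniform basis of a kernel under multiplication by a uniform invertible matrix, together with independence of the $\Dis$ coordinates of $\vec{x}$ and $\vec{y}$, shows that the output is distributed exactly as a real $\mathsf{First\mhyphen is\mhyphen errorless\mhyphen LWE}$ instance; when $\vec{c}$ is uniform, the randomizing matrix makes the first output block $(\vec{u},v)$ with $v$ uniform and independent, so the output is of the uniform form. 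As in the preceding proof, a guessing/consistency step of success probability $1/q$ (contributing the multiplicative factor $1-1/q$ after conditioning), together with the degeneracy event (contributing the additive $q^{-m}$), accounts for the stated loss, and feeding the constructed instance to the assumed advantage-$\epsilon$ distinguisher gives advantage at least $\epsilon(1-1/q)-q^{-m}$ for $\DecisionLWE$ in dimension $m-1$.

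The main obstacle I expect is the distributional bookkeeping of the normalization: verifying that after the two left-multiplications the pair $(\vec{A}',{}^{t}\vec{x}\vec{A}'+\vec{y})$ has exactly the prescribed law — in particular that conditioning on the relevant blocks being invertible does not bias $\vec{A}'$ away from uniform and does not correlate $\vec{u}$ with $\vec{A}'$ — and pinning down precisely which bad event is responsible for the multiplicative $1-1/q$ versus the additive $q^{-m}$. The probabilistic ingredients (uniformity of a random basis of a kernel, rank statistics of random matrices over $\F_q$, invariance under a uniform change of basis) are standard and are essentially the ones already used in the preceding lemmas, so the difficulty is in assembling them carefully rather than in any new idea.
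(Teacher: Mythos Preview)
Your Knapsack-based route is a reasonable alternative, and the key algebraic idea—picking a row in the left kernel of the $\vec{y}$-block of $\vec{G}$ to extract an errorless inner product $\langle\vec{x},\vec{u}\rangle$, then inverting the residual $\vec{y}$-block on the remaining rows—does produce an output of the right shape. But this is not the paper's argument, and your loss accounting does not match your own construction.

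The paper proceeds more directly. Starting from $\DecisionLWE$ in dimension $m-1$ with $n+m$ samples, append one fresh uniform coordinate to every sample vector and a fresh uniform coordinate $s_{m-1}$ to the secret, obtaining $\DecisionLWE$ in dimension $m$ together with a \emph{known errorless} sample $(\vec{e}_{m-1},s_{m-1})$ where $\vec{e}_{m-1}=(0,\dots,0,1)$. Then run the secret-error switch of \autoref{th:petitsecret} using exactly $m$ of the samples for the matrix; this succeeds with probability at least $1-1/q$ (that is the multiplicative factor—on failure, output a random bit). The remaining $n$ samples become $(\vec{A}',\,{}^{t}\vec{x}\vec{A}'+\vec{y})$, and the errorless sample, pushed through the same transformation, becomes $(\vec{u},\langle\vec{x},\vec{u}\rangle)$ with $\vec{u}$ a column of the inverse switching matrix. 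A column of a uniform invertible $m\times m$ matrix is uniform over nonzero vectors, at statistical distance $q^{-m}$ from truly uniform—this is the additive loss.

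The concrete gap in your proposal is the loss analysis. There is no ``guessing/consistency step of success probability $1/q$'' anywhere in the construction you describe; you invoke it only by analogy with the preceding $\mathsf{Extended\mhyphen LWE}$ reduction, but your normalization contains no guess. Likewise, the degeneracy event you point to—full rank of the block of $\vec{G}$ acting on $\vec{y}$—concerns an $(n{+}1)\times n$ matrix and fails with probability $O(q^{-2})$, not $q^{-m}$. Carried through carefully, your construction would yield different (and plausibly tighter) constants than the statement; the attempt to force the stated $(1-1/q)$ and $q^{-m}$ to arise from it is where the argument breaks.
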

\begin{proof}
	We take a $\DecisionLWE$ problem in dimension $m-1$ with $n+m$ samples, and extend it of one coordinate, the corresponding secret coordinate being sampled uniformly.
	We then switch the error and the secret, as in \autoref{th:petitsecret}, and return a random boolean if the reduction do not output $n$ samples.

	We have the errorless sample $(\vec{\mathrm{e}}_{m-1},s_{m-1})$ with $\vec{\mathrm{e}}_{m-1}=(0,\ldots,0,1)$.
	If we feed this sample to our reduction, it outputs an errorless sample $(\vec{\mathrm{u}},\langle \vec{\mathrm{x}},\vec{\mathrm{u}}\rangle)$, and the vectorial part is distributed as a column of a uniform invertible matrix.
	Therefore, the statistical distance between the distribution of $\vec{\mathrm{u}}$ and the uniform distribution is at most $q^{-m}$.
\end{proof}

\begin{theorem}
	Assume there is an efficient distinguisher which uses $k$ samples for $\DecisionLWE$ (respectively a solver for $\SearchLWE$) with error distribution $\langle \vec{\mathrm{s}},\vec{\mathrm{y}}\rangle+\langle \vec{\mathrm{e}},\vec{\mathrm{x}}\rangle$ of advantage (resp. success probability) $\epsilon$.

	Then, either there is an efficient distinguisher for $\DecisionLWE$ with samples and secret taken uniformly, and error distribution $\Dis$ in dimension $m-1$ and with $n+m$ samples of advantage $\frac{\xi}{4qk}-q^{-n}-q^{-m}$;
	or there is an efficient distinguisher of advantage $\epsilon-\xi$ for $\mathsf{Small\mhyphen Decision\mhyphen LWE}$ (resp. solver of success probability $\epsilon-\xi$ for $\mathsf{Small\mhyphen Search\mhyphen LWE}$).
\end{theorem}
\begin{proof}
	Let
	\begin{itemize}
		\item $\Dis_0$ be the distribution given by $(\vec{\mathrm{w}},\langle\vec{\mathrm{w}},\vec{\mathrm{s}}\rangle+\langle \vec{\mathrm{s}},\vec{\mathrm{y}} \rangle+\langle \vec{\mathrm{e}},\vec{\mathrm{x}} \rangle)$
		\item $\Dis_1$ be the distribution given by $(^t\vec{\mathrm{xA}}+\vec{\mathrm{y}},\langle \vec{\mathrm{x}},\vec{\mathrm{b}} \rangle)$
		\item $\Dis_2$ be the distribution given by $(^t\vec{\mathrm{xA}}+\vec{\mathrm{y}},\langle \vec{\mathrm{x}},\vec{\mathrm{u}} \rangle)$
		\item $\Dis_3$ be the distribution given by $(\vec{\mathrm{w}},v)$.
	\end{itemize}
	Let $\epsilon_i$ be the advantage of $D$ between $\Dis_i$ and $\Dis_{i+1}$.

	Let $(\vec{\mathrm{A}},\vec{\mathrm{c}},(\vec{\mathrm{e}}|\vec{\mathrm{s}}),\langle (\vec{\mathrm{x}}|\vec{\mathrm{y}}),(\vec{\mathrm{e}}|\vec{\mathrm{s}})\rangle)$ be an instance of the $\mathsf{Extended\mhyphen LWE}$ problem.
	We compute $\vec{\mathrm{b}}=\vec{\mathrm{As}}-\vec{\mathrm{e}}$, and use the hybrid lemma with distributions $\Dis_0$ and $\Dis_1$, and sample $(\vec{\mathrm{c}},\langle \vec{\mathrm{c}},\vec{\mathrm{s}} \rangle+\langle (\vec{\mathrm{x}}|\vec{\mathrm{y}}),(\vec{\mathrm{e}}|\vec{\mathrm{s}})\rangle)$.
	If $\vec{\mathrm{c}}=^t\vec{\mathrm{xA}}+\vec{\mathrm{y}}$, we have $\langle \vec{\mathrm{c}},\vec{\mathrm{s}} \rangle=\langle \vec{\mathrm{x}},\vec{\mathrm{b}} \rangle +\langle \vec{\mathrm{x}},\vec{\mathrm{e}} \rangle+\langle \vec{\mathrm{y}},\vec{\mathrm{s}} \rangle$ so that the sample is distributed according to $\Dis_1$.
	Therefore, we have a distinguisher for $\mathsf{Extended\mhyphen LWE}$ of advantage $\frac{\epsilon_0}{k}$.

	Clearly, there is a distinguisher of advantage $\epsilon_1$ against $\mathsf{Small\mhyphen Decision\mhyphen LWE}$. 

	Let $(\vec{\mathrm{A}},\vec{\mathrm{c}},\vec{\mathrm{u}},\vec{\mathrm{r}})$ be an instance of the $\mathsf{First\mhyphen is\mhyphen errorless\mhyphen LWE}$ problem.
	The hybrid lemma with distributions $\Dis_2$ and $\Dis_3$, and sample $(\vec{\mathrm{c}},\vec{\mathrm{r}})$ shows that there exist an efficient distinguisher for $\mathsf{First\mhyphen is\mhyphen errorless\mhyphen LWE}$ with advantage $\epsilon_3/k$.

	Since $\epsilon_0+\epsilon_1+\epsilon_2 \geq \epsilon$, using the previous lemmata, and the fact that $\DecisionLWE$ is harder in dimension $m$ than $m-1$, the theorem follows.

	For the search version, the indistinguishability of $\Dis_0$ and $\Dis_1$ is sufficient.
\end{proof}

D{\"o}ttling's reduction has a uniform secret, so that a problem solvable in time $2^{\bigo(\sqrt{n})}$ is transformed into a problem where the best algorithm takes time $2^{(1+o(1))n/\log n}$.
We do not have such dramatic loss here.

\subsubsection{Applications}

\begin{lemma}
	Let $\Dis=D_{\Z,\sigma}$ for $\sigma \geq 1$.

	Then, the advantage of a distinguisher for $\DecisionLWE$ of dimension $m$ with $m+n$ samples of noise distribution $\Dis$ is at most $\sqrt{q^n/\sigma^{n+m}}$.
	Furthermore, the bias of $\langle (\vec{\mathrm{s}}|\vec{\mathrm{e}}),(\vec{\mathrm{x}}|\vec{\mathrm{y}}) \rangle$, for fixed $\vec{\mathrm{s}}$ and $\vec{\mathrm{e}}$, is at least $\exp(-\pi(||\vec{\mathrm{s}}||^2+||\vec{\mathrm{e}}||^2)\sigma^2/q^2)$.
\end{lemma}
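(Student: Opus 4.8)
I would handle the two assertions separately; both rest on the identity that, for every real $t$ and every $\sigma>0$, Poisson summation applied to $z\mapsto\rho_\sigma(z)\exp(2i\pi tz)$ gives $\E_{z\sim D_{\Z,\sigma}}[\exp(2i\pi tz)]=\rho_{1/\sigma}(\Z+t)/\rho_{1/\sigma}(\Z)$, a real number in $(0,1]$ (real because $D_{\Z,\sigma}$ is even; at most $1$ because $\rho_{1/\sigma}(\Z+c)\le\rho_{1/\sigma}(\Z)$ for all $c$, again by Poisson). For the \emph{bias lower bound}, independence of the coordinates of $(\vec{\mathrm{x}}|\vec{\mathrm{y}})$ makes the bias of $\langle(\vec{\mathrm{s}}|\vec{\mathrm{e}}),(\vec{\mathrm{x}}|\vec{\mathrm{y}})\rangle$ equal to $\prod_j\rho_{1/\sigma}(\Z+w_j/q)/\rho_{1/\sigma}(\Z)$ over the entries $w_j$ of $(\vec{\mathrm{s}}|\vec{\mathrm{e}})$, so I would only need $\rho_{1/\sigma}(\Z+t)\ge\exp(-\pi\sigma^2t^2)\,\rho_{1/\sigma}(\Z)$; after cancelling a Gaussian this reads $\sum_k\exp(-\pi\sigma^2k^2)\exp(-2\pi\sigma^2kt)\ge\sum_k\exp(-\pi\sigma^2k^2)$, which follows by pairing $k$ with $-k$ and using $\cosh\ge1$. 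Taking the product over the coordinates yields bias at least $\exp(-\pi\sigma^2(\|\vec{\mathrm{s}}\|^2+\|\vec{\mathrm{e}}\|^2)/q^2)$.

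For the \emph{advantage}, I would condition on the uniform matrix $\vec{\mathrm{A}}\in(\Zq)^{(m+n)\times m}$ and let $P_{\vec{\mathrm{A}}}$ be the law of $\vec{\mathrm{b}}=\vec{\mathrm{As}}+\vec{\mathrm{e}}$ with $\vec{\mathrm{s}}$ uniform and $\vec{\mathrm{e}}\sim D_{\Z,\sigma}^{m+n}$; since the $\vec{\mathrm{A}}$-marginal coincides with that of the uniform law $U$, the optimal advantage equals $\E_{\vec{\mathrm{A}}}[\mathrm{SD}(P_{\vec{\mathrm{A}}},U)]$. Expanding $P_{\vec{\mathrm{A}}}$ in the additive characters of $(\Zq)^{m+n}$, the coefficient at $\vec{\mathrm{w}}$ is $\E_{\vec{\mathrm{s}}}[\exp(-2i\pi\langle{}^{t}\vec{\mathrm{A}}\vec{\mathrm{w}},\vec{\mathrm{s}}\rangle/q)]\cdot g(\vec{\mathrm{w}})$, where by the paper's lemma that a uniform character averages to zero on a nonzero argument the first factor is the indicator of ${}^{t}\vec{\mathrm{A}}\vec{\mathrm{w}}\equiv\vec{0}$, and $g(\vec{\mathrm{w}})=\prod_j\rho_{1/\sigma}(\Z+w_j/q)/\rho_{1/\sigma}(\Z)\in(0,1]$ is the very same quantity appearing in the bias computation. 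Cauchy--Schwarz over the $q^{m+n}$ atoms together with Parseval give $\mathrm{SD}(P_{\vec{\mathrm{A}}},U)\le\tfrac12\big(\sum_{\vec{\mathrm{w}}\neq\vec{0}}|\widehat{P_{\vec{\mathrm{A}}}}(\vec{\mathrm{w}})|^2\big)^{1/2}$, and averaging over $\vec{\mathrm{A}}$ — using $\Pr_{\vec{\mathrm{A}}}[{}^{t}\vec{\mathrm{A}}\vec{\mathrm{w}}\equiv\vec{0}]=q^{-m}$ for each fixed $\vec{\mathrm{w}}\neq\vec{0}$ (here $q$ prime) — together with Jensen would give advantage $\le\tfrac12(q^{-m}S)^{1/2}$ with $S=\sum_{\vec{\mathrm{w}}\in(\Zq)^{m+n}}g(\vec{\mathrm{w}})^2$. (One could equally bound $D_{\mathsf{KL}}$ by the $\chi^2$-divergence $q^{-m}S$ and invoke \autoref{KLerreur}.)

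It then remains to estimate $S$. Since $g$ factors over coordinates, $S=\big(\sum_{w=0}^{q-1}(\rho_{1/\sigma}(\Z+w/q)/\rho_{1/\sigma}(\Z))^2\big)^{m+n}$; bounding one factor of each square by $1$ and summing the other over $w$ (the cosets $\Z+w/q$ tile $\tfrac1q\Z$), the inner sum is at most $\rho_{1/\sigma}(\tfrac1q\Z)/\rho_{1/\sigma}(\Z)=\rho_{q/\sigma}(\Z)/\rho_{1/\sigma}(\Z)$; Poisson summation of each one-dimensional theta turns this into $q\cdot\rho_{\sigma/q}(\Z)/\rho_\sigma(\Z)$, and when $\sigma^2\le q$ one has $\rho_{\sigma/q}(\Z)\le\rho_{1/\sigma}(\Z)$ while $\rho_\sigma(\Z)=\sigma\,\rho_{1/\sigma}(\Z)$, so the inner sum is at most $q/\sigma$. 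Hence $S\le(q/\sigma)^{m+n}$ and the advantage is at most $\tfrac12\sqrt{q^n/\sigma^{m+n}}$. The only place asking for care is this last theta estimate (in the complementary range $\sigma>\sqrt{q}$ the claimed bound is already minuscule and a coarser version of the same Poisson computation suffices); the rest is a routine Fourier/Poisson calculation, and I do not anticipate a genuine obstacle.
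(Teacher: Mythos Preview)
Your argument is correct. For the bias, you and the paper both use the one-dimensional Poisson summation identity; your $\cosh$ pairing makes the normalization by $\rho_{1/\sigma}(\Z)$ explicit, whereas the paper simply writes $\rho_{1/\sigma}(\Z+\lambda/q)\ge\exp(-\pi\lambda^2\sigma^2/q^2)$ (keeping the $k=0$ term) and leaves the denominator implicit.

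For the advantage bound the two proofs diverge in presentation. The paper observes that $\max_{\vec a}\Dis^{m+n}(\vec a)=\Dis(0)^{m+n}=\rho_\sigma(\Z)^{-(m+n)}\le\sigma^{-(m+n)}$ (from $\rho_\sigma(\Z)=\sigma\rho_{1/\sigma}(\Z)\ge\sigma$) and then invokes the leftover hash lemma as a black box, using that $q$ is prime. Your Fourier/Cauchy--Schwarz computation is precisely an unrolled leftover-hash argument in this setting: by Parseval, your quantity $q^{-m}S$ equals $q^n$ times the collision probability of $\Dis^{m+n}\bmod q$, so the two routes coincide in substance. What you gain is a self-contained proof that does not cite an external lemma; what you pay is the extra theta-function estimate and the attendant case split at $\sigma^2=q$ (which, as you note, is harmless). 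The paper's one-line appeal to the leftover hash lemma sidesteps that bookkeeping at the cost of being less explicit.
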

\begin{proof}
	We have $\Dis^{m+n}(\vec{\mathrm{a}})\leq \Dis(0)^{m+n}=1/\rho_{\sigma}(\Z)^{m+n}$ and $\rho_{\sigma}(\Z)=\sigma \rho_{1/\sigma}(\Z) \geq \sigma$ using a Poisson summation.
	The first property is then a direct application of the leftover hash lemma, since $q$ is prime.

	The bias of $\lambda \Dis$ can be computed using a Poisson summation as :
	\[ \sum_{a \in \Z} \rho_{\sigma}(a)\cos(2\pi \lambda a/q)=\rho_{1/\sigma}(\Z+\lambda/q)\geq \exp(-\pi\lambda^2\sigma^2/q^2). \]
	Therefore, the second property follows from the independency of the coordinates of $\vec{\mathrm{x}}$ and $\vec{\mathrm{y}}$.
\end{proof}

\begin{corollary}
	Let $q$, $n$ and $m$ such that $(m-3)\log q/(n+m)-\log k=\omega(1)$ and $m\log q/(n+m)=o(n/log n)$.
	Then, we can solve $\mathsf{Small\mhyphen Decision\mhyphen LWE}$ in time \[ 2^{(n/2+o(n))/\ln(1+((m-3)\log q/(n+m)-\log k)/\log B)} \] with negligible probability of failure.
\end{corollary}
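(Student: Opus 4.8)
The plan is to obtain the corollary by plugging the main algorithm into the Sample Expander theorem above, using the two preceding lemmata and choosing the Gaussian width $\sigma$ so that the ``bad'' alternative of the expander becomes information‑theoretically unreachable while the expanded $\mathsf{LWE}$ instance stays within reach of \autoref{thm:lweL2}.

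Concretely, I would instantiate the reduction with $\Dis = D_{\Z,\sigma}$ and $\xi = 1/2$. By the bias part of the Gaussian lemma, for any fixed $(\vec{s}|\vec{e})$ with $||\vec{s}||^2 + ||\vec{e}||^2 \le nk^2$ the error $\langle(\vec{s}|\vec{e}),(\vec{x}|\vec{y})\rangle$ has a real positive bias at least $\exp(-\pi nk^2\sigma^2/q^2)$; hence the samples produced by the expander form an $\mathsf{LWE}$ distribution with secret $\vec{s}$ (which obeys $||\vec{s}|| \le \sqrt{n}B$), distortion $\epsilon = 0$, and noise parameter $\alpha$ with $\alpha^2 = \pi nk^2\sigma^2/q^2$, so that $\beta = \Theta(q/(k\sigma))$. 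I would then run the main algorithm on this instance via \autoref{thm:lweL2} (handling the $L2$ bound on the secret and the divisibility condition with $B\beta \le q$ or modulus switching as in the discussion following it, and dividing $\beta$ by a harmless constant if needed), obtaining an efficient solver, hence a distinguisher, that uses $k$ samples with advantage $1 - 2^{-\Omega(n)}$ and runs in time $2^{(n/2+o(n))/\ln(1+\log\beta/\log B)}$ — $k$ here being the number of samples the algorithm consumes, which is self‑consistently of that same order.

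Feeding this distinguisher into the Sample Expander theorem yields one of two alternatives: an efficient distinguisher for plain $\DecisionLWE$ (uniform secret, error $\Dis$) in dimension $m-1$ with $n+m$ samples of advantage $\tfrac{1}{8qk} - q^{-n} - q^{-m}$, or an efficient distinguisher for $\mathsf{Small\mhyphen Decision\mhyphen LWE}$ of advantage $\tfrac12 - 2^{-\Omega(n)}$. The first alternative is killed by the leftover‑hash part of the Gaussian lemma (in dimension $m-1$) as soon as $\sigma^{n+m}$ exceeds $q^{O(n)}$ by a $\mathrm{poly}(q,k)$ margin, i.e. $\log\sigma = \tfrac{n+3}{n+m}\log q$ up to lower‑order terms; and this is exactly the choice for which $\log\beta = (m-3)\log q/(n+m) - \log k - o(\cdot)$, matching the corollary's statement. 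Only the second alternative survives, and boosting its advantage to overwhelming by independent repetition and majority vote gives a solver for $\mathsf{Small\mhyphen Decision\mhyphen LWE}$ with negligible failure probability.

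It then remains to check that the hypotheses $(m-3)\log q/(n+m) - \log k = \omega(1)$ and $m\log q/(n+m) = o(n/\log n)$ guarantee the conditions of \autoref{thm:lweL2}: $\beta = \omega(1)$ (since $\log\beta = (m-3)\log q/(n+m) - \log k = \omega(1)$), $\max(\beta,\log q) = 2^{o(n/\log n)}$, $\epsilon = 0 \le 1/\beta^4$, and $B \ge 2$; substituting $\log\beta = (m-3)\log q/(n+m) - \log k$ into its complexity bound produces the claimed running time. The main obstacle is precisely the balancing of $\sigma$: it must be large enough that the leftover‑hash bound beats the $\tfrac{1}{8qk}$ threshold of the expander, yet small enough that the expanded instance keeps $\log\beta$ equal to the stated quantity, all while $k$ (the sample count, hence the running time, of the main algorithm) depends circularly on $\beta$. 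Untangling this fixed point and verifying that every $\mathrm{poly}(n,k)$ slack is absorbed into the $o(\cdot)$ terms is the delicate part; the rest is bookkeeping around results already established above.
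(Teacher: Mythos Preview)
Your proposal is correct and follows essentially the same route as the paper: instantiate $\Dis=D_{\Z,\sigma}$, use the Gaussian lemma to bound the bias (giving $\log\beta\approx(m-3)\log q/(n+m)-\log k$) and the leftover-hash advantage, run the $L_2$ version of the main algorithm (\autoref{thm:lweL2}) on the expanded instance, and plug the resulting distinguisher into the Sample Expander theorem so that the leftover-hash bound kills the first alternative. The only real difference is the choice of $\xi$: the paper takes $\xi=2^{-n/4}$ (so the surviving advantage $\epsilon-\xi$ is already $1-2^{-\Omega(n)}$ and no boosting is needed), whereas you take $\xi=1/2$ and boost afterwards. Your choice actually makes the leftover-hash comparison \emph{easier} (the threshold $\xi/(4qk)$ is larger), at the cost of the extra repetition step; both work. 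Your remark that the sample count $k$ of the inner algorithm feeds back into the threshold $\xi/(4qk)$ is well taken—resolving it amounts to observing that the required correction to $\log\sigma$ is $O(\log k_{\mathrm{samples}}/(n+m))=O(1)$, which is $o(L)$ since $L=(m-3)\log q/(n+m)-\log k=\omega(1)$, so the claimed $\log\beta$ is unaffected up to a $(1-o(1))$ factor. The paper's proof is terser on this point (it asserts ``$2^{o(n)}$ samples'' to absorb the $k$), but the underlying argument is the same as yours.
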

\begin{proof}
	We use the previous lemma with $\sigma=2q^{(n+2)/(n+m-1)}$, so that we have $\beta=\Omega(q^{(m-3)/(n+m)}/k)$.
	The algorithm from \autoref{thm:lweL2} needs $2^{o(n)}$ samples, so that the advantage of the potential distinguisher for $\DecisionLWE$ is $2^{-n/4+o(n)}/q$ for $\xi=2^{-n/4}$ ; while the previous lemma proves it is less than $2^{-n/2}/q$.
\end{proof}

The \textsf{NTRU} cryptosystem~\cite{hoffstein1998ntru} is based on the hardness of finding two polynomials $f$ and $g$ 
whose coefficients are bounded by $1$ given $h=f/g \mod (X^n-1,q)$. Since $hg=0$ with an error bounded by $1$, 
we can apply previous algorithms in this section to \textit{heuristically} recover $f$ and $g$ in time $2^{(n/2+o(1))/\ln \ln q}$. This is the first subexponential
time algorithm for this problem since it was introduced back in 1998. 
\begin{corollary}
	Assume we have a $\SearchLWE$ problem with at least $n\log q+r$ samples and Gaussian noise with $\alpha=n^{-c}$ and $q=n^d$.
	Then, we can solve it in time $2^{n/(2\ln(d/(d-c))-o(1))}$ for any failure probability in $2^{-n^{o(1)}}+q^{-r}$.
\end{corollary}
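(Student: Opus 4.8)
The plan is to reduce the given $\SearchLWE$ instance to a $\mathsf{Small\mhyphen Search\mhyphen LWE}$ instance by secret-error switching, and then to run the sample-expander algorithm of \autoref{sec:expansion} in the search setting (the search analogue of the corollary on $\mathsf{Small\mhyphen Decision\mhyphen LWE}$ proved just above). The essential point is that we are handed only $n\log q+r=\mathrm{\Theta}(dn\log n)$ samples, far fewer than the $2^{\mathrm{\Omega}(n/\log n)}$ that {\sc Solve} consumes; the expander manufactures as many samples as needed from this quasi-linear supply, at the price of a controlled blow-up of the error bias and of the loss of exact independence — which the hybrid / $\mathsf{Knapsack\mhyphen LWE}$ / $\mathsf{Extended\mhyphen LWE}$ chain already handles.

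I would first apply \autoref{th:petitsecret} in the prime-modulus case: spending $n+r-1$ of the samples yields, except with probability at most $q^{-r}$, a $\SearchLWE$ instance with unchanged Gaussian noise whose secret now follows the rounded Gaussian of standard deviation $\alpha q=n^{d-c}$. The remaining $n\log q-n+1=\mathrm{\Theta}(n\log q)$ samples form a $\mathsf{Small\mhyphen Search\mhyphen LWE}$ instance with $m=\mathrm{\Theta}(n\log q)$, in which $\|\vec{\mathrm{s}}\|^2+\|\vec{\mathrm{e}}\|^2\leq nk^2$ and $\|\vec{\mathrm{s}}\|\leq\sqrt n\,B$ with $k=B=n^{d-c+o(1)}$ (a discrete-Gaussian tail bound, failing with negligible probability, lets the $\sqrt{\log n}$ be absorbed into the $n^{o(1)}$), and $(\vec{\mathrm{s}},\vec{\mathrm{e}})$ is efficiently samplable.

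Next I would check the hypotheses of the $\mathsf{Small\mhyphen Search\mhyphen LWE}$ corollary. Since $m\gg n$, we have $m/(n+m)=1-o(1)$, hence $m\log q/(n+m)=(1-o(1))d\log n=o(n/\log n)$ for constant $d$, and
\[ (m-3)\log q/(n+m)-\log k \;=\; d\log n-(d-c)\log n-o(\log n)\;=\;c\log n-o(\log n), \]
which is $\omega(1)$ whenever $c\log n$ dominates the $o(\log n)$ corrections (in particular for fixed $0<c<d$). The corollary then gives running time
\[ 2^{(n/2+o(n))/\ln\bigl(1+((m-3)\log q/(n+m)-\log k)/\log B\bigr)}, \]
and with $\log B=(d-c+o(1))\log n$ the argument of the logarithm equals $1+c/(d-c)+o(1)=d/(d-c)+o(1)$, so the time is $2^{n/(2\ln(d/(d-c))-o(1))}$. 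For the \emph{search} output one runs {\sc Solve} leaving a small ($\bigo(1)$-size) block of coordinates unreduced, recovers it with {\sc FindSecret}, and iterates $\bigo(n)$ times; this adds only a $2^{o(n)}$ overhead, which is negligible. The total failure probability is the negligible failure $2^{-n^{o(1)}}$ of the main algorithm plus the $q^{-r}$ of secret-error switching.

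The real difficulty is already wrapped inside the $\mathsf{Small\mhyphen Search\mhyphen LWE}$ corollary, and would be the main obstacle if one unfolded it: one must choose the width $\sigma=\mathrm{\Theta}\!\left(q^{(n+2)/(n+m-1)}\right)$ of the expanding distribution $\Dis=D_{\Z,\sigma}$ \emph{large enough} that the leftover-hash lemma keeps the manufactured samples within $2^{-\mathrm{\Omega}(n)}$ of genuine $\SearchLWE$ samples — so that the ``$\DecisionLWE$ with uniform secret'' branch of the sample-expander theorem is ruled out — yet \emph{small enough} that the induced error $\langle\vec{\mathrm{s}},\vec{\mathrm{y}}\rangle+\langle\vec{\mathrm{e}},\vec{\mathrm{x}}\rangle$ still has bias $\exp(-\mathrm{\Omega}(nk^2\sigma^2/q^2))$, i.e.\ corresponds to a workable $\beta'=\mathrm{\Theta}(q^{(m-3)/(n+m)}/k)$; balancing these two demands against the cost of {\sc Solve} is precisely what produces the exponent $\ln(1+\log\beta'/\log B)$. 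A lesser obstacle is the bookkeeping of the $o(1)$ terms when $c$ tends to $0$ or to $d$, to confirm that the base stays $d/(d-c)$.
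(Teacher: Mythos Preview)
Your proposal is correct and follows essentially the same route as the paper: secret-error switching (\autoref{th:petitsecret}) to make the secret Gaussian of size $n^{d-c+o(1)}$, then the search analogue of the $\mathsf{Small\mhyphen LWE}$ corollary with $B=k=n^{d-c+o(1)}$ and $m=\Theta(n\log q)$, giving $\beta=n^{c+o(1)}$ and the stated exponent. The paper's own proof differs only cosmetically, observing that with $m=\Theta(n\log q)$ the expander width becomes $\sigma=2q^{(n+2)/(n+m-1)}=\Theta(1)$ and that, since the switched secret is coordinatewise bounded, one may invoke \autoref{thm:lwe} directly rather than \autoref{thm:lweL2}.
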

\begin{proof}
	First, apply a secret-error switching (\autoref{th:petitsecret}) and assume we lose at most $r$ samples.
	Apply the previous corollary with $B=n^{d-c+o(1)}$ which is a correct bound for the secret, except with probability $2^{-n^{o(1)}}$.
	\autoref{lemma:ban93} shows that $k^2\leq \log q\sigma^2$, except with probability $2^{-\mathrm{\Omega}(n)}$, so that $\beta=n^{c+o(1)}$.
	We can then use $\sigma=\Theta(1)$ and apply~\autoref{thm:lwe}.
\end{proof}
Note that this corollary can in fact be applied to a very large class of distributions, and in particular to the learning with rounding problem, while the distortion parameter is too large for a direct application of~\autoref{thm:lwe}.

Also, if the reduction gives a fast (subexponential) algorithm, one may use $\sigma=2\sqrt{n}$ and assume that there is no quantum algorithm solving the corresponding lattice problem in dimension $m$.

Even more heuristically, one can choose $\sigma$ to be the lowest such that if the reduction does not work, we have an algorithm faster than the best {\it known} algorithm for the same problem.

\bibliographystyle{abbrv}
\bibliography{ref}

\appendix

\section{$\mathsf{LWE}$ in Small Dimension}
If the dimension $n$ is small ($n=\bigo(\log(\beta))$), and in particular if $n$ is one~\footnote{This is an interesting 
case in cryptography for analyzing the security of DSA\cite{de2013using} or proving the hardness of recovering the most 
significant bits of a Diffie-Hellman.}, we can use the same kind of algorithm. Such algorithms have been described 
in~\cite{de2013using,DBLP:conf/asiacrypt/AranhaFGKTZ14} and an unpublished paper of Bleichenbacher.

The principle is to reduce at each step some coordinates and partially reduce another one.
If $q$ is too large, then the reduction of the last coordinate has to be changed: we want to have this coordinate {\it uniformly} distributed in the ball of dimension 1 and radius $R$ (i.e. $[-R;R]$).
To this end, for each sample, if the ball of radius $R'$ centered at the sample is included in the ball of radius $R$ and if it contains at least another sample, 
we will choose uniformly a sample in the ball~\footnote{We can do it in logarithmic time using a binary balanced tree which contains in each node the size of the subtree.} and then remove it and add the difference in the output list. In the other case, we add the element to our structure.
The last coordinate of the outputted samples is clearly uniform over a ball of radius $R'$.

Finally, we apply a Fourier transform to the samples : for various potential secrets $x$, if we denote by $(a,b)$ the samples where $a$ is the last coordinate of the vectorial part and $b$ the scalar part, we compute the experimental bias $\Re(\E[\exp(2i\pi(ax-b)/q)])$ and output the $s$ which maximizes them.

We can analyze the Fourier stage of the algorithm with $a$ uniformly distributed in a ball of radius $R$, an odd integer.
The bias for $x$ is $\E[\exp(2i\pi(e+(s-x)a)/q)]$ where $e=as-b$ is the error, which is equal to the bias of the error times $\E[\exp(2i\pi(s-x)a/q)]$.
We take $x$ regularly distributed so that the interval between two $x$ is smaller than $q/R/2$, and we can compute all bias in one Fourier transform.

Let us consider an $x$ such that $|s-x|\leq q/R/4$. The introduced bias is the bias of the error times 
$\frac{\sin \pi(s-x)R/q}{R\sin \pi(s-x)/q}$, which is lower bounded by a universal constant. However, for $|s-x|\geq q/R$, 
the introduced bias is upper bounded by a constant, strictly smaller than the previous one. 
Therefore, the Fourier stage determines a $x$ such that $|s-x|<q/R$.

We will in a first time determine the secret up to an approximation factor $q/R$ using a fast Fourier transform on 
$\approx 2R$ points. We are therefore reduced to a smaller instance where the secret is bounded by $q/R$ and we 
can choose $R'=R^2$ for instance. Using one more time a Fourier Transform on $2R'/R=2R$ points, we can recover 
the secret up to an approximation factor of $q/R'=q/R^2$. We can continue so on, and we get the final coordinate
of the secret and repeating this process $n-1$ times, we have the whole secret. 

We define $\beta=\sqrt{\log q}/\alpha$ and express all variables as functions of $q$.

\begin{theorem}
	We can solve $\SearchLWE$ with $n=1$, $\beta=\omega(1)$, $\beta=q^{o(1/\log q)}$ and distortion parameter $\epsilon\leq \beta^{-4}$ in time 
	\[ q^{(1/2+o(1))/\log \beta} \]
	with failure probability $q^{-\Omega(1)}$.
\end{theorem}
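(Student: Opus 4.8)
The plan is to run the algorithm of \autoref{thm:lwe} in the degenerate case $n=1$, with two substitutions: the block-wise coordinate elimination is replaced by a repeated \emph{magnitude reduction} of the single coordinate, and the \textsc{Distinguish} step is replaced by the Fourier secret-recovery sketched above. Concretely, I would fix a target running time $q^{x}$ and a number of reduction steps $k=\lfloor\log(\beta^{2}/\mathrm{polylog}(\beta))\rfloor=(2-o(1))\log\beta$, exactly as in \autoref{lemma:main} but with $\log q$ playing the role of $n$. Each reduction step is the approximate-collision subroutine described before the theorem: the current last coordinate is binned into windows of half-width $R_{i+1}<R_{i}$, and two samples landing in a common window are subtracted; since the inputs are (close to) uniform on the wider interval $[-R_{i},R_{i}]$, the difference is uniform on $[-R_{i+1},R_{i+1}]$. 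For the noise I would invoke the analysis of \autoref{lemma:calculbruit}: because the combined coordinate $a-a'$ is kept \emph{exactly}, there is no quantization term, only the factor-$2$ blow-up $\alpha'^{2}=2\alpha^{2}$, $\epsilon'=3\epsilon$ per combination (as in its $D_i=1$ case). After $k$ combinations the error bias is thus $\exp(-2^{k}\alpha^{2})\bigl(1+\bigo(3^{k}\epsilon)\bigr)$, which the choice of $k$ (so that $2^{k}\alpha^{2}\le(\log q)x/6$) and the hypothesis $\epsilon\le\beta^{-4}$ keep at least $q^{-x/2}$.

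Next I would run the Fourier stage on the reduced list. With the last coordinate $a$ uniform on an interval of odd length $\approx 2R$, evaluating $\Re\,\E[\exp(2i\pi(ax-b)/q)]$ on a grid of spacing $<q/(2R)$ is one FFT on $\bigo(R)$ points, and the value at grid point $x$ equals the error bias times the interference factor $\frac{\sin(\pi(s-x)R/q)}{R\sin(\pi(s-x)/q)}$. A short calculus estimate — the key quantitative point — shows this factor is $\Omega(1)$ once $|s-x|\le q/(4R)$ but is bounded above by a \emph{strictly smaller} constant once $|s-x|\ge q/R$; combining this gap with a Hoeffding bound and a union bound over the $\bigo(R)$ grid points, as in \autoref{th:findsecret}, shows that $q^{x+o(1)}$ samples suffice to pin $s$ down to within $q/R$, with failure probability $q^{-\Omega(1)}$ (the $2^{-\Omega(n)}$ of the main theorem degenerates and must be replaced by $q^{-\Omega(1)}$ here). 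This only localizes $s$, so I would then iterate: pass to the residual instance in which $s$ is known modulo $q/R$ — a deterministic update of the $b$'s — and re-run reduction plus FFT with $R\mapsto R^{2}$, so the grid stays of size $\Theta(R)$ while the resolution improves to $q/R^{2}$; after $\bigo(\log\log q)$ rounds the residual uncertainty drops below $1$ and $s$ is fully recovered.

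The parameters — the radii $R_{i}$, the per-round value of $k$, and the starting list size — would be chosen exactly as the $d_{i},D_{i}$ in \autoref{lemma:main}, to satisfy three constraints simultaneously in every round: enough samples survive the $k$ halvings and the window-filling losses that $\Li_{k}$ is nonempty of size $q^{x+o(1)}$; the surviving bias stays $\ge q^{-x/2}$ so that \autoref{th:distinguish}/\autoref{th:findsecret} apply; and the FFT cost $\bigo(R\log R)$ together with the $q^{x+o(1)}$ sample-handling cost stays below $q^{(1/2+o(1))/\log\beta}$. Solving this one-dimensional balancing system (the analogue of the inequality $d_k=n$ in \autoref{lemma:main}, now reading roughly $\sum_i x/\log(R_i/R_{i+1})\ge 1$) yields $1/x=(2-o(1))\log\beta$, hence the claimed complexity; the hypothesis $\beta=q^{o(1/\log q)}$ (resp.\ $\beta=\omega(1)$) is what makes the $n2^{k}\log q$-type overhead negligible (resp.\ forces $k\in\omega(1)$).

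The delicate part, I expect, is the refinement loop rather than any individual step: naively re-executing a full $\approx 2\log\beta$-step reduction in each of the $\bigo(\log\log q)$ rounds would multiply $\alpha'^{2}$ by $2^{k}\approx\beta^{2}$ each time and destroy the bias, so one must exploit that after round $j$ the residual secret lives in an interval of length $\approx q/R^{2^{j-1}}$, so that far fewer combinations are needed to bring the coordinate down to the (now comparatively long) required window, and the total noise growth over all rounds stays $2^{o(\log q)}$. Pinning down the radius/step schedule so that the nonempty-list, distinguishable-bias and affordable-FFT conditions all hold in every round, together with the interference-factor estimate, is the technical heart; everything else is a transcription of \autoref{lemma:calculbruit}, \autoref{lemma:main}, \autoref{th:distinguish} and \autoref{th:findsecret} to the $n=1$ setting.
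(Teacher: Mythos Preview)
Your proposal is essentially the paper's own argument: the same choice $k=\lfloor\log(\beta^{2}/(3\log\beta))\rfloor=(2-o(1))\log\beta$, the same per-step noise update $\alpha'^{2}=2\alpha^{2}$, $\epsilon'=3\epsilon$ (yielding a final bias $\exp(-2^{k}\alpha^{2})(1-3^{k}\epsilon)$), the same Dirichlet-kernel interference gap for the FFT stage, and the same refinement loop with the key observation that later rounds need a \emph{larger} final radius and hence fewer combinations, so the first round is the bottleneck and gives $x=1/k=(1/2+o(1))/\log\beta$.

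One technical wrinkle worth fixing: your fixed-window binning does \emph{not} make the difference uniform on $[-R_{i+1},R_{i+1}]$ --- two uniforms in a common window of width $2R_{i+1}$ give a triangular difference on $[-2R_{i+1},2R_{i+1}]$. The paper's reduction is deliberately asymmetric to avoid this: for each sample $a$ it looks for a second sample uniformly in the ball $[a-R',a+R']$ (discarding $a$ if that ball is not contained in $[-R,R]$), so the outputted difference is genuinely uniform on $[-R',R']$. This uniformity is what makes the interference factor equal to $\dfrac{\sin(\pi(s-x)R/q)}{R\sin(\pi(s-x)/q)}$ exactly and keeps the gap argument clean across all refinement rounds; with your binning the factor would drift from round to round and the constants would need to be tracked separately.
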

\begin{proof}
	We analyze here the first iteration of the algorithm.

	We set $k=\lfloor \log(\beta^2/\log \beta/3) \rfloor$ the number of iterations, $m=\Theta(\log q2^kq^x)$ the number of input samples and $R_i=q/R^i$ such that the vectorial part of the input of the $i$th reduction is sampled uniformly in a ball of the largest odd radius smaller than $R_i$.
	If we have $N$ samples, the Hoeffding inequality shows that one reduction step outputs at least $N(1/2-1/R-\log q/R^2)-R$ samples, except with probability $q^{-\Omega(1)}$.
	Therefore, we can use $R=q^{x}$.
	For simplicity, we then set $R_k=3$ so that $R \leq q^{1/a}$.
	The bias of the error is at least $\exp(-2^k\alpha^2)(1-\epsilon 3^k) \geq \exp(-\log q/\log \beta/3)/2$ so that taking $x=1/k$ works.

	The overall complexity is $\bigo(\log^3 q\beta^2 q^{1/k})$ which is $q^{(1/2+o(1))/\log \beta}$.
\end{proof}

\section{Solving $\mathsf{LWE}$ with Lattice Reduction}
\label{app:LLLbinary}
Lattice reduction consists in finding a basis of a lattice with short vectors.
The best algorithm\cite{gama2008finding} uses a polynomial number of times an oracle which returns the shortest non-zero vector of a lattice of dimension $d$, and finds a vector whose norm is inferior to $(\gamma_d+o(1))^{(n-1)/(d-1)} V^{1/n}$ for a lattice of volume $V$ and dimension $n$, with $\gamma_d\leq 2B_d^{1/d}$ the square of the Hermite constant.

Heuristically, we use the BKZ algorithm which is stopped after a polynomial number of calls to the same oracle.
The $i$-th vector of the orthonormalized basis has a length proportional to roughly $\gamma^i$ and the Gaussian heuristic gives $B_d=\gamma^{d(d+1)/2}$ so that $\gamma \approx B_d^{2/d^2}$ and the first vector has a norm $\approx \gamma^{-n/2} V^{1/n} \approx B_d^{n/d^2} V^{1/n} \approx d^{n/2/d} V^{1/n}$.

The oracle takes time $2^{(c+o(1))d}$ with $c=2$ for a deterministic algorithm\cite{micciancio2013deterministic}, $c=1$ for a randomized algorithm\cite{aggarwal2014solving}, $c=0.2972$ heuristically\cite{laarhoven2014sieving} and $c=0.286$ heuristically and quantumly\cite{laarhoven2014solving}.

\subsection{Primal algorithm}
Given $\vec{\mathrm{A}}$ and $\vec{\mathrm{b}}=\vec{\mathrm{As}}+\vec{\mathrm{e}}+q\vec{\mathrm{y}}$, with the coordinates of $\vec{\mathrm{s}}$ and $\vec{\mathrm{e}}$ being Gaussians of standard deviation $q\alpha=o(q)$, we use the lattice generated by
\[ \vec{\mathrm{B}}=\begin{pmatrix} q\vec{\mathrm{I}} & \vec{\mathrm{A}} \\ \vec{\mathrm{0}} & \vec{\mathrm{I}} \end{pmatrix} \]
Now, \[ \vec{\mathrm{B}}\begin{pmatrix} \vec{\mathrm{y}} \\ \vec{\mathrm{s}}\end{pmatrix}=\begin{pmatrix} \vec{\mathrm{b}} \\ \vec{\mathrm{0}}\end{pmatrix}+\begin{pmatrix} -\vec{\mathrm{e}} \\ \vec{\mathrm{s}}\end{pmatrix} .\]
We then heuristically assume that the $i$-th vector of the orthonormalized basis has a length at least $d^{-(n+m)/2/d}V^{1/(n+m)}$.
In the orthogonalized basis, the error vector has also independent gaussian coordinates of standard deviation $\alpha q$.
Therefore, Babai algorithm will work with high probability if $d^{-(n+m)/2/d}V^{1/(n+m)} \geq \alpha q\sqrt{\log n}$.

Thus, we need to maximize $d^{-(n+m)^2/2/d}q^{m/(n+m)}$.
So we use $m+n = \sqrt{2dn\log q/\log d}$ and the maximum is \[ 2^{-\sqrt{n\log q\log d/2/d}}2^{\log q(1-\sqrt{n\log d/2/d/\log q})}=q2^{-\sqrt{2n\log q\log d/d}} .\]
If $\alpha=n^{-a}$ and $q=n^b$, we need $a \log n=\sqrt{2bn\log n \log d/d}$ and we deduce $d/n=2b/a^2+o(1)$, so that the running time is $2^{(2cb/a^2+o(1))n}$.

When we have a binary secret and error, since only the size of the vectors counts, it is as if $q\alpha=\bigo(1)$ and the complexity is therefore $2^{(2c/b+o(1))n}$.
When the secret is binary and the error is a Gaussian, the problem is clearly even more difficult\footnote{In this case, a standard technique is to scale the coordinates of the lattice corresponding to $\vec{\mathrm{e}}$, in order to have each coordinate of the distance to the lattice of the same average size.}.

In dimension $1$, this gives $\log 1/\alpha=\sqrt{2\log q\log d/d}$ so that $d=(2+o(1)) \log q\log(\log q/\log^2 \alpha)/\log^2 \alpha$, which gives a polynomial time algorithm for $\log^2 \alpha=\Omega(\log q \log \log \log q/\log \log q)$ and a subexponential one for $\log^2 \alpha=\omega(\log \log q)$.

We can use the same algorithm for subset-sum with density $d$ with the lattice defined in~\autoref{sec:subsetsum} : applying the same heuristics gives a complexity of $2^{(c+o(1))nd\log(nd)}n^{\bigo(1)}$.

\subsection{Dual algorithm}
The attack consists in using lattice reduction to find sums of samples which are equal to zero on most coordinates, and then use {\sc FindSecret} to find the secret on the remaining coordinates. It has been described for instance in~\cite{de2013using}.

More precisely, we select $n$ truncated samples which form an invertible matrix $\vec{\mathrm{A}}$ of size $n$; and $m$ others which form the matrix $\vec{\mathrm{B}}$ with $m$ columns.
We then search for a short vector $\vec{\mathrm{v}}\neq \vec{\mathrm{0}}$ of the lattice generated by $\begin{pmatrix} q\vec{\mathrm{I}} && \vec{\mathrm{A}}^{-1}\vec{\mathrm{B}} \\ \vec{\mathrm{0}} && \vec{\mathrm{I}}\end{pmatrix}$.

We have $\vec{\mathrm{v}}=\begin{pmatrix} \vec{\mathrm{x}} \\ \vec{\mathrm{y}} \end{pmatrix}$ and $\vec{\mathrm{A}}(-\vec{\mathrm{x}})+\vec{\mathrm{B}}\vec{\mathrm{y}}=\vec{\mathrm{A}}(-\vec{\mathrm{x}}+\vec{\mathrm{A}}^{-1}\vec{\mathrm{B}}\vec{\mathrm{y}})=\vec{\mathrm{A}}\vec{0}=\vec{0} \mod q$ so we deduce from $\vec{\mathrm{v}}$ a small sum of samples which is equal to zero on $n$ coordinates.
We can then deduce a sample of low dimension with bias $\me^{-||\vec{\mathrm{v}}||^2\alpha^2}$.
As $||\vec{\mathrm{v}}||$ is inferior to $\approx q^{n/(n+m)}2^{(n+m)\log d/d/2}$, so we select $n+m=\sqrt{2nd\log q/\log d}$ and $||\vec{\mathrm{v}}||^2 \approx 2^{\sqrt{8n\log q \log d/d}}$.

Finally, using $\alpha=n^{-a}$, $q=n^b$ and $d=\Theta(n)$, we search $d$ so that $2^{\sqrt{8n\log q \log d/d}}=\bigo(d/\alpha^2)$ so that $d/n = 8b/(1+2a)^2+o(1)$.

When the errors are binary, we can apply the same formula with $\alpha=\Theta(1/q)$ and the complexity is therefore $2^{(2cb/(1/2+b)^2+o(1))n}$.

Observe that this algorithm is always asymptotically faster than the previous one, but requires an exponential number of samples.
This disadvantage can be mitigated by finding many short vectors in the same lattice, which hopefully will not affect the behaviour of {\sc FindSecret}.
Another solution is to use a sample preserving search-to-decision reduction \cite{micciancio2011pseudorandom} so that the product of the time and the square of the success probability is the same.

In dimension $1$, this gives $\log (d/\alpha^2)=\sqrt{8\log q \log d/d}$ and we deduce \[ d=(8+o(1)) \log q \log\big(\log q/\log^2(d/\alpha^2)\big)/\log^2 (d/\alpha^2) \]
and this is a polynomial time algorithm for $\log^2 \alpha=\Omega(\log q \log \log \log q/\log \log q)$, but a subexponential one for any $\log(q)/\alpha^2=\omega(2^{\sqrt{\log \log q}})$.

\subsection{Graphical Comparison between lattice algorithms and BKW}

\begin{center}
\noindent\makebox[\textwidth]{%
\includegraphics[scale=.28]{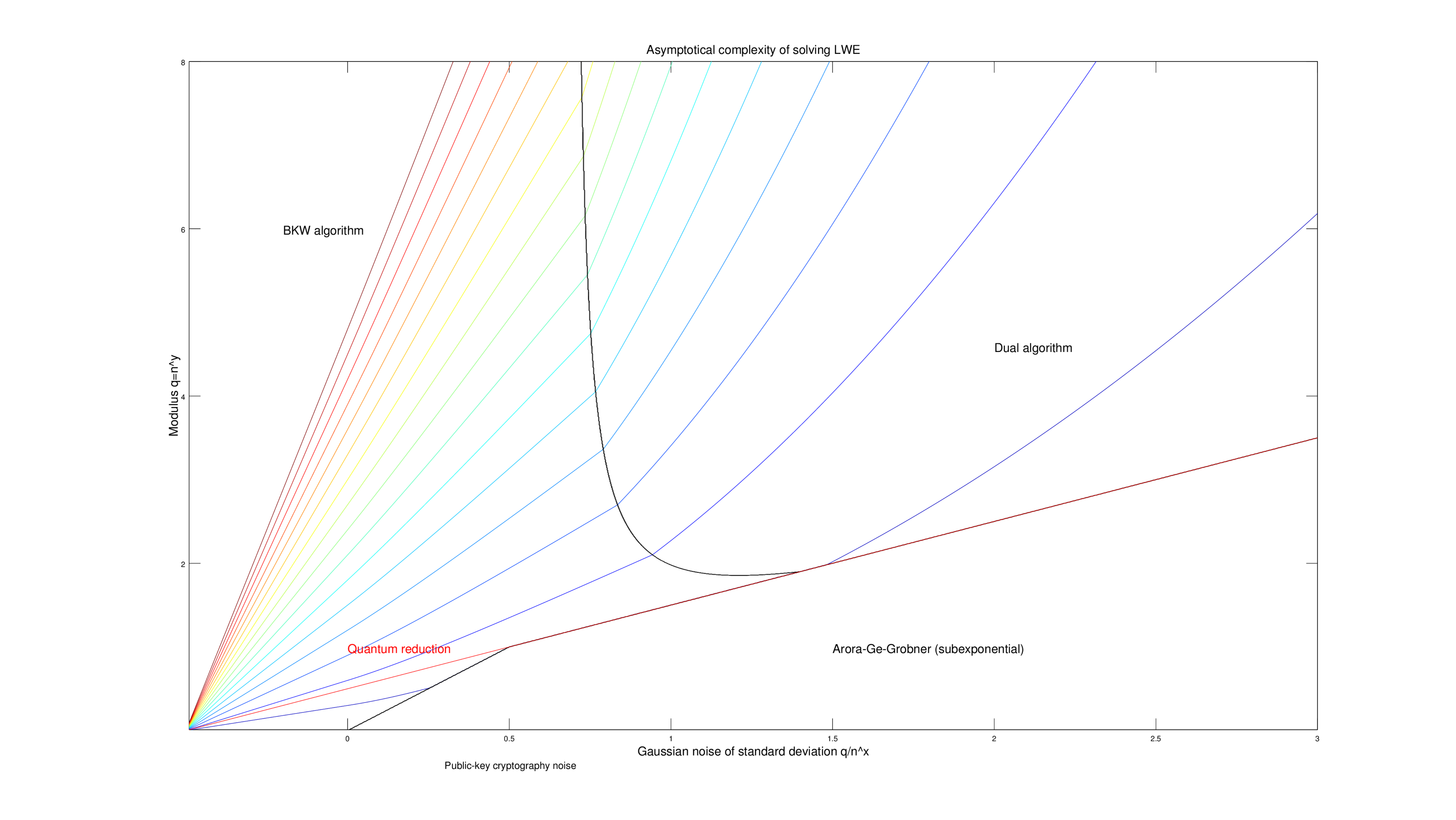}}
\end{center}

The horizontal axis represents the standard deviation of the noise, with $x$ such that $\alpha = n^{x}$.
The vertical axis represents the modulus, with $y$ such that $q = n^{y}$.

In the upper right quadrant delimited by the black line, parameters are such that the dual attack with heuristic classical SVP is more efficient than BKW.
Below the black lines, Gr\"obner basis algorithms are sub-exponential.
Above the red line, there is a quantum reduction with hard lattice problems~\cite{regev2009lattices}.

The rainbow-colored lines are contour lines for the complexity of the best algorithm among BKW and the dual attack. Each curve follows the choice of parameters such that the overall complexity is $2^{(k+o(1))n}$, where $k$ varies from 0 to 5 by increments of 0.3.

\begin{center}
\noindent\makebox[\textwidth]{%
\includegraphics[scale=.28]{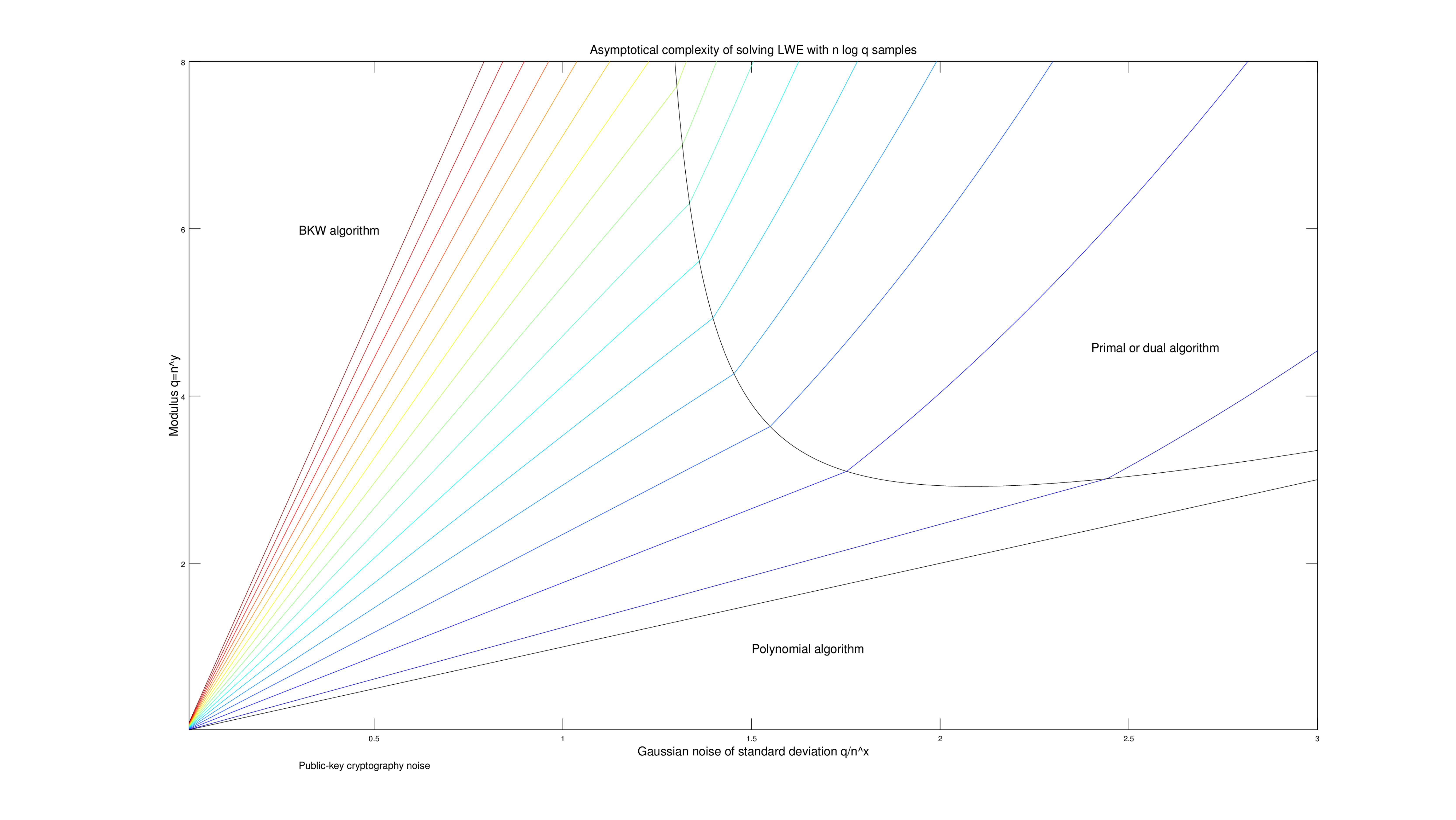}}
\end{center}

\section{$\mathsf{LPN}$}

\begin{definition}
	A $\mathsf{LPN}$ distribution is defined as a $\mathsf{LWE}$ distribution with $q=2$. The error follows a Bernoulli distribution of parameter $p$.
\end{definition}

\begin{theorem}
	If we have $p=1/2-2^{-o(n)}$ and $p\geq 1/n$, then we solve $\mathsf{Decision\mhyphen LPN}$ in time \[ 2^{(n+o(n))/\log(n/\log(1-2p))} \] with negligible probability of failure.
\end{theorem}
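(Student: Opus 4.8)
The plan is to view $\mathsf{Decision\mhyphen LPN}$ as a special case of $\DecisionLWE$ with modulus $q=2$ and then apply {\sc Solve} with every quantization coefficient set to $1$. A Bernoulli distribution of parameter $p$ has bias exactly $\E[(-1)^{x}]=1-2p\in(0,1]$, which is real; so the $\mathsf{LPN}$ distribution is an $\mathsf{LWE}$ distribution of modulus $2$, distortion parameter $\epsilon=0$, and noise parameter $\alpha$ defined by $\exp(-\alpha^{2})=1-2p$, i.e. $\alpha^{2}=\ln\frac{1}{1-2p}$; accordingly $\beta=\sqrt{n/2}/\alpha$. Since $q=2$ admits no modulus switching, I take $D_{i}=1$ for all $i$. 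By the $D_{i}=1$ branch of \autoref{lemma:calculbruit}, each {\sc Reduce} step merely replaces $\alpha^{2}$ by $2\alpha^{2}$ and leaves $\epsilon=0$, so after $k$ steps the bias handed to {\sc Distinguish} is exactly $(1-2p)^{2^{k}}=\exp(-2^{k}\alpha^{2})$, with no rounding contribution.

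I then re-run the argument of \autoref{lemma:main} with one sharpening: when $q=2$ and $D_{i}=1$, a block of width $w=d_{i+1}-d_{i}$ has exactly $2^{w}$ rounded values, rather than the generic upper bound $(1+q/D_{i})^{w}=3^{w}$ used there. Exploiting this, I set $d_{i+1}=\min\!\big(d_{i}+\lfloor(1+\eta)n/k\rfloor,\,n\big)$ for a parameter $\eta$ with $\eta=\omega(\log n/n)$ and $\eta=o(1)$, and keep $k=\big\lfloor\log\frac{nx}{6\alpha^{2}}\big\rfloor$, $m=n2^{k}2^{nx}$ with $x=(1+\eta)/k$. Because $2^{d_{i+1}-d_{i}}\le 2^{nx}$, the induction of \autoref{lemma:main} gives $|\Li_{i}|\ge(|\Li_{0}|+2^{nx})/2^{i}-2^{nx}$, hence $|\Li_{k}|\ge n2^{nx}$; by the choice of $k$ one has $2^{k}\alpha^{2}\le nx/6$, so the real part of the final bias is at least $\exp(-nx/3)\ge 2^{-nx/2}$, and \autoref{th:distinguish} (applied with its free parameter equal to $n$, since $|\Li_{k}|\ge n/b^{2}$ for $b=2^{-nx/2}$) makes {\sc Distinguish} err with probability $2^{-\Omega(n)}$. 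The remaining hypotheses are immediate: $D_{i}=1$ and an $\mathsf{LPN}$ secret has $|s_{j}|\le 1$, so $|s_{j}|D_{i}\le 1<0.23\cdot 2$; $\sum_{j}(s_{j}/1)^{2}\le n$; and $\epsilon=0\le\beta^{-4}$.

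It then remains to solve the implicit equation for $k$ and to check that we are in the admissible regime. From $p=1/2-2^{-o(n)}$ we get $1-2p=2^{-o(n)}$, hence $\alpha^{2}=\ln\frac{1}{1-2p}=o(n)$; from $p\ge 1/n$ we get $1-2p\le 1-2/n$, hence $\alpha^{2}=\ln\frac{1}{1-2p}\ge\ln\frac{n}{n-2}=\Omega(1/n)$. Thus $n/\alpha^{2}\to\infty$, so $\beta^{2}=n/(2\alpha^{2})=\omega(1)$ with $\log\beta=o(n/\log n)$, while $\log q=1$; all requirements of \autoref{lemma:main} hold. Setting $K_{0}=\log\frac{n}{6\alpha^{2}}\to\infty$, the defining relation $k=K_{0}-\log k+O(1)$ together with $\log k=o(k)$ forces $k=(1+o(1))K_{0}=(1+o(1))\log(n/\alpha^{2})$; as $\alpha^{2}=\ln\frac{1}{1-2p}$ and $\log\frac{1}{1-2p}$ differ only by the factor $\ln 2$, this reads $k=(1+o(1))\log\!\big(n/\log(1-2p)\big)$. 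Consequently $nx=(1+\eta)n/k=(n+o(n))/\log\!\big(n/\log(1-2p)\big)$, and since $\eta n=\omega(k)$ the floors in $d_{i+1}$ still give $d_{k}=n$. The running-time bound for {\sc Solve} then yields total time $\bigo(mn)=\mathrm{poly}(n)\cdot 2^{nx}=2^{(n+o(n))/\log(n/\log(1-2p))}$ with negligible failure probability.

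The one genuine obstacle is exactly the sharpening in the second step: a verbatim appeal to \autoref{thm:lwe} would produce the constant $\log 3\approx 1.585$ in the numerator rather than $1$, because the generic estimate $(1+q/D_{i})^{d_{i+1}-d_{i}}$ on the number of centers over-counts by a multiplicative constant when $q=2$. Re-proving \autoref{lemma:main} with the exact count $2^{d_{i+1}-d_{i}}$ recovers the sharp constant; everything else — verifying the hypotheses, applying Hoeffding through \autoref{th:distinguish}, and unwinding the equation for $k$ — is routine.
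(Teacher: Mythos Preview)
Your proof is correct and follows essentially the same route as the paper: specialize to $q=2$, take all $D_i=1$ so each {\sc Reduce} step exactly doubles $\alpha^2$, observe that a block of width $w$ then has $2^w$ centers rather than $(1+q/D_i)^w=3^w$, and balance $k$ against the block width to land on $x=(1+o(1))/k$ with $k=(1+o(1))\log(n/\alpha^2)$. The paper simply writes down explicit choices $k=\lfloor\log(\beta^2/(3\log\beta))\rfloor$ and $x=1/k+1/n$ and checks them, whereas you keep the implicit relation $k=\lfloor\log(nx/(6\alpha^2))\rfloor$ and solve it asymptotically; these are the same parameters up to lower-order terms.

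One small slip: you assert $|s_j|D_i\le 1<0.23\cdot 2$, but $0.23\cdot 2=0.46$, so the inequality is false. This is harmless, because you are invoking the $D_i=1$ branch of \autoref{lemma:calculbruit}, whose proof shows $\vec a_j=0$ after subtraction and hence does not use the hypothesis $|s_j|D_i<0.23q$ at all; just drop that clause from your verification of hypotheses.
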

\begin{proof}
	We use $d_i=\min(i\lfloor nx\rfloor,n)$, $m=n2^{k+nx}$ and select $k=\lfloor \log(\beta^2/\log \beta/3) \rfloor$ where $\beta^2=n/2/\log (1-2p)=\omega(1)$.
	The algorithm works if $d_k=n$, so we chose $x=1/k+1/n=(1+o(1))/\log(\beta^2)$.
	Finally, the bias of the samples given to {\sc Distinguish} is $2^k\alpha^2\leq \beta^2/\log \beta/3 n/2/\beta^2 =n/6/\log \beta \leq nx/3$ so that {\sc Distinguish} works with negligible probability of failure.
\end{proof}
In particular, for $p=1/2-2^{-n^{o(1)}}$, we have a complexity of $2^{(n+o(n))/\log n}$.

\begin{theorem}
	If we have $p\leq 1/4$, then we solve $\mathsf{Search\mhyphen LPN}$ in time $\bigo(n^4(1-p)^{-n})$ with negligible probability of failure.
\end{theorem}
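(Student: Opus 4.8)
The plan is a Prange-style information-set decoding argument. First I would repeat the following trial $N=\Theta(n(1-p)^{-n})$ times: draw $n$ fresh samples and assemble from them a matrix $\vec{\mathrm{A}}\in(\mathbb{F}_2)^{n\times n}$ together with the vector $\vec{\mathrm{b}}$ of their scalar parts; if $\vec{\mathrm{A}}$ is singular over $\mathbb{F}_2$, abort this trial; otherwise set $\vec{\mathrm{s}}'=\vec{\mathrm{A}}^{-1}\vec{\mathrm{b}}$, draw $m=\Theta(n)$ further fresh samples, and return $\vec{\mathrm{s}}'$ (stopping) if it agrees with their scalar parts on at least a $5/8$ fraction. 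The bet implicit in a trial is that the noise on the chosen $n$ samples is entirely zero; in that event $\vec{\mathrm{A}}\vec{\mathrm{s}}'=\vec{\mathrm{b}}=\vec{\mathrm{A}}\vec{\mathrm{s}}$, hence $\vec{\mathrm{s}}'=\vec{\mathrm{s}}$.

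For the success probability I would use two facts, independent of each other because the $\vec{\mathrm{a}}_i$ and the errors $e_i$ are drawn independently: a uniform matrix over $\mathbb{F}_2$ is invertible with probability $\prod_{j\ge 1}(1-2^{-j})=\Omega(1)$, and all $n$ errors vanish with probability $(1-p)^n$. So each trial outputs the true secret with probability $\Omega((1-p)^n)$, and over $N=\Theta(n(1-p)^{-n})$ independent trials the probability that no trial ever hits $\vec{\mathrm{s}}$ is $(1-\Omega((1-p)^n))^{N}=2^{-\Omega(n)}$, i.e. negligible.

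The delicate point is the verification, which must guarantee that across all $N$ trials no incorrect candidate is ever accepted. Here one uses that for the correct secret the scalar part agrees with $\langle\vec{\mathrm{a}},\vec{\mathrm{s}}'\rangle$ with probability $1-p\ge 3/4$, whereas for any $\vec{\mathrm{s}}'\ne\vec{\mathrm{s}}$ the difference $\vec{\mathrm{s}}'-\vec{\mathrm{s}}$ has a nonzero coordinate, so $\langle\vec{\mathrm{a}},\vec{\mathrm{s}}'-\vec{\mathrm{s}}\rangle$ is uniform over $\zeroun$ and the agreement probability is exactly $1/2$; the $5/8$ threshold lies strictly between these values. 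Hoeffding's inequality then bounds the false-acceptance probability of a fixed wrong candidate by $2^{-c'm}$ per trial and the false-rejection probability of the true secret by $2^{-\Omega(n)}$. Since $p\le 1/4$ forces $(1-p)^{-n}\le(4/3)^n=2^{n\log(4/3)}$, choosing $m=Cn$ with $C$ a large enough constant makes $N\cdot 2^{-c'Cn}=2^{-\Omega(n)}$, so a union bound over the $N$ trials shows that with probability $1-2^{-\Omega(n)}$ the algorithm returns $\vec{\mathrm{s}}$ and nothing else. This is exactly where $p\le 1/4$ enters: it keeps $1/2-p$ bounded away from $0$, so $\bigo(n)$ verification samples suffice.

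For the running time, each trial is dominated by one Gaussian elimination over $\mathbb{F}_2$, costing $\bigo(n^3)$ (the $\bigo(n^2)$ to read the $n$ sample vectors and the $\bigo(mn)=\bigo(n^2)$ to verify are lower order), so the total is $\bigo(n^3\cdot N)=\bigo(n^4(1-p)^{-n})$, and the $\bigo(nN)$ samples drawn are also within this bound. I do not expect a genuine obstacle; the only step requiring care is matching the exponent of the verification failure probability against the trial count $N$, which the $p\le 1/4$ hypothesis handles as above.
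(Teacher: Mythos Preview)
Your proof is correct and is essentially the same argument as the paper's, only phrased differently: the paper invokes its secret-error switching theorem (\autoref{th:petitsecret}) on each block of fresh samples and then calls {\sc Distinguish} on the transformed scalar parts, but unwinding those black boxes gives exactly your procedure of inverting $\vec{\mathrm{A}}$, forming the candidate $\vec{\mathrm{s}}'=\vec{\mathrm{A}}^{-1}\vec{\mathrm{b}}$, and checking agreement on $\Theta(n)$ fresh samples. In both versions the ``good event'' is that the $n$ chosen errors vanish (probability $(1-p)^n$), the verification is a Hoeffding bound separating bias $1-2p\ge 1/2$ from bias $0$, and the role of $p\le 1/4$ is precisely to keep $(1-p)^{-n}\le (4/3)^n$ so that a union bound over the $\Theta(n(1-p)^{-n})$ trials is beaten by the per-trial $2^{-\Theta(n)}$ verification failure (the paper uses $33n$ verification samples for the same reason you take $m=Cn$ with $C$ large).
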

\begin{proof}
	We repeat $n(1-p)^{-n}$ times the following algorithm : apply the secret-error switching algorithm (\autoref{th:petitsecret}) over $34n$ new samples, and run {\sc Distinguish} over the scalar part of the $33n$ samples.
	If the distinguisher outputs that it is a $\mathsf{LPN}$ distribution, then the secret is the zero vector with failure probability $2^{-33/32n}$, and we can recover the original secret from this data.
	If the new secret is not zero, then the distinguisher outputs that the distribution is uniform, except with probability $2^{-33/32n}$.
	The probability that no new secret is equal to zero is 
	\[ \big(1-(1-p)^n\big)^{n(1-p)^{-n}} \leq  \exp(-(1-p)^nn(1-p)^{-n})=2^{-\Omega(n)}. \]
	Using the union bound, the overall probability of failure is at most $n^42^n2^{-33/32n}=2^{-\Omega(n)}$.
\end{proof}
In particular, for $p=\frac{1}{\sqrt{n}}$, the complexity is $\bigo(n^4\exp(\sqrt{n}))$.
In fact, for $p=(1+o(1))/\ln(n) \leq 1/4$, we have a complexity which is $\bigo(n^4\exp(n(p+p^2)))=2^{(1+o(1))n/\log n}$, so that the threshold for $p$ between the two algorithms is in $(1+o(1))/\ln n$.

\end{document}